\documentclass[letterpaper,11pt]{article}
\usepackage{times}
\usepackage[utf8]{inputenc}
\usepackage{amsmath, amsthm, amssymb, thm-restate}
\usepackage[table,xcdraw]{xcolor}
\usepackage{algorithm}
\usepackage{algpseudocode}
\usepackage{wrapfig}
\usepackage{subcaption}
\usepackage{setspace}
\usepackage{xspace}
\usepackage{mathtools}
\usepackage{enumerate}
\usepackage{natbib}
\usepackage{wrapfig}
\usepackage{comment} 
\usepackage{tcolorbox} 
\usepackage{xfrac}
\usepackage{hyperref}
\usepackage{multirow}
\usepackage{caption}
\usepackage{bm}
\usepackage{newfloat}
\usepackage{enumitem}
\usepackage{dblfloatfix} 
\usepackage{wrapfig}
\usepackage{dsfont}
\usepackage{mdframed}
\usepackage{tikz}
\usepackage{multirow}
\usepackage{float}
\usepackage{enumerate}
\usepackage[dvipsnames]{xcolor} 

\usetikzlibrary{shapes,positioning,fit,backgrounds,decorations.pathreplacing,calc}

\usepackage{fancyhdr}
\usepackage[noabbrev]{cleveref}

\usepackage[margin=1in]{geometry}

\allowdisplaybreaks

\definecolor{mygreen}{RGB}{10,110,230}
\definecolor{myred}{RGB}{10,110,230}

\hypersetup{
     colorlinks=true,
     citecolor= mygreen,
     linkcolor= myred
}

\newcommand{\E}[0]{\ensuremath{\mathbb{E}}}

\newcommand{\opt}{\ensuremath{\textsc{opt}}\xspace}

\newcommand{\bigo}{\mathcal{O}} 
\newcommand{\eps}{\varepsilon} 

\newcommand{\polylog}{\mbox{polylog}}

\newcommand{\junk}[1]{}

\newcommand{\instance}{\mathcal{I}} 
\newcommand{\sml}[1]{s(#1)} 
\newcommand{\avg}[1]{a(#1)} 
\newcommand{\cost}{\mbox{Cost}} 
\renewcommand{\opt}{\mbox{OPT}} 
\newcommand{\Tau}{\mathrm{T}}
\renewcommand{\E}{\mathbb{E}}

\newcommand{\mpl}[1]{\mu(#1)} 

\usepackage{amsthm, amsmath, amssymb}
\usepackage{thmtools}

\newcommand{\mc}{\mathcal{C}}

\newcommand{\CC}{{\sc CC}}

\newcommand{\kCC}{{\sc \mbox{$k$}-CC}}
\newcommand{\Pivot}{\textsc{Pivot}}

\newcommand{\BP}{\textsc{BalancedPivot}}

\newcommand{\UP}{\textsc{UniformPivot}}

\theoremstyle{plain}
\newtheorem{theorem}{Theorem}
\newtheorem{lemma}{Lemma}[section]

\theoremstyle{definition}
\newtheorem{definition}[lemma]{Definition}

\newtheorem{fact}[lemma]{Fact}
\newtheorem{claim}[lemma]{Claim}

\makeatletter
\renewcommand{\paragraph}{
  \@startsection{paragraph}{4}
  {\z@}{10pt}{-1em}
  {\normalfont\normalsize\bfseries}
}
\makeatother

\makeatletter
\patchcmd{\@algocf@start}
  {-1.5em}
  {0pt}
  {}{}
\makeatother

\newcommand*\samethanks[1][\value{footnote}]{\footnotemark[#1]}
\newcommand{\email}[2]{\protect\href{mailto:#2}{#1}}

\title{Competitive Random-Order Correlation $k$-Clustering}

\author{\email{Mahsa Derakhshan}{m.derakhshan@northeastern.edu}\thanks{Northeastern University. $\{$m.derakhshan, ghasemi.e, r.rajaraman, wasim.o, te.wilson$\}$@northeastern.edu} \and \email{Andisheh Ghasemi}{ghasemi.e@northeastern.edu}\samethanks \and \email{Rajmohan Rajaraman}{r.rajaraman@northeastern.edu}\samethanks \and \email{Omer Wasim}{wasim.o@northeastern.edu}\samethanks \and \email{Tegan Wilson}{te.wilson@northeastern.edu}\samethanks[1]
}

\date{}

\begin{document}
\maketitle

\thispagestyle{empty}

\begin{abstract}
    
Correlation clustering has been extensively studied over the last two decades in many different computational models owing to its wide-ranging practical applications. 
The goal is to compute a partition of the vertex set such that the total number of disagreements, i.e. the sum of edges between clusters, and non-edges within clusters, is minimized. 
In this paper, we study \textbf{correlation $k$-clustering}, in which the total number of clusters is restricted to $k$.  Correlation $k$-clustering is NP-hard, and while previous work has presented a polynomial time approximation scheme for constant $k$, there are no known results for general $k$. 

Our first result is a polynomial-time constant-factor approximation algorithm for correlation $k$-clustering for general $k$. 
The main focus of this work is in the more challenging \textbf{online} setting.  Noting that the best competitive ratio under adversarial arrivals is known to be $\Omega(n)$, we concentrate on the well-studied random-order model, where vertices arrive in random order and on arrival of a vertex, edges to its earlier-arrived neighbors are revealed. We prove a surprising lower bound of $\Omega(\log k)$-competitiveness for any online algorithm, which can be extended to $\Omega(\log n)$ when $k = \text{poly}(n)$. 
Finally, the main result of this paper is a polylogarithmic upper bound on the competitive ratio for correlation $k$-clustering, using an 
algorithm inspired by the classic \Pivot\ algorithm for correlation clustering.

\end{abstract}
\newpage
\setcounter{page}{1} 

\section{Introduction}
Clustering is a central technique in unsupervised learning and data analysis \cite{schutze2008introduction}. The overarching goal is to group items into clusters, such that items within the same cluster are more similar than items in different clusters (under some similarity measure). 
Traditional formulations such as $k$-means (resp. $k$-medians) assume items to be points in a metric space, and seek to minimize an objective function which is the sum across the $k$ clusters of squared $\ell_2$ distances (resp. $\ell_l$ distance) of each point in the cluster to its centroid (resp. center). 
However, the ``quality" of such a clustering is dependent on how well the underlying similarity metric can capture the desired similarity notion between items and is often application dependent. 

On the other hand, the \emph{correlation clustering} problem, first introduced by Bansal, Blum, and Chawla \cite{bansal2004correlation}, offers a more general and flexible alternative to metric similarity. 
For every pair of items, one only needs a qualitative similarity measure (instead of a numerical measure) captured by a label in \{$+, -\}$, where $+$ corresponds to similarity and $-$ corresponds to dissimilarity. Correlation clustering is practically relevant in settings where the notion of similarity can be captured more easily and reliably in a qualitative rather than a quantitative sense \cite{DBLP:conf/stoc/AilonCN05, DBLP:journals/jcss/CharikarGW05, DBLP:conf/soda/MathieuS10}. For example, in social network analysis \cite{DBLP:journals/jmlr/ChiangHNDT14, DBLP:journals/tkdd/BonchiGGTU15, DBLP:conf/www/VeldtGW18, DBLP:journals/corr/abs-2108-01731}, a grouping of individuals based on observed affinities or preferences may be desired, while in document or image categorization \cite{DBLP:conf/icml/FinleyJ05, DBLP:journals/corr/abs-1207-4157, DBLP:journals/pami/KimYNK14}, one might only have binary labels capturing similarity.

Given a set of $n$ items, the correlation-clustering problem (which we refer to as \CC) models the items as vertices on a complete graph. 
The edges $(u,v)$ of this graph are labeled by either $+$ or $-$, corresponding to pairwise similarity between items. 
Given this graph, the most well-studied formulation pertains to ''disagreement minimization" \cite{bansal2004correlation, DBLP:conf/stoc/AilonCN05, DBLP:journals/jcss/CharikarGW05, DBLP:conf/focs/Cohen-AddadLN22, DBLP:conf/focs/Cohen-AddadL0N23, DBLP:journals/corr/abs-2404-05433} in which we seek to obtain a partition of vertices (i.e., clusters) such that the sum of inter-cluster edges labeled $+$, and the sum of intra-cluster edges labeled $-$ is minimized. Equivalently, we may instead model the problem with an unlabeled graph, where edges correspond to $+$ edges, and non-edges correspond to $-$ edges. This will be our assumption moving forward.

A key feature of the standard correlation clustering model is that it does not require specifying the number of clusters in advance—the optimal number is inferred from the structure of the input. 
However, in many practical scenarios, external constraints such as resource limitations \cite{DBLP:conf/spaa/AndreevR04, DBLP:conf/soda/KrauthgamerNS09} or fairness requirements \cite{DBLP:conf/nips/Chierichetti0LV17, DBLP:conf/aistats/AhmadianE0M20} may impose an upper bound on the number of clusters. 
Furthermore, clusters often correspond to maintained entities, such as customer segments, recommendation cohorts, or moderation categories, motivating practitioners to bound the number of clusters to satisfy operational, interpretability, or resource constraints. 
This motivates the study of \textbf{\emph{correlation $k$-clustering}}, a natural variant in which the algorithm must output a clustering using at most $k$ clusters, with the goal of minimizing the number of disagreements.  We refer to this problem as \kCC. 

For constant $k$, \kCC\ was studied by
Giotis and Guruswami~\cite{DBLP:journals/corr/abs-cs-0504023}, who presented a polynomial time approximation scheme (PTAS) for the problem.  However, their algorithm runs in time exponential in $k$ and hence does not yield any polynomial time algorithms for general $k$.  For the ``agreement maximization" version of correlation clustering, there exists an SDP-based 0.7666-approximation algorithm by Swamy~\cite{DBLP:conf/soda/Swamy04} which uses at most 6 clusters. We note that the ``disagreement minimization" and ``agreement maximization" versions of correlation clustering are unrelated from the point of view of approximation. In particular, no polynomial-time approximation algorithms are known for the ``disagreement minimization" version of \kCC\ for non-constant $k$.

In this paper, we study the ``disagreement minimization" version of \kCC\ for general $k$. 
We give the first constant-factor approximation algorithm for \kCC\ for general $k$. 
Our algorithm converts any $\alpha$-approximate solution to unconstrained \CC\ into a $(1+2\alpha+\bigo(\eps))$-approximate solution for \kCC.

After studying approximation algorithms for \kCC\ in the offline setting, we then focus on the more challenging \textit{online} setting. 
In many applications, data arrives incrementally, and decisions must be made in an online fashion. For example, in recommendation systems, users arrive over time and must be placed into behavioral cohorts for personalization; in content moderation, new posts are evaluated relative to existing ones to determine their category or flag status~\cite{DBLP:journals/corr/abs-1001-0920, DBLP:conf/nips/LattanziMVWZ21,
DBLP:conf/icml/Cohen-AddadLMP22}. In such settings, it may be infeasible to recompute a clustering from scratch after the arrival of new data, which may lead to reversals of past clustering decisions. 

\noindent\textbf{Online Correlation Clustering.} Let $G=(V,E)$ be a graph on $n=|V|$ vertices. In \emph{online} \CC, vertices in $V$ arrive in discrete time steps and every vertex must be irrevocably assigned to a cluster immediately after its arrival. On arrival of a vertex $v$, edges of the form $(u,v)$ where $u$ is a vertex which arrived before $v$ are revealed \cite{DBLP:journals/corr/abs-1001-0920, DBLP:conf/nips/LattanziMVWZ21, DBLP:conf/icml/Cohen-AddadLMP22}.  We employ the standard competitive-analysis framework \cite{sleator1985amortized} to quantify the performance of an online algorithm. An online algorithm $\mathcal{A}$ is said to be $\rho$-competitive if the cost incurred by $\mathcal{A}$ is upper bounded by $\rho\cdot OPT$, where $OPT$ denotes the cost of an optimal offline algorithm which knows the entire input instance in advance.

Under adversarial vertex arrivals, a simple instance yields a lower bound of $\Omega(n)$ competitiveness \cite{DBLP:journals/corr/abs-1001-0920}. 
This lower bound naturally extends to \kCC\ for general $k$. 
To circumvent this bottleneck, we consider the widely studied \textit{random-order} model for online computation \cite{ DBLP:books/cu/20/Gupta020}. 
In the random-order \textit{vertex-arrival} model, the graph instance $G=(V,E)$ may be chosen adversarially, but the order of vertex arrivals is chosen uniformly at random from all $n!$ possible arrival orders \cite{DBLP:conf/stoc/KarpVV90, DBLP:conf/focs/Meyerson01, DBLP:journals/corr/abs-1001-0920, DBLP:journals/corr/LibertySS14, DBLP:conf/icml/Cohen-AddadLMP22}. 

The classic \Pivot\ algorithm due to Ailon, Charikar, and Newman, which obtains a 3-approximation algorithm for correlation clustering \cite{DBLP:conf/stoc/AilonCN05}, directly yields 3-competitiveness for online \CC\ in the random-order vertex-arrival model. \Pivot\ works as follows. Every vertex $v$ which is the first among its neighbors to arrive is designated as a \textit{pivot} vertex, and a new cluster containing $v$ (which we call a pivot cluster of $v$) is initialized. Every other vertex (which is not a pivot) is assigned to the cluster to which its earliest arrived pivot neighbor is assigned. 

However, there are no known results for online \kCC, motivating this natural question:

\begin{tcolorbox}
\begin{center}
\emph{Does online correlation $k$-clustering admit competitive algorithms in the random-order model?}
\end{center}
\end{tcolorbox}

We show that not only does online \kCC\ admit a $\polylog (n)$-competitive algorithm in the random order model, but we also provide a lower bound showing that any online algorithm is no better than $\Omega(\log n)$-competitive.
Our algorithm is a simple variation of the classic \Pivot\ algorithm, modified to account for the constraint on the number of clusters. 
Due to its simplicity, our algorithm would be easily implementable in a variety of settings.
While our algorithm relies heavily on the building block of \Pivot\ in its construction, a major focus of our analysis is on how our modifications remain competitive when the number of clusters is limited. Indeed, our analysis requires novel arguments even for the case of disjoint cliques, a setting that in unconstrained correlation clustering \Pivot\ solves trivially with zero cost, yet is nontrivial for \kCC.

\subsection{Our Results}
\paragraph{Offline Approximation.} We begin by studying \textit{offline} approximation algorithms for \kCC.  The PTAS provided by Giotis and Guruswami for the case when $k$ is constant relies on the enumeration of all possible $k$-clusterings for a sampled set of size $\Omega(k^2 \log n)$.  Since the number of such clusterings is $n^{\Omega(k)}$, this approach fails to extend to the case where $k$ is not fixed. 
Furthermore, since \CC\ is APX-hard \cite{DBLP:journals/jcss/CharikarGW05}, \kCC\ is also APX-hard, thus ruling out a PTAS unless P = NP. 
Our first result, presented in Section~\ref{sec:offline}, is that \kCC\ admits constant-factor approximation algorithms.  

\begin{restatable}{theorem}{offline}\label{thm:main-1}
Given any $\alpha$-approximate algorithm $\mathcal{A}$ for \CC\ which runs in polynomial time, and given any $k\in\{1,\hdots,n\}$, there exists an algorithm $\mathcal{A}_k$ for \kCC\ which runs in polynomial time and returns a $(1+2\alpha + \bigo(\eps))$-approximate solution, for any given $\eps > 0$.
\end{restatable}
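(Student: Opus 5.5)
The plan is to run $\mathcal{A}$ and then repair its output. First, let $\mathcal{C}=\{C_1,\dots,C_m\}$ be the clustering returned by $\mathcal{A}$ on $G$. Since every $k$-clustering is a feasible \CC\ solution, $\opt_{\CC}\le \opt_k$, and so $\cost(\mathcal{C})\le \alpha\,\opt_k$. If $m\le k$ we output $\mathcal{C}$ and are done. Otherwise we restrict attention to \emph{coarsenings} of $\mathcal{C}$, meaning partitions of the index set $\{1,\dots,m\}$ into at most $k$ groups, with each group's clusters merged. The proof then has two parts. \textbf{(i)} Some coarsening costs at most $(1+2\alpha)\opt_k$. \textbf{(ii)} A coarsening within $(1+\bigo(\eps))$ of the best one can be found in polynomial time.

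For (i), fix an optimal $k$-clustering $\mathcal{O}=\{O_1,\dots,O_k\}$. Assign each $C_i$ to an index $t(i)$ maximizing $|C_i\cap O_{t}|$, and merge all $C_i$ that share the same $t(i)$. Call the result $\mathcal{M}$. Write $d(\mathcal{X},\mathcal{Y})$ for the number of pairs on which two clusterings disagree (same cluster versus different clusters). Every disagreeing pair is a disagreement of $G$ under one of the two clusterings, so $d(\mathcal{X},\mathcal{Y})\le \cost(\mathcal{X})+\cost(\mathcal{Y})$. Also $\cost(\mathcal{M})\le \cost(\mathcal{O})+d(\mathcal{M},\mathcal{O})$. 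It therefore suffices to show $d(\mathcal{M},\mathcal{O})\le 2\cost(\mathcal{C})$, or more loosely $d(\mathcal{M},\mathcal{O})\le d(\mathcal{C},\mathcal{O})+\cost(\mathcal{C})$ after re-bounding. We classify the disagreeing pairs $(u,v)$ of $\mathcal{M}$ versus $\mathcal{O}$ into three types.
\begin{itemize}
\item Both $u,v$ lie in the same $C_i$. Then the pair also disagrees in $d(\mathcal{C},\mathcal{O})$.
\item $u,v$ lie in different $C_i,C_j$ that were merged, and at least one endpoint, say $u\in C_i$, lies outside $O_{t(i)}$. Then $|C_i\cap O_{t(i)}|\ge |C_i\cap O(u)|$, so $u$ has at least as many $\mathcal{C}$-partners separated from it in $\mathcal{O}$ as it has newly created wrong partners inside the block. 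This lets us charge the new pair to an existing pair of $d(\mathcal{C},\mathcal{O})$ with bounded multiplicity.
\item $u,v$ both lie in the same $O_t$ but in different groups of $\mathcal{M}$. Then $u\in C_i$ with $t(i)\ne t$, so $u$ is a minority vertex of $C_i$, and again we charge to pairs inside $C_i$ split by $\mathcal{O}$.
\end{itemize}
The target bound is that the total charge is at most $2\,d(\mathcal{C},\mathcal{O})$ minus the overlap already counted. Combined with $d(\mathcal{C},\mathcal{O})\le \cost(\mathcal{C})+\opt_k$, this should yield $\cost(\mathcal{M})\le(1+2\alpha)\opt_k$; if the constants fall short, I would switch to the symmetric "majority" charging.

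For (ii), the coarsening problem is itself a weighted \kCC\ instance on the contracted graph with $m$ super-vertices. Here super-vertex $i$ has weight $|C_i|$, and the cost of merging $C_i$ with $C_j$ is the number of non-edges between them minus the number of edges between them, plus the fixed term $\cost(\mathcal{C})$. Using the structure from (i), I would restrict to coarsenings in which clusters of size at least $\eps n/k$ act as anchors, and small clusters are attached greedily or by rounding an LP, losing only an $\bigo(\eps)\opt_k$ additive term. The absorption of the small-cluster error into $\opt_k$ rather than into $n^2$ is exactly where the $\bigo(\eps)$ in the statement enters.

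I expect (ii) to be the main obstacle: computing a near-optimal coarsening in polynomial time for non-constant $k$, while keeping the loss multiplicative in $\opt_k$. The charging in (i) is delicate but elementary.
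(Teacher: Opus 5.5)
Your part (ii) is a genuine gap, and the problem is the objective you propose to optimize. You want a near-optimal coarsening with respect to the true cost on $G$. That is a signed-weight \kCC\ instance on the contracted graph, where the merge weight is the number of non-edges minus edges between $C_i$ and $C_j$. This problem has no more structure than the original one: if $\mathcal{C}$ consists of singletons it \emph{is} the original \kCC\ instance. The anchor/greedy/LP sketch gives no reason why the loss would be $\bigo(\eps)\opt_k$.

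The missing idea is to optimize the coarsening against a surrogate instance $\instance'$, obtained by turning each $C_i$ into a clique and deleting all edges between different $C_i$'s. The two instances differ on exactly the $\cost(\mathcal{C},\instance)$ pairs where $\mathcal{C}$ disagrees with $G$. Hence every clustering $X$ satisfies $|\cost(X,\instance)-\cost(X,\instance')|\le\cost(\mathcal{C},\instance)$. On $\instance'$, a coarsening with group loads $c_1,\dots,c_k$ costs $\frac12(\sum_j c_j^2-\sum_i|C_i|^2)$. This is $\ell_2$ load balancing of the sizes $|C_i|$, which has a PTAS (\Cref{lem:opt-theta}, \Cref{app:PTAS}). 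The $\eps$-loss there is absorbed using the fact that the cost depends only on the loads, which is exactly the structure your signed weights lack. Let $C'$ be the PTAS output, and use $\opt_k(\instance')\le\cost(\opt_k(\instance),\instance')\le\opt_k(\instance)+\cost(\mathcal{C},\instance)$. Then
\[ \cost(C',\instance)\le\cost(C',\instance')+\cost(\mathcal{C},\instance)\le(1+\eps)\opt_k(\instance)+(2+\eps)\cost(\mathcal{C},\instance), \]
which is $(1+2\alpha+\bigo(\eps))\opt_k(\instance)$. The $2$ in $2\alpha$ is $\cost(\mathcal{C})$ paid once going to $\instance'$ and once coming back.

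The same surrogate makes part (i) immediate, so no charging is needed. An optimal $k$-clustering of the disjoint-cliques instance $\instance'$ can be taken to keep every $C_i$ intact (\Cref{lem:cliques-intact}). Your charging route could not give $1+2\alpha$ anyway, for three reasons.
\begin{itemize}
\item The sufficient condition $d(\mathcal{M},\mathcal{O})\le 2\cost(\mathcal{C})$ is false in general. Every coarsening keeps each $C_i$ together, so every edge inside a $C_i$ that $\mathcal{O}$ cuts lies in $d(\mathcal{M},\mathcal{O})$. Yet such pairs are disagreements of $\mathcal{O}$, paid for by $\opt_k$, not by $\cost(\mathcal{C})$.
\item Any fallback through $d(\mathcal{C},\mathcal{O})\le\cost(\mathcal{C})+\opt_k$ pays $\opt_k$ a second time, on top of the $\cost(\mathcal{O})$ term. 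This gives $(2+2\alpha)$ from your looser target and $(3+2\alpha)$ from the $2\,d(\mathcal{C},\mathcal{O})$ target.
\item The per-vertex charge in your second bullet also fails. The vertex $u$ has at most $|C_i|$ separated $\mathcal{C}$-partners, but its new wrong partners range over the entire merged block.
\end{itemize}
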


The proof of Theorem~\ref{thm:main-1} has two main ingredients. 
First, we derive a constant-factor approximation for the special case when the input graph is a collection of disjoint cliques.  We note that offline \kCC\ is NP-hard even for disjoint cliques (this is implicit in~\cite{alon1998approximation}; for completeness, we present a proof in \Cref{app:np-hardness}). 
Our second ingredient uses the algorithm for disjoint cliques to convert any constant-factor approximate solution for \CC\ to a constant-factor approximate solution for \kCC. 
Using the current best approximation algorithm for \CC, which obtains a 1.485-approximation \cite{DBLP:conf/stoc/CaoCL0NV24}, we obtain a
$(3.97 + \bigo(\eps))$-approximation for \kCC.

\paragraph{Lower Bound for Online Algorithms.} The remainder of the paper concerns \textbf{online} \kCC\ in the random order vertex-arrival model, the main focus of our work. 
As noted above, the classic \Pivot\ algorithm \cite{DBLP:conf/stoc/AilonCN05} yields a simple $3$-competitive algorithm for online \CC. By Theorem~\ref{thm:main-1}, the offline approximability of \kCC\ is the same, up to constant factors, as the approximability of \CC.  
It is natural to ask if a constant-factor competitive algorithm exists for online \kCC{} under random vertex arrivals. 
Our second result, presented in Section~\ref{sec:lower-bound}, is a surprising logarithmic lower bound for the best competitive ratio achievable for online correlation $k$-clustering.  Interestingly, our lower bound applies even to the case of disjoint cliques, for which \CC{} has a trivial optimal solution with zero cost, indicating a divergence between \kCC{} and \CC{} in the online setting.

\begin{restatable}{theorem}{lowerbound}\label{thm:onlinelowerbound}
For any $n$ and $k = \bigo(n/\log n)$, there exists an instance of disjoint cliques for which any online algorithm for \kCC{} is $\Omega(\log k)$-competitive.  In terms of $n$, this also yields an $\Omega(\log n)$ lower bound.
\end{restatable}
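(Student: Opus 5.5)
The plan is to use Yao's principle together with the randomness of the arrival order. The hard instance is a deterministic family of disjoint cliques whose sizes span $\Theta(\log k)$ geometric scales. The point is that an online algorithm cannot tell a newly appearing clique's eventual size from its first few vertices, yet it must decide immediately whether to spend one of its $k$ clusters on it.

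\textbf{Instance.} Let $L=\Theta(\log k)$. For each level $j=0,1,\dots,L$, the instance contains about $k/2^{j}$ cliques of size $s\,2^{j}$, for a suitable constant $s$. Each level has about $ks$ vertices, so $n=\Theta(k\log k)$; this matches the hypothesis $k=\bigo(n/\log n)$ and gives $\log k=\Theta(\log n)$, which yields the second sentence of Theorem~\ref{thm:onlinelowerbound}. The total number of cliques is about $2k$, so at least about $k$ of them must share clusters or be split.

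\textbf{Upper bound on $\opt$.} I will exhibit an explicit offline clustering. It keeps each clique of levels $\ge 1$ in its own cluster, about $k$ clusters in total, and packs the roughly $k$ level-$0$ cliques, each of constant size, into a few clusters of a carefully chosen size, or distributes them across the large clusters. Merging two cliques of sizes $a,b$ costs $ab$, and splitting a clique costs the number of cut edges. Using these two facts, I expect $\opt=\bigo(k\,s^2)$, i.e. the cost of a single level.

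\textbf{Lower bound on the online cost.} I will split the random arrival sequence into $L$ phases at times $n/2^{L},\,n/2^{L-1},\dots,n$. A clique of size $s2^{j}$ first shows a vertex around time $n/(s2^{j})$, so in phase $i$ the cliques newly becoming visible are mostly of levels near $L-i$. Within each phase, a newly seen clique has only a constant number of revealed vertices, and by symmetry of the random order its level is roughly uniform over the levels still possible. When such a clique appears, the algorithm has two options.
\begin{enumerate}[label=(\roman*)]
\item It assigns the clique a fresh cluster. This consumes budget that is later needed for bigger cliques.
\item It places the clique's vertices into existing clusters. Then, with constant probability, the clique turns out to be large, and it pays for merging with another clique or for splitting itself.
\end{enumerate}
I will define a potential equal to the number of clusters still unused. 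Via a ski-rental-style charging argument, I aim to show that in every phase the expected cost is $\Omega(ks^2)=\Omega(\opt)$, whatever the current potential, unless the algorithm reserves $\Omega(k)$ clusters. It cannot reserve $\Omega(k)$ clusters in all $L$ phases, because the level-$L$ and level-$(L-1)$ cliques alone require about $k$ clusters. Summing over the phases gives an expected online cost of $\Omega(L\cdot\opt)=\Omega(\log k)\cdot\opt$.

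\textbf{Main obstacle.} The hard step is making the per-phase lower bound rigorous against adaptive algorithms. An algorithm could wait a few arrivals, or use the revealed edges of a clique, to estimate its size before committing its later vertices. It could also split a clique across clusters cheaply. My first attempt is to bound the information available at decision time: conditioned on the revealed prefix, the number of vertices of a clique seen so far has nearly the same distribution across neighbouring levels. Then I would argue by a coupling between instances that differ only in the sizes of two adjacent levels, so that any online strategy that is good for one level mix is bad for the other. If this coupling is too delicate, the fallback is to randomize the instance itself, choosing the level multiplicities at random and invoking Yao's principle. Then an online strategy's allocation across scales is compared with an adversarial mixture, which gives the $\log k$ factor in the same way that harmonic-sum lower bounds do.
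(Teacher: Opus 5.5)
\textbf{There is a genuine gap, and it lies in the instance itself**, not only in the per-phase argument you flag as the main obstacle.} Your estimate $\mathrm{OPT}=\Theta(ks^2)$ is fine. The problem is that on this multi-scale instance the algorithm's uncertainty about a new clique's level costs nothing. That uncertainty only arises while the cluster budget is slack. Until all $k$ clusters are used, an algorithm can open a fresh cluster for every newly seen clique, exactly as \Pivot\ does, and incur zero disagreements.

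A rough, back-of-the-envelope calculation shows when the budget binds. Write $y=ts/n$. The expected number of cliques seen by time $t$ is about $\sum_j \frac{k}{2^j}(1-e^{-y2^j})$, which reaches $k$ only at $y\approx 0.37$. At that point level $j$ still has unseen mass about $ks\,e^{-y2^j}$. So the $\Theta(ks)$ uncovered vertices come almost entirely from levels $0$ to $3$. Also, roughly $k/2$ of the clusters hold a single arrived vertex, mostly from level-$0$ and level-$1$ cliques.

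Now assign each later clique to a uniformly random such cluster; this is essentially \UP\ with host threshold $1$. The expected cost is $\bigo(|U|\cdot s+|U|^2/k)$ for the typical hosts. For the few larger cliques among the hosts it adds $\sum_j ky\,e^{-y2^j}\cdot s2^j\cdot\bigo(s)=\bigo(ks^2)$. The total is $\bigo(\mathrm{OPT})$.

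So the claimed $\Omega(ks^2)$ cost per phase is false for every phase before the budget is exhausted, and no $\log k$ factor can accumulate. Relatedly, your statement that the level-$L$ and level-$(L-1)$ cliques alone need about $k$ clusters is wrong: those levels contain only $\bigo(k/2^{L-1})$ cliques.

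\textbf{The missing idea is to make $\mathrm{OPT}$ tiny compared with the price of a single unavoidable indistinguishability error.} Then one bad event of constant probability suffices, and no accumulation over scales is needed. The paper uses $k-1$ cliques of size $2\ln k$ plus two singletons. There are $k+1$ components, and the optimum pairs the two singletons at cost $1$. Every other $k$-clustering costs $\Omega(\log k)$: merging two cliques, putting a singleton with a clique, or splitting a clique.

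Since $n=\Theta(k\log k)$, a direct calculation shows that with constant probability both singletons arrive by time $ck\ln k$. A coupon-collector estimate using negative correlation shows that, with probability close to $1$, some clique has no vertex by then, so both events hold together with constant probability. On such arrival orders, when the budget forces a co-location, the algorithm cannot distinguish the singletons from first vertices of cliques. It therefore pays $\Omega(\log k)$ with constant probability, against $\mathrm{OPT}=1$.
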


We additionally show a stronger lower bound of $\Omega(\log^2 n)$-competitiveness for online algorithms which never split cliques. 
Notably, this lower bound applies not only to our online algorithms, but also any online algorithm that uses \Pivot\ as a building block in a similar way.

\begin{restatable}{theorem}{lblogsq}\label{thm:lb_logsq}
    For any $n$ and $k = \bigo(n/\log n)$, there exists an instance of disjoint cliques for which any online algorithm for \kCC{}  which does not split the cliques is $\Omega(\log^2 k)$-competitive. 
    In terms of $n$, this also yields an $\Omega(\log^2 n)$ lower bound for online algorithms which never split cliques.
\end{restatable}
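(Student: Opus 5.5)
We build a hard instance of disjoint cliques and show that any online algorithm that keeps cliques whole must pay $\Omega(\log^2 k)$ times the offline optimum in expectation. Since no clique is split, the online algorithm's cost is exactly the number of non-edges inside its clusters. It is therefore determined by how it merges whole cliques into at most $k$ groups. A cluster made of cliques with sizes $s_1,\dots,s_m$ costs $\sum_{i<j}s_is_j$.

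\textbf{The instance.} We use $L=\Theta(\log k)$ size classes. Class $j$ has about $k/2^j$ cliques of size $2^j$, plus some filler, so that the total number of cliques is about $k+\Theta(k)$. Each class has total volume $\Theta(k)$, which gives $n=\Theta(k\log k)$ and matches $k=\bigo(n/\log n)$. Offline, the optimum can leave the large cliques as singletons and merge only the smallest ones. Choosing the excess of cliques to be about $k/2$ tiny cliques of size 1, paired up, costs $\bigo(k)$. So $\opt=\bigo(k)$.

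\textbf{Why the online algorithm pays more.} In random order, cliques are revealed gradually: a clique of size $s$ first appears at time about $1/s$ in fractional time. The online algorithm must open clusters for new cliques without knowing how many larger or smaller cliques are still to come.

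We give an adversarial, or Yao-type, distribution over instances: the number of classes actually present is random among $L$ choices. Consider time $t\approx 2^{-j}$, when cliques of size $2^j$ begin to appear.
\begin{itemize}[leftmargin=*]
\item If the algorithm has already used up many clusters, then when the larger classes arrive later, cliques of size $2^{j'}$ must be merged with others. Each forced merge costs about $2^{2j'}$.
\item If the algorithm reserves clusters for the future, it must instead merge the current cliques. This costs about $2^{j}\cdot(\text{cluster load})$.
\end{itemize}
The aim is a potential argument: at each of the $\Theta(\log k)$ scales, the algorithm pays $\Omega(k\cdot h)$, where $h$ counts how many scales remain to be hedged against. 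Summing $h$ over $\log k$ scales gives $\Omega(k\log^2 k)$ against $\opt=\bigo(k)$. This extends the $\Omega(\log k)$ argument of Theorem~\ref{thm:onlinelowerbound}, where a single budget-allocation tradeoff loses a $\log$ factor. Not being allowed to split cliques adds the second $\log$: a merge grows the cluster load as a product of sizes rather than additively.

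\textbf{Steps in order.} First, formalize the instance family and verify $\opt=\bigo(k)$ for every member. Second, use concentration (Chernoff over the random order) to show that at time $2^{-j}$ the algorithm has seen $\Theta(1)$ of the class-$j$ vertices but only a fraction of the later-arriving structure. Third, reduce to a deterministic online cluster-budget game on clique arrivals and prove the $\Omega(\log^2 k)$ lower bound for that game, using an averaging or LP-duality argument over the random number of classes.

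\textbf{Main obstacle.} The third step is the hardest, because the per-scale loss must be shown to be $\Omega(\log k)$ relative to $\opt$, not merely constant. My first attempt would be to write the algorithm's cluster usage as a function $f(j)$ and derive a charge of $\sum_{j'>j} k/(f(j)-\cdots)$. I would then show by a harmonic-sum or convexity argument that any $f$ yields a total of $\Omega(k\log^2 k)$.

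Finally, since $n=\Theta(k\log k)$ we have $\log k=\Theta(\log n)$, so the bound also reads $\Omega(\log^2 n)$.
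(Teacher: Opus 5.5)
There is a genuine gap, and it is exactly where the theorem lives. Your third step, an $\Omega(\log^2 k)$ lower bound for the ``cluster-budget game'', is never carried out. You state the aim of charging $\Omega(k\cdot h)$ per scale and hope a harmonic-sum or convexity argument will finish it, but nothing in the proposal forces a clique-preserving online algorithm to pay $\Omega(k\log^2 k)$.

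Moreover, the tradeoff you want to exploit runs against the random-order model. By your own estimate, a clique of size $s$ first appears at fractional time about $1/s$. So at time $2^{-j}$ the classes $j'>j$ have (almost all) \emph{already} appeared. There is no dilemma of reserving clusters for larger cliques ``arriving later''. Randomizing the number of classes also creates no uncertainty at the critical moment, since the largest classes are revealed first.

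In your family, by your own description, $\mathrm{OPT}=\Theta(k)$ is paid on merges of constant-size cliques, where a wrong merge costs only $O(1)$ more. Meanwhile, cliques of size $\Omega(\log k)$ number $O(k/\log k)$, and each is still unseen after a constant fraction of the sequence only with probability $k^{-\Omega(1)}$. Heuristically, \BP{} pays $O(k)$ here, so the family most likely does not witness the bound at all. The missing step is not a technicality.

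The idea you need is simpler: make $\mathrm{OPT}=O(1)$ and force one product-cost merge with constant probability. The paper reuses the instance of Theorem~\ref{thm:onlinelowerbound}: $k-1$ cliques of size $2\ln k$ plus two singletons, so $n=\Theta(k\log k)$ and $\mathrm{OPT}=1$. The size $2\ln k$ is tuned so that the following holds with constant probability. Both singletons arrive by time $ck\ln k$ (Lemma~\ref{lem:singletons_arrive}). At least two cliques are still unseen at that time (Lemma~\ref{lem:lb_2clique_arrive_late}, via negative association of the ``$K_j$ has no vertex by time $ck\ln k$'' indicators, Lemma~\ref{lem:vars_are_na}, and Chernoff). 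The two late cliques each reveal one vertex before either reveals a second.

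A singleton looks exactly like the first vertex of a clique whose other vertices have not arrived. When the last clique shows its first vertex, no free cluster remains, since the early cliques occupy distinct clusters unless two were already merged. The algorithm must then co-locate the new vertex with an established clique or with one of several indistinguishable single-vertex clusters: the two singletons and the lone representative of the other late clique. With constant probability it joins vertices of two different cliques. Since it never splits cliques, this commits it to merging two whole cliques at cost $(2\ln k)^2$.

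Your remark that the no-split rule turns a merge into a product of sizes is the right intuition. But the second logarithm comes from one forced clique--clique merge of two $\Theta(\log k)$-size cliques against an optimum of $1$, not from hedging across $\Theta(\log k)$ scales. The final conversion $\log k=\Theta(\log n)$ is then as you say.
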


\paragraph{Online Competitive Algorithms.} 
At a high level, any solution for \kCC\ incurs two kinds of costs: intrinsic disagreement cost that is incurred by any \CC\ solution, and the cost of placing ``uncorrelated vertices" in the same cluster owing to the constraint of having at most $k$ clusters. 
Since the Pivot algorithm is competitive for \CC, it offers an approach to address the costs of the first kind: when a vertex $v$ arrives, if a neighbor $p$ of $v$ has already been designated as a pivot, then place $v$ in the same cluster as $p$; otherwise designate $v$ as a pivot. 
Now, a critical question is \emph{where to place a newly designated pivot $v$}. 
This can be guided by the costs of the second kind. 
It is instructive to study the disjoint cliques case, for which unconstrained \CC\ admits a zero-cost solution,
since all costs incurred by any $k$-clustering are of the second kind. 
For disjoint cliques, it is intuitively best to distribute the cliques across clusters in a balanced manner. 
Analogously, for general graphs, we would like to assign the ``pivot clusters" computed by the Pivot algorithm to clusters in an approximately balanced manner.  

While the best balanced allocation is NP-hard to compute (offline and even for disjoint cliques), the notion of balancing inspires the following procedure: when a new pivot arrives, place it in a least loaded cluster (at that instant).  We call this (deterministic) algorithm \BP.  Our main result is a polylogarithmic-competitive algorithm for general graphs, established in Section~\ref{sec:online}.  

\begin{restatable}{theorem}{mainbalancedpivot}\label{thm:BalancedPivotUpper}
\BP\ has a competitive ratio of $\bigo(\log^5 n)$ for online correlation $k$-clustering under random-order vertex arrivals.
\end{restatable}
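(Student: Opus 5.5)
Our plan is to split the cost of \BP\ into two parts. The first is the ``intrinsic'' cost of the \Pivot\ clustering: the disagreements that \Pivot\ incurs by itself, which by the analysis of Ailon, Charikar and Newman is at most $3\cdot\opt_{\CC}\le 3\cdot\opt_k$ in expectation. The second is the ``merging'' cost: the non-edges created when several pivot clusters are placed into the same one of the $k$ output clusters. Since \BP\ refines nothing and only merges pivot clusters, its total cost is at most the \Pivot\ cost plus $\sum_{j}\sum_{P\ne P'\subseteq C_j}|P||P'|$. It therefore suffices to show that the expected merging cost is $\bigo(\log^5 n)\cdot \opt_k$.

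\textbf{Disjoint cliques.} We first treat the case where the graph is a union of disjoint cliques with sizes $s_1\ge s_2\ge\dots$. Here pivot clusters are exactly the cliques, and $\opt_k$ is governed by the optimal balanced partition, with $\opt_k \gtrsim \sum_j L_j^2$ minus the within-clique terms. We analyze \BP\ by bucketing cliques into $\bigo(\log n)$ size classes $[2^i,2^{i+1})$. Under random order, the arrival time of each clique (the arrival of its first vertex) is roughly the minimum of $s$ uniform times. Hence large cliques tend to arrive early and are spread out across the least-loaded bins. For cliques within one size class, greedy least-loaded placement behaves like a balls-into-bins process with load imbalance bounded by one clique size plus the current average. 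The remaining issue is a small clique arriving before a large one, so that the large one lands on an already-loaded bin; we would bound this by charging to the $\bigo(\log n)$ classes and using the fact that any bin's excess over the optimal average load is at most the largest clique placed after the bin became minimum. Squaring gives an $\bigo(\log^2 n)$ factor per class pair, which we would then aggregate over classes.

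\textbf{General graphs.} We reduce to the cliques analysis by viewing \Pivot's clusters as ``pseudo-cliques'' whose sizes are random. We relate the optimal $k$-clustering of the pivot clusters treated as cliques to $\opt_k$. This should follow from the ingredient behind Theorem~\ref{thm:main-1}: collapsing the \CC\ solution into cliques and balancing costs $\bigo(\opt_k + \text{\Pivot\ cost})$. The complication is that pivot cluster sizes and arrival times are correlated with the random order, and a pivot cluster keeps growing after it is placed. So the load at the moment of placement is not the final load. We plan to handle this with a concentration argument: with high probability a pivot cluster of final size $s$ has received an $\Omega(1)$ fraction of its members by time $\bigo(\log n)\cdot t_{\text{pivot}}$, and we would use time doubling epochs to lose further logarithmic factors. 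Combining $\log n$ epochs, $\log n$ size classes, and the $\log^2$ from the squared imbalance, together with a final union bound, plausibly yields $\log^5 n$.

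\textbf{Main obstacle.} We expect the main obstacle to be the general-graph step: controlling the dynamic growth of pivot clusters after they are placed, since the greedy decision uses stale loads. We would try first to prove a lemma that, in each epoch, the loads observed by \BP\ approximate the final loads within a polylogarithmic factor with high probability, and then feed that lemma into the disjoint-cliques argument.
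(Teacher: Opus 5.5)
There is a genuine gap. Two parts of your plan match the paper: splitting the cost into the \Pivot\ cost plus the merging cost $\sum_j\sum_{P\neq P'\subseteq C_j}|P||P'|$, and charging to the pivot clusters viewed as cliques, which is the paper's Lemma~\ref{lem:sml-avg-bound-pivot}. The rest of the proposal does not supply the argument, and its disjoint-cliques part targets the wrong source of cost.

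\BP\ pays no merging cost until $\Tau$, the step before the $(k+1)$-st pivot, because each of the first $k$ pivots lands in an empty cluster. All merging cost therefore comes from the set $U$ of vertices uncovered at $\Tau$. Each such vertex pays at most the projected load of the least loaded cluster at the moment its pivot arrives. The lower-bound instance of Theorem~\ref{thm:onlinelowerbound} shows what hurts. It is not imbalance among similar-size cliques, which is what your balls-into-bins and ``excess at most the largest clique placed later'' reasoning (an offline list-scheduling argument) addresses. It is \emph{blocking}: early pivots from outside the $k$ largest almost-cliques, or non-first pivots inside an almost-clique, occupy clusters, so a large cluster arrives when none is empty.

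The proof therefore needs three ingredients you do not have.
\begin{enumerate}
\item[(i)] A closed-form lower bound $\opt_k=\Theta(\sml{\instance}\avg{\instance})$ for cliques (Lemma~\ref{lem:opt-theta}), not just the cost identity of the unknown optimal partition.
\item[(ii)] An analysis conditioned on $\Tau\in[2^i,2^{i+1})$. It uses two facts: an almost-clique with no pivot by time $2^i$ has only $\bigo(n\log n/2^i)$ uncovered vertices (Lemma~\ref{lem: uncovered}); and the expected number of blockers by time $2^{i+1}$ is about $\frac{2^{i+1}}{n}\sml{\cdot}$, plus a term for non-first pivots charged to the sets $P_j$ or to missing edges.
\item[(iii)] A bound on the \emph{probability} of the arrival orders on which blocking happens.
\end{enumerate}

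Step (iii) cannot be replaced by a pointwise comparison with $\sml{\mathcal{J}_\sigma}\avg{\mathcal{J}_\sigma}$. Take $k-1$ cliques of size $m$ plus two singletons. Whenever both singletons arrive before some clique, \BP\ pays at least $m$, while $\sml{\mathcal{J}_\sigma}\avg{\mathcal{J}_\sigma}=2$. The same issue is why ``feed the pseudo-cliques into the cliques argument'' fails as stated. $\mathcal{J}_\sigma$ is determined by $\sigma$, and conditioned on it the order is no longer uniform, so the random-order estimates behind (ii) and (iii) do not transfer. The paper instead works with the fixed optimal \CC\ almost-cliques $Q_j$: the instances $\instance_\sigma$ and $\instance_{cs}$, the quantity $e_m+\sml{\instance_{cs}}\avg{\instance_{cs}}$ behind Lemma~\ref{lem:smltilde-avgtilde-bound}, and a second-moment bound on arrivals from a vertex set of size about $(k-r)\avg{\cdot}$.

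Separately, the lemma you single out as the key step is false as stated. By time $t$ only about a $t/n$ fraction of any pivot cluster has arrived. So a cluster founded at $t_p$ has received roughly an $\bigo(t_p\log n/n)$ fraction of its members by time $\bigo(\log n)\,t_p$, not an $\Omega(1)$ fraction. Even after rescaling by $n/t$, observed loads do not approximate final loads within a polylogarithmic factor. With disjoint cliques of size $n/t$ mixed with singletons, a clique pivot and a singleton pivot can both have observed load $1$ at time $t$, yet their final sizes are $n/t$ and $1$.

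The correct statement is Lemma~\ref{lem:projected_load}: $p_n(v)\le 4np_t(v)/t+\bigo(n\log n/t)$, and the additive term is unavoidable. It is exactly why the least loaded cluster's projected load is only $\bigo(\mu(S)\log n+n\log^2 n/\Tau+|U|\log n/|S|)$ (Lemma~\ref{lem:projected_load_least_loaded}). It is also why the paper must split on whether $\avg{\cdot}$ exceeds $n/2^{i+3}$. Below that threshold, the per-vertex charge $n\log^2 n/2^i$ can be paid only by showing that $\Pr[\Tau\in[2^i,2^{i+1})]$ is small, which is step (iii).

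Your count of logarithms (epochs, size classes, squared imbalance) is not derived from any such bound. As it stands, the proposal is a list of difficulties rather than a proof.
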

As a warm-up to our main theorem, we present a poly-logarithmic competitive ratio bound for the special case of disjoint cliques.
This is via an alternative approach to \BP, which turns out to be somewhat easier to analyze: when a new pivot arrives, place it in a cluster chosen uniformly at random from a set of clusters that are ``among the least loaded.''  For the special case of disjoint cliques, we consider a randomized algorithm of this form, which we call \UP.  In Section~\ref{sec:cliques}, we formally define \UP\ and present the following result.

\begin{restatable}{theorem}{cliquesthm}\label{thm-cliques-random}
\UP\ has a competitive ratio of $\bigo(\log^5 n)$ for online correlation $k$-clustering of disjoint cliques under random-order vertex arrivals. 
\end{restatable}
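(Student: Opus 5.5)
We have to propose a proof of Theorem~\ref{thm-cliques-random}, that \UP\ is $\bigo(\log^5 n)$-competitive on disjoint cliques. Since the precise definition of \UP\ has not been given yet, we take it to be the following. When a clique's first vertex arrives, that vertex becomes a pivot. The pivot is placed in a cluster chosen uniformly at random from a set $S_t$ of clusters that are "among the least loaded." Concretely, $S_t$ consists of the $\lceil k/2\rceil$ clusters of smallest current load, or alternatively all clusters whose load is within a factor $2$ of the minimum. Every later vertex of the clique joins its pivot's cluster, so \UP\ never splits cliques. Its cost is therefore
\[
\sum_{\text{clusters } C}\ \sum_{\substack{Q\neq Q' \text{ in } C}} |Q||Q'|,
\]
the sum over pairs of distinct cliques sharing a cluster of the product of their sizes.

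\textbf{Step 1: a lower bound on OPT.} We reuse the disjoint-cliques analysis from Section~\ref{sec:offline}. That section shows OPT is within a constant of the best assignment of whole cliques to clusters, possibly after splitting only the very large cliques. Next we group cliques into size classes $[2^i,2^{i+1})$, giving $\bigo(\log n)$ classes. Let $N_i$ be the number of cliques in class $i$. A class with $N_i > k$ forces OPT to pay at least about $4^i N_i^2/k$, since in any $k$-clustering the pairs within that class are spread over only $k$ clusters. Between classes, OPT must also pay the cost of co-locating large cliques with small ones once the large cliques occupy all $k$ clusters. I would derive the convexity bound
\[
\mathrm{OPT} \gtrsim \frac{1}{\log^2 n}\cdot \frac{1}{k}\sum_{i\le j} 2^{i+j} N_i N_j \ \ \text{restricted to the overflow regime,}
\]
that is, OPT is at least roughly $(\text{total volume})^2/k$ minus the self-pairs that can be isolated. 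We take the $\log^2 n$ loss here as an acceptable part of the final bound.

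\textbf{Step 2: expected cost of \UP\ via loads at arrival time.} Fix a clique $Q$, and charge to $Q$ its interaction with every clique whose pivot arrived earlier and landed in the same cluster. Because $Q$ is assigned uniformly among $|S_t| = \Theta(k)$ near-least-loaded clusters, the expected final load of $Q$'s cluster coming from cliques pivoted before $Q$ is at most $\bigo(1)$ times the average load at that time. We handle cliques pivoted after $Q$ symmetrically, charging each such pair to the later clique. We then need the average load at the moment $Q$'s pivot arrives. Random order helps here. A clique of size $s$ has its pivot arrive at a uniformly random position among the earliest of its $s$ vertices, which is early, at time about $n/s$ in rank fraction. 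So large cliques are pivoted early, when clusters are mostly empty of other large mass. The expected load seen is bounded class by class, and comparing it with Step 1 loses one $\log n$ factor for summing over classes. The subtlety is that a cluster's load grows after assignment, because vertices of already-pivoted cliques keep arriving. Balancing on current load is therefore only an approximation to balancing on final volume. I would handle this by treating clique $Q$ of class $i$ as having final volume about $2^{i+1}$ and using concentration. For each class, the pivot times of the $N_i$ cliques form a nearly uniform sample, so by Chernoff and a union bound, with high probability every cluster receives $\bigo(\log n)$ times its fair share of each class over every time window of dyadic length.

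\textbf{Step 3: main obstacle.} The hardest step is showing that random placement among the near-least-loaded set does not let a few clusters accumulate many large cliques through correlated choices. Loads are adaptive, so the choices are not independent. I would first try a potential or martingale argument. The potential $\Phi_t=\sum_C \exp(\lambda\cdot\mathrm{load}_t(C)/\bar L_t)$ should grow slowly, because each placement goes to a cluster of below-median load. This yields the bound $\max_C \mathrm{load}\le \bigo(\log n)\cdot \text{average}$ at all times with high probability. Combining the $\log^2 n$ loss from Step 1, the $\log n$ from summing over classes, the $\log n$ from the max-load bound, and the $\log n$ from the gap between current and final volume gives the claimed $\bigo(\log^5 n)$. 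The failure event occurs with probability $1/\mathrm{poly}(n)$ and costs at most $n^2$. This is absorbed unless $\mathrm{OPT}=0$. When $\mathrm{OPT}=0$ we have at most $k$ cliques, and then \UP\ always has an empty cluster in $S_t$ if $S_t$ is defined as the set of least-loaded clusters, so its cost is also $0$.
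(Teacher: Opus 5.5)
You are not analyzing the paper's \UP, and none of the rules you propose supports your own argument. In the paper, \UP\ runs plain \Pivot\ until the $k$th pivot arrives at time $\Tau_0$, so the first $k$ cliques to appear each get a cluster of their own. It then freezes $S$, the set of clusters holding at most $c_0\log n$ vertices at time $\Tau_0$, and sends every later pivot to a uniformly random member of this fixed $S$.

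Your argument needs two properties: $|S_t|=\Theta(k)$ in Step 2, and zero cost when $\opt=0$ in your last paragraph. No rule you list has both.
\begin{itemize}
\item With $S_t$ the $\lceil k/2\rceil$ least-loaded clusters, take an instance of exactly $k$ cliques ($k\ge 3$). Once fewer than $\lceil k/2\rceil$ clusters are empty, the next pivot lands in a nonempty cluster with positive probability. So the expected cost is positive while $\opt=0$. Having an empty cluster inside $S_t$ does not help, because the choice within $S_t$ is uniform.
\item With the factor-$2$-of-minimum rule, or the set of minimum-load clusters, the zero-cost property holds. But $|S_t|$ can be $1$, e.g.\ with loads $1,100,\dots,100$.
\end{itemize}
The paper's rule avoids this dilemma because it never needs $\Theta(k)$ clusters. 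It only needs $|S|\ge k-r(\instance)$ (\Cref{lem:logn_set_size}), which is exactly the normalization built into $\avg{\instance}$.

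The analysis also never reaches the correct benchmark, which for disjoint cliques is $\opt=\Theta(\sml{\instance}\avg{\instance})$ (\Cref{lem:opt-theta}). Your Step 1 inequality ``restricted to the overflow regime'' is not a precise statement, and $(\text{volume})^2/k$-type convexity bounds are vacuous once some cliques exceed $n/k$.

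Comparing against the \emph{average} load in Steps 2--3 cannot give a polylogarithmic ratio. Take one clique of size $n-m$ and $m$ singletons, with $2k\le m\le n/2$. Then $\opt=\Theta(m^2/k)$, but charging each singleton the average load at its arrival gives $\Theta(mn/k)$. Your potential-function invariant $\max_C\mathrm{load}\le\bigo(\log n)\cdot\text{average}$ is also false on this instance for $k\gg\log n$. The big clique's cluster holds at least half of all arrived vertices, against an average share of $1/k$. The underlying reason is that most load growth comes from non-pivot arrivals, which the algorithm does not place.

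What is needed is that late pivots enter only clusters whose \emph{final} load is $\widetilde{\bigo}(\avg{\instance})$, and that the uncovered mass $U$ is small. The paper gets this from three facts:
\begin{itemize}
\item Conditioned on $\Tau_0\in[2^i,2^{i+1})$, every clique of size at least $16n\ln n/2^i$ already has $\Omega(\log n)$ arrivals whp. So leading cliques of clusters in $S$ have size $\bigo(n\log n/2^i)$ (\Cref{lem:few-vertices-by-tau}).
\item $\E[|U|]=\bigo(\sml{\instance}\log^2 n)$. Each of the first $k$ pivots coming from outside the $k$ largest cliques leaves at most one top-$k$ clique uncovered, and such a clique is small (\Cref{lem:clique-coverage-cliques}).
\item When $\avg{\instance}<n/2^{i+2}$, a separate bound on $\Pr[\Tau_0\in[2^i,2^{i+1})]$.
\end{itemize}
Your heuristic that a clique of size $s$ is pivoted around time $n/s$ holds only in expectation. \Cref{sec:lower-bound} shows that a top-$k$ clique is missed with constant probability, and this displacement is precisely what must be paid for.

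Finally, the expected interaction among cliques pivoted after $\Tau_0$ is at most $\sum_{i\ne j}n_in_j\Pr[K_i,K_j\subseteq U]/|S|$, a second-moment quantity. The paper bounds it by $\E[|U|]^2/(k-r)$ using the negative correlation of \Cref{lem:neg-corr}, and then uses $\sml{\instance}\le\avg{\instance}(k-r)$. Your per-class Chernoff bounds concern placements, not which cliques remain uncovered, so nothing in your plan controls this term. The final $\log^5 n$ tally is asserted rather than derived.
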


\subsection{Related Work}\label{sec: related work}
\paragraph{Correlation Clustering.} 
Correlation clustering was introduced by Bansal, Blum, and Chawla~\cite{bansal2004correlation}. Their work focused on complete unweighted graphs and gave approximation algorithms for both the minimization and maximization versions. Although the minimization and maximization versions essentially correspond to dual views of the problem, they are incomparable from the point of view of approximation. 
In particular, the maximization version admits both a polynomial time approximation scheme (PTAS) \cite{bansal2004correlation}, and a SDP-based $0.766$ approximation algorithm for general weighted graphs \cite{DBLP:conf/soda/Swamy04}, which holds even when the number of clusters is upper bounded by 6. 
Thus, most of the work in the past three decades has focused on the minimization version.

Following \cite{bansal2004correlation}, Demaine, Emanuel, Fiat, and Immorlica~\cite{DBLP:journals/tcs/DemaineEFI06} extended the problem to general weighted graphs and gave logarithmic-factor approximations. 
Ailon, Charikar, and Newman~\cite{DBLP:conf/stoc/AilonCN05} proposed a greatly improved $3$-approximation randomized algorithm, known as the Pivot algorithm.
Due to its simple, combinatorial flavor, this has inspired several later works \cite{DBLP:journals/corr/abs-2305-13560, DBLP:journals/corr/abs-2404-06797, DBLP:conf/icml/DalirrooyfardMM24}, including this one. 
The minimization version of correlation clustering has been extensively studied from the lens of linear-programming (LP)-based approaches. 
Chawla, Makarychev, Schramm, and Yaroslavtsev~\cite{DBLP:conf/stoc/ChawlaMSY15} designed a near-optimal algorithm achieving a $2.06$ approximation on complete graphs, and also gave improved approximations for complete $k$-partite graphs.

Recently, much stronger LP relaxations have been explored: Cohen-Addad, Lee, and Newman~\cite{DBLP:conf/focs/Cohen-AddadLN22} used $\bigo(1/\varepsilon^2)$ rounds of the Sherali-Adams hierarchy to achieve a $(1.994+\varepsilon)$-approximation, and Cohen-Addad, Lee, Li, and Newman~\cite{DBLP:conf/focs/Cohen-AddadL0N23} further improved this to a $1.73$-approximation via a preclustering step that controls correlated rounding error. 
Cao, Cohen-Addad, Lee, Li, Newman, and Vogl~\cite{DBLP:conf/stoc/CaoCL0NV24} analyzed the cluster LP formulation; subsequent work by the same group~\cite{DBLP:conf/stoc/CaoCL0LNTVY025} gave a sublinear-time algorithm for solving the cluster LP. 
A common structural tool utilized in some of these recent works, as well as in \cite{DBLP:conf/icml/Cohen-AddadLMNP21}, is the identification of dense ``almost-cliques.'' While we do not use an almost-clique decomposition in our analyses, we use the fact that an optimal unconstrained correlation clustering solution consists of clusters with high density.

Recent works have also revisited combinatorial approaches to correlation clustering, avoiding heavy LP-based methods. Cordner and Kollios~\cite{DBLP:conf/cocoa/CordnerK23} introduced a combinatorial framework for correlation clustering that combines recursive merging with local search techniques to iteratively refine clusterings. 
Their approach avoids LP relaxations and achieves near-optimal approximation guarantees on several graph families. Cohen-Addad, Rasmussen Lolck, Pilipczuk, Thorup, Yan, and Zhang~\cite{DBLP:journals/corr/abs-2404-05433} presented a significant combinatorial approach for correlation clustering on general graphs, offering a simple $3$-approximation algorithm for the maximization problem and constant factor approximation for the minimization problem, primarily relying on local search techniques.

\paragraph{Correlation $k$-clustering.} In this paper, we study correlation $k$-clustering, where the number of clusters is upper bounded by a parameter $k$. While the standard (unconstrained) correlation clustering has been well-studied, the literature on correlation $k$-clustering is sparse.

It is known that the (unconstrained) correlation clustering problem is APX-hard~\cite{DBLP:journals/jcss/CharikarGW05}; this hardness extends to the correlation $k$-clustering problem in general, since they are identical when $k=n$. 
On the other hand, Giotis and Guruswami~\cite{DBLP:journals/corr/abs-cs-0504023} studied correlation $k$-clustering when $k$ is a constant and presented a PTAS using a sample-and-prune framework. 
Their work demonstrates that near-optimal solutions can be achieved for a constant $k$, albeit with running time exponential in $k$. 
As a result, their approach does not extend to yield polynomial time approximation algorithms when $k$ is super constant.
To the best of our knowledge, our work is the first to study the correlation $k$-clustering problem for general $k$.

\paragraph{Online setting.} In the online setting, work has focused on the disagreement minimization version of the CC problem. Mathieu, Sankur, and Schudy~\cite{DBLP:journals/corr/abs-1001-0920} initiated the study of correlation clustering in the online setting. They showed that under adversarial arrivals, the competitive ratio of any algorithm is lower bounded by $\Omega(n)$. For the maximization version, they gave a $0.5$-competitive algorithm.

Recent work by Cohen-Addad, Lattanzi, Maggiori, and Parotsidis~\cite{DBLP:conf/icml/Cohen-AddadLMP22} gave a constant-competitive online algorithm under adversarial arrivals; the caveat here, is that this result holds when recourse is allowed, i.e. vertices are allowed to change cluster assignments. Their algorithm obtains $\bigo(\log n)$ recourse per vertex which they show is tight for any constant-competitive algorithm. Given the strong $\Omega(n)$ lower bound in the online setting under adversarial vertex arrivals, Lattanzi, Moseley, Vassilvitskii, Wang, and Zhou~\cite{DBLP:conf/nips/LattanziMVWZ21} study online correlation clustering in a beyond worst-case (``semi-online") model.
To the best of our knowledge, no non-trivial competitive algorithms exist for the online correlation $k$-clustering.

Correlation clustering has also been extensively studied in a variety of other computational settings including streaming \cite{DBLP:conf/icml/AhnCGMW15, DBLP:conf/innovations/Assadi022, DBLP:conf/nips/AssadiSW23, DBLP:journals/corr/abs-2404-06797, DBLP:journals/corr/abs-2411-09979, DBLP:conf/stoc/AssadiKP25, DBLP:journals/corr/abs-2504-12060, DBLP:journals/theoretics/CambusKLPU25}, 
distributed \cite{DBLP:conf/icml/Cohen-AddadLMNP21, DBLP:conf/focs/BehnezhadCMT22, DBLP:conf/soda/CaoHS24}, and the learning augmented model \cite{DBLP:journals/corr/abs-2510-10705}.

\section{Technical Overview}
Before presenting our techniques, we define the correlation $k$-clustering problem and our random-order vertex arrival model below.
\paragraph{Correlation $k$-clustering.}
In the \emph{correlation $k$-clustering problem}, we are given as input a graph $G=(V,E)$ on $n=|V|$ vertices, edges $e\in E$. A correlation $k$-clustering (or $k$-clustering) of $G$ is a partition of $V$ into \textit{at most} $k$ subsets of $V$.  Let $\mathcal{C} = \{C_1, C_2, \dots, C_k\}$ denote a $k$-clustering of $G$ (note that $C_i$ could be potentially empty for $i\in [k]$). A pair $(u,v)$ is in \emph{disagreement} with respect to $\mc$ if one of the following holds: i) $(u,v) \in E$ and $u\in C_i, v\in C_j$ where $i\neq j$, or ii) $(u,v) \notin E$ and $u,v\in C_i$ for some $i\in [k]$. Let $\phi(u,v)$ be an indicator variable for a pair $(u,v)$ of vertices which is $1$ if $(u,v)$ is in disagreement with $\mc$ and $0$ otherwise. The \textit{cost} of a $k$-clustering $\mc$ is the total disagreements with respect to $\mc$, i.e. $\cost(\mc) = \sum_{u,v\in V}\phi(u,v)$. 
The goal in $k$-correlation clustering is to obtain a $k$-clustering with minimum cost. 

\paragraph{Online Random Order \kCC\ (Vertex Arrivals).}  In the online \kCC\ problem with vertex arrivals, vertices of the input graph $G=(V,E)$ arrive online at discrete time steps. Let $v_t$ for $t\in [1,n]$ denote the vertex revealed at time step $t$. When $v_t$ arrives, all edges of the form $(v_t', v_t)$ where $t'<t$ are revealed. An online algorithm is required to irrevocably assign $v_t$, to one of the (at most $k$) clusters it maintains, at time $t$.  In the \textit{online random order} \kCC\ problem, which we study in this paper, the order of vertex arrivals is chosen uniformly at random from the set of all possible permutations of vertices in $V$.

\subsection{Constant-factor Approximation Algorithm}
\label{sec:overview.offline}
The starting point for our offline approximation algorithm is a derivation of suitable lower bounds on the optimal cost for any given instance. 
An immediate lower bound is the cost of an optimal (unconstrained) correlation clustering solution.  This lower bound is, however, not sufficient; it is zero for disjoint cliques, while the cost of an optimal $k$-clustering can be large if the instance has more than $k$ cliques. 
Indeed, an asymptotically tight lower bound on the cost of an optimal $k$-clustering for an arbitrary disjoint cliques instance is an integral part of not just our approximation algorithm analysis, but also of our main result for online correlation $k$-clustering.

\begin{wrapfigure}{r}{0.45\textwidth}
  \centering
  \vspace{-15pt}

    \resizebox{0.40\textwidth}{!}{
    \tikzset{every picture/.style={line width=0.75pt}} 
    
    \begin{tikzpicture}[scale=0.9,x=0.75pt,y=0.75pt,yscale=-1,xscale=1]
    
    \draw [fill=lightgray, draw=black, line width=0.8pt  ,fill opacity=1 ](165.02,66.99) .. controls (167.77,67) and (170,69.23) .. (169.99,71.98) -- (169.79,201.98) .. controls (169.78,204.73) and (167.55,206.96) .. (164.8,206.95) -- (149.86,206.93) .. controls (147.11,206.92) and (144.89,204.69) .. (144.89,201.94) -- (145.1,71.94) .. controls (145.1,69.19) and (147.33,66.96) .. (150.08,66.97) -- cycle ;
    \draw [fill=lightgray, draw=black, line width=0.8pt  ,fill opacity=1 ] (196.01,77.01) .. controls (198.76,77.02) and (200.98,79.25) .. (200.98,82) -- (200.79,201.98) .. controls (200.78,204.73) and (198.55,206.96) .. (195.8,206.95) -- (180.86,206.93) .. controls (178.11,206.92) and (175.89,204.69) .. (175.89,201.94) -- (176.08,81.96) .. controls (176.08,79.21) and (178.32,76.98) .. (181.07,76.99) -- cycle ;
    \draw [fill=lightgray, draw=black, line width=0.8pt  ,fill opacity=1 ](253.99,87.02) .. controls (256.74,87.02) and (258.97,89.25) .. (258.96,92) -- (258.79,201.98) .. controls (258.78,204.73) and (256.55,206.96) .. (253.8,206.95) -- (238.86,206.93) .. controls (236.11,206.92) and (233.89,204.69) .. (233.89,201.94) -- (234.06,91.96) .. controls (234.07,89.21) and (236.3,86.99) .. (239.05,86.99) -- cycle ;
    \draw [fill=lightgray, draw=black, line width=0.8pt  ,fill opacity=1 ] (295.96,107.02) .. controls (298.71,107.02) and (300.93,109.25) .. (300.93,112) -- (300.79,201.98) .. controls (300.78,204.73) and (298.55,206.96) .. (295.8,206.95) -- (280.86,206.93) .. controls (278.11,206.92) and (275.89,204.69) .. (275.89,201.94) -- (276.03,111.97) .. controls (276.04,109.22) and (278.27,106.99) .. (281.02,106.99) -- cycle ;
    \draw [fill=lightgray, draw=black, line width=0.8pt  ,fill opacity=1 ] (351.94,117.02) .. controls (354.69,117.02) and (356.92,119.25) .. (356.91,122) -- (356.79,201.98) .. controls (356.78,204.73) and (354.55,206.96) .. (351.8,206.95) -- (336.86,206.93) .. controls (334.11,206.92) and (331.89,204.69) .. (331.89,201.94) -- (332.02,121.96) .. controls (332.02,119.21) and (334.25,116.99) .. (337,116.99) -- cycle ;
    \draw [fill=lightgray, draw=black, line width=0.8pt  ,fill opacity=1 ] (391.92,128.01) .. controls (394.67,128.02) and (396.9,130.25) .. (396.9,133) -- (396.79,201.98) .. controls (396.78,204.73) and (394.55,206.96) .. (391.8,206.95) -- (376.86,206.93) .. controls (374.11,206.92) and (371.89,204.69) .. (371.89,201.94) -- (372,132.96) .. controls (372,130.21) and (374.24,127.99) .. (376.99,127.99) -- cycle ;
    \draw [fill=lightgray, draw=black, line width=0.8pt  ,fill opacity=1 ] (458.89,148.01) .. controls (461.64,148.02) and (463.87,150.25) .. (463.86,153) -- (463.79,201.98) .. controls (463.78,204.73) and (461.55,206.96) .. (458.8,206.95) -- (443.86,206.93) .. controls (441.11,206.92) and (438.89,204.69) .. (438.89,201.94) -- (438.97,152.96) .. controls (438.97,150.21) and (441.21,147.98) .. (443.96,147.99) -- cycle ;
    \draw  [dash pattern={on 4.5pt off 4.5pt}]  (269,56) -- (269,207) ;
    \draw  [dash pattern={on 4.5pt off 4.5pt}]  (364,55) -- (364,206) ;
    \draw   (368,214) .. controls (368,218.67) and (370.33,221) .. (375,221) -- (408,221) .. controls (414.67,221) and (418,223.33) .. (418,228) .. controls (418,223.33) and (421.33,221) .. (428,221)(425,221) -- (461,221) .. controls (465.67,221) and (468,218.67) .. (468,214) ;
    \draw   (271,251) .. controls (271,255.67) and (273.33,258) .. (278,258) -- (360.5,258) .. controls (367.17,258) and (370.5,260.33) .. (370.5,265) .. controls (370.5,260.33) and (373.83,258) .. (380.5,258)(377.5,258) -- (463,258) .. controls (467.67,258) and (470,255.67) .. (470,251) ;
    
    \draw (211,174) node [anchor=north west][inner sep=0.75pt]   [align=left] {...};
    \draw (311,174) node [anchor=north west][inner sep=0.75pt]   [align=left] {...};
    \draw (410,174) node [anchor=north west][inner sep=0.75pt]   [align=left] {...};
    \draw (360,32) node [anchor=north west][inner sep=0.75pt]   [align=left] {$k$};
    \draw (267,35) node [anchor=north west][inner sep=0.75pt]   [align=left] {$r$};
    \draw (413,231) node [anchor=north west][inner sep=0.75pt]   [align=left] {$\sml{\mathcal{I}}$};
    \draw (353,272) node [anchor=north west][inner sep=0.75pt]   [align=left] {$|A| = \avg{\mathcal{I}}(k-r(\instance))$};

    \end{tikzpicture}
    }

    \caption{Illustration of an instance $\instance$}
    \label{fig:instance}
  \vspace{-10pt}
\end{wrapfigure}

Our algorithm and analysis for the cliques case relies on a few simple observations.
(1), in any optimal $k$-clustering, no clique gets split across multiple clusters (\Cref{lem:cliques-intact}). 
However it is not always clear how to optimally partition the collection of cliques into $k$ clusters. 
(2), the average load of clusters will be $\frac{n}{k}$. 
Thus if any clique is of size at least $\frac{n}{k}$, an optimal solution will place it entirely in its own cluster.
Finally (3), in an optimal solution, every cluster $C_j$ which contains more than $\frac{n}{k}$ vertices must meet the following property: for every clique $K_i$ assigned to cluster $C_j$, the size of $C_j\setminus K_i$ must be strictly smaller than $\frac{n}{k}$.
That is, if a cluster has size larger than $\frac{n}{k}$, it is only larger by one clique.

Any disjoint cliques instance admits a simple greedy algorithm.  Sort all cliques in decreasing order of size.
Assign each of the largest $k$ cliques to their own cluster.  For any remaining clusters, in decreasing order of size, greedily assign them to the cluster with the smallest load. This simple algorithm admits a 4-approximation (\Cref{lem:opt-theta}).  We also present a PTAS for disjoint cliques using the approach of~\cite{alon1998approximation} in \Cref{app:PTAS}.  

For any instance $\instance = (V,E)$ containing disjoint cliques, we define parameters $r(\instance)$, $\avg{\instance}$, and $\sml{\instance}$. (\Cref{def:s-of-instance,def:threshold}. See \Cref{fig:instance} for a visual representation.)
Assume $|K_i| = n_i$ and that cliques are ordered in decreasing size $n_1\geq\hdots\geq n_\ell$.
Then $r=r(\instance)$ is the index of the smallest clique which must be clustered individually by observation (1). 
We set $\avg{\instance} = \frac{1}{k-r} \sum_{i> r} n_i$ to be the average load of the remaining clusters.
And finally, we set $\sml{\instance}=\sum_{i>k} n_i$ to be the total number of vertices which do not lie in the $k$ largest cliques.

First, we show that $\cost(\mbox{Greedy})\leq \avg{\instance} \sml{\instance}$.
Since cliques remain intact, disagreement cost comes exclusively from co-clustering vertices from different cliques.
We can decompose $\cost(\mbox{Greedy})$ by examining all $(\ell-k)$ smallest cliques $K_i$ and summing $n_i$ times the load of its cluster when it is assigned. 
By observation (3), this is no more than $\avg{\instance}$.
Thus, the sum is no more than $\avg{\instance} \sml{\instance}$ by definition of $\sml{\instance}$.

Second, we show that $\frac{1}{4} \avg{\instance}\sml{\instance}$ is a valid lower bound on the optimal $k$-clustering cost.
$\cost(\opt)$ can be bounded below if we assume that all clusters (apart from those associated with the largest cliques $K_1,\hdots,K_r$) are of equivalent size, $\avg{\instance}$.
Thus, we can easily show that $\cost(\opt) \geq \sum\frac{1}{2}\sum_{i>r} n_i(\avg{\instance} - n_i)$.
From there, we break our analysis into two cases depending on the $n_{k+1}$, the size of the largest clique that is not within the $k$ largest.
Either way, we find that we can lower bound the above sum by $\frac{1}{2}\sum_{i>k}\frac{1}{2}\avg{\instance}$, which equals our desired bound.

In the general case where our underlying instance is not necessarily disjoint cliques, we show that treating the clusters returned by an unconstrained correlation clustering algorithm $\mathcal{A}$ as ``cliques'' and balancing them into $k$ clusters in the same greedy way also returns an approximately optimal solution, as long as the clusters from $\mathcal{A}$ were approximately optimal for the \CC\ problem.
We describe this algorithm and its proof in detail in \Cref{sec:offline}.

\begin{figure}[t]
  \centering

  \begin{subfigure}{0.32\textwidth}
    \centering
    \resizebox{\linewidth}{!}{\input{fig_cc_1}}
    \caption{}
    \label{fig:cc-sub1}
\end{subfigure}
\hfill
\vrule width 0.4pt
\hfill
\begin{subfigure}{0.32\textwidth}
    \centering
    \resizebox{\linewidth}{!}{\input{fig_cc_2}}
    \caption{}
    \label{fig:cc-sub2}
\end{subfigure}
\hfill
\vrule width 0.4pt
\hfill
\begin{subfigure}{0.32\textwidth}
    \centering
    \resizebox{\linewidth}{!}{\input{fig_cc_3}}
    \caption{}
    \label{fig:cc-sub3}
\end{subfigure}

  \caption{Given an instance $\instance$, the $\alpha$-approximation
  solution $C_s$ to the unconstrained correlation clustering problem
  is made of three groups: red, blue, and green. The black borders
  show the optimal solution to the correlation $k$-clustering problem.
  To create instance $\instance'$ from $\instance$, we delete edges
  between different colors and add edges among them. This figure shows
  the process for only the red vertices of the middle cluster.}
  \label{fig:cc-to-kcc}
\end{figure}

\subsection{Lower Bound for Online Correlation $k$-Clustering}
\label{sec:overview.online.lower}
In Section~\ref{sec:lower-bound}, we establish that any online algorithm for correlation $k$-clustering under random vertex arrivals must incur a competitive ratio of $\Omega(\log k)$. 
In particular, since $k$ can be as large as $n$, this implies an $\Omega(\log n)$ lower bound when $k=\text{poly}(n)$, which it is in our hard instance below.

We prove this via a carefully constructed instance
consisting of $(k-1)$ disjoint cliques each of size $2\log k$, together with two singleton vertices, for a total of $n = 2(k-1)\log k + 2$ vertices. 
The optimal solution assigns each clique to a unique cluster and assigns the singletons together into a single cluster, incurring unit cost in total.
However, we show that any online algorithm incurs an expected cost of $\Omega(\log k)$ on this instance. 
Our probabilistic argument has two parts. 

First, we show that with probability $\Omega(1)$, both singleton vertices arrive by time $ck\log k$, where $c$ is a small constant (Lemma \ref{lem:singletons_arrive}). 
Second, we show that with probability at least $\Omega(1)$ there exists at least one clique for which no vertices arrive by time $ck\log k$ (Lemma \ref{lem:clique_missing}). 
It follows that with constant probability (and hence a constant fraction of arrival orders), there exists at least one clique whose first vertex arrives after both singletons have arrived. 
Thus, for a constant fraction of arrival orders, there exists a time step $t$ at which there are three vertices $v_1, v_2, v_3$, none of whose neighbors have arrived by time $t$, and where $v_1, v_2$ are singletons and $v_3$ is a vertex from some clique. 
By symmetry, there exist at least $\Omega(1)$ fraction of arrival orders, each of which has a unique time step $t$ at which there exists three vertices for which none of the neighbors have arrived. 
For each such arrival order, there is no way for an algorithm to distinguish between the singleton vertices and the first vertex from some clique at time $t$.  

At time $t$ there are a constant many choices any online algorithm could have made regarding the assignment of those three vertices. First, we note that w.l.o.g., the algorithm does not assign any vertex from a single clique to different clusters (\Cref{lem:cliques-intact}). The only choice in which an online algorithm can incur optimal cost is if it places the singleton vertices together.  In all other cases, the cost incurred is at least $\Omega(\log k)$ since either a singleton vertex is placed together with a clique or two cliques are placed together incurring $\Omega(\log^2 k)$ cost. Thus, with constant probability the algorithm incurs a cost of $\Omega(\log k)$, yielding the lower bound for random-order online \kCC.  

\subsection{$\bigo(\polylog(n))$-Competitive Algorithm for Disjoint Cliques}
\label{sec:overview.online.upper.cliques}

As mentioned in Section~\ref{sec:overview.offline}, an asymptotically optimal bound on the cost of an optimal $k$-clustering for the disjoint cliques instance $\instance$ is given by $\Theta(\sml{\instance}\avg{\instance})$.  
Our main result for disjoint cliques in the \textit{online} setting is an algorithm \UP\ which incurs expected cost $\bigo(\sml{\instance}\avg{\instance}\cdot \mbox{polylog(}n))$ on any disjoint cliques instance $\instance$ (\Cref{thm-cliques-random}).

The \UP\ algorithm operates in two stages, described in detail in \Cref{sec:cliques}. 
The first stage corresponds to the \Pivot\ algorithm until the last time step before the $(k+1)^{th}$ vertex is designated as a pivot. Let us call this (random) time step $\Tau$. By definition, at time $\Tau$, we have exactly $k$ pivots (together with their clique neighbors that may arrive) assigned to each of the $k$ clusters, and at time $\Tau+1$, the vertex to arrive is the $(k+1)^{th}$ pivot vertex. At time $\Tau$, let $S$ denote the clusters with $\bigo(\log n)$ vertices. For every vertex arriving at time $t>\Tau$, if it is designated as a pivot, it is assigned to a cluster uniformly at random from $S$; otherwise it is assigned to the cluster to which its earliest pivot neighbor is assigned to.

To analyze the cost of \UP\, we first note that the cost incurred until time $\Tau$ is zero. Since vertices in the same clique are not assigned to different clusters, it suffices to consider the cost incurred by assigning two cliques to the same cluster. Let $U$ denote the set of \textit{uncovered} vertices which are not adjacent to any of the first $k$ pivots. If the arrival order of pivot vertices is in non-decreasing order of their corresponding clique sizes, then $|U|=\sml{\instance}$.  However, as the lower bound instance demonstrates, with non-trivial probability this may not hold. In the following, we say a clique $K_i$ arrives after (resp. before) another clique $K_j$ if the first vertex from $K_i$ arrives after (resp. before) the first vertex arrives from $K_j$. We also say a clique $K$ is assigned to a cluster $S_i$ if its pivot vertex (and as a result, all future vertices in $K$) is assigned to $S_i$.   

The total cost is the sum of the costs incurred across each of the $k$ clusters, by the above. For a given cluster, we consider the cost incurred by each clique $K\subseteq U$ (\textit{clique cost}) incurred on its arrival time. For any clique $K\subseteq U$ that is assigned to a cluster $S_i\in S$, its clique cost is the sum of: 1) the product of $|K|$ and the size of the first clique assigned to $S_i$ and, 2) the product of $|K|$ and the sum of the sizes of all cliques (excluding the first clique) assigned to $S_i$ which arrived before $K$. Then, the cost incurred for a given cluster (\textit{cluster cost}) is simply the sum of clique costs for all the non-first cliques assigned to it, since the first clique to arrive incurs zero cost. Let $s_1$ denote the random variable corresponding to the size of the first clique assigned to $S_i$, and $s_2$ denote the random variable denoting the sum of all non-first cliques assigned to $S_i$. The clique cost of $K$ is then $s_1+s_2$.

If $|U|$, $s_1$ and $s_2$ were independent random variables, then the total expected cost can be bounded by $\E[|U|](\E[s_1] + \E[s_2])$.  However, this does not hold since the event that a given clique is not among the first $k$ cliques to have a vertex arrive is correlated with the event that a different clique is among the first $k$ cliques.  Furthermore, while it is possible to derive near-tight bounds on $\E[|U|]$ in terms of $\bigo(\sml{\instance})$, it is not possible to derive a suitable high probability bound on $s_1$. 

We first establish using a straightforward concentration bound that for any time $t$, the number of vertices that arrive from a clique $K$ of size $|K|=\Omega(\frac{n\log n}{t})$ by time $t$, is $\Omega(\log n)$ whp \footnote{Throughout the paper, we say that an event $\mathcal{E}$ occurs with high probability (abbreviated as whp) if $\Pr[\mathcal{E}] \ge 1 - \frac{1}{n^c}$ for a suitably large constant $c>0$.} (\Cref{lem:few-vertices-by-tau}).  To analyze $E[|U|]$, we first present Lemma \ref{lem:clique-coverage-cliques}, establishing that $E[|U|\, |\,\Tau\in [2^i, 2^{i+1})]\Pr[\Tau\in [2^i, 2^{i+1})]$ is $O(s(\instance)\log n)$.  The approach is as follows. Suppose $\Tau$ is in $[2^i, 2^{i+1})$ for some $i$. The number of vertices in $U$ from cliques $K_j$, $j>k$, is trivially bounded by $s(\instance)$, so we restrict our attention to uncovered cliques $K_j$, for $j\leq k$. Since each cliques to arrive by time $\Tau$ has size $O(\frac{n\log n}{2^i})$ whp, this gives us a bound on the size of uncovered cliques by time $\Tau$. Next, the expected number of vertices to arrive by time $\Tau$ from cliques $\bigcup_{j>k} K_j$ is at most $\frac{s(\instance)2^{i+1}}{n}$. Each such vertex can be one of the first $k$ pivots and hence \textit{swap} a pivot from one of the $k$ largest cliques. This yields an upper bound of $\frac{s(\instance)2^{i+1}}{n}O(\frac{n\log n}{2^i})= O(s(\instance)\log n)$ for the expected number of uncovered vertices from the $k$ largest cliques. 

\paragraph{Type 1 Costs.} To analyze type 1 clique costs, we first observe that at least $k-r$ pivot vertices must arrive from the set $A=\bigcup_{j=r+1}^{\ell} K_j$ by time $\Tau$. Next, conditioned on the event that $\Tau\in [2^i, 2^{i+1})$, we show that whp, each cluster in $S$ has size at most $O(\frac{n\log n}{2^i})$. Thus, the total Type 1 clique cost for all cliques $K\subseteq U$ can be expressed as: $\sum_{i} \Pr[\Tau\in [2^i, 2^{i+1})]\E[|U|\,|\,\Tau\in [2^i, 2^{i+1})] O(\frac{n\log n}{2^i})$. Then, we consider two cases, depending on $a(\instance)$.  If $a(\instance)\geq \frac{n}{2^{i+2}}$, it follows that the total Type 1 clique cost is simply $O(s(\instance)a(\instance)\log^3 n)$.

On the other hand, if $a(\instance)<\frac{n}{2^{i+2}}$ we first derive an upper bound on $\Pr[\Tau\in [2^i, 2^{i+1})]$. Note that at time $\Tau+1\leq 2^{i+1}$, $k+1$ pivots arrive where at least $k-r+1\geq 2$ pivots arrive from $A$, where $A=a(\instance)(k-r)$ by definition and $a(\instance)<\frac{n}{2^{i+2}}$. If $k-r=\Omega(\log n)$, then whp, we first show that the total vertices arriving from $A$ by time $\Tau$ is at most $k-r$ and we're done. The more challenging case is $k-r=O(\log n)$.  We handle this case by first bounding the probability of the $\Tau$ can be at most $2^{i+1}$ since it requires a larger number of vertices to arrive from $A$ than expected.  Since $k-r=O(\log n)$, the set $Y=\bigcup_{j=r+1}^{\ell}K_j$ has size $O(s(\instance)\log n)$. Hence, the expected number to arrive by $\Tau$ from $Y$ is $O(\frac{s(\instance)\log n}{a(\instance)(k-r)})$. Each pivot from $Y$ is assigned to one of the $k$ clusters and can lead to one clique from $\bigcup_{j=1}^k K_j$ being uncovered. Noting that the size of an uncovered clique must be $O(\frac{n\log n}{2^i})$, and summing over $O(\log n)$ choices of $i$ leads to an overall Type 1 cost of $O(s(\instance)a(\instance)\log^5 n)$.

\paragraph{Type 2 Costs.} For type 2 clique costs, we note that by virtue of \UP\, all cliques $K\subseteq U$ are assigned uniformly at random to a cluster in $S$. Thus, the total expected type 2 cost (over all cliques $K$) is the product of i) $\frac{1}{|S|}$ (the probability of assigning any clique to a cluster in S), ii) the product over all $i,j$, where $i\neq j$ of: $|K_i||K_j|$ (the disagreement cost) times the probability that $K_i\subseteq U$ and $K_j\subseteq U$.  Recall that $S$ is the set of clusters which had at most $O(\log n)$ vertices assigned at time $\Tau$. In Lemma \ref{lem:logn_set_size}, we prove that the size of $S$ is $k-r$ with high probability. 

Finally, we establish a negative correlation property (Lemma \ref{lem:neg-corr}) that gives an upper bound on the probability that $K_i\subseteq U$ and $K_j\subseteq U$ by the probability that $K_i\subseteq U$ conditioned on $K_j\subseteq U]$. Intuitively, the probability that $K_i$ is uncovered by time $\Tau$ can only increase if we additionally condition on the event that $K_j$ is uncovered. All in all, this allows us to bound the total Type 2 costs by $\frac{\E[|U|]^2}{|S|}$. Utilizing that $|S|$ is $k-r$ with high probability yields an upper bound of $O(\sml{\instance}a(\instance)\log^4 n)$ competitiveness. 
Combining the total type 1) and type 2) costs yields $O(\sml{\instance}a(\instance)\log^5 n)$ competitiveness.

\subsection{$\bigo(\polylog(n))$-Competitive Algorithm for General Graphs}
Our online \kCC\ algorithm \BP\ is a simple adaptation of the \textsc{Pivot} algorithm.  When a new vertex $v$ arrives, if any of its previously clustered neighbors has been designated as a pivot, $v$ is placed into the same cluster as that of its earliest arriving neighbor pivot; otherwise $v$ is designated as a pivot and is placed into the cluster with least load.  While the analysis for disjoint cliques provides a blueprint for an approach for general graphs, there are a number of technical challenges.  We highlight three in this overview and briefly describe how we overcome them.

\noindent \textbf{Challenge 1: Lower bound on optimal cost.}  A starting point for a lower bound for the optimal \kCC\ cost for $\instance$ is the optimal \CC\ cost for $\instance$.  
However, as we have noted before this lower bound can be very weak.  
If the instance $\instance$ were a collection of disjoint cliques, we addressed this issue by deriving an asymptotically tight lower bound in terms of the parameters $\sml{\instance}$ and $\avg{\instance}$. 
For a general graph instance $\instance$, however, it is not clear how to define these parameters. 

To derive parameters analogous to $\sml{\instance}$ and $\avg{\instance}$ for disjoint cliques, we observe that an optimal \CC\ solution for an arbitrary graph instance $\instance$ yields a partition of the vertex set into clusters $Q_1, Q_2, \ldots, Q_\ell$ (for some $\ell$).  
While edges may be missing in any $Q_j$ and there may be edges crossing any $Q_j$, each $Q_j$ is an ''almost-clique'' in the sense that each vertex in $Q_j$ is adjacent to at least half of the vertices in $Q_j$ (otherwise, the vertex can be separated from $Q_j$, yielding a better \CC\ solution).  Any random vertex arrival sequence $\sigma$ can split an almost-clique. 
We focus on the $k$ largest almost-cliques and the impact of the \emph{first pivot} $v$ that arrives from $Q_j$, $1 \le j \le k$, at some time $t$. 
By this time, some other vertices in $Q_j$ may have already arrived and clustered with pivots from other almost-cliques; the cost incurred due to the resulting disagreements can be attributed to the cost of Pivot (which is within a constant factor of the optimal). 
Let $P_j$ denote the subset of vertices that have not yet arrived and are not covered at time $t$ by any pivot. 
Since $Q_j$ is an almost-clique, $P_j$ is of size at most half of $Q_j$, and cost of \textsc{Pivot} is $\Omega(\sum_{1 \le j \le k}|Q_j| |P_j|)$, which is $\Omega(\sum_{1 \le j \le k} |Q_j| |P_j| + |P_j|^2)$. 
This motivates us to define an instance $\instance_\sigma$, illustrated in Figure~\ref{fig:I_sigma}, which consists of $\ell$ disjoint cliques of sizes $|Q_j| - |P_j|$, $1 \le j \le k$, and $n - \sum_{1 \le j \le k} (|Q_j| - |P_j|)$ singletons.  
Note that $\instance_\sigma$ is a random variable.  
One of our key lower bounds on optimal cost is $\Omega(\E[\sml{\instance_\sigma} \avg{\instance_\sigma}])$ (Lemma~\ref{lem:smltilde-avgtilde-bound}). 

\begin{figure}
    \centering

\tikzset{every picture/.style={line width=0.75pt}} 

\begin{tikzpicture}[x=0.75pt,y=0.75pt,yscale=-1,xscale=1]

\draw  [fill={rgb, 255:red, 255; green, 255; blue, 255 }  ,fill opacity=1 ] (191.03,138.54) .. controls (191.03,134.58) and (194.25,131.37) .. (198.21,131.38) -- (219.73,131.39) .. controls (223.69,131.39) and (226.9,134.61) .. (226.9,138.57) -- (226.83,237.4) .. controls (226.83,237.4) and (226.83,237.4) .. (226.83,237.4) -- (190.96,237.37) .. controls (190.96,237.37) and (190.96,237.37) .. (190.96,237.37) -- cycle ;
\draw  [fill={rgb, 255:red, 0; green, 0; blue, 0 }  ,fill opacity=1 ] (227.12,280.63) .. controls (227.15,284.61) and (223.95,287.86) .. (219.97,287.89) -- (198.36,288.07) .. controls (194.38,288.1) and (191.12,284.9) .. (191.09,280.92) -- (190.8,245.69) .. controls (190.8,245.69) and (190.8,245.69) .. (190.8,245.69) -- (226.83,245.4) .. controls (226.83,245.4) and (226.83,245.4) .. (226.83,245.4) -- cycle ;
\draw  [fill={rgb, 255:red, 255; green, 255; blue, 255 }  ,fill opacity=1 ] (267.03,160.51) .. controls (267.03,156.54) and (270.25,153.32) .. (274.22,153.32) -- (295.77,153.34) .. controls (299.74,153.34) and (302.96,156.56) .. (302.96,160.53) -- (302.9,246.34) .. controls (302.9,246.34) and (302.9,246.34) .. (302.9,246.34) -- (266.97,246.32) .. controls (266.97,246.32) and (266.97,246.32) .. (266.97,246.32) -- cycle ;
\draw  [fill={rgb, 255:red, 0; green, 0; blue, 0 }  ,fill opacity=1 ] (303.12,280.95) .. controls (303.15,284.76) and (300.09,287.88) .. (296.28,287.91) -- (274.05,288.09) .. controls (270.24,288.12) and (267.13,285.06) .. (267.09,281.25) -- (266.87,253.64) .. controls (266.87,253.64) and (266.87,253.64) .. (266.87,253.64) -- (302.9,253.34) .. controls (302.9,253.34) and (302.9,253.34) .. (302.9,253.34) -- cycle ;
\draw  [fill={rgb, 255:red, 155; green, 155; blue, 155 }  ,fill opacity=1 ] (318.02,179.5) .. controls (318.03,175.53) and (321.25,172.31) .. (325.22,172.32) -- (346.78,172.33) .. controls (350.74,172.34) and (353.96,175.56) .. (353.96,179.52) -- (353.9,256.34) .. controls (353.9,256.34) and (353.9,256.34) .. (353.9,256.34) -- (317.97,256.31) .. controls (317.97,256.31) and (317.97,256.31) .. (317.97,256.31) -- cycle ;
\draw  [fill={rgb, 255:red, 0; green, 0; blue, 0 }  ,fill opacity=1 ] (354.06,282.96) .. controls (354.09,285.67) and (351.91,287.89) .. (349.2,287.91) -- (322.98,288.12) .. controls (320.27,288.15) and (318.06,285.97) .. (318.04,283.26) -- (317.88,263.63) .. controls (317.88,263.63) and (317.88,263.63) .. (317.88,263.63) -- (353.9,263.34) .. controls (353.9,263.34) and (353.9,263.34) .. (353.9,263.34) -- cycle ;
\draw  [fill={rgb, 255:red, 155; green, 155; blue, 155 }  ,fill opacity=1 ] (381.96,205.34) .. controls (381.97,201.36) and (385.2,198.13) .. (389.18,198.14) -- (410.81,198.15) .. controls (414.79,198.15) and (418.02,201.38) .. (418.02,205.37) -- (417.98,262.66) .. controls (417.98,262.66) and (417.98,262.66) .. (417.98,262.66) -- (381.92,262.63) .. controls (381.92,262.63) and (381.92,262.63) .. (381.92,262.63) -- cycle ;
\draw  [fill={rgb, 255:red, 0; green, 0; blue, 0 }  ,fill opacity=1 ] (418.07,284.16) .. controls (418.09,286.21) and (416.45,287.88) .. (414.4,287.9) -- (385.78,288.13) .. controls (383.74,288.15) and (382.06,286.5) .. (382.05,284.46) -- (381.92,269.63) .. controls (381.92,269.63) and (381.92,269.63) .. (381.92,269.63) -- (417.95,269.34) .. controls (417.95,269.34) and (417.95,269.34) .. (417.95,269.34) -- cycle ;
\draw  [fill={rgb, 255:red, 255; green, 255; blue, 255 }  ,fill opacity=1 ] (148.03,124.98) .. controls (148.04,121.03) and (151.25,117.82) .. (155.2,117.83) -- (176.69,117.84) .. controls (180.64,117.84) and (183.84,121.05) .. (183.84,125.01) -- (183.77,229.85) .. controls (183.77,229.85) and (183.77,229.85) .. (183.77,229.85) -- (147.96,229.82) .. controls (147.96,229.82) and (147.96,229.82) .. (147.96,229.82) -- cycle ;
\draw  [fill={rgb, 255:red, 0; green, 0; blue, 0 }  ,fill opacity=1 ] (184.12,280.63) .. controls (184.15,284.61) and (180.95,287.86) .. (176.97,287.89) -- (155.36,288.07) .. controls (151.38,288.1) and (148.12,284.9) .. (148.09,280.92) -- (147.74,238.14) .. controls (147.74,238.14) and (147.74,238.14) .. (147.74,238.14) -- (183.77,237.85) .. controls (183.77,237.85) and (183.77,237.85) .. (183.77,237.85) -- cycle ;
\draw  [fill={rgb, 255:red, 0; green, 0; blue, 0 }  ,fill opacity=1 ] (533.84,237.01) .. controls (537.81,237.02) and (541.03,240.25) .. (541.03,244.22) -- (540.98,280.5) .. controls (540.98,284.48) and (537.75,287.7) .. (533.78,287.69) -- (512.19,287.66) .. controls (508.21,287.66) and (504.99,284.43) .. (505,280.46) -- (505.04,244.18) .. controls (505.05,240.2) and (508.27,236.98) .. (512.25,236.99) -- cycle ;
\draw  [fill={rgb, 255:red, 0; green, 0; blue, 0 }  ,fill opacity=1 ] (461.86,219.02) .. controls (465.84,219.02) and (469.05,222.25) .. (469.05,226.22) -- (468.98,281.5) .. controls (468.98,285.48) and (465.75,288.7) .. (461.78,288.69) -- (440.19,288.66) .. controls (436.21,288.66) and (432.99,285.43) .. (433,281.46) -- (433.07,226.18) .. controls (433.07,222.2) and (436.3,218.99) .. (440.27,218.99) -- cycle ;
\draw  [dash pattern={on 4.5pt off 4.5pt}]  (428,119) -- (427,285) ;
\draw  [dash pattern={on 4.5pt off 4.5pt}]  (312,120) -- (311,288) ;

\draw (242,259) node [anchor=north west][inner sep=0.75pt]   [align=left] {...};
\draw (360.9,262.34) node [anchor=north west][inner sep=0.75pt]   [align=left] {...};
\draw (480,263) node [anchor=north west][inner sep=0.75pt]   [align=left] {...};
\draw (308,92) node [anchor=north west][inner sep=0.75pt]   [align=left] {$r(\instance_\sigma)$};
\draw (423,95) node [anchor=north west][inner sep=0.75pt]   [align=left] {$k$};

\end{tikzpicture}
    \caption{Illustration of parameters $\sml{\instance_\sigma}$, $\avg{\instance_\sigma}$ and $r(\instance_\sigma)$ for a given random arrival sequence $\sigma$.  Each column represents an almost-clique $Q_j$ of the optimal \CC\ solution.  For any $Q_j$, with $j \le k$, the black subset at the bottom of the column represents the set $P_j$ of vertices that are not covered by the first pivot arriving from $Q_j$.   set $Y$ (shown in black) for Sub-case 2 of the Small $U$ analysis.  The instance $\instance_\sigma$ consists of $k$ cliques of sizes given by the white and gray portions and a singleton for each vertex in a black portions.  This sets the parameters $\sml{\instance_\sigma}$, $\avg{\instance_\sigma}$ and $r(\instance_\sigma)$.  For instance, $\sml{\instance_\sigma}$ equals the number of vertices in the black portions.}
    \label{fig:I_sigma}
\end{figure}

For our analysis, it is also helpful to derive another lower bound based on an instance of disjoint cliques more closely associated by how \textsc{Pivot} operates. 
Again, let $\sigma$ denote a random vertex arrival sequence. 
Let $\mathcal{J}_\sigma$ denote the disjoint cliques instance that contains, for each pivot cluster $C$ defined by \textsc{Pivot}, a clique of size $|C|$. In Lemma~\ref{lem:sml-avg-bound-pivot}, we show that the optimal \kCC\ cost is also $\Omega(\E[\sml{\mathcal{J}_\sigma} \avg{\mathcal{J}_\sigma}])$.

\noindent \textbf{Challenge 2: Impact of $U$, the set of vertices uncovered by the first $k$ pivots.} 
As for disjoint cliques, the cost of the algorithm can be upper bounded by the sum, over each vertex $v$ in $U$, of the following two quantities: 
(i) the number of other vertices from $U$ that also land in the same cluster as $v$, and (ii) the size of the first "pivot cluster" that is in the same cluster as $v$. 
For the analysis of \UP\ on disjoint cliques, we bound the first cost by $\E[|U|^2]/(k - r(\instance))$ by arguing there are at least $k - r(\instance)$ clusters with "low load" and the pivots from $U$ are being placed uniformly at random among these clusters.  Crucially, we also show that the event that a vertex from a clique is in $U$ is negatively correlated with the event that a vertex from a different clique is in $U$.  This enables us to bound $\E[|U|^2]$ by $\E[|U|]^2$.  We do not know, however, if the same negative correlation extends to general graphs.  Furthermore, \BP\ assigns pivots to the least loaded cluster, and bounding the projected load of the least loaded cluster at any time $t$ poses additional challenges.

We overcome the above obstacles through several analytical steps.  First, we derive a general high probability bound on the projected load of the least loaded cluster at any time $t \ge \Tau$. in terms of $t$, $|U|$, and the size of the largest pivot cluster in an arbitrary set of clusters.  We then pursue different approaches to bounding the total expected cost, depending on the size of $U$.  Let $S$ denote the set of clusters that have $\bigo(\log n)$ vertices at time $\Tau$.  When $|U|$ is $\widetilde{\Omega}(n|S|/\Tau)$, then since each pivot cluster in $U$ is of size $\widetilde\bigo(n/\Tau)$, we are able to argue that the cost of \BP\ can be well-approximated by the cost incurred by distributing the pivot clusters of $U$ among the clusters in $S$, which we then show is within a polylogarithmic factor of the lower bound $\E[\sml{\mathcal{J}_\sigma} \avg{\mathcal{J}_\sigma}]$.  When $|U|$ is $\widetilde\bigo(n|S|/\Tau)$, our analysis follows an approach similar to the one for disjoint cliques, where we split into two sub-cases depending on the value of $\avg{\instance_\sigma}$.  But this poses an additional challenge for graphs. 

\noindent \textbf{Challenge 3: Impact of $\avg{\instance_\sigma}$}.  For disjoint cliques, $\avg{\instance}$ is a deterministic parameter associated with the given instance.  For general graphs, however, $\avg{\instance_\sigma}$ is a random variable, dependent on the sequence of random vertex arrivals.  For vertex arrival sequence $\sigma$ where $\avg{\instance_\sigma}$ is $\Omega(n/\Tau)$, we are able to bound the cost to within a polylogarithmic factor of the corresponding lower bound contribution $\sml{\instance_\sigma}\avg{\instance_\sigma}$ for the optimal cost.  (To establish this, we need to bound $\E[|U|]$ in terms of $\sml{\instance_\sigma}$, which requires a more involved argument since we have to condition on both the event associated with $\Tau$ and the assumption about $\avg{\instance_\sigma}$.)  The vertex arrivals $\sigma$ where $\avg{\instance_\sigma}$ is much smaller than $n/\Tau$ are rare but the associated lower bound to compare with, $\sml{\instance_\sigma} \avg{\instance_\sigma}$, is also small; to complete the proof, we bound the probability of the event and then derive an upper bound on the contribution to total cost, conditioned on the associated events, that is within a polyloagrithmic factor of $\E[\sml{\instance_\sigma} \avg{\instance_\sigma}]$.  The full argument is more involved and we refer to Section~\ref{sec:online.upper.analysis.theorem} for details. 

\subsection{Preliminaries: Concentration Inequalities}

We will make repeated use of the following standard concentration inequalities throughout the paper.

\begin{fact}[Chernoff Bound]\label{lem:chernoff}
Let $X = \sum_{i=1}^n X_i$ where $X_i \in \{0,1\}$ are independent (or negatively associated) 
random variables, and let $\mu = \mathbb{E}[X]$. Then for any $\delta > 0$:
\begin{equation}
    \Pr[X \geq (1+\delta)\mu] \leq \exp\!\left(-\frac{\delta^2 \mu}{3}\right), \label{eq:chernoff-upper}
\end{equation}
and for any $\delta \in (0, 1)$:
\begin{equation}
    \Pr[X \leq (1-\delta)\mu] \leq \exp\!\left(-\frac{\delta^2 \mu}{2}\right). \label{eq:chernoff-lower}
\end{equation}
In particular, for any $a \geq 2e\mu$:
\[
    \Pr[X \geq a] \leq 2^{-a}.
\]
\end{fact}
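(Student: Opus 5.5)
We would prove all three inequalities with the standard moment-generating-function (Chernoff) method. Fix $\lambda \in \mathbb{R}$ and write $p_i = \Pr[X_i = 1]$, so that $\mu = \sum_i p_i$.

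\textbf{Step 1 (factoring the MGF).} In the independent case, $\E[e^{\lambda X}] = \prod_i \E[e^{\lambda X_i}]$. In the negatively associated case we only get an inequality. For $\lambda > 0$ each map $x \mapsto e^{\lambda x_i}$ is non-decreasing and nonnegative. Negative association therefore gives $\E[\prod_i e^{\lambda X_i}] \le \prod_i \E[e^{\lambda X_i}]$, by induction, splitting off one coordinate at a time; these functions depend on disjoint sets of coordinates. For $\lambda < 0$ all the maps are non-increasing, and the same inequality holds. In either case
\[
\E[e^{\lambda X}] \le \prod_i \bigl(1 + p_i(e^{\lambda}-1)\bigr) \le \exp\bigl(\mu(e^{\lambda}-1)\bigr),
\]
using $1+x \le e^x$.

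\textbf{Step 2 (upper tail).} By Markov's inequality applied to $e^{\lambda X}$, with $\lambda = \ln(1+\delta) > 0$,
\[
\Pr[X \ge (1+\delta)\mu] \le \left(\frac{e^{\delta}}{(1+\delta)^{1+\delta}}\right)^{\mu}.
\]
The remaining task is the scalar inequality $\delta - (1+\delta)\ln(1+\delta) \le -\delta^2/(2+\delta)$. We would check it by differentiating twice; both sides vanish at $\delta = 0$. For $\delta \in (0,1]$ this gives the stated exponent $-\delta^2\mu/3$.

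Here we must flag an issue with the statement: the exponent $\delta^2/3$ is not valid for all $\delta > 0$. As $\delta \to \infty$ the true exponent grows only like $\delta\ln\delta$, not $\delta^2$. For $\delta > 1$ the correct general form is $\exp(-\delta^2\mu/(2+\delta)) \le \exp(-\delta\mu/3)$. We would state the bound in that form and verify that all later applications in the paper use either $\delta \le 1$ or the ``$a \ge 2e\mu$'' version. Reconciling the statement with what is actually true is the main obstacle in this proof.

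\textbf{Step 3 (lower tail).} Take $\lambda = \ln(1-\delta) < 0$; Markov's inequality applied to $e^{\lambda X}$ gives the bound $\bigl(e^{-\delta}/(1-\delta)^{1-\delta}\bigr)^{\mu}$. We then use $(1-\delta)\ln(1-\delta) \ge -\delta + \delta^2/2$ for $\delta \in (0,1)$. This follows from the Taylor series, whose higher-order terms are all nonnegative, and it yields $\exp(-\delta^2\mu/2)$.

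\textbf{Step 4 (the $2^{-a}$ form).} If $\mu = 0$ then $X = 0$ almost surely and the claim is trivial, so assume $\mu > 0$. Set $1+\delta = a/\mu \ge 2e$ in the bound from Step 2:
\[
\Pr[X \ge a] \le \frac{e^{a-\mu}\mu^{a}}{a^{a}} \le \left(\frac{e\mu}{a}\right)^{a} \le 2^{-a}.
\]
This last form is the one the paper relies on most, and it is fully rigorous as stated.
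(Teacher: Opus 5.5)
Your argument is correct. It is the standard moment-generating-function proof: factor the MGF (equality under independence, the NA covariance inequality otherwise), apply Markov with $\lambda=\ln(1\pm\delta)$, then use the two scalar inequalities. The paper gives no proof of this fact and simply defers to the cited negative-association references, whose argument is exactly your Steps 1--4.

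You are also right that \eqref{eq:chernoff-upper} is false as stated for large $\delta$. Take a single $X_1\sim\mathrm{Bernoulli}(p)$ and $\delta=1/p-1$. Then $\Pr[X\ge(1+\delta)\mu]=p$, while the claimed bound is $\exp(-(1-p)^2/(3p))$, which is about $e^{-32.7}$ for $p=0.01$. So the fact must be restricted to $\delta\in(0,1]$, with $\exp(-\delta^2\mu/(2+\delta))$ as the general form, exactly as you propose.

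One correction to your plan: the check that every later application uses only $\delta\le 1$ or the $2^{-a}$ form would fail.
\begin{itemize}
\item In \Cref{lem:piv-cluster-projected-load} the paper sets $\delta=c\ln n/\E[Y]-1>1$ and uses the $\delta^2/3$ exponent.
\item In \Cref{lem:pivot cluster size at time t} and \Cref{lem:proj_load_of_least_load} it sets $\delta=c\sqrt{\ln n/\mu}$, which exceeds $1$ whenever $\mu<c^2\ln n$.
\end{itemize}
These uses should be repaired with your corrected bound, not excluded, and the repair works.
\begin{itemize}
\item In the first lemma, the exponent becomes $(c\ln n-\mu)^2/(c\ln n+\mu)\ge c\ln n/6$ for $\mu<c\ln n/2$, so that lemma holds verbatim.
\item In the other two lemmas, when $\mu\ge\ln n$ the exponent becomes $c^2\ln n/(2+\delta)\ge c^2\ln n/(c+2)$. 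When $\mu<\ln n$, the same bound follows for a deviation of $c\ln n$ above the mean.
\end{itemize}
So the failure probabilities weaken from $n^{-c^2/3}$ to $n^{-c^2/(c+2)}$. That is still polynomially small for a large enough constant $c$, so the downstream conclusions survive with adjusted constants.
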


For an introduction to negative association, its uses, and the extended proof of Chernoff Bounds in this case, see \cite{joag1983negative,dubhashi1996balls,neg-assoc-notes}. 
We only require use of negative association in the following fact, the proof of which can be found in Section 3.1 of \cite{joag1983negative}.

\begin{fact}[Hypergeometric Distributions are Negatively Associated]\label{lem:chernoff-hypergeometric}
    Let $N$ be a population of size $n$ containing $K$ marked elements. 
    Let $X$ be the number of marked elements in a sample of size $t$ drawn \emph{without replacement} (i.e., $X \sim \mathrm{Hypergeometric}(n, K, t)$). 
    Then $\E[X] = \frac{Kt}{N}$ and $X$ can be written as the sum of $t$ negatively associated variables $\sim\{0,1\}$, each equaling $1$ with probability $\frac{K}{N}$. Thus, \Cref{lem:chernoff} above (the Chernoff Bound) holds for $X$.
\end{fact}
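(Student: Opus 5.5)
The plan is to realize $X$ as a sum of indicator variables and then establish the three claims in turn: the expectation, negative association, and the transfer of the Chernoff bound. Order the population arbitrarily. Let $\pi$ be a uniformly random permutation of the $n$ elements, and take the sample to be the first $t$ elements of $\pi$. For $i \in [t]$, let $X_i$ be the indicator that the $i$-th sampled element is marked, so that $X = \sum_{i=1}^t X_i$. By symmetry of $\pi$, each position holds a marked element with probability $K/n$. Hence $\Pr[X_i = 1] = K/n$, and linearity of expectation gives $\E[X] = Kt/n$. This is the stated formula, with the population size written as $n$ (the statement's $N$).

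For negative association, I would view the full vector $(X_1,\dots,X_n)$, where $X_i$ indicates that position $i$ of $\pi$ is marked. This vector is a uniformly random permutation of a fixed $0/1$ vector with exactly $K$ ones. The permutation theorem of Joag-Dev and Proschan (their Section 3.1), which we are allowed to cite, states that a uniformly random permutation of any fixed real vector is negatively associated. Negative association is preserved under taking subsets of the coordinates, so the prefix $(X_1,\dots,X_t)$ is negatively associated as well.

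To transfer the Chernoff bound of \Cref{lem:chernoff}, the only property the standard proof uses is the moment generating function inequality $\E[e^{\lambda X}] \le \prod_{i} \E[e^{\lambda X_i}]$ for $\lambda$ of either sign. I would derive this from negative association by induction on the number of factors. For $\lambda > 0$, each $e^{\lambda X_i}$ is a nonnegative nondecreasing function of a single coordinate. At each step, negative association applies to the disjoint index sets $\{1,\dots,j\}$ and $\{j+1\}$, giving $\E\bigl[\prod_{i \le j+1} e^{\lambda X_i}\bigr] \le \E\bigl[\prod_{i\le j} e^{\lambda X_i}\bigr]\,\E[e^{\lambda X_{j+1}}]$. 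For $\lambda < 0$, every factor is nonincreasing, and negative association applies equally to families that are all nonincreasing. After this, Markov's inequality and the usual optimization over $\lambda$ reproduce exactly the bounds \eqref{eq:chernoff-upper} and \eqref{eq:chernoff-lower}, as well as the $2^{-a}$ tail, with $\mu = Kt/n$.

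The main obstacle is the negative association of permutation distributions. If the citation is not acceptable, I would prove the needed special case directly. Any two sets of $0/1$ coordinates of a uniform vector with exactly $K$ ones are conditionally "competing": given the number of ones in the first set, the second set is again uniform with the remaining count. The count in the first set and monotone functions of it move in opposite directions, and a Harris/FKG-type argument on this one-dimensional count gives the covariance inequality. As a fallback that suffices for the Chernoff consequence alone, Hoeffding's 1963 theorem states that $\E[f(X)] \le \E[f(Y)]$ for every convex $f$, where $Y \sim \mathrm{Bin}(t, K/n)$. Applying it to $f(x) = e^{\lambda x}$ yields the moment generating function domination directly, with no negative association required.
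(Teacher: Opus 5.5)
Your proposal is correct and takes the same route as the paper. The paper's proof is only a citation to Joag-Dev and Proschan (Section 3.1): the sample indicators are the first $t$ coordinates of a uniformly random permutation of a $0/1$ vector with $K$ ones, such permutation distributions are negatively associated, and restricting to a subset of coordinates preserves this. Your explicit derivation of $\E[e^{\lambda X}] \le \prod_i \E[e^{\lambda X_i}]$ for both signs of $\lambda$, and the Hoeffding convex-ordering fallback, just fill in the Chernoff transfer that the paper leaves to its cited references.
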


\begin{fact}\label{fact: samplingworep}
For non-negative integers $n,a, b$ such that $0\leq a\leq n$ and $0\leq b \leq n-a$, $(1-\frac{a}{n-b+1})^b\leq\frac{\binom{n-a}{b}}{\binom{n}{b}}\leq (1-\frac{a}{n})^b\leq e^{\frac{ab}{n}}$.
\end{fact}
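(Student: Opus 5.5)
The plan is to write the ratio of binomial coefficients as a telescoping product and bound it factor by factor. The case $b=0$ is trivial, since every expression equals $1$ except the last, which is $e^{0}=1$. So assume $b\ge 1$. Expanding both binomial coefficients and cancelling the common $b!$ gives
\[
\frac{\binom{n-a}{b}}{\binom{n}{b}} \;=\; \prod_{i=0}^{b-1}\frac{n-a-i}{n-i} \;=\; \prod_{i=0}^{b-1}\Bigl(1-\frac{a}{n-i}\Bigr).
\]
This identity is the only real content; everything else is monotonicity.

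The next step is to bound each factor. For $0\le i\le b-1$ the denominator satisfies $n-b+1\le n-i\le n$. Hence
\[
1-\frac{a}{n-b+1}\;\le\;1-\frac{a}{n-i}\;\le\;1-\frac{a}{n}.
\]
To multiply these inequalities over $i$, I need every factor to be nonnegative. This is the one point requiring care, and it is exactly where the hypothesis $b\le n-a$ enters. Since $i\le b-1\le n-a-1$, we get $n-i\ge a+1>a$, so each middle factor is positive. The lower bound $1-a/(n-b+1)$ is also nonnegative, because $n-b+1\ge a+1$. Products of nonnegative numbers preserve termwise inequalities, so
\[
\Bigl(1-\tfrac{a}{n-b+1}\Bigr)^{b}\le\frac{\binom{n-a}{b}}{\binom{n}{b}}\le\Bigl(1-\tfrac{a}{n}\Bigr)^{b}.
\]

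For the final inequality, use $1-x\le e^{-x}$ for all real $x$. This gives $(1-\frac{a}{n})^b\le e^{-ab/n}$, which is the form actually useful in applications. It is in particular at most $e^{ab/n}$, since $e^{-ab/n}\le 1\le e^{ab/n}$, and that establishes the chain as stated.

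I do not expect any genuine obstacle here. The only subtlety is the nonnegativity check in the second step, which is needed for raising the factorwise bounds to a product.
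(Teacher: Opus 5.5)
Your argument is correct and takes the same approach as the paper. The Fact is stated without proof, but the telescoping product $\prod_{i=0}^{b-1}\bigl(1-\frac{a}{n-i}\bigr)$ with termwise bounds is exactly the computation the paper writes out inline in the proof of \Cref{lem:logn_set_size}. You are also right that the useful final bound is $e^{-ab/n}$, which is what the paper actually applies; the stated $e^{ab/n}$ is evidently a typo, and it follows trivially from $e^{-ab/n}$.
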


\section{Bounding the Optimal for Collection of Cliques}
\label{sec:cliques.optimal}
An important ingredient of our analysis is an asymptotically tight characterization of an optimal $k$-clustering when the graph instance is a collection of disjoint cliques.  In such an instance $\instance$, the input graph consists of $\ell$ disjoint cliques $K = \{K_1, K_2, \dots, K_\ell\}$, where clique $K_i$ contains $n_i$ vertices and $\sum_{i=1}^\ell n_i = n$.  Let the cliques be indexed in decreasing order of size: \( n_1 \geq n_2 \geq \cdots \geq n_\ell \).  We first show that an optimal clustering keeps each clique intact.

\begin{lemma}
\label{lem:cliques-intact}
In any correlation $k$-clustering of an $\ell$ disjoint cliques instance $\instance$, there exists an optimal solution in which each clique $K_i$ is entirely contained within a single cluster.
\end{lemma}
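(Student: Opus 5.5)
We will argue by an exchange argument. Start from an optimal $k$-clustering $\mc=\{C_1,\dots,C_k\}$, and among all optimal solutions choose one that minimizes the number of pairs $(i,j)$ with $K_i\cap C_j\neq\emptyset$, i.e., the total number of clique fragments. Suppose some clique $K_i$ is split, meaning it meets at least two clusters. We will show that moving all of $K_i$ into a single well-chosen cluster does not increase the cost. This contradicts the minimality of fragments, or alternatively yields an iterative procedure that terminates because each step strictly decreases the fragment count.

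The key step is a cost computation for moving a clique wholesale. For each cluster $C_j$, let $x_j=|K_i\cap C_j|$ and let $y_j=|C_j\setminus K_i|$. The vertices of $C_j\setminus K_i$ are non-adjacent to every vertex of $K_i$, since the cliques are disjoint with no edges between them. The cost terms involving $K_i$ are therefore
\[
\sum_{j} x_j y_j \;+\; \sum_{j<j'} x_j x_{j'}.
\]
The first sum counts the non-edges between $K_i$ and other cliques placed in the same cluster. The second counts the edges of $K_i$ that are cut. All pairs not involving $K_i$ are unaffected by how $K_i$ is distributed.

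Now choose $j^*=\arg\min_j y_j$ and move all of $K_i$ into $C_{j^*}$. The new cost involving $K_i$ is $n_i\, y_{j^*}=\sum_j x_j y_{j^*}\le \sum_j x_j y_j$, and the cut-edge term becomes $0$. Since $K_i$ is split, at least two $x_j$ are positive, so $\sum_{j<j'}x_jx_{j'}>0$. The cost therefore strictly decreases, which contradicts optimality of $\mc$. Consequently no optimal solution splits any clique. The move keeps the number of clusters at most $k$ and merely empties some of them, which is allowed by the definition of a $k$-clustering.

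The only potential obstacle is making sure the bookkeeping is correct: pairs among other cliques must be unchanged, and the sets $y_j$ must be computed before the move. That bookkeeping is routine, so the whole argument is short, and in fact it shows that every optimal solution keeps each clique intact.
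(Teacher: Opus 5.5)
Your proof is correct and is essentially the paper's exchange argument. The paper takes two fragments $S_1\subseteq C_1$, $S_2\subseteq C_2$ of a split clique with $x_1=|C_1\setminus S_1|\le x_2=|C_2\setminus S_2|$, moves $S_2$ into $C_1$ (reducing the cost by $s_1s_2+s_2(x_2-x_1)>0$), and repeats; your variant moves all of $K_i$ at once into the cluster minimizing $|C_j\setminus K_i|$, which avoids the iteration and makes the strict decrease explicit, so it yields the slightly stronger conclusion that every optimal solution keeps every clique intact.
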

\begin{proof}
Let $G$ be our input graph consisting of $\ell$ disjoint cliques, and let $C = \{C_1, C_2, \ldots, C_k\}$ be an optimal $k$-clustering. 
Suppose for contradiction that there exists a clique $K_i$ that is split across multiple clusters, including clusters $C_1$ and $C_2$. 
That is, $K_i$ contains disjoint sets $S_1$ and $S_2$ with $S_1\subseteq C_1$ and $S_2\subseteq C_2$.
Let $|S_1| = s_1$ and $|S_2| = s_2$, and note that both $s_1,s_2>0$ and $s_1 + s_2 \leq |K_i|$.

Consider the cost suffered by $C$ due to vertices in $S_1$ and $S_2$.
Since $K_i$ is a clique, there are $s_1 s_2$ edges which cross from $C_1$ to $C_2$, which contributes $s_1 s_2$ to the disagreement cost.
Additionally, $S_1$ and $S_2$ will disagree with any other vertices in their respective clusters.
Let $X_1$ be the vertices in $C_1 \setminus S_1$ and $X_2$ be the vertices in $C_2 \setminus S_2$ with $|X_1| = x_1$ and $|X_2| = x_2$
Then placing $S_1$ into $C_1$ and $S_2$ into $C_2$ also contributes $(s_1 x_1 + s_2 x_2)$ to the disagreement cost.

Without loss of generality, let $x_1\leq x_2$.
Now consider the alternative clustering $C'$ where we move all vertices in $S_2$ to $C_1$.
$C'$ does not split $S_1$ and $S_2$, so the only cost suffered by $C'$ due to vertices in $S_1$ and $S_2$ is from disagreement between $S_1$ and $S_2$ and the rest of $C_1$.
This contributes $(s_1 + s_2) x_1$ to the total disagreement cost.  The decrease in cost when switching from $C$ to $C'$ is exactly 
\[
s_1 s_2 + s_1 x_1 + s_2 x_2 - (s_1 + s_2) x_1 = s_1 s_2 + s_2  x_2 - s_2  x_1 = s_1  s_2 + s_2  (x_2 - x_1) \leq 0,
\]
where the last step holds since $x_1\leq x_2$.  
By repeatedly applying this argument, we can transform any optimal clustering into one where no cliques are split, without increasing the cost.
Thus, there exists an optimal solution where every clique $K_i$ is entirely contained within a single cluster.
\end{proof}

We now define three important parameters of any given disjoint cliques instance $\instance$, which help lower bound the optimal cost.  
\begin{definition}\label{def:s-of-instance}
    Define $\sml{\instance}$ as the total number of vertices in the graph that do not lie in the $k$ largest cliques: $\sml{\instance} = \sum_{i=k+1}^{\ell} n_i$. (See Figure \ref{fig:instance}.)
\end{definition}
To motivate the next two parameters, we make some observations about an optimal $k$-clustering.  With $n$ vertices and $k$ clusters, the average number of vertices assigned to each cluster is $\frac{n}{k}$. 
If the largest clique $K_1$ is of size $n_1\geq \frac{n}{k}$, then any optimal solution must place $K_1$ into its own cluster; otherwise, we can move any clique placed in the same cluster as $K_1$ to the cluster with the least total size and decrease the cost.   Thus, if $n_1 \ge \frac{n}{k}$, then the optimal cost on $\instance$ is equal to the cost of an optimal $(k-1)$-clustering on a smaller instance $\{ K_2,\hdots, K_\ell \}$ with $(n-n_1)$ vertices.  We define $r(\instance)$ to be the index at which this process of pulling out large cliques that must be clustered separately stops.
\begin{definition} \label{def:threshold}
For instance $\instance$, let \( r(\instance) \) be the smallest index such that
\[
n_{r(\instance)+1} < \frac{n - \sum_{i=1}^{r(\instance)} n_i}{k - r(\instance)} =: \avg{\instance} .
\]
Essentially, $r(\instance)$ is the number of cliques of $\instance$ that are so large that optimally they should be placed into their own cluster, regardless of what the rest of the instance looks like.

The value $\avg{\instance}$ represents the average size required to evenly distribute the remaining vertices (those not in the largest \( r(\instance) \) cliques) into the remaining \( k - r(\instance) \) clusters. (See Figure \ref{fig:instance}.)
Note that since these cliques are not large enough to obviously be placed into their own cluster, in the optimal solution they may be placed into a cluster with other cliques depending on the rest of the instance.
\end{definition}

\begin{lemma}\label{lem:opt-theta}
The cost of the optimal correlation $k$-clustering on a $\ell$ disjoint cliques instance $\instance$ satisfies $\cost(\opt) = \Theta(\sml{\mathcal{I}} \cdot \avg{\mathcal{I}})$.
Moreover, there exists a polynomial-time greedy algorithm for computing a $4$-approximate solution and a polynomial time approximation scheme for the problem.
\end{lemma}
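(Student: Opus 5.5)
The proof splits into an upper bound via a greedy algorithm, a matching lower bound on $\cost(\opt)$, and a PTAS. We write $s=\sml{\instance}$, $a=\avg{\instance}$, $r=r(\instance)$, and $A=\bigcup_{i>r}K_i$, so that $|A|=a(k-r)$. By \Cref{lem:cliques-intact} we may assume every solution keeps cliques intact. The cost of such a solution is then $\sum_{j}\sum_{\{i,i'\}\subseteq C_j} n_in_{i'}$, i.e.\ it is determined by how the cliques are grouped.

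\textbf{Upper bound.} The greedy algorithm sorts the cliques by size, puts $K_1,\dots,K_k$ into separate clusters, and then places $K_{k+1},\dots,K_\ell$ one at a time, in this order, into the currently least-loaded cluster. When $K_i$ is placed, it contributes $n_i$ times the current load of its cluster. The clusters holding $K_1,\dots,K_r$ have load at least $n_r\ge a$, by the defining property of $r$. Among the other $k-r$ clusters, the total load is less than $|A|=a(k-r)$, so the minimum load is below $a$. Hence greedy never picks one of the first $r$ clusters (ties aside), and every placement contributes at most $n_i\cdot a$. Summing over $i>k$ gives $\cost(\mathrm{Greedy})\le a\,s$.

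\textbf{Lower bound.} This is the crux: we want $\cost(\opt)\ge \tfrac14 a s$. First I will argue that in an optimum, cliques from $A$ may be assumed never to share a cluster with $K_1,\dots,K_r$. This follows by the same exchange argument as in the motivation of \Cref{def:threshold}: such a cluster has load at least $a$, and some other cluster has load below $a$. The $k-r$ clusters containing $A$ therefore partition $|A|=a(k-r)$ vertices. For a cluster of total size $L_j$ containing cliques of sizes $m$, the cost is $\tfrac12\sum m(L_j-m)$. Convexity (Cauchy--Schwarz on $\sum_j L_j^2$) then gives
\[
\cost(\opt)\ \ge\ \tfrac12\sum_{i>r} n_i\,(a-n_i)
\]
in the averaged sense. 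More carefully, $\tfrac12\bigl(\sum_j L_j^2-\sum_{i>r} n_i^2\bigr)\ge \tfrac12\bigl(a\,|A|-\sum_{i>r}n_i^2\bigr)$.

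To turn this into $\Omega(as)$ I will split into two cases according to $n_{k+1}$.
\begin{itemize}
\item If $n_{k+1}\le a/2$, the cliques $i>k$ each give $n_i(a-n_i)\ge n_ia/2$, while the terms with $r<i\le k$ are nonnegative since $n_i<a$. This yields $\tfrac14 as$.
\item If $n_{k+1}>a/2$, all of $K_{r+1},\dots,K_{k+1}$ exceed $a/2$. The $k-r+1$ cliques $K_{r+1},\dots,K_{k+1}$ go into $k-r$ clusters, so by pigeonhole two of them share a cluster, giving cost at least $a^2/4$. I still need this in the form $\Omega(as)$, i.e.\ I need $s=O(a)$ times the number of forced collisions. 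Here I will use a counting argument: since $|A|=a(k-r)$ and $K_{r+1},\dots,K_k$ contain at least $(k-r)a/2$ vertices, we get $s\le (k-r)a/2$. The $s/a$ excess "mass units" force about $\Omega(s/a)$ cluster pairs of size at least $a/2$ each. A cleaner way to run this is to lower bound $\sum_j L_j^2$ subject to each $L_j$ being a sum of cliques.
\end{itemize}
I expect this second case to be the main obstacle. My fallback is a direct bound: every clique $K_i$ with $i>k$ shares its cluster with some clique other than itself, or else the cliques $K_{r+1},\dots,K_k$ collide instead. Charging each $n_i$ against a partner of size at least $\min(n_i,a/2)$ gives total cost at least $\sum_{i>k} n_i\cdot a/4$.

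\textbf{PTAS.} For the PTAS I will follow \cite{alon1998approximation} (details in \Cref{app:PTAS}). Since the objective is $\tfrac12\bigl(\sum_j L_j^2-\sum_i n_i^2\bigr)$, minimizing it amounts to minimizing $\sum_j L_j^2$ over assignments of items to $k$ machines. This is a scheduling problem with an $\ell_2$ objective, for which their rounding and enumeration scheme gives a $(1+\eps)$-approximation of $\sum_j L_j^2$. I still need to convert this into a guarantee for the cost itself. To do so, I will use the previously established fact that the cost is $\Theta(as)$, together with an accuracy parameter scaled to the excess (only cliques beyond the largest $r$ matter), so that an additive error of $\eps\, a s$ suffices.
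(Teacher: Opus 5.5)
Your greedy upper bound is correct and matches the paper; you show greedy never picks the first $r$ clusters instead of marking them unavailable, which is fine. The Cauchy--Schwarz bound $\cost(\opt)\ge\frac12\sum_{i>r}n_i(a-n_i)$ and Case 1 are also correct and match the paper.

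\textbf{The gap: Case 2 ($n_{k+1}>a/2$) is not proved.} Pigeonhole on $K_{r+1},\dots,K_{k+1}$ forces a single collision worth about $a^2/4$. But $s$ can be nearly $(k-r)a/2$. The excess mass does not force $\Omega(s/a)$ collisions between parts of size $\ge a/2$. For example, if $K_{k+2},\dots,K_\ell$ are singletons, only one such collision is forced, and the cost is paid by singletons sitting in loaded clusters.

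Your fallback also fails as stated. A partner of size $\min(n_i,a/2)$ yields only $n_i^2$ when $n_i<a/2$, not $n_ia/4$. Moreover, a small clique's cluster-mates may all be small, so what matters is the total other load of its cluster, not a single partner.

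The paper's step is exactly this: it asserts that in an optimum each $K_i$ with $i>k$ has at least $a/2$ other vertices in its cluster, giving $\cost(\opt)\ge\frac12\sum_{i>k}n_i\cdot\frac a2$. This can be justified by swapping $K_i$ with one of two cliques of size $\ge n_i$ that would otherwise have to share another cluster.

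A cleaner route with no case split:
\begin{enumerate}
\item In an optimum, each of the $k-r$ non-peeled clusters has load $c_j>a/2$. Otherwise some cluster has load $c_a>a$, hence contains at least two cliques. Moving its smallest clique ($n_i\le c_a/2$) to the light cluster changes $\sum_j c_j^2$ by $2n_i(c_b-c_a+n_i)\le n_i(a-c_a)<0$.
\item Let $M_j$ be the largest clique size in $C_j$. The cost of $C_j$ is $\frac12\bigl(c_j^2-\sum_{i\in C_j}n_i^2\bigr)\ge\frac12c_j(c_j-M_j)\ge\frac a4(c_j-M_j)$.
\item The $M_j$ are sizes of $k-r$ distinct cliques of $A$, so $\sum_j(c_j-M_j)\ge|A|-(n_{r+1}+\cdots+n_k)=s$. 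Hence $\cost(\opt)\ge as/4$.
\end{enumerate}

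\textbf{The PTAS step also does not work as a black box.} A $(1+\eps)$-approximation of $\sum_jL_j^2$ controls the cost only up to an additive $\eps\sum_jL_j^2$, which can dwarf the cost. For instance, take $k$ cliques of size $m$ plus one singleton: the cost is $m$, while $\sum_jL_j^2\approx km^2$. Getting additive error $\eps as$ would need an accuracy parameter of order $\eps s/(a(k-r))$, which is not a constant, so polynomial running time is lost.

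The paper's appendix avoids this by bounding each error directly against the cost:
\begin{itemize}
\item Merging cliques of size $<\eps L$ into macro-cliques adds at most $\eps LW=O(\eps)\cdot\cost(\opt)$, using that all clusters have load $>L/2$.
\item Rounding sizes up by a factor $(1+\eps)$ inflates every product $n_in_{i'}$, hence the cost itself, by at most $(1+\eps)^2$.
\end{itemize}
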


\begin{proof}
By Lemma \ref{lem:cliques-intact}, in any optimal clustering, each clique remains intact. The disagreement cost comes exclusively from co-clustering vertices from different cliques.
Let $C = \{C_1, C_2, \ldots, C_k\}$ be an optimal clustering, and let $c_j = |C_j|$ be the size of cluster $j$. 
For simplicity, let $r(\instance) = r$. By definition of $r$, cliques $n_1,\hdots,n_r$ are large enough that any optimal clustering will give them their own cluster. 
Without loss of generality, assume $C_j = K_j$ for all such $j\in\{1,\hdots,r\}$.
 The total disagreement cost is:
\[ \cost(\opt) = \frac{1}{2}\sum_{j=r+1}^{k} \sum_{i : K_i \in C_j} n_i(c_j - n_i) \]
which follows by charging each clique for the disagreement cost with other cliques assigned to the same cluster, and accounting for double-counting of vertex pairs.
Crucially the cost associated with clusters $C_1,\hdots,C_r$ is 0, since cliques $K_1, K_2,..,K_r$ are each assigned to a unique cluster.

Note that for all $i\geq r+1$, $n_i < \avg{\mathcal{I}}$.
Since $\sum_{j=r+1}^{k} c_j = N - \sum_{j=1}^r n_j$ is fixed, this sum is minimized when the cluster sizes $c_j$ are as balanced as possible around $\avg{\mathcal{I}}$. By an averaging argument, it is immediate that for any $k$-clustering of $\instance$, there exists at least one cluster which is assigned at most \( \avg{\instance} \) vertices. This allows us to obtain a lower bound by setting $c_j = \avg{\instance}$ for all $j\in\{r+1,\hdots,k\}$ so that,
\[ \cost(\opt) \geq \frac{1}{2}\sum_{i=r+1}^{\ell} n_i(\avg{\instance} - n_i) .\] 

If $n_{k+1} \leq \frac{1}{2}\avg{\instance}$, each clique $K_i$ where $i\in [k+1, \ell]$ contributes at least $\frac{1}{2}n_i(\avg{\instance} - n_i)$ to the above inequality, and thus to $\opt$. 
Then 
\[
    \cost(\opt) \geq \frac{1}{2}\sum_{i=k+1}^{\ell} n_i(\avg{\instance} - n_i) \geq \frac{1}{2}\sum_{i=k+1}^{\ell} n_i \cdot \frac{1}{2} \avg{\instance} 
    = \frac{1}{4} \avg{\instance}\sml{\instance}.
\]

If instead $n_{k+1} > \frac{1}{2}\avg{\instance}$, then for all $i\in\{r+1,\hdots,k+1\}$, $n_i > \frac{1}{2}\avg{\instance}$.
However, each smaller clique $K_{k+1},\hdots, K_\ell$ is assigned to a cluster with at least $\frac{1}{2}\avg{\instance}$ vertices in this case.
Thus,
\[
   \cost(\opt) \geq \frac{1}{2}\sum_{i=k+1}^{\ell} n_i\left(\sum_{i': K_i,K_{i'}\in C_j} n_{i'} - n_i\right) 
    \geq \frac{1}{2}\sum_{i=k+1}^{\ell} n_i \cdot \frac{1}{2} \avg{\instance} 
    = \frac{1}{4} \avg{\instance}\sml{\instance} .
\]

To upper bound $\cost(\opt)$, we present an algorithm that computes a $4$-approximate $k$-clustering. Consider the following clustering strategy. First, assign each of the largest $k$ cliques to its own cluster, which incurs zero cost for this decision.
For all cliques $K_1,\hdots,K_r$, we know that optimally we should never assign another clique in their cluster, so mark the clusters to which cliques $K_1,\hdots, K_r$ as unavailable.
Then for each $K_{k+1}, \ldots, K_\ell$ in decreasing order of size, place $K_i$ greedily into the available cluster $C_j$ with the fewest other vertices already assigned to it.
That is, place $K_i$ into the cluster $C_j$ which greedily minimizes cost.
Attribute cost $n_i\sum_{i' : K_{i'}\in C_j} n_{i'}$ to this decision. 

For each clique $K_i$ for $i\in\{k+1,\hdots,\ell\}$, the cost associated with clustering it is $n_i$ times the load that was on cluster $C_j$ before adding $K_i$ to it.
Since $C_j$ was chosen to be the cluster with minimum load, it cannot have load more than $\avg{\instance}$ when it is chosen.
Therefore, we can bound the cost of this greedy clustering by
\[ \cost(C) \leq \sum_{i=k+1}^\ell n_i \avg{\instance} = \avg{\instance} \sml{\instance} \leq 4\cost(\opt) . \] 

The PTAS for the problem closely follows the approach of Alon, Azar, Woeginger, and Yadid \cite{alon1998approximation} which gives a PTAS for the load balancing problem in the $\ell_2$ norm.  We refer to \Cref{app:PTAS} for details.

\end{proof}

\section{Constant-Approximation for Offline Correlation $k$-Clustering}\label{sec:offline}

\begin{lemma}\label{thm:cc-to-kcc}
Given an $\alpha$-approximation correlation clustering solution $C_s$ for a general graph instance $\instance$, there exists a $(1 + 2\alpha)$-approximation solution to the correlation $k$-clustering problem which does not split any clusters of $C_s$.
\end{lemma}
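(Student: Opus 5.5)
The plan is to compare the optimal $k$-clustering $\opt_k$ (with cost $\cost(\opt_k)$) against a $k$-clustering that keeps every cluster of $C_s$ intact, via an intermediate disjoint-cliques instance. Let $C_s=\{Q_1,\dots,Q_m\}$ and let $\instance'$ be the disjoint-cliques instance on the same vertex set whose cliques are exactly the $Q_j$; this is the construction of Figure~\ref{fig:cc-to-kcc}. For any partition $\mc$ of $V$, each pair $(u,v)$ disagrees in $\instance$ and in $\instance'$ under $\mc$ only if the two instances differ on $(u,v)$. Hence
\[
\bigl|\cost_{\instance}(\mc)-\cost_{\instance'}(\mc)\bigr| \le |E(\instance)\,\triangle\, E(\instance')| = \cost_{\instance}(C_s),
\]
since the pairs on which $\instance$ and $\instance'$ differ are exactly the disagreements of $C_s$ in $\instance$. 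This triangle-type inequality is the backbone of the argument.

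Now let $\mc^*$ be an optimal $k$-clustering for $\instance'$. By Lemma~\ref{lem:cliques-intact} we may take $\mc^*$ to keep every clique $Q_j$ intact, i.e.\ it does not split any cluster of $C_s$. Then
\[
\cost_{\instance}(\mc^*) \le \cost_{\instance'}(\mc^*) + \cost_{\instance}(C_s) \le \cost_{\instance'}(\opt_k) + \cost_{\instance}(C_s),
\]
and applying the difference bound again gives $\cost_{\instance'}(\opt_k)\le \cost_{\instance}(\opt_k)+\cost_{\instance}(C_s)$. Combining these,
\[
\cost_{\instance}(\mc^*) \le \cost_{\instance}(\opt_k) + 2\cost_{\instance}(C_s) \le \cost_{\instance}(\opt_k) + 2\alpha\,\opt_{\CC}.
\]
Finally $\opt_{\CC}\le\cost_{\instance}(\opt_k)$, because every $k$-clustering is in particular a feasible correlation clustering. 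This yields the factor $1+2\alpha$.

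The step needing care is the symmetric-difference bound: I need to check that each pair contributes at most one to the difference, and that pairs on which the instances agree contribute zero. Both follow pairwise from the definition of $\phi$. For the algorithmic use in Theorem~\ref{thm:main-1}, computing $\mc^*$ exactly is NP-hard, so one would substitute the PTAS of Lemma~\ref{lem:opt-theta} on $\instance'$. A $(1+\eps)$ factor then multiplies $\cost_{\instance'}(\opt_k)$, giving the $\bigo(\eps)$ loss; but the lemma as stated only asserts existence, so $\mc^*$ suffices.
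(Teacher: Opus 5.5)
Your proof is correct and follows essentially the same route as the paper. Both arguments build the disjoint-cliques instance $\instance'$ from $C_s$, bound the cost difference between $\instance$ and $\instance'$ by $\cost_\instance(C_s)$ in both directions, take a clique-preserving optimal $k$-clustering of $\instance'$ via Lemma~\ref{lem:cliques-intact}, and chain the inequalities using $\opt_{CC} \le \cost(\opt_k)$. Your pairwise symmetric-difference argument is a cleaner justification of the step the paper phrases as counting the $\cost_A$ edges added and the $\cost_D$ edges deleted.
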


\begin{proof}
Let $\instance$ be any instance of the correlation $k$-clustering problem, and let $C$ be an $\alpha$-approximate clustering solution to the unconstrained correlation clustering problem on $\instance$. 
We will call the clusters of $C$ ``groups'' in order to not confuse them with the ``clusters'' associated with correlation $k$-clustering. 
Also $\opt_{CC}$ denote the optimal unconstrained correlation clustering solution on $\instance$, and let $\opt_k$ denote the optimal solution for correlation $k$-clustering on $\instance$. 

For a clustering solution $C'$ to an instance $\mathcal{I}$, we denote by $\cost(C', \mathcal{I})$ the disagreement cost of clustering $C'$ on instance $\mathcal{I}$, which can be expressed as a sum of: \begin{enumerate}
    \item $\cost_A(C', \instance)$: the cost associated with non-edges between vertices clustered together.
    \item $\cost_D(C', \mathcal{I})$: the cost associated with edges between vertices not clustered together.
\end{enumerate}

First we build a new instance $\instance'$ from $\instance$ by turning each group of $C$ into a clique, as illustrated in \Cref{fig:cc-to-kcc} by deleting all inter-group edges (i.e., edges between different clusters of $C$), and adding all possible intra-group edges (edges inside each cluster of $C$).
This transforms our original instance $\instance$ into an instance $\instance'$ consisting of disjoint cliques, where each clique corresponds exactly to a cluster in $C$.

Since exactly $\cost_A(C, \instance)$ edges are added and $\cost_D(C, \instance)$ edges are deleted when transforming $\instance$ to $\instance'$, the cost of $\opt_k$ on the modified instance $\instance'$ can be related to its cost on the original instance $\instance$ as follows.
\begin{align*}
\cost(\opt_k, \instance') & \leq \cost(\opt_k, \instance) + \cost_A(C, \instance) + \cost_D(C, \instance) \\
& = \cost(\opt_k, \instance) + \cost(C, \instance).
\end{align*}
 The inequality follows because the cost of an optimal unrestricted correlation clustering solution is upper bounded by the optimal correlation $k$-clustering solution.

Note that $\instance'$ is a collection of cliques, corresponding to the groups in $C$. 
From \Cref{lem:cliques-intact}, we know that there exists an optimal solution $C_{\instance'}$ that does not split any cliques. 
Thus,
\begin{equation*}
\cost(C_{\instance'}, \instance') \leq \cost(\opt_k, \instance') .
\end{equation*}

When applying $C_{\instance'}$ to the original instance $\instance$, we incur additional costs due to the transformation from $\instance$ to $\instance'$. Noting that $C$ is an $\alpha$-approximation for unconstrained correlation clustering, we have $\cost(C, \instance) \leq \alpha \cdot \opt_{CC}$, and the fact that $\opt_{CC} \leq \cost(\opt_k, \instance)$ we obtain,
\begin{align*}
    \cost(C_{\instance'}, \instance) & \leq \cost(C_{\instance'}, \instance') + \cost(C, \instance) \\
    & \leq \cost(\opt_k, \instance') + \cost(C, \instance)\\
    & \leq \cost(\opt_k, \instance) + 2\cost(C, \instance) \\
    & \leq \cost(\opt_k,\instance) + 2\alpha \cost(\opt_{CC},\instance) \\
    & \leq \cost(\opt_k,\instance) + 2\alpha \cost(\opt_k,\instance) \\
    &= (1+2\alpha) \cost(\opt_k,\instance) .
\end{align*}

\end{proof}

\offline*

\begin{proof}
    Given an instance $\instance$, begin by computing $C = \mathcal{A}(\instance)$, which is an $\alpha$-approximate solution to unconstrained correlation clustering.  Next compute $\instance'$ as in the proof of \Cref{thm:cc-to-kcc}. 
    This is the instance in which all groups (or, clusters) of $C$ have been turned into disjoint cliques.
    Since $\instance'$ is an $\ell$-cliques instance, use the PTAS from \Cref{lem:opt-theta} to compute a $(1 + \eps)$-approximate correlation $k$-clustering $C'$ on $\instance'$.
    Recall that $C_{\instance'}$ denotes the optimal $k$-clustering on $\instance'$, and $\opt_k$ denotes the optimal $k$-clustering on $\instance$.
    Then
    \begin{align*}
        \cost(C',\instance) & \leq \cost(C',\instance') + \cost(C,\instance) \\
        & \leq (1 + \eps)\cost(C_{\instance'}) + \cost(C,\instance) \\
        & \leq (1 + \eps)\cost(\opt_k,\instance) + (2 + \eps)\cost(C,\instance) \\
        & \leq (1+2\alpha + \bigo(\eps)) \cost(\opt_k,\instance) .
    \end{align*}
    
\end{proof}
\section{ Lower Bound for Online Correlation $k$-Clustering}
\label{sec:lower-bound}
In this section, we present a lower bound of $\Omega(\log k)$ for online correlation $k$-clustering in the random arrival model, proving Theorem \ref{thm:onlinelowerbound}. We accomplish this by constructing an instance $\mathcal{I}$ on $n = 2(k-1)\ln k + 2$ vertices which consists of $k-1$ cliques, each of size $2\ln k$ and two singletons (see \Cref{fig:hard-instance-n}). We show that any algorithm must incur an expected cost of at least $\Omega( \log k)$, while the optimal algorithm incurs a unit cost on this instance. In particular, we show that any algorithm incurs this cost after $k\ln k$ vertices have arrived. 
\begin{figure}[htbp]
\centering
\begin{subfigure}{0.45\textwidth}
    \centering
            \tikzset{every picture/.style={line width=0.75pt}} 
        \begin{tikzpicture}[x=0.75pt,y=0.75pt,yscale=-1,xscale=1] 
\draw  [fill=gray!60  ,fill opacity=1 ] (143,99) .. controls (143,95.69) and (145.69,93) .. (149,93) -- (167,93) .. controls (170.31,93) and (173,95.69) .. (173,99) -- (173,205) .. controls (173,208.31) and (170.31,211) .. (167,211) -- (149,211) .. controls (145.69,211) and (143,208.31) .. (143,205) -- cycle ;
\draw  [fill=gray!60  ,fill opacity=1 ] (105,99) .. controls (105,95.69) and (107.69,93) .. (111,93) -- (129,93) .. controls (132.31,93) and (135,95.69) .. (135,99) -- (135,205) .. controls (135,208.31) and (132.31,211) .. (129,211) -- (111,211) .. controls (107.69,211) and (105,208.31) .. (105,205) -- cycle ;
\draw  [fill=gray!60  ,fill opacity=1 ] (254,99) .. controls (254,95.69) and (256.69,93) .. (260,93) -- (278,93) .. controls (281.31,93) and (284,95.69) .. (284,99) -- (284,206) .. controls (284,209.31) and (281.31,212) .. (278,212) -- (260,212) .. controls (256.69,212) and (254,209.31) .. (254,206) -- cycle ;
\draw  [fill=gray!60  ,fill opacity=1 ] (295,204.5) .. controls (295,200.36) and (298.36,197) .. (302.5,197) .. controls (306.64,197) and (310,200.36) .. (310,204.5) .. controls (310,208.64) and (306.64,212) .. (302.5,212) .. controls (298.36,212) and (295,208.64) .. (295,204.5) -- cycle ;
\draw  [fill=gray!60  ,fill opacity=1 ] (214,99) .. controls (214,95.69) and (216.69,93) .. (220,93) -- (238,93) .. controls (241.31,93) and (244,95.69) .. (244,99) -- (244,206) .. controls (244,209.31) and (241.31,212) .. (238,212) -- (220,212) .. controls (216.69,212) and (214,209.31) .. (214,206) -- cycle ;
\draw  [fill=gray!60  ,fill opacity=1 ] (316,204.5) .. controls (316,200.36) and (319.36,197) .. (323.5,197) .. controls (327.64,197) and (331,200.36) .. (331,204.5) .. controls (331,208.64) and (327.64,212) .. (323.5,212) .. controls (319.36,212) and (316,208.64) .. (316,204.5) -- cycle ;
\draw   (108,228) .. controls (108,232.67) and (110.33,235) .. (115,235) -- (185,235) .. controls (191.67,235) and (195,237.33) .. (195,242) .. controls (195,237.33) and (198.33,235) .. (205,235)(202,235) -- (275,235) .. controls (279.67,235) and (282,232.67) .. (282,228) ;
\draw   (90,94) .. controls (85.33,94) and (83,96.33) .. (83,101) -- (83,143) .. controls (83,149.67) and (80.67,153) .. (76,153) .. controls (80.67,153) and (83,156.33) .. (83,163)(83,160) -- (83,205) .. controls (83,209.67) and (85.33,212) .. (90,212) ;

\draw (30,141) node [anchor=north west][inner sep=0.75pt]   [align=left] {$2 \ln k$};
\draw (171,249) node [anchor=north west][inner sep=0.75pt]   [align=left] {$k-1$};
\draw (184,165) node [anchor=north west][inner sep=0.75pt]   [align=left] {$\dots$};

\end{tikzpicture}
\vspace{2pt}
    \caption{ The instance representing cliques of size $2 \ln k$ and $2$ singletons.} 
    \label{fig:hard-instance-n}
    \end{subfigure}
    \hfill
    \begin{subfigure}{0.45\textwidth}
            \centering
            \tikzset{every picture/.style={line width=0.75pt}} 
            \begin{tikzpicture}[x=0.75pt,y=0.75pt,yscale=-1,xscale=1]

\draw  [fill=gray!60  ,fill opacity=1 ] (216,116) .. controls (216,112.69) and (218.69,110) .. (222,110) -- (240,110) .. controls (243.31,110) and (246,112.69) .. (246,116) -- (246,222) .. controls (246,225.31) and (243.31,228) .. (240,228) -- (222,228) .. controls (218.69,228) and (216,225.31) .. (216,222) -- cycle ;
\draw  [fill=gray!60  ,fill opacity=1 ] (178,116) .. controls (178,112.69) and (180.69,110) .. (184,110) -- (202,110) .. controls (205.31,110) and (208,112.69) .. (208,116) -- (208,222) .. controls (208,225.31) and (205.31,228) .. (202,228) -- (184,228) .. controls (180.69,228) and (178,225.31) .. (178,222) -- cycle ;
\draw  [fill=gray!60  ,fill opacity=1 ] (326,115) .. controls (326,111.69) and (328.69,109) .. (332,109) -- (350,109) .. controls (353.31,109) and (356,111.69) .. (356,115) -- (356,222) .. controls (356,225.31) and (353.31,228) .. (350,228) -- (332,228) .. controls (328.69,228) and (326,225.31) .. (326,222) -- cycle ;
\draw  [fill=gray!60  ,fill opacity=1 ] (367,220.5) .. controls (367,216.36) and (370.36,213) .. (374.5,213) .. controls (378.64,213) and (382,216.36) .. (382,220.5) .. controls (382,224.64) and (378.64,228) .. (374.5,228) .. controls (370.36,228) and (367,224.64) .. (367,220.5) -- cycle ;
\draw  [fill=gray!60  ,fill opacity=1 ] (288,115) .. controls (288,111.69) and (290.69,109) .. (294,109) -- (312,109) .. controls (315.31,109) and (318,111.69) .. (318,115) -- (318,222) .. controls (318,225.31) and (315.31,228) .. (312,228) -- (294,228) .. controls (290.69,228) and (288,225.31) .. (288,222) -- cycle ;
\draw  [fill=gray!60  ,fill opacity=1 ] (367,197.5) .. controls (367,193.36) and (370.36,190) .. (374.5,190) .. controls (378.64,190) and (382,193.36) .. (382,197.5) .. controls (382,201.64) and (378.64,205) .. (374.5,205) .. controls (370.36,205) and (367,201.64) .. (367,197.5) -- cycle ;

\draw (253,186) node [anchor=north west][inner sep=0.75pt]   [align=left] {$\dots$};

\end{tikzpicture}
\vspace{30pt}
    \caption{\vspace{20pt}An optimal clustering for the instance.}
    \label{fig:hard-instance-opt-clustering-n}
    \end{subfigure}
            \caption{}
    \end{figure}

One possible clustering is obtained by assigning all $k-1$ cliques to unique clusters, and assigning singleton vertices to the final cluster, which incurs a unit cost (see \Cref{fig:hard-instance-opt-clustering-n}). 
To exhibit a lower bound, we prove that after $ck\ln k$ vertices have arrived for some constant $c$, then with at least constant probability: i) no vertices have arrived from at least one clique, ii) both singleton vertices have arrived. 
As a result, the first vertex to arrive from the clique from which no vertices have arrived by time $c k\ln k$, is indistinguishable from the singleton vertices, and hence with constant probability, the algorithm must assign a clique and a singleton vertex to the same cluster incurring a $\Omega(\log k)$ cost (see Figure \ref{fig:both'}). In the following lemmas, we bound the probability of the aforementioned events.

\begin{lemma}[Both singletons arrive]
\label{lem:singletons_arrive}
With probability at least $\frac{c^2}{24}$ where $c<1$ is a constant, both singleton vertices arrive by time $ck\ln k$, where $k\geq \frac{2}{c}\geq 2$.
\end{lemma}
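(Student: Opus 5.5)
The plan is to compute the probability exactly, since the arrival order is a uniformly random permutation of the $n = 2(k-1)\ln k + 2$ vertices. Let $t = \lfloor ck\ln k\rfloor$. The two singletons occupy a uniformly random pair of positions among the $n$ positions, so
\[
\Pr[\text{both singletons arrive by time } t] = \frac{\binom{t}{2}}{\binom{n}{2}} = \frac{t(t-1)}{n(n-1)} .
\]
Equivalently, one can write this as $\frac{t}{n}\cdot\frac{t-1}{n-1}$: the first singleton must land in the first $t$ positions, and then the second singleton must land in one of the remaining $t-1$ early slots among $n-1$ positions. The whole proof therefore reduces to lower bounding this ratio by $c^2/24$.

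First, I would bound $n$ from above. Since $n = 2(k-1)\ln k + 2 \le 2k\ln k + 2$, and $k \ge 2$ gives $k\ln k \ge 2\ln 2 > 1$, we get $n \le 4k\ln k$. The bound $n - 1 \le n \le 4k\ln k$ follows as well.

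Second, I would bound $t$ and $t-1$ from below. The floor gives $t \ge ck\ln k - 1$. The hypothesis $k \ge 2/c$ gives $ck \ge 2$, so $ck\ln k \ge 2\ln k \ge 2\ln 2 \approx 1.386$. This step is the main (mild) obstacle: we need $t - 1 \ge \Omega(ck\ln k)$, but with the floor, $t-1 \ge ck\ln k - 2$, which may be negative when $ck\ln k$ is close to $1.39$. I expect this is where the constant $24$ and the hypothesis $k \ge 2/c$ enter. The cleanest remedy, if the paper's intended time is ``by time $ck\ln k$'' for non-integer values, is to treat $t$ as the largest integer not exceeding $ck\ln k$ and argue case by case. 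In the regime $ck\ln k \ge 3$ (say), we have $t \ge \frac{2}{3}ck\ln k$ and $t - 1 \ge \frac{1}{3}ck\ln k$. The product then gives
\[
\frac{t(t-1)}{n^2} \ge \frac{\frac{2}{9}c^2k^2\ln^2 k}{16k^2\ln^2 k} = \frac{c^2}{72},
\]
which is off by a constant factor from $c^2/24$. To reach $c^2/24$, I would tighten these estimates. I would use $n \le 2k\ln k + 2 \le 3k\ln k$ when $k\ln k \ge 2$, and use $t - 1 \ge ck\ln k/2$ when $ck\ln k \ge 4$. Together these give $\frac{c^2/2}{9} = c^2/18 \ge c^2/24$.

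Third, I would handle the small residual cases, where $ck\ln k$ is between about $1.39$ and $4$ or $k$ is tiny, directly. In those cases $k$ is bounded, the probability can be checked explicitly, or $c$ can be taken small enough that these cases force $k \ge 2/c$ to be large, so that $ck\ln k \ge 2\ln(2/c)$ is large. Choosing $c$ as a small constant makes this automatic. I would state the final constant as $c^2/24$, absorbing the slack.
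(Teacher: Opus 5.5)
Your approach is the paper's: the probability is exactly $\frac{t}{n}\cdot\frac{t-1}{n-1}$, and the rest is elementary bounds on $t$ and $n$. The paper treats $ck\ln k$ as an integer time, as it treats $n$ and the clique size $2\ln k$ throughout. It uses $k\ge 2/c$ to get $ck\ln k\ge 2$, hence $t-1\ge t/2$. It then uses $n\le 2(k\ln k+1)$ with $k\ln k+1\ge 3$ to reach $\frac{c^2}{8}\bigl(1-\frac{2}{k\ln k+1}\bigr)\ge\frac{c^2}{24}$. Under that convention your bounds $t-1\ge t/2$ and $n\le 3k\ln k$ already give $c^2/18$ with no casework.

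Once you insist on $t=\lfloor ck\ln k\rfloor$, however, your write-up has two problems.

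\emph{The $c^2/18$ step.} It implicitly uses $t\ge ck\ln k$, which the floor reverses. With $x:=ck\ln k\ge 4$, the correct estimate is $t(t-1)\ge (x-1)(x-2)\ge\frac{3}{8}x^2$. This gives $\frac{t(t-1)}{n(n-1)}\ge\frac{3x^2}{8\cdot 9k^2\ln^2 k}=\frac{c^2}{24}$, which is exactly the target with no slack.

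\emph{The regime $x<4$.} You leave it open. Your fallback of taking $c$ small changes the statement, which is for an arbitrary constant $c<1$. It would not even cover the paper's own use $c=2/7$, $k=7$, where $x=2\ln 7\approx 3.89$.

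This regime is easy to close.
\begin{itemize}
\item Since $k\ge 2/c>2$ forces $k\ge 3$, you have $x\ge 2\ln 3>2$ and $t\in\{2,3\}$. Your worry about $x$ near $1.39$ never arises.
\item If $3\le x<4$, the probability is $\frac{6}{n(n-1)}\ge\frac{6}{9k^2\ln^2 k}>\frac{x^2}{24k^2\ln^2 k}=\frac{c^2}{24}$.
\item If $x<3$, then $2\ln k\le x<3$ forces $k\in\{3,4\}$. For these two values you can check directly that $\frac{2}{n(n-1)}>\frac{3}{8k^2\ln^2 k}>\frac{c^2}{24}$.
\end{itemize}
With these two repairs your proof is complete, and it is more careful about integrality than the paper's.
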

\begin{proof}
The probability that both singleton vertices arrive by time $ck \ln k$ is, 
\begin{align*}
     \left(\frac{ck \ln k}{n}\right) \cdot \left(\frac{ck \ln k -1}{n - 1}\right) \geq \left(\frac{ck \ln k}{n}\right) \cdot \left(\frac{ck \ln k -1}{n}\right)
     = \left(\frac{k \ln k}{n}\right)^2 - \frac{k \ln k}{n^2} .
\end{align*}
Note that when $k\geq \frac{2}{c}$, $ck\ln k \geq 2$, yielding, 
\begin{align*}
    \left(\frac{ck \ln k}{n}\right)^2 - \frac{ck \ln k}{n^2} & \geq \frac{1}{2} \left(\frac{ck \ln k}{n}\right)^2 = \frac{1}{2} \left(\frac{ck \ln k}{2(k-1)\ln k + 2}\right)^2 \geq \frac{1}{8}\left(\frac{ck \ln k}{k \ln k + 1}\right)^2 \\
    &= \frac{c^2}{8}\left(1 - \frac{1}{k \ln k+1}\right)^2 = \frac{c^2}{8}\left(1-\frac{2}{k\ln k+1}+\frac{1}{(k\ln k+1)^2}\right)\\
    &\geq \frac{c^2}{8} - \frac{c^2}{4(k\ln k+1)} \geq \frac{c^2}{8}-\frac{c^2}{12} = \frac{c^2}{24}
\end{align*}

\end{proof}

\begin{lemma}[At least one large clique missing]
\label{lem:clique_missing}
With probability at least $1 - \frac{1}{e^{1/2\sqrt{k}}}$, there exists at least one large clique from which no vertices have arrived by time $ck\ln k$ where $k\geq \frac{4}{2-5c}$ and $0<c<2/5$.
\end{lemma}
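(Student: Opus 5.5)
Let $T = ck\ln k$ and let $X$ be the number of the $k-1$ cliques (each of size $m = 2\ln k$) from which no vertex has arrived among the first $T$ arrivals. The plan is a second-moment argument: compute $\E[X]$ and $\E[X^2]$, then invoke Paley--Zygmund, $\Pr[X>0]\ge \E[X]^2/\E[X^2]$. Markov alone cannot give a failure bound like $e^{-\sqrt{k}/2}$, so if that precise form is needed I would fall back on the exposure argument below.

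\textbf{First moment.} For a fixed clique $K_i$, the first $T$ arrivals form a uniformly random $T$-subset of the $n$ vertices. By \Cref{fact: samplingworep} with $a=m$, $b=T$,
\[
\Pr[K_i \text{ missed}] = \frac{\binom{n-m}{T}}{\binom{n}{T}} \ge \Bigl(1-\frac{m}{n-T+1}\Bigr)^T .
\]
Since $n\approx 2k\ln k$ and $T=ck\ln k$, we have $n-T \approx (2-c)k\ln k$. Hence $mT/(n-T) \approx 2c\ln k/(2-c)$, and, using $1-x\ge e^{-x-x^2}$ for small $x$, the miss probability is about $k^{-2c/(2-c)}$. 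For $c<2/5$ this gives $2c/(2-c)<1/2$, so
\[
\E[X] \ge (k-1)\,k^{-2c/(2-c)} \ge \Omega(\sqrt{k}) .
\]
The hypothesis $k\ge 4/(2-5c)$ is the kind of condition needed to absorb the lower-order terms (the $-1$'s and the $+1$ in $n-T+1$) into this exponent.

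\textbf{Second moment.} For two cliques, the joint miss probability is $\binom{n-2m}{T}/\binom{n}{T}$, which is at most $\bigl(\binom{n-m}{T}/\binom{n}{T}\bigr)^2$. The reason is that conditioning on $K_i$ being missed can only make $K_j$ less likely to be missed; this is the negative-correlation property of sampling without replacement (\Cref{lem:chernoff-hypergeometric}). Therefore $\E[X^2]\le \E[X]+\E[X]^2$, and
\[
\Pr[X=0] \le \frac{1}{1+\E[X]} \le O\bigl(k^{-1/2}\bigr) .
\]
This tends to zero and is certainly a constant bound, which is all the lower bound in \Cref{thm:onlinelowerbound} needs.

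\textbf{Sharper form.} To obtain the stated $e^{-\sqrt{k}/2}$ I would instead expose arrivals clique by clique and use the Chernoff bound for negatively associated indicators (\Cref{fact: samplingworep} together with \Cref{lem:chernoff-hypergeometric}). The key fact is that the miss-indicators of distinct cliques are negatively associated in the sense needed for the bound
\[
\Pr[X=0] \le \prod_i \bigl(1-\Pr[K_i \text{ missed}]\bigr) \le \exp\bigl(-\E[X]\bigr) \le \exp\bigl(-\sqrt{k}/2\bigr),
\]
which is the claimed bound up to the constant in the exponent.

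The main obstacle is justifying this product inequality, namely that the events ``$K_i$ is hit'' are negatively correlated. I would derive it from the negative association of the occupancy indicators of a uniformly random $T$-subset: ``hit'' events are increasing functions of disjoint coordinate blocks, so $\Pr[\text{all hit}]\le\prod_i\Pr[K_i \text{ hit}]$. Once that is in place, the only remaining work is the careful constant-chasing that produces $\E[X]\ge \sqrt{k}/2$ under $c<2/5$ and $k\ge 4/(2-5c)$.
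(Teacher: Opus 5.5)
Your ``sharper form'' is essentially the paper's proof. You lower-bound each miss probability via Fact~\ref{fact: samplingworep} and $1-x\ge e^{-x/(1-x)}$, which gives $\E[X]\ge (k-1)\,k^{-2c/(2-c-4/k)}\ge \tfrac12\sqrt{k}$, and then use $\Pr[X=0]\le\prod_j(1-\E[X_j])\le e^{-\E[X]}$. Two points on that route:
\begin{itemize}
\item Carry the $4/k$ term in the exponent rather than asserting the unproved $(k-1)k^{-2c/(2-c)}$. The hypothesis $k\ge 4/(2-5c)$ is exactly the condition that $2c/(2-c-4/k)\le 1/2$.
\item Your justification of the product inequality via negative association of the occupancy indicators is the right one. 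It is more than the bare ``negatively correlated'' the paper cites at this point, and it is exactly how the paper later proves \Cref{lem:vars_are_na}.
\end{itemize}

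Your headline Paley--Zygmund argument only yields $\Pr[X=0]\le 1/(1+\E[X])=O(k^{-1/2})$, which does not give the stated $1-e^{-\sqrt{k}/2}$ bound. So it is the fallback, not the second-moment plan, that proves the lemma.
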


\begin{proof}
Fix a clique $K_j$, and let $X_j$ denote the indicator random variable which is 1 if no vertex from $K_j$ arrives by time $ck\ln k$ and 0 otherwise. Note that $n=2(k-1)\ln k +2$. It follows that,
\begin{align*}
\mathbb{E}[X_j] 
= \frac{\binom{n - 2\ln k}{ck\ln k}}{\binom{n}{ck\ln k}} &\geq \left(1-\frac{2\ln k}{n-ck\ln k+1}\right)^{ck\ln k} = \left(1-\frac{2\ln k}{((2-c)k-2)\ln k+3}\right)^{ck\ln k}\\
&\geq \left(1-\frac{2\ln k}{((2-c)k-2)\ln k}\right)^{ck\ln k} = \left(1-\frac{2}{(2-c)k-2}\right)^{ck\ln k}
\end{align*}
where the first inequality follows from \ref{fact: samplingworep}. Noting that for $0\leq x <1$, $1-x\geqq e^{-\frac{x}{1-x}}$  and setting $x=\frac{2}{(2-c)k-2}$ which is upper bounded by $1$ whenever $c<2$, we have that $\frac{x}{1-x}= \frac{2}{(2-c)k-4}$. Thus, 
\begin{equation}\label{eq:ex_num_clique_not_arrive}
    \mathbb{E}[X_j] \geq e^{-\frac{2ck\ln k}{(2-c)k-4}}= k^{\frac{-2ck}{(2-c)k-4}} = k^{\frac{-2c}{2-c-4/k}}.
\end{equation}
Now, let $X=\sum_{j=1}^{k-1} X_j$ denote the random variable corresponding to the number of cliques from which no vertex arrives by time $ck\ln k$. By linearity of expectation, $\mathbb{E}[X] = \sum_{j=1}^{k-1} \mathbb{E}[X_j] \geq (k-1) E[X_j]\geq (k-1)k^{\frac{-2c}{2-c-4/k}}$. 

Note that the events $X_1, X_2, ..., X_{k-1}$ are negatively correlated. Thus, $\Pr[\cap_{j=1}^{k-1} (X_j=0)]\leq \prod_{j=1}^{k-1} \Pr[X_j=0]$. On the other hand, $\Pr[X_j=0] = 1-\mathbb{E}[X_j]$, so $\Pr[X=0]\leq \prod_{j=1}^{k-1} (1-\mathbb{E}[X_j]) \leq e^{-\mathbb{E}[X]}$.  We note that $E[X]\geq (k-1)k^{\frac{-2c}{2-c-4/k}}\geq \frac{1}{2}k^{1-\frac{2c}{2-c-4/k}}$ for $k\geq 2$. Hence. 
\begin{align*}
    \Pr[X>0]= 1- \Pr[X=0]\geq 1-e^{-\mathbb{E}[X]} = 1-\frac{1}{e^{\mathbb{E}[X]}}
\end{align*}

We observe that $1-\frac{2c}{2-c-4/k}\geq \frac{1}{2}$ whenever $k\geq \frac{4}{2-5c}$ and $c<\frac{2}{5}$ concluding the proof.

\end{proof}

\begin{figure}[htbp]
\centering
\begin{subfigure}{0.45\textwidth}
    \centering

\tikzset{every picture/.style={line width=0.75pt}} 

\begin{tikzpicture}[x=0.7pt,y=0.7pt,yscale=-1,xscale=1]
    \draw  [fill=gray!60  ,fill opacity=1 ] (163,157) .. controls (163,153.69) and (165.69,151) .. (169,151) -- (187,151) .. controls (190.31,151) and (193,153.69) .. (193,157) -- (193,263) .. controls (193,266.31) and (190.31,269) .. (187,269) -- (169,269) .. controls (165.69,269) and (163,266.31) .. (163,263) -- cycle ;
    \draw  [fill=gray!60  ,fill opacity=1 ] (125,157) .. controls (125,153.69) and (127.69,151) .. (131,151) -- (149,151) .. controls (152.31,151) and (155,153.69) .. (155,157) -- (155,263) .. controls (155,266.31) and (152.31,269) .. (149,269) -- (131,269) .. controls (127.69,269) and (125,266.31) .. (125,263) -- cycle ;
    \draw  [fill=gray!60  ,fill opacity=1 ] (273,156) .. controls (273,152.69) and (275.69,150) .. (279,150) -- (297,150) .. controls (300.31,150) and (303,152.69) .. (303,156) -- (303,263) .. controls (303,266.31) and (300.31,269) .. (297,269) -- (279,269) .. controls (275.69,269) and (273,266.31) .. (273,263) -- cycle ;
    \draw  [fill=gray!60  ,fill opacity=1 ] (314,261.5) .. controls (314,257.36) and (317.36,254) .. (321.5,254) .. controls (325.64,254) and (329,257.36) .. (329,261.5) .. controls (329,265.64) and (325.64,269) .. (321.5,269) .. controls (317.36,269) and (314,265.64) .. (314,261.5) -- cycle ;
    \draw  [fill=gray!60  ,fill opacity=1 ] (235,156) .. controls (235,152.69) and (237.69,150) .. (241,150) -- (259,150) .. controls (262.31,150) and (265,152.69) .. (265,156) -- (265,263) .. controls (265,266.31) and (262.31,269) .. (259,269) -- (241,269) .. controls (237.69,269) and (235,266.31) .. (235,263) -- cycle ;
    \draw  [fill=gray!60  ,fill opacity=1 ] (132,137.5) .. controls (132,133.36) and (135.36,130) .. (139.5,130) .. controls (143.64,130) and (147,133.36) .. (147,137.5) .. controls (147,141.64) and (143.64,145) .. (139.5,145) .. controls (135.36,145) and (132,141.64) .. (132,137.5) -- cycle ;
    
    \draw (203,225) node [anchor=north west][inner sep=0.75pt]   [align=left] {$\dots$};

\end{tikzpicture}
    \caption{}
    \label{fig:sub1}
\end{subfigure}
\hfill
\begin{subfigure}{0.45\textwidth}
    \centering

\tikzset{every picture/.style={line width=0.75pt}}       

\begin{tikzpicture}[x=0.7pt,y=0.7pt,yscale=-1,xscale=1]

\draw  [fill=gray!60  ,fill opacity=1 ] (183,176) .. controls (183,172.69) and (185.69,170) .. (189,170) -- (207,170) .. controls (210.31,170) and (213,172.69) .. (213,176) -- (213,283) .. controls (213,286.31) and (210.31,289) .. (207,289) -- (189,289) .. controls (185.69,289) and (183,286.31) .. (183,283) -- cycle ;
\draw  [fill=gray!60  ,fill opacity=1 ] (145,174) .. controls (145,170.69) and (147.69,168) .. (151,168) -- (169,168) .. controls (172.31,168) and (175,170.69) .. (175,174) -- (175,283) .. controls (175,286.31) and (172.31,289) .. (169,289) -- (151,289) .. controls (147.69,289) and (145,286.31) .. (145,283) -- cycle ;
\draw  [fill=gray!60  ,fill opacity=1 ] (297,281.5) .. controls (297,277.36) and (300.36,274) .. (304.5,274) .. controls (308.64,274) and (312,277.36) .. (312,281.5) .. controls (312,285.64) and (308.64,289) .. (304.5,289) .. controls (300.36,289) and (297,285.64) .. (297,281.5) -- cycle ;
\draw  [fill=gray!60  ,fill opacity=1 ] (256,176) .. controls (256,172.69) and (258.69,170) .. (262,170) -- (280,170) .. controls (283.31,170) and (286,172.69) .. (286,176) -- (286,284) .. controls (286,287.31) and (283.31,290) .. (280,290) -- (262,290) .. controls (258.69,290) and (256,287.31) .. (256,284) -- cycle ;
\draw  [fill=gray!60  ,fill opacity=1 ] (317,281.5) .. controls (317,277.36) and (320.36,274) .. (324.5,274) .. controls (328.64,274) and (332,277.36) .. (332,281.5) .. controls (332,285.64) and (328.64,289) .. (324.5,289) .. controls (320.36,289) and (317,285.64) .. (317,281.5) -- cycle ;
\draw  [fill=gray!60  ,fill opacity=1 ] (146,48) .. controls (146,44.69) and (148.69,42) .. (152,42) -- (170,42) .. controls (173.31,42) and (176,44.69) .. (176,48) -- (176,152) .. controls (176,155.31) and (173.31,158) .. (170,158) -- (152,158) .. controls (148.69,158) and (146,155.31) .. (146,152) -- cycle ;

\draw (228,248) node [anchor=north west][inner sep=0.75pt]   [align=left] {...};

\end{tikzpicture}
    \caption{}
    \label{fig:sub2-n}
\end{subfigure}
\caption{(a) If one singleton is clustered together with a clique, the cost is at least $2 \ln k$.
(b) If two large cliques are clustered together, the cost is at least $4 \ln^2 k$.}
\label{fig:both'}
\end{figure}
   

\lowerbound*

\begin{proof}
    We consider the behavior of any online algorithm on the instance consisting of $k-1$ cliques of size $2\ln k$ and $2$ singletons. It follows from \Cref{lem:singletons_arrive,lem:clique_missing}, that for $c=2/7$, and $k\geq \frac{4}{2-5c}\geq 7$, that with probability at least $1/600$, both singleton vertices arrive by time $ck\ln k$ and no vertices have arrived from at least one large clique. Let $t$ be the time step when the first vertex from the last clique arrives. In particular, at time $t$ there are three vertices which do not yet have any neighbors from the set of vertices that have already arrived. Let $v_1$, and $v_2$ denote the singletons and $v_3$ denote the first vertex from a clique which arrived at some time $t$. From the above, it follows that there are at least a constant fraction of all random arrival patterns such that: at the last time step $t$ at which there are three vertices that are not neighbors to any vertices that arrived before $t$, the vertex to arrive at that time step is one of the two singleton vertices $v_1$ and $v_2$ in our instance. More concretely, there exists at least $(3!)\cdot\frac{1}{600}=\frac{1}{100}=O(1)$ random arrival patterns of the type for which at some time $t$, there are three vertices for which no neighbors have arrived until time $t$. Thus, an online algorithm cannot distinguish between these random arrival patterns in which two singleton vertices and a single vertex from some clique have arrived by time $t$.
    
    There is a constant number of choices that an algorithm can make regarding the assignment of the three vertices at time $t$. In particular, it can choose to: i) assign all three vertices to separate clusters, ii) all three vertices to the same cluster, or iii) two vertices to one cluster and one vertex to another cluster. In case i), it follows that there must exist a cluster to which the algorithm has assigned either at least two cliques together, incurring a cost of $\Omega(\log^2 k)$ or a singleton vertex with a clique incurring a cost of $\Omega(\log k)$. In case ii) the algorithm incurs a cost of at least $\Omega(\log k)$ since it is suboptimal for clique vertices to be separately clustered and disagreement cost of $\Omega(\log k)$ is incurred between the singletons and clique vertices. In case iii), only with probability at most $1/3$ it assigns the two singletons to the same cluster. Thus, with probability at least $2/3$, the algorithm incurs a cost of $\Omega(\log k)$ in this case.
It follows that any online algorithm must incur a cost of $\Omega(\log k)$ with a constant probability.

Finally, we note that our lower bound instance has $k = \Omega(n/\log n)$, and hence we have also derived an $\Omega(\log n)$ lower bound.
   
\end{proof}

\subsection{Lower Bound for Online Algorithms which do not Split Cliques} \label{sec:lb_logsq}

To conclude this section, we show a stronger lower bound of $\Omega(\log^2 n)$-competitiveness for online algorithms which never split cliques, using the same hard instance. 
Notably, this lower bound applies not only to our online algorithms \BP\ and \UP and thus shows that our algorithm's performance is no better than polylogarithmic-competitive.
Additionally, any online algorithm that uses \Pivot\ as a building block in a similar way will also face the same barrier.

This lower bound indicates a potential difference between how to optimally solve \kCC{} for disjoint cliques in the offline versus the online case.
In the offline case, the optimal solution never splits cliques.
However, it is not out of the question that a $\bigo(\log n)$-competitive algorithm could exist for the disjoint cliques case, assuming the algorithm is able to smartly split cliques to rectify past mistakes in its clustering. 
Further pursuit of a $\bigo(\log n)$-competitive algorithm is left as an open problem.

\lblogsq*

To prove, we start with a simple negative association lemma.
Recall that in the above Section, the proof of \Cref{lem:clique_missing} only required negative {\em correlation}. 
However for this Section, we require the stronger definition of negative {\em association}.

\begin{lemma}\label{lem:vars_are_na}
    For clique $K_j$, let $X_j$ denote the indicator variable for if clique $K_j$ did not arrive by time $ck\ln k$. 
    That is, $X_j=1$ if clique $K_j$ has no vertex arrive by time $ck\ln k$, and $X_j=0$ otherwise.
    Then the variables $X_1,\hdots,X_{k-1}$ are negatively associated.
\end{lemma}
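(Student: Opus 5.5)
The plan is to derive negative association from the classical closure properties in \cite{joag1983negative,dubhashi1996balls}. Let $m = ck\ln k$ be the number of arrived vertices. The set of the first $m$ arrivals is a uniformly random $m$-subset of $V$. For each vertex $v\in V$, let $Y_v$ indicate that $v$ is among the first $m$ arrivals. The vector $(Y_v)_{v\in V}$ is then a uniformly random $0/1$ vector with exactly $m$ ones. This is the standard "sampling without replacement" (permutation) distribution, and it is negatively associated by \Cref{lem:chernoff-hypergeometric}, i.e.\ Section 3.1 of \cite{joag1983negative}. Equivalently, it is the zero--one permutation-distribution result, or the conditioning of independent Bernoullis on their sum.

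Next, I express each $X_j$ as a function of the $Y_v$ over a block of vertices: $X_j = \prod_{v\in K_j}(1-Y_v) = \mathbf{1}[\sum_{v\in K_j} Y_v = 0]$. The cliques $K_1,\dots,K_{k-1}$ are pairwise disjoint subsets of $V$. Moreover, every $X_j$ is a non-increasing function of its own coordinates $(Y_v)_{v\in K_j}$. I then invoke the closure property: non-decreasing (or uniformly non-increasing) functions of disjoint subsets of negatively associated variables are themselves negatively associated (\cite{joag1983negative}, property P6; \cite{dubhashi1996balls}). Taking all the functions to be non-increasing, the property applies directly, because the definition of NA is symmetric under negating every function. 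This gives that $X_1,\dots,X_{k-1}$ are negatively associated. The two singleton vertices are simply coordinates that none of the $X_j$ use, and NA of a subfamily is inherited, so their presence causes no issue.

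The main obstacle is a single point: certifying that the base vector $(Y_v)$ is NA. A uniform $m$-subset is not a product distribution, so NA cannot be taken for free. I would cite the permutation-distribution theorem of \cite{joag1983negative}, which already underlies \Cref{lem:chernoff-hypergeometric}. If a self-contained justification were wanted, I would use the fact that independent Bernoulli variables conditioned on their sum form an NA family (a consequence of log-concavity, \cite{joag1983negative} Theorem 2.8). The remaining step is only checking monotonicity and disjointness, both of which are immediate.
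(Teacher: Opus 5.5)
Your proposal is correct and follows the paper's proof essentially step for step. The paper likewise takes the arrival indicators $Y_v$ to be a uniform permutation of $ck\ln k$ ones and $n-ck\ln k$ zeros, which is negatively associated by the Joag-Dev--Proschan permutation-distribution result. It then writes each $X_j$ as a non-increasing function of the disjoint block $(Y_v)_{v\in K_j}$ and concludes by the closure property. Your extra remarks, that negative association is preserved when every function is negated and that the singleton coordinates are simply unused, make explicit details the paper leaves implicit.
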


\begin{proof}
To show this, we will use the following two properties of negatively associated distributions.

\begin{lemma}[Permutation Distributions are Negatively Associated \cite{joag1983negative}]\label{lem:perm_distr_na}
    Let $x_1\leq\hdots\leq x_n$ be $n$ values and let $Y_1,\hdots,Y_n$ be random variables drawn uniformly without replacement from the set $\{x_1\leq\hdots\leq x_n\}$. 
    That is, $Y_1,\hdots,Y_n$ is a uniformly random permutation of the value set $\{x_1\leq\hdots\leq x_n\}$.
    Then $Y_1,\hdots,Y_n$ are negatively associated.
\end{lemma}

\begin{lemma}[Closure of Negatively Associated Random Variables \cite{joag1983negative}]\label{lem:na_closure}
    Suppose the random variables $Y_1,\hdots,Y_n$ are negatively associated. 
    Also suppose that $f_1,\hdots,f_k:\mathbb{R}^n\rightarrow\mathbb{R}$ are all monotonically increasing (or all monotonically decreasing) functions, with each function $f_i$ depending on a disjoint subset of indices from $[n]$. 
    Then the random variables $f_1(\vec{Y}),\hdots,f_k(\vec{Y})$ are also negatively associated.
\end{lemma}

Showing that the variables $X_1,\hdots,X_{k}$ are negatively associated, where recall $X_j$ denotes the indicator variable for if clique $K_j$ did not arrive by time $ck\ln k$, is a simple application of the above two properties.

First, let $x_1,\hdots,x_n$ be the set of values with exactly $(n-ck\ln k)$ copies of $0$ and $ck\ln k$ copies of $1$. 
Let $Y_1,\hdots,Y_n$ be the permutation distribution on the value set $x_1,\hdots,x_n$. 
Then by \Cref{lem:perm_distr_na}, $Y_1,\hdots,Y_n$ is negatively associated.
Note that $Y_j$ can be interpreted as the indicator variable for if vertex $j$ arrived before time $ck\ln k$ or not. 
$Y_j=1$ if vertex $j$ arrived before time $ck\ln k$, and $Y_j=0$ if vertex $j$ did not arrive before time $ck\ln k$.

Next, let $f_1,\hdots,f_{k-1}:\{0,1\}^n\leftarrow\{0,1\}$ be defined by
\[ 
    f_j(\vec{Y}) = \begin{cases} 
      0 & \mbox{if any vertex in } K_j \mbox{ has } Y_j=1 \\
      1 & \mbox{otherwise} 
   \end{cases} ~ .
\]
Each function $f_j$ is monotonically decreasing. 
Additionally, each function$f_1,\hdots,f_{k-1}$ depends on disjoint subsets of indices from $[n]$. 
Finally, note that for each $j$, $f_j(\vec{Y}) = X_j$. 
Therefore by \Cref{lem:na_closure}, the variables $X_1,\hdots,X_{k-1}$ are negatively associated.

\end{proof}

\begin{lemma}[Two Cliques Arrive Late]\label{lem:lb_2clique_arrive_late}
    Assuming $c<\frac{2}{5}$ and $k\geq \frac{4}{2-5c}$, then with probability at least $1-e^{2 - 2\sqrt{k}}$, at least two cliques arrive after time $ck\ln k$.
\end{lemma}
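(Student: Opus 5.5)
The plan is to reuse the setup of \Cref{lem:clique_missing}. Let $X=\sum_{j=1}^{k-1}X_j$, where $X_j$ indicates that no vertex of $K_j$ has arrived by time $ck\ln k$. We must show $\Pr[X\le 1]\le e^{2-2\sqrt{k}}$. From the proof of \Cref{lem:clique_missing}, equation~\eqref{eq:ex_num_clique_not_arrive}, we already have the following for $c<\frac{2}{5}$ and $k\ge \frac{4}{2-5c}$:
\begin{itemize}
\item $p:=\E[X_j]\ge k^{-\frac{2c}{2-c-4/k}}$;
\item $\mu:=\E[X]\ge \frac12 k^{1-\frac{2c}{2-c-4/k}}\ge \frac12\sqrt{k}$.
\end{itemize}
By \Cref{lem:vars_are_na}, the $X_j$ are negatively associated. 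This is exactly the reason the stronger notion was introduced: it gives access both to a Chernoff lower tail via \Cref{lem:chernoff} and to product bounds on events determined by disjoint index sets.

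The cleanest route is a direct union bound over which clique, if any, is the unique late one. We write
\[
\Pr[X\le 1]=\Pr[X=0]+\sum_{j}\Pr\Bigl[X_j=1,\ \textstyle\bigcap_{i\ne j}\{X_i=0\}\Bigr].
\]
For the first term, negative association (indeed negative correlation already suffices) gives $\Pr[X=0]\le\prod_i(1-p_i)\le e^{-\mu}$.

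For each summand, the event $\bigcap_{i\ne j}\{X_i=0\}$ is decreasing in the variables $\{X_i\}_{i\ne j}$, while $\{X_j=1\}$ is increasing in $X_j$. Negative association applied to these disjoint index sets, followed by iterating the product bound, gives
\[
\Pr\Bigl[X_j=1,\ \textstyle\bigcap_{i\ne j}\{X_i=0\}\Bigr]\le p_j\prod_{i\ne j}(1-p_i)\le p_j\, e^{-(\mu-p_j)}\le e\,p_j e^{-\mu}.
\]
Summing over $j$ yields $\Pr[X\le 1]\le (1+e\mu)e^{-\mu}$.

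It then remains to show $(1+e\mu)e^{-\mu}\le e^{2-2\sqrt{k}}$. The main obstacle is here: the bound $\mu\ge\frac12\sqrt{k}$ only yields something like $e^{-\sqrt{k}/2}$ times a polynomial factor, weaker than the claimed exponent $2-2\sqrt{k}$. I would therefore first try to sharpen the estimate on $\mu$. For $c<2/5$ and $k\ge 4/(2-5c)$, the exponent $1-\frac{2c}{2-c-4/k}$ is at least $1/2$, and typically strictly more, so $\mu\ge\frac{k-1}{k}k^{1/2+\delta}$. This lets the polynomial factor $1+e\mu$ be absorbed, giving $(1+e\mu)e^{-\mu}\le e^{2-\mu'}$ with $\mu'$ comparable to $2\sqrt{k}$. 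Concretely, I would use $1+x\le e^{x}$ in the form $1+e\mu\le e^{2}\,e^{\mu-\mu''}$ for a suitable $\mu''$; the "$+2$" in the exponent of the claim suggests exactly this absorption of the factor $e\cdot e$.

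If the constants still do not match, the fallback is a Chernoff lower tail (\Cref{lem:chernoff} with $\delta=1-1/\mu$), which gives $\Pr[X\le 1]\le \exp(-(\mu-1)^2/(2\mu))$. I would then track the constants, possibly restricting to slightly smaller $c$, as is done in the proof of Theorem~\ref{thm:onlinelowerbound}, to reach the stated form $1-e^{2-2\sqrt{k}}$. Either way, the structural content is just negative association plus $\mu=\Omega(\sqrt{k})$; the rest is constant bookkeeping.
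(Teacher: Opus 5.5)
There is a genuine gap, and it sits in the step you describe as constant bookkeeping.

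Reaching the stated exponent would need a larger mean than you have. Both of your tail bounds, $(1+e\mu)e^{-\mu}$ and the Chernoff fallback $e^{-(\mu-1)^2/(2\mu)}$, fall to $e^{2-2\sqrt{k}}$ only if $\mu$ is at least about $2\sqrt{k}$ (about $4\sqrt{k}$ for Chernoff). The available estimate $\mu\ge (k-1)k^{-a}$, with $a=\frac{2c}{2-c-4/k}$, cannot give this uniformly. At $k=\frac{4}{2-5c}$ one has $a=\frac12$ exactly, so there is no $\delta>0$ to exploit.

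The loss is not an artifact of the estimate. The $X_j$ are only weakly dependent, so $\Pr[X\le 1]$ is not much smaller than $e^{-\mu}$. There are admissible parameters where the true mean is below $2\sqrt{k}$. For $c=0.399$ and $k=800=\frac{4}{2-5c}$, one gets $\mu\approx 41$ while $2\sqrt{k}-2\approx 54.6$. A saddle-point estimate then puts $\Pr[X=0]$ alone near $e^{-46}$, far above $e^{2-2\sqrt{k}}$. So the stated constant appears unattainable over the stated range of $(c,k)$. Restricting to slightly smaller $c$ changes the hypothesis; it is not bookkeeping.

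What your ingredients do prove is, for instance, Chernoff with $\delta=1-1/\mu$ and $\mu\ge\frac12\sqrt{k}$. This gives $\Pr[X\le 1]\le e^{-(\mu-1)^2/(2\mu)}\le e^{1-\sqrt{k}/4}$. A bound of the form $1-e^{-\Omega(\sqrt{k})}$ is all that \Cref{thm:lb_logsq} needs asymptotically.

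The paper's own proof takes the Chernoff route but does not supply the missing constant either. It works with the per-clique bound $\E[X]\ge k^{-a}$. It sets $\delta=1-2k^{a}\le -1$, outside the range $(0,1)$ where the lower-tail bound holds. Its last step replaces $-2k^{a}$ by $-2\sqrt{k}$, which is backwards since $k^{a}\le\sqrt{k}$.

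Separately, your union-bound step misapplies negative association. The function $\mathbf{1}[X_j=1]$ is increasing, while $\mathbf{1}[\bigcap_{i\ne j}\{X_i=0\}]$ is decreasing. For one increasing and one decreasing function of disjoint blocks, negative association gives a \emph{nonnegative} covariance. That is, $\Pr[X_j=1,\ \bigcap_{i\ne j}\{X_i=0\}]\ge p_j\Pr[\bigcap_{i\ne j}\{X_i=0\}]$, the opposite of what you need. Intuitively, missing $K_j$ leaves more arrivals for the other cliques.

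A correct version conditions on $X_j=1$. The first $ck\ln k$ arrivals are then a uniform subset of $V\setminus K_j$, and the remaining indicators are negatively associated there by the argument of \Cref{lem:vars_are_na}. This gives $\Pr[\bigcap_{i\ne j}\{X_i=0\}\mid X_j=1]\le \exp(-\sum_{i\ne j}p_i')$, with $p_i'=\Pr[X_i=1\mid X_j=1]$ only slightly below $p_i$. That repairs a bound of the form $(1+O(\mu))e^{-(1-o(1))\mu}$, but, as explained above, it still does not reach the exponent $2-2\sqrt{k}$.
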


\begin{proof}
For each clique $K_j$, let $X_j$ denote the indicator variable for if clique $K_j$ did not arrive by time $ck\ln k$. 

Recalling \Cref{eq:ex_num_clique_not_arrive}, 
\[ \E\left[\sum_{j=1}^{k-1} X_j\right] \geq k^{\frac{-2c}{2-c-4/k}} . \]

Since by \Cref{lem:vars_are_na} the variables $X_1,\hdots,X_{k-1}$ are negatively associated, we can apply Chernoff Bounds (\Cref{lem:chernoff}). Let 
\[ \delta = 1-2k^{\frac{2c}{2-c-4/k}} . \]
Then we can use Chernoff with parameter $\delta$ to bound the probability that $X$, the number of cliques without a representative vertex arriving by time $ck\ln k$, is small.
\begin{align*}
    \Pr[X \leq 2 ] & \leq \Pr[ X\leq (1-\delta) \mu(X) ] \\
    & \leq \exp\left( -\frac{\delta^2 \mu(X)}{2} \right) \\
    & \leq \exp\left( -\frac{1}{2} \left(1-2k^{\frac{2c}{2-c-4/k}}\right)^2 \cdot k^{-\frac{2c}{2-c-4/k}} \right) \\
    & \leq \exp\left( -\frac{1}{2}k^{-\frac{2c}{2-c-4/k}} + 2 - 2 k^{\frac{2c}{2-c-4/k}} \right) .
\end{align*}
Observe that, as before in the proof of \Cref{lem:clique_missing}, 
$\frac{2c}{2-c-4/k}\leq \frac{1}{2}$ whenever $k\geq \frac{4}{2-5c}$ and $c<\frac{2}{5}$.
We may use this relationship to simplify the above.
\begin{align*}
    & \leq \exp\left( -\frac{1}{2}k^{-\frac{2c}{2-c-4/k}} + 2 - 2 k^{1/2}\right) \\
    & \leq \exp\left( 2 - 2\sqrt{k} \right) .
\end{align*}
    
\end{proof} 

\begin{proof}[Proof of \Cref{thm:lb_logsq}]
    Consider the behavior of any online algorithm on the instance consisting of $(k-1)$ cliques of size $2\ln k$ and $2$ singletons.
    Let $c=\frac{2}{7}$ and $k\geq \frac{4}{2-5c}\geq 7$.
    Then it follows from \Cref{lem:singletons_arrive,lem:lb_2clique_arrive_late} that with probability at least $\frac{4}{1176}(1-e^{2-2\sqrt{7}}) \geq \frac{3}{1000}$, by time $ck\ln k$ both singleton vertices have arrived and at least two large cliques do not have any representative vertices that have arrived.
    
    Conditional upon an arrival pattern with this outcome, consider a timestep $t\geq ck\ln k$ at which exactly two large cliques have not yet arrived.
    For the remainder of the arrival pattern, the probability that we receive exactly one vertex from each remaining large clique before we receive a second vertex from either is at least $\frac{1}{2}$. Therefore, with probability at least $\frac{3}{2000}$, our arrival pattern lands our algorithm in the following situation. 
    When the final clique's first representative vertex arrives, there are at least three clusters with only one vertex in them: two clusters with singletons and one cluster with a clique vertex. 
    Since an online algorithm cannot differentiate which is which, it must choose to place two vertices from different cliques into the same cluster with probability at least $\frac{1}{3}\cdot\frac{3}{2000}$.
    
    In the event that an online algorithm which does not split cliques places two vertices from different cliques into the same cluster, it must place the rest of the vertices in each clique into that same cluster.
    In our instance, that would lead to a final cost of at least $(2\ln k)^2$.
    Since this occurs with constant probability, therefore the expected cost of any online algorithm which does not split cliques must be at least $\Omega(\log^2 n)$.
    
\end{proof}
\section{Polylogarithmic Competitiveness for Disjoint Cliques}
\label{sec:cliques}
In this section, we consider a special case of the problem where the graph being clustered consists of a collection of disjoint cliques. We give an algorithm \UP\ which yields $\bigo(\log^5 n)$-competitiveness for  correlated $k$-clustering against random arrivals. While this result is subsumed, up to polylogarithmic factors, by our result for general graphs, some of the technical ideas needed for the main result are easier to demonstrate in this special case.  Also, our analysis for disjoint cliques highlights some of the challenges that we need to overcome in the general case, which requires a much more sophisticated probabilistic analysis.

Let us recall how the \Pivot\ algorithm works for correlation clustering. For any vertex $v$, if it is the first among its neighbors to arrive, it is designated as a \textit{pivot} vertex, and a new cluster is initialized containing $v$. Else, it is assigned to the cluster to which its earliest arrived pivot neighbor is assigned to.

\paragraph{Algorithm \UP.} Our algorithm \UP\ works as follows. The behavior of our algorithm is identical to the \Pivot\ algorithm until the timestep at which the $k$th pivot arrives. Let us refer to this timestep at $\Tau_0$. At time $\Tau_0$, let $S$ denote the set of clusters which have at most $c_0\log n$ vertices assigned to them, for a suitably chosen constant $c_0>0$. 
Note that $S$ is nonempty since the $k$th pivot forms its own cluster of size 1 when it arrives.

For any vertex $v$ that arrives at time $t\geq \Tau_0+1$, \UP\ does the following: if $v$ is a pivot vertex, it is assigned to a cluster that is picked uniformly at random from the set $S$; otherwise, $v$ is assigned to the cluster to which its earliest arrived pivot neighbor is assigned. 

\cliquesthm*

We give an upper bound of $O(\sml{\mathcal{I}} \cdot \avg{\mathcal{I}}\log^5 n)$ on the cost of \UP\ which, in light of Lemma \ref{lem:opt-theta} implies $O(\log^5 n)$ competitiveness.

To analyze the cost of \UP, we first note that the algorithm does not incur any disagreement cost between vertex pairs of the same clique since for every clique, all of its vertices are assigned to the same cluster. Furthermore, we note that the algorithm does not incur any disagreement cost until time $\Tau$.  Subsequently, the only disagreement cost arises from pairwise disagreements between vertices in distinct cliques assigned to the same cluster. That is, for any cluster $S_i \in S$, and cliques $K_{j1}, K_{j2},...,K_{jb}$ (ordered with respect to their pivot arrivals) that are assigned to $S_i$, the total disagreement cost attributed to cluster $S_i$ is given by $\sum_{c,d: c<d}n_{jc}n_{jd}$ where $n_{jc} = |K_{jc}|$ for any $c\in[b]$. This can be alternatively expressed as $\sum_{d>1} n_{j1}n_{jd} + \sum_{c,d: 1<c<d} n_{jc}n_{jd}$. In other words, the total disagreement cost is the sum of:

\begin{enumerate}[label=\alph*)]
    \item the disagreement cost between a clique $K_j$ whose vertices arrive after time $\Tau$ which is assigned to $S_i$, and the sole clique whose pivot vertex arrived before time $\Tau$ and was assigned to $S_i$. 
    \item the disagreement between two cliques, both of whose vertices arrive after time $\Tau$ and are placed in the same cluster $S_i$.  
\end{enumerate}

    We say a vertex $v$ is \emph{covered at time $t$}, if $v$ belongs to a clique, from which at least one vertex has arrived by time $t$.  A vertex that is not covered at time $t$ is called \emph{uncovered at time $t$}.  Let $U$ denote the set of vertices uncovered at time $\Tau$.
    
    To bound these two costs, we first derive an upper bound on the size of a clique whose pivot vertex is assigned to any cluster in $S$ at time $\Tau$ (Lemma~\ref{lem:few-vertices-by-tau}). We then present an upper bound, on $|U|$ at time $\Tau$ (Lemma~\ref{lem:clique-coverage-cliques}). Next, we present a lower bound on the number of clusters in $S$ (Lemma~\ref{lem:logn_set_size}).  A challenge we face in analyzing the cost incurred of type (b) is the dependence between events corresponding to a pair of cliques, none of whose vertices arrive by time $\Tau$. We establish a negative correlation in Lemma~\ref{lem:neg-corr} paving the way for a final proof of the theorem.

\begin{lemma}[Lower bound on number of vertices arriving from large cliques]
    \label{lem:few-vertices-by-tau}
    For any time $t$ and constant $c_1>0$, and any clique of size at least $\frac{c_1 n \log n}{t}$, the number of vertices in the clique that arrive by time $t$ is $\frac{1}{2}\cdot c_1\log n$ with probability at least $1-n^{-c_1/8}$.
\end{lemma}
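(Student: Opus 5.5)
The plan is to model the arrivals as a hypergeometric sample and apply the lower-tail Chernoff bound. The statement should be read as ``at least $\frac{1}{2}c_1\log n$ vertices arrive by time $t$.''

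Fix a time $t$ and a clique $K$ with $|K| \ge \frac{c_1 n\log n}{t}$. Under random-order arrivals, the set of the first $t$ vertices is a uniformly random $t$-subset of $V$, drawn without replacement. Let $X$ be the number of vertices of $K$ among these first $t$ arrivals. Then $X \sim \mathrm{Hypergeometric}(n,|K|,t)$, so by \Cref{lem:chernoff-hypergeometric},
\[
\mu = \E[X] = \frac{|K|\,t}{n} \ge c_1\log n,
\]
and the Chernoff bound of \Cref{lem:chernoff} applies to $X$.

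Next, apply the lower-tail bound \eqref{eq:chernoff-lower} with $\delta = \tfrac12$. This gives
\[
\Pr\left[X \le \tfrac12 \mu\right] \le \exp\left(-\tfrac{\mu}{8}\right) \le \exp\left(-\tfrac{c_1 \log n}{8}\right).
\]
Since $\tfrac12\mu \ge \tfrac12 c_1\log n$, the event $X < \tfrac12 c_1 \log n$ is contained in the event $X \le \tfrac12\mu$. Hence, with probability at least $1 - e^{-c_1\log n/8}$, at least $\tfrac12 c_1\log n$ vertices of $K$ arrive by time $t$.

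The only delicate point is the base of the logarithm. If $\log$ means $\ln$, the failure probability is exactly $n^{-c_1/8}$. If $\log$ means $\log_2$, then $e^{-c_1\log_2 n/8} = n^{-c_1/(8\ln 2)} \le n^{-c_1/8}$, which is even better. So the bound holds either way.

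A second point to handle is that the case $t > n$ or $|K| > n$ cannot occur: if $\frac{c_1 n\log n}{t} > n$, no such clique exists and the statement is vacuous. No other obstacles are expected.
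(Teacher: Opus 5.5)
Your proof is correct and follows the paper's argument: you model $X$ as $\mathrm{Hypergeometric}(n,|K|,t)$ with $\E[X]\ge c_1\log n$ and apply the lower-tail Chernoff bound with $\delta=\tfrac12$. Your explicit containment $\{X<\tfrac12 c_1\log n\}\subseteq\{X\le\tfrac12\mu\}$, which uses $\mu\ge c_1\log n$, supplies the monotonicity step that the paper's one-line bound leaves implicit.
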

\begin{proof}
Consider any clique $K_j$ of size $|K_j| \geq \frac{c_1 n \log n}{t} $. Let $X$ be the number of vertices that arrive from $K_j$ by time $t$. 
Then $\E[X]\geq c_1\log n$ and $X \sim \mathrm{Hypergeometric}(n,\, |K_j|,\, t)$. 
By Facts~\ref{lem:chernoff} and \ref{lem:chernoff-hypergeometric}, then
(setting $\delta = \frac{1}{2}$), then
\[
  \Pr\left[X \leq \tfrac{1}{2}\cdot c_1\log n \right]
  \;\leq\; \exp\!\left(-\frac{c_1 \log n}{8}\right)
  \;=\; n^{-c_1/8}.
\]
\end{proof}

\begin{lemma}[Conditional bound on the number of uncovered vertices at time $\Tau_0$]
\label{lem:clique-coverage-cliques}
Let $c_1\geq 16$ be any constant and let $\Tau_0$ be the time at which the $k$th pivot arrives. 
Then for any $i \in [0, \lfloor \log n \rfloor]$,
\[
    \Pr[\Tau_0 \in [2^i, 2^{i+1})] \cdot \E[|U| \mid \Tau_0 \in [2^i, 2^{i+1}]] = 2 c_1 \cdot \sml{\instance} \log n + 1.
\]
\end{lemma}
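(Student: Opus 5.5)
We read the statement as an upper bound, $\Pr[\Tau_0 \in [2^i,2^{i+1})]\cdot \E[|U| \mid \Tau_0\in[2^i,2^{i+1})] \le 2c_1\,\sml{\instance}\log n + 1$. Write $\mathcal{E}_i$ for the event $\Tau_0\in[2^i,2^{i+1})$. The left-hand side equals $\E[|U|\cdot \mathbf{1}_{\mathcal{E}_i}]$, so it suffices to bound the expectation of $|U|$ restricted to $\mathcal{E}_i$.

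We split $U$ into two parts. The first is $U\cap\bigcup_{j>k}K_j$, which has size at most $\sml{\instance}$ deterministically. The second is $U_{\le k} = U\cap \bigcup_{j\le k}K_j$.

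The key observation concerns the second part. At time $\Tau_0$ exactly $k$ cliques are covered, since in a disjoint-cliques instance every covered clique has exactly one pivot. Hence the number of cliques among $K_1,\dots,K_k$ that are uncovered equals the number of covered cliques among $K_{k+1},\dots,K_\ell$. That number is at most the number $Z$ of vertices from $\bigcup_{j>k}K_j$ that arrived by time $\Tau_0$. On $\mathcal{E}_i$ we have $\Tau_0 < 2^{i+1}$, so $Z \le Z_i$, where $Z_i$ counts arrivals from $\bigcup_{j>k}K_j$ among the first $2^{i+1}$ positions. This $Z_i$ is a fixed-time hypergeometric variable with $\E[Z_i] = \sml{\instance}\,2^{i+1}/n$.

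Next we bound the size of each uncovered clique. On $\mathcal{E}_i$, any clique of size at least $c_1 n\log n/2^i$ has at least $\tfrac12 c_1\log n\ge 1$ vertices arrived by time $2^i\le\Tau_0$, with probability at least $1-n^{-c_1/8}$ by \Cref{lem:few-vertices-by-tau} applied at $t=2^i$. A union bound over at most $n$ cliques keeps the failure probability at most $n^{1-c_1/8}\le n^{-1}$ for $c_1\ge16$. So outside a bad event $\mathcal{B}$ of probability at most $1/n$, every uncovered clique has size less than $c_1 n\log n/2^i$. This gives
\[
|U_{\le k}|\,\mathbf{1}_{\mathcal{E}_i\setminus\mathcal{B}} \le Z_i\cdot \frac{c_1 n\log n}{2^i}.
\]
Taking expectations gives at most $\frac{\sml{\instance}2^{i+1}}{n}\cdot\frac{c_1 n\log n}{2^i} = 2c_1\sml{\instance}\log n$. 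Here we use that $Z_i$ depends only on the fixed prefix, not on the conditioning. On $\mathcal{B}$, we bound $|U|\le n$, which contributes at most $n\cdot\Pr[\mathcal{B}]\le 1$ and accounts for the additive $+1$.

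Summing the two parts gives $\sml{\instance}\Pr[\mathcal{E}_i] + 2c_1\sml{\instance}\log n + 1$. The stray $\sml{\instance}\Pr[\mathcal{E}_i]$ term can be absorbed by slightly adjusting constants, for instance by using $\log n\ge1$ and $2c_1$ versus $2c_1+1$. Alternatively, we can note that uncovered vertices in $K_j$ with $j>k$ cannot exceed $\sml{\instance}$ and fold this into the same accounting.

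The main obstacle is the coupling step: replacing the random, conditioned time $\Tau_0$ by the deterministic window endpoint $2^{i+1}$ and the threshold time $2^i$. Both replacements must go in the right monotone direction. Coverage only grows with time, so the lower endpoint $2^i$ is the right one for the size bound. Arrivals $Z$ only grow with time, so the upper endpoint $2^{i+1}$ is the right one for the count. This lets us avoid reasoning about the conditional distribution given $\mathcal{E}_i$ entirely.
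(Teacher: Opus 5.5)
Your argument is essentially the paper's proof, and it handles the conditioning more cleanly via $\E[Z_i\mathbf{1}_{\mathcal{E}_i}]\le\E[Z_i]$. Like the paper, you split off the at most $\sml{\instance}$ uncovered vertices in $K_{j>k}$, bound the number of uncovered top-$k$ cliques by the arrivals from $\bigcup_{j>k}K_j$ by time $2^{i+1}$ (your exact "$k$ covered cliques" count makes the paper's informal displacement step precise), cap each such clique's size at $c_1 n\log n/2^i$ using \Cref{lem:few-vertices-by-tau} at time $2^i$, and pay $n\cdot n^{1-c_1/8}\le 1$ for the failure event.

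One caveat: the additive $\sml{\instance}$ term cannot be absorbed into the exact constant $2c_1$, so what you actually prove is $2c_1\sml{\instance}\log n+\sml{\instance}+1$. The paper's proof has the same omission (it notes the type-1 bound $\sml{\instance}$ and then drops it), and this weaker bound is all that is used downstream.
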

\begin{proof}
First, let us recall Definitions \ref{def:s-of-instance} and \ref{def:threshold}. We consider the cliques $K_1,K_2,...,K_{\ell}$ in non-decreasing order of their sizes. To analyze $|U|$, we consider two types of vertices: 1) uncovered vertices in a clique $K_j$ where $j > k$ and 2) uncovered vertices in a clique $K_j$ where $j \le k$. The number of vertices of type 1) is at most $\sml{\instance}$, by definition. So let us focus on type 2 vertices.

We can bound vertices of type 2) by first upper bounding the number of pivots that arrive from small cliques $K_{j>k}$ (with high probability). 
This upper bounds the number of large cliques $K_{j\leq k}$ which don't arrive before time $\Tau_0$.
Then we will separately bound the size of cliques which can be displaced (with high probability).

For now, fix $\Tau_0 \in [2^i, 2^{i+1})$ for some $i$. 
By Lemma~\ref{lem:few-vertices-by-tau}, any clique of size $c_1 \frac{n \log n}{2^i}$ will have at least $\frac{1}{2}\cdot c_1 \log n$ vertices arrive by time $2^i$ with probability at least $1 - n^{-c_1/8}$.
Taking a union bound over all $n$ cliques, then with probability at least $1 - n^{1-c_1/8}$, any clique that has size at most $\frac{c_1 n \log n}{2^i}$ must have some vertices arrive by time $2^i$, and hence also by time $\Tau_0$.

Now let $Y$ be the random variable indicating the number of vertices that arrive from the set $\cup_{j > k} K_j$ of smaller cliques by time $2^{i+1}$.
Since $\Tau_0< 2^{i+1}$, this upper bounds the number which arrive by time $\Tau_0$.
Then $\E[Y\mid \Tau_0 \in [2^i, 2^{i+1})]]=\frac{\sml{\instance} 2^{i+1}}{n}$.  

Additionally, 
\[ E[|U|\mid \Tau_0 \in [2^i, 2^{i+1})]] \leq E[Y\mid \Tau_0 \in [2^i, 2^{i+1})]]\cdot (\mbox{max size of cliques displaced by } Y) . \]

Then assuming $c_1\geq 16$, we can derive
\begin{align*}
\Pr[\Tau_0 \in [2^i, 2^{i+1})] & \cdot \E[|U| \mid \Tau_0 \in [2^i, 2^{i+1})] \leq \Pr[\Tau_0 \in [2^i, 2^{i+1})] \cdot \E[Y \mid \Tau_0 \in [2^i, 2^{i+1})] \cdot \frac{c_1 n \log n}{2^i} \\ 
& \hspace{5mm} + \Pr\left[\mbox{some } K_j \mbox{ of size }\geq \frac{c_1n\log n}{2^i} \mbox{ doesn't arrive by time }\Tau_0\right] \cdot n \\
& \leq \frac{\sml{\instance} 2^{i+1}}{n} \cdot \frac{c_1 n \log n}{2^i} + n^{2-c_1/8} \\
& \leq  2 c_1 \cdot \sml{\instance} \log n + 1 .
\end{align*}

\end{proof}

\begin{lemma}[Sufficiently many clusters have small load at the time of arrival of the $k$th pivot]
\label{lem:logn_set_size}
Let $c>4$ be any constant and let $S$ be the set of clusters with load at most $4c\ln n$ at the time the $k$th pivot arrives. 
Then $|S|$ is at least $(k-r)$ with high probability at least $1-\frac{2}{n^{c/8}}-n^{1-c/3}$.
\end{lemma}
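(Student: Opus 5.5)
We prove the contrapositive: if fewer than $k-r$ clusters have small load at time $\Tau_0$, then at least $r+1$ of the $k$ pivot clusters have load exceeding $4c\ln n$. Write $r=r(\instance)$. At time $\Tau_0$ each of the $k$ clusters is exactly one pivot clique, and its load is the number of arrived vertices from that clique. Suppose $|S|<k-r$. Then at least $r+1$ distinct cliques each have more than $4c\ln n$ arrived vertices by time $\Tau_0$. Since only $r$ cliques lie in $K_1,\dots,K_r$, at least one such clique $K_j$ has $j>r$, and hence $n_j<\avg{\instance}$. It therefore suffices to show that whp no clique of size less than $\avg{\instance}$ has more than $4c\ln n$ vertices arrived by time $\Tau_0$.

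The first step is an upper bound on $\Tau_0$. Set $A=\bigcup_{j>r}K_j$, so $|A|=\avg{\instance}(k-r)$, and let $t^*=\frac{c\,n\ln n}{\avg{\instance}}$. We claim that whp $\Tau_0\le t^*$ (or $t^*\ge n$, which we handle separately below). Consider any clique $K_j$ with $j\le r$. Because $n_j\ge \avg{\instance}$, Lemma~\ref{lem:few-vertices-by-tau} (with $c_1=c$) shows it has a vertex arrived by $t^*$ with probability at least $1-n^{-c/8}$. For the $k-r$ pivots needed from $A$, the number of arrived vertices of $A$ by $t^*$ is hypergeometric with mean $c\,(k-r)\ln n$. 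Since $A$ is a union of cliques each of size less than $\avg{\instance}$, the number of distinct cliques of $A$ hit is at least (arrived vertices of $A$)$/\max_{j>r}n_j$. I would make this rigorous by a greedy covering argument: as long as fewer than $k-r$ cliques of $A$ are hit, the unhit part of $A$ still has size at least $|A|-(k-r-1)\avg{\instance}\ge \avg{\instance}$. Each block of $t^*/(k-r)$ time steps then hits a new clique with probability at least $1-n^{-c}$, and the $k-r$ blocks can be chained. A union bound over $k\le n$ cliques should give a failure probability of roughly $2n^{-c/8}$.

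The second step bounds the load of a small clique. Fix $K_j$ with $n_j<\avg{\instance}$. On the event $\Tau_0\le t^*$, its number of arrivals is dominated by a hypergeometric variable with mean at most $n_j t^*/n\le c\ln n$. Applying Fact~\ref{lem:chernoff} in the form $\Pr[X\ge(1+\delta)\mu]$ with $\delta=3$ (threshold $4c\ln n$) gives a bound of roughly $e^{-3c\ln n}\le n^{-c/3}$ after adjusting constants. A union bound over at most $n$ cliques contributes the $n^{1-c/3}$ term. Combining the two steps yields failure probability at most $\frac{2}{n^{c/8}}+n^{1-c/3}$.

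The main obstacle is the first step, namely showing that $\Tau_0$ is small enough that every clique outside the top $r$ still has expected load $O(\ln n)$. The difficulty is that pivots from $A$ must come from \emph{distinct} cliques. I would try the block-chaining argument first, using only that every clique of $A$ has size less than $\avg{\instance}$ while $|A|=(k-r)\avg{\instance}$. If chaining is too lossy, the fallback is to argue directly that by time $t^*$ the arrived vertices number at least $c\ln n$ times each clique's share, and then use a balls-into-bins count.
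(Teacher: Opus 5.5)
Your overall structure is the paper's. You bound the arrival time $\Tau_0$ of the $k$th pivot by $t^*\approx cn\ln n/\avg{\instance}$, then use Chernoff plus a union bound to show every clique $K_j$ with $j>r$ has $O(\ln n)$ arrivals by $t^*$. Your contrapositive reduction and your second step are fine.

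The gap is the step you flagged yourself: showing that by $t^*$ at least $k-r$ \emph{distinct} cliques of $A$ have a vertex arrive. Your block-chaining claim is false as stated.
\begin{itemize}
\item The only guarantee you have on the unhit part of $A$ is size at least $\avg{\instance}$. A block of $t^*/(k-r)$ steps then receives in expectation only about $\avg{\instance}\cdot t^*/((k-r)n)=c\ln n/(k-r)$ of its vertices. So the miss probability you can certify is roughly $n^{-c/(k-r)}$, not $n^{-c}$, and once $k-r\ge c\ln n$ each block fails with constant probability.
\item Going step by step does not help. Per-step success probability $\ge\avg{\instance}/n$ over $t^*$ steps certifies only about $c\ln n$ new cliques, which is fewer than $k-r$ when $k-r>c\ln n$.
\item You could use the sharper bound that the unhit mass after $h$ hits is at least $(k-r-h)\avg{\instance}$, while still requiring each block to succeed with high probability. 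Then the block lengths sum to roughly $t^*H_{k-r}$. That inflates $\Tau_0$ to $\Theta(n\ln^2 n/\avg{\instance})$ and the per-clique load to $\Theta(\ln^2 n)$, which breaks the $4c\ln n$ threshold.
\end{itemize}
The balls-into-bins fallback is too unspecified to close this.

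The missing idea is to test $k-r$ fixed, disjoint targets against the \emph{whole} window, not slices of it. Every clique of $A$ has size $<\avg{\instance}$ and $|A|=(k-r)\avg{\instance}$. So you can partition a subset of these cliques into at least $k-r$ disjoint groups, each of total size in $[\avg{\instance}/2,\avg{\instance})$. Put cliques of size $\ge\avg{\instance}/2$ in their own group and pack the rest greedily; the leftover is $<\avg{\instance}/2$, which forces at least $k-r$ groups.

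A fixed group gets no arrival by time $t^*$ with probability at most $(1-\frac{\avg{\instance}}{2n})^{t^*}\le n^{-c/2}$. A union bound over $k-r\le n$ groups shows that with high probability every group is hit. Because the groups are made of disjoint sets of cliques, this gives $k-r$ distinct pivots from $A$. The paper does the same thing in two stages: a Chernoff bound that at least $\frac12 c(k-r)\ln n$ vertices of $A$ arrive by $T'$, then each group of relative mass $\ge 1/(2(k-r))$ is hit among those draws.

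A smaller point concerns the cliques $K_j$, $j\le r$. \Cref{lem:few-vertices-by-tau} gives $n^{-c/8}$ per clique, so the union bound yields $r\,n^{-c/8}$, not $2n^{-c/8}$. Computing $(1-\avg{\instance}/n)^{t^*}\le n^{-c}$ directly, as the paper does, keeps this term negligible.
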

\begin{proof}
   
Let $A = \bigcup_{j=r+1}^{\ell} K_{j}$ and let $T_0$ denote the time step at which the $k$th pivot arrives. Note that the number of pivots that arrive from $A$ by time $T_0$ is at least $k - r$. 
We establish the claim through the following two steps. 
\begin{enumerate}
    \item Let $T' = \left\lceil\frac{c n \ln n}{a(\instance)}\right\rceil$. We will first show that $T_0\leq T'$ with high probability.
    \item For any clique $K_j$ in $A$, the number of vertices from $K_j$ that arrive by time $T'$ is $\bigo(\log n)$, also with high probability.
\end{enumerate}

These two steps together will imply that every pivot cluster that arrives from $A$ must have load $\bigo(\log n)$ at time $T_0$ with high probability, yielding the lemma.

\newcommand{\poly}{\mbox{poly}}
We now show that $T_0$ is at most $T' = \left\lceil\frac{c n \ln n}{a(\instance)}\right\rceil$. 
Suppose there exists a very large clique $K_{j\leq r}$ of size $y > a(\instance)$. 
Then the probability that such a clique doesn't have any vertex arrive by time $T'$ is no more than 
\[
    \frac{\binom{n - y}{T'}}{\binom{n}{T'}} = \prod_{i=0}^{T'-1}\frac{n-y-i}{n-i} \leq \left(1 - \frac{y}{n}\right)^{T'} 
    \leq \left(1 - \frac{a(\instance)}{n}\right)^{T'} 
  \leq \exp \left( -\frac{a(\instance)}{n} \cdot T' \right)
  \leq \exp \left( -c \ln n \right) = \frac{1}{n^{c}}.
\]
By a union bound over all $r$ very large cliques with size greater than $a(\instance)$, we obtain that a pivot from each of the $r$ large cliques has arrived by time $T'$ with probability at least $1 - n^{1-c}$. 

Now consider the set $A$. 
By definition, $|A| = (k-r)a(\instance)$. 
Let $X$ be the random variable indicating the number of vertices that arrive from $A$ by time $T'$.
Then 
\[ \E[X] = \frac{|A|\cdot T'}{n} \geq \frac{(k-r)a(\instance)}{n} \cdot \frac{c n\ln n}{a(\instance)} = c (k-r) \ln n. \]
 
Since $X\sim\mathrm{Hypergeometric}(n, |A|, T')$, then we can use Fact \ref{lem:chernoff-hypergeometric} (setting $\delta = \frac{1}{2}$). 
\begin{align*}
    \Pr[X\leq \frac{1}{2}c(k-r)\ln n] & \leq \exp\left(-\frac{c(k-r)\ln n}{8}\right) = n^{-c/8} .
\end{align*}

Furthermore, each clique in $A$ has at most $a(\instance)$ vertices by definition. 
Then there exists a way to partition some subset of $\{K_{r+1},\hdots,K_\ell\}$ into $(k-r)$ groups, each with total number of vertices at least $a(\instance)/2$.
We describe the process below.

First assign each clique of size at least $a(\instance)/2$ to be in a group by itself. 
If we have not already created $(k-r)$ groups, 
then build the remaining groups by placing cliques together greedily until total size is at least $a(\instance)/2$ and at most $a(\instance)$. 
Since remaining cliques are all of size no more than $a(\instance)/2$, a greedy strategy will work.

We now argue that at least one vertex (and hence at least one pivot) from each group is drawn during a random draw of $\frac{1}{2}c(k-r) \ln n$ vertices. 
To see, note the probability that no vertex is drawn from a given group is upper bounded by
\[
\frac{\binom{(k-r)a(\instance) - a(\instance)/2}{c (k-r) \log n}}{\binom{(k-r)a(\instance}{c(k-r)\log n}} \le \left(1 - \frac{a(\instance)}{2(k-r)a(\instance)}\right)^{\frac{1}{2}c (k-r)\log n}
\le e^{-\frac{1}{2}c (\log n)/2} = n^{-c/4}.\]

Then we have shown that at least $(k-r)$ pivots arrive from $A$ by time $T'$ with probability at least $1-n^{-c/8}-n^{-c/4} \geq 1-\frac{2}{n^{-c/8}}$ for any constant $c\geq 4$.
When this occurs, it must be the case that $T_0\leq T'$.
Therefore, we have also shown that $T_0\leq T'$ with probability at least $1-\frac{2}{n^{-c/8}}$.

Now, fix some clique $K_j$ for $j>r$. Let $Y$ denote the number of vertices from $K_j$ which arrive by time $T'$.
Since all cliques in $A$ are of size no more than $a(\instance)$, then clearly $\E[Y] \leq a(\instance) \cdot\frac{T'}{n} \leq 2c \ln n$.
Using Fact~\ref{lem:chernoff-hypergeometric} (with $\delta = 1$), we can then show

\[
  \Pr\left[Y \geq 4 c \ln n\right] \leq \exp\left(-\frac{c\ln n}{3}\right) = n^{-c/3} .
\]
By a union bound over all $(\ell-r) \leq n$ cliques in $A$, then with probability at least $1 - n^{1-c/3}$,
every clique in $A$ contributes at most $4c\ln n$ vertices by time $T'$.

Then when $T_0 \leq T'$, it follows that each pivot that arrives from $A$ by time $T_0$ has at most $4c \log n$ vertices in its cluster by time $T_0$.
Since at least $(k-r)$ pivots from $A$ arrive by time $T_0$, it then follows that $S$ is of size at least $k-r$ with total probability at least $1-\frac{2}{n^{c/8}}-n^{1-c/3}$.

\end{proof}

\begin{lemma}[Negative correlation of being not covered at time $\Tau$]
\label{lem:neg-corr}
For $1 \le j \le k$, we say that a vertex $v$ is \emph{covered by $j$ pivots} if any of the first $j$ pivots to arrive is from the clique containing $v$. Let $x$ and $y$ be two vertices lying in different cliques. Then for any $1 \le j \le k$:
\[
\Pr[x \text{ is not covered by } j \text{ pivots}] \geq \Pr[x \text{ is not covered by } j \text{ pivots} \mid y \text{ is not covered by }j \text{ pivots}].
\]
\end{lemma}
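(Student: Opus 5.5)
Note first that in a disjoint cliques instance, the sequence of pivots is exactly the sequence of cliques ordered by the arrival time of their first vertex. Under a uniformly random arrival order, this clique order is a random permutation $\pi$ of $\{K_1,\dots,K_\ell\}$. It can be generated by successive size-biased sampling without replacement: given the already-chosen set $D$, the next clique is $K_m \notin D$ with probability $n_m/\sum_{K \notin D} |K|$. This holds because the next new clique is the clique of the earliest-arriving vertex outside the cliques in $D$, and that vertex is uniform among those vertices.

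Let $A$ be the clique of $x$ and $B$ the clique of $y$. The event ``$x$ is not covered by $j$ pivots'' is exactly the event that $A$ does not appear among the first $j$ entries of $\pi$. So the claim is equivalent to $\Pr[\bar A_j \wedge \bar B_j] \le \Pr[\bar A_j]\Pr[\bar B_j]$, where $\bar A_j$ denotes the event that $A$ is not among the first $j$ cliques.

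The cleanest way I would prove this is through a continuous-time embedding that makes independence explicit. Assign each vertex $v$ an i.i.d.\ arrival time $\tau_v \sim \mathrm{Exp}(1)$; ordering by these times gives a uniform random order. Let $E_m = \min_{v\in K_m}\tau_v$, which is $\mathrm{Exp}(n_m)$, and the $E_m$ are independent across cliques. Then $\bar A_j$ holds iff at least $j$ cliques other than $A$ have $E_m < E_A$, that is, iff $N_A := \#\{m : K_m\neq A,\ E_m<E_A\} \ge j$.

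Now I condition on $E_A=a$ and $E_B=b$. The count $N_A$ equals $\mathbf 1[b<a] + Z(a)$, where $Z(a)$ counts the cliques other than $A$ and $B$ with $E_m<a$. The variable $Z(a)$ is a sum of independent indicators, independent of $(a,b)$ as random variables, and increasing in $a$.

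Writing $W=\{E_m\}_{m\ne A,B}$, the indicator $f(a,b,W)=\mathbf 1[N_A\ge j]$ is increasing in $a$ and decreasing in $b$, since moving $b$ below $a$ only adds to $N_A$. Symmetrically, $g=\mathbf 1[N_B\ge j]$ is increasing in $b$ and decreasing in $a$.

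Conditioned on $W$, the variables $E_A$ and $E_B$ are independent. Harris/FKG on the product space $(E_A,-E_B)$ then gives $\E[fg\mid W]\le \E[f\mid W]\,\E[g\mid W]$, because $f$ is increasing and $g$ is decreasing in the coordinates $(a,-b)$.

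The remaining step is to remove the conditioning on $W$. For this, I view all coordinates $(E_A, E_B, W)$ as independent, and I want to show that $f$ and $g$ are oppositely monotone in every coordinate. Consider a coordinate $E_m$ for $m\neq A,B$. Increasing $E_m$ decreases $N_A$ and also decreases $N_B$, so $f$ and $g$ are both decreasing in $E_m$. They are therefore \emph{co}-monotone in these coordinates, and this is the obstacle: FKG would give the wrong-direction inequality for the $W$ coordinates.

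To handle this, I would change parametrization. Instead of $W$, I would condition on the order statistics of the ``other'' cliques: let $s_j$ be the $j$th smallest value and $s_{j-1}$ the $(j-1)$th smallest among $\{E_m\}_{m\ne A,B}$. Then:
\begin{itemize}
\item $\bar A_j$ holds iff $E_A > s_j$, or $E_B < E_A$ and $E_A > s_{j-1}$;
\item $\bar B_j$ holds symmetrically.
\end{itemize}
Given $W$, the FKG argument above gives the inequality pointwise:
\[
\Pr[\bar A_j\bar B_j\mid W]\le p_A(W)\,p_B(W).
\]
Both $p_A(W)$ and $p_B(W)$ are decreasing functions of the pair $(s_{j-1},s_j)$, so averaging over $W$ would yield the wrong sign via Chebyshev's sum inequality. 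Hence the pointwise route alone is insufficient.

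So I would fall back on a direct exchange argument. Compare $\Pr[\bar A_j\mid \bar B_j]$ with $\Pr[\bar A_j]$ by coupling the size-biased sequential process with and without $B$ present. Conditioning on $B$ not being among the first $j$ choices is equivalent to running the process on the instance with $B$ deleted for $j$ steps. Deleting $B$ shrinks each denominator $\sum_{K\notin D}|K|$, which raises the per-step selection probability of $A$. Coupling the two processes step by step, via a uniform threshold per step, would show that $A$ is selected within $j$ steps at least as often when $B$ is removed. Making this monotone coupling rigorous is the main obstacle, and I expect it to require an induction on $j$ together with the ratio formula for size-biased sampling.
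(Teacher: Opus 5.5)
Your reformulation as $\Pr[\bar A_j\wedge\bar B_j]\le\Pr[\bar A_j]\Pr[\bar B_j]$, the size-biased description of the pivot order, and the exponential-clock embedding are all correct. But the proposal does not reach a proof. You discard both FKG routes yourself, and the fallback rests on the claim that conditioning on $\bar B_j$ is the same as running the size-biased process with $B$ deleted. That claim is false for $j\ge 2$.

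In your clock picture, conditioning on $\bar B_j$ means conditioning on $E_B>s'_j$, where $s'_j$ is the $j$-th smallest clock among the cliques other than $B$. This reweights the law of those clocks by $e^{-|B|s'_j}$. For $j\ge 2$, $s'_j$ is correlated with \emph{which} cliques form the top $j$, so the reweighted law is not the deletion law. (For $j=1$ the minimum is independent of its argmin, which is why only the base case works.) For instance, with $|A|=|B|=1$, $|C|=2$ and $j=2$, given $\bar B_2$ the clique $A$ is the first pivot with probability $2/5$, not $1/3$.

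If the identification were true, the rest would be immediate, since removing $B$ can only lower $A$'s rank. So the whole difficulty lives in this step. The paper's induction makes the same identification, via its claim $\Pr[v_1\in K_i\mid y \text{ not covered by } j \text{ pivots}]=n_i/(n-n_y)$. Following the paper's route therefore would not close the gap.

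In fact no coupling can close it. The tilt toward ``fast'' configurations is exactly the co-monotone effect you found in the $W$-coordinates, and it can dominate, so the inequality fails in this generality. Take cliques of sizes $100$ (containing $y$), $20$ (containing $x$), and three singletons, so $n=123$, and take $j=3$ (with $k\ge 3$). Enumerating the possible first three cliques gives
\begin{align*}
\Pr[\bar A_3]&=\frac{6}{123}\left(\frac{100}{23\cdot 22}+\frac{100}{122\cdot 22}+\frac{101}{122\cdot 121}\right)\approx 0.0118,\\
\Pr[\bar B_3]&=\frac{6}{123}\left(\frac{20}{103\cdot 102}+\frac{20}{122\cdot 102}+\frac{21}{122\cdot 121}\right)\approx 2.41\cdot 10^{-4},
\end{align*}
and $\Pr[\bar A_3\wedge\bar B_3]=6/(123\cdot 122\cdot 121)\approx 3.30\cdot 10^{-6}$. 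Hence $\Pr[\bar A_3\mid\bar B_3]\approx 0.0137>\Pr[\bar A_3]$, which contradicts the statement.

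To see why, let $|B|\to\infty$ with the other sizes fixed. Then $\Pr[\bar A_3]\to 6/((|A|+3)(|A|+2))$, while $\Pr[\bar A_3\mid\bar B_3]\to 1/(3|A|+1)$. Conditioning forces all three slots to be filled early, and there excluding $A$ is only linearly, not quadratically, unlikely in $|A|$.

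So a correct argument needs additional hypotheses, or a weaker, approximate form of negative correlation that still suffices for the $\text{Cost}_2$ bound. A sharper version of your coupling cannot work.
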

\begin{proof}
Let $K_x$ and $K_y$ be the cliques containing $x$ and $y$ respectively, with $K_x \neq K_y$. Let $|K_i| = n_i$ and $n = \sum_i n_i$. 
We establish the claim of the lemma by proving the following equivalent statement using induction on $j$.
\[ \Pr[x \text{ is covered by } j \text{ pivots}] \leq \Pr[x \text{ is covered by } j \text{ pivots} \mid y \text{ is not covered by } j \text{ pivots}]. \]
\noindent\textit{Base case ($j = 1$):} 
\begin{align*}
    \Pr[x \text{ is covered by 1 pivot}] & = \frac{n_x}{n} \\
    \Pr[x \text{ is covered by 1 pivot} \mid y \text{ is not covered by 1 pivot}] & = \frac{n_x}{n - n_y} > \frac{n_x}{n} .
\end{align*}

\noindent\textit{Inductive hypothesis:} 
Assume for $(j-1)$ pivots and any two vertices $x', y'$ from different cliques,
\[
\Pr[x' \text{ is covered by } j-1 \text{ pivots}] \leq \Pr[x' \text{ is covered by } j-1 \text{ pivots} \mid y' \text{ not covered by } j-1 \text{ pivots}] .
\]
\noindent\textit{Inductive step:} 
For any clique $K_i$, let $p_i = \Pr[x \text{ covered by } j-1 \text{ pivots in } G \setminus K_i]$.
Then
\begin{align}
\Pr[x \text{ is covered by } j \text{ pivots}] = \frac{n_x}{n} + \sum_{K_i \neq K_x} \frac{n_i}{n} \cdot p_i
\end{align}
Now let $v_1$ denote the first pivot selected.
Since the vertices arrive in a uniformly random order, then conditioning our probability on the event that $y$ avoids the first $j$ pivots restricts the sample space of those $j$ pivots to $V \backslash K_y$. Therefore, we can derive
\[
\Pr[v_1 \in K_i \mid y \text{ not covered by } j \text{ pivots}] = 
\begin{cases}
0 & \text{if } K_i = K_y \\
\frac{n_i}{n - n_y} & \text{if } K_i \neq K_y
\end{cases} .
\]
Therefore,
\begin{align*}
\Pr[x \text{ is covered by } j \text{ pivots}\mid y \text{ is not covered by } j \text{ pivots}] = \frac{n_x}{n - n_y} + \sum_{K_i \notin \{K_x, K_y\}} \frac{n_i}{n - n_y} \cdot p_i^*
\end{align*}
where $p_i^* = \Pr[x \text{ is covered by } j - 1 \text{ pivots in } G \setminus K_i \mid y \text{ is not covered by } j-1 \text{ pivots in } G \setminus K_i]$.  By the inductive hypothesis, $p_i^* \ge p_i$.  We now derive
\begin{align*}
\Pr\big[ & x \text{ is covered by }  j \text{ pivots} \mid y \text{ is not covered by } j \mbox{ pivots}\big] - \Pr\big[x \text{ is covered by } j \mbox{ pivots}\big]\\
&\geq \frac{n_x}{n - n_y} + \sum_{K_i \notin \{K_x, K_y\}} \frac{n_i}{n - n_y} \cdot p_i - \left(\frac{n_x}{n} + \sum_{K_i \neq K_x} \frac{n_i}{n} \cdot p_i\right)\\
&= n_x\left(\frac{1}{n - n_y} - \frac{1}{n}\right) + \sum_{K_i \notin \{K_x, K_y\}} n_i \cdot p_i \left(\frac{1}{n - n_y} - \frac{1}{n}\right) - \frac{n_y}{n} \cdot p_y\\
&= \frac{n_y}{n(n - n_y)}\left(n_x + \sum_{K_i \notin \{K_x, K_y\}} n_i \cdot p_i\right) - \frac{n_y}{n} \cdot p_y\\
&= \frac{n_y}{n}\left(\frac{n_x + \sum_{K_i \notin \{K_x, K_y\}} n_i \cdot p_i}{n - n_y} - p_y\right)\\
&= \frac{n_y}{n}\left(\frac{n_x}{n - n_y} + \sum_{K_i \notin \{K_x, K_y\}} \frac{n_i}{n - n_y} \cdot p_i - p_y\right)\\
&= \frac{n_y}{n}\Big(\Pr\big[x \text{ is covered by } j \text{ pivots in } G \setminus \{K_y\}\big] - \Pr\big[x \text{ is covered by } j-1 \text{ pivots in } G \setminus \{K_y\}\big]\Big)\\
&\ge 0,
\end{align*}
since the probability of being covered by the first $j$ pivots is at least as large as the probability of being covered by the first $j-1$ pivots.  This completes the proof of the lemma.
\end{proof}

\cliquesthm*

\begin{proof}[Proof of Theorem~\ref{thm-cliques-random}]
Let $\Tau_0$ be the time when the $k$th pivot arrives.
Let $U$ be the set of vertices that are not covered by any pivots by time $\Tau_0$, and let $W = |U|$.  
Let $S$ be the set of clusters with load at most $c_0 \log n$ at time $\Tau_0$  ($c_0 = 12$).
By Lemma~\ref{lem:logn_set_size} there are at least $(k-r)$ such ``low-load'' clusters.
\UP\ does not incur any cost until after time $\Tau_0$. 
Subsequently, as described at the outset of this section, the cost of the algorithm comes from two types of disagreements, each of which we bound separately below.

\paragraph{$\text{Cost}_1$: Disagreements between a clique in $U$ and the first clique that resides in its cluster.}
Note that there is exactly one clique in each cluster at time $\Tau_0$. Call these cliques the ``leading clique'' of the cluster.
Each vertex $v \in U$ that arrives after time $\Tau_0$ must join some cluster that already has vertices from the cluster's leading clique. 
When $v$ joins, it incurs a cost equal to the size of the leading clique. 
To bound the expected cost of this kind, we consider different regimes for $\Tau_0$. 

Suppose $\Tau_0$ lies in the interval $[2^i, 2^{i+1})$ for parameter $i\in[\log_2 n]$. 
We will show that every leading clique in a cluster in $S$, the set of low-load clusters at time $\Tau_0$, has size $\bigo(\frac{n \ln n}{2^i})$ with high probability.

Consider any clique of size at least $\frac{16 n\ln n}{2^i}$. 
By Lemma~\ref{lem:few-vertices-by-tau}, the number of vertices that arrive from the clique by time $2^i$ is at least $8\ln n$ with probability at least $1-n^{-2}$, implying that the clique is not part of $S$ (with high probability). 
Taking a union bound over all no more than $n$ cliques, every clique in a cluster in $S$ has size $16 \frac{n \log n}{2^i}$ with probability at least $1 - n^{-1}$.  We thus obtain the following bound:
\[
\E[\text{Cost}_1 \mid \Tau_0 \in [2^i, 2^{i+1})] \leq E[W \mid \Tau_0 \in [2^i, 2^{i+1})] \cdot \frac{16 n \ln n}{2^i}.
\]
We now consider two cases: $\avg{\instance} \ge \frac{n}{2^{i+2}}$ and $\avg{\instance} < \frac{n}{2^{i+2}}$. 
Let $i^*$ denote the smallest $i$ such that $\avg{\instance} \ge \frac{n}{2^{i+2}}$.

\noindent{\textit{Case} $i \ge i^*$:}  In this case, $\avg{\instance} \ge \frac{n}{2^{i+2}}$. Then
\begin{eqnarray*}
& & \sum_{i\ge i^*} \Pr[\Tau \in [2^i, 2^{i+1})] \cdot \E[\text{Cost}_1 \mid \Tau \in [2^i, 2^{i+1})]\\
& \le & \sum_{i\ge i^*} \Pr[\Tau \in [2^i, 2^{i+1})] \cdot \E[W \mid \Tau \in [2^i, 2^{i+1})] \cdot \frac{16 n \log n}{2^i}\\
& = & \sum_{i\ge i^*} \bigo(\sml{\instance} \avg{\instance} \log^2 n)\\
& = & \bigo(\sml{\instance} \avg{\instance} \log^3 n),
\end{eqnarray*}
thus concluding this case.

\noindent{\textit{Case} $i < i^*$:}  In this case, $\avg{\instance} < \frac{n}{2^{i+2}}$. 
We first derive an upper bound on $\Pr[\Tau \in [2^i, 2^{i+1})]$. 

At time $\Tau_0 + 1 \le 2^{i+1}$, $k+1$ pivots must have arrived, with at least $k-r+1 \ge 1$ from the set $A$ of vertices consisting of the cliques $K_{r+1},\dots,K_{\ell}$. 
Since $A$ has $\avg{\instance}(k-r)$ vertices and $\avg{\instance} < \frac{n}{2^{i+2}}$, the expected number of vertices that arrive from $A$ by time $2^{i+1}$ is at most $(k-r)/2$.

If $(k-r) \ge c \log n$, then using Chernoff-Hoeffding we obtain with high probability that the number of vertices arriving from $A$ by time $2^{i+1}$ is less than $k-r$; we obtain a contribution of $o(1)$ to the expected cost $\E[\text{Cost}_1]$ in this subcase.  

Now consider when $k-r\geq 2$. 
Then by time $\Tau_0$, at least 2 vertices must have arrived from the set $A$.
Since $A$ has $\avg{\instance} (k-r)$ vertices, the probability that at least $2$ vertices arrive from $A$ in time $2^{i+1}\geq \Tau_0$ is $\bigo\left(\frac{4^{i+1} \avg{\instance}^2 (k-r)^2}{n^2}\right)$. 
Furthermore, the number of vertices that arrive from $A$ by time $2^{i+1}$ is $\bigo(\log n)$ with high probability. 
Let $Y$ denote the set of vertices that arrive from the subset $\cup_{j > k} K_j$. 
Since vertices from $A$ are drawn at random, the expected number of vertices that arrive from $Y$, conditioned on the event $\Tau_0 \in [2^i, 2^{i+1})$ is $\bigo\left(\frac{\sml{\instance} \log n}{\avg{\instance}(k-r)}\right)$. 
Each pivot arriving from $Y$ can leave at most one clique from $\cup_{1 \le j \le k} K_j$ uncovered.  Furthermore, by Lemma~\ref{lem:few-vertices-by-tau}, any clique that has no vertices arrive by time $2^i$ has size $\bigo(n \log n/2^i)$ with high probability. 
Therefore, we obtain that $\E[W \mid \Tau_0 \in [2^i, 2^{i+1})]$ is $\bigo(\sml{\instance} + \frac{\sml{\instance} \log n}{\avg{\instance}} \frac{n \log n}{2^i})$. 
Putting this together, we derive
\begin{eqnarray*}
& & \sum_{i< i^*} \Pr[\Tau_0 \in [2^i, 2^{i+1})] \cdot \E[\text{Cost}_1 \mid \Tau_0 \in [2^i, 2^{i+1})]\\
& \le & \sum_{i < i^*} \Pr[\Tau_0 \in [2^i, 2^{i+1})] \cdot \E[W \mid \Tau_0 \in [2^i, 2^{i+1})] \cdot \frac{16 n \log n}{2^i}\\
& = & \sum_{i < i^*} \bigo\left(\frac{4^{i+1} \avg{\instance}^2 (k-r)^2}{n^2}\right) \bigo\left(\sml{\instance} + \frac{\sml{\instance} \log n}{\avg{\instance}(k-r)}\frac{n \log n}{2^i}\right) \bigo\left(\frac{n \log n}{2^i}\right)\\
& = & \sum_{i < i^*} \bigo(\sml{\instance} \avg{\instance} \log^4 n) \\
& = & \bigo(\sml{\instance} \avg{\instance} \log^5 n) .
\end{eqnarray*}

Finally, consider the case when $k-r = 1$. Let $\mathcal{E}$ be the event in which all of the $r$ largest cliques arrive before time $\Tau_0$, and let $\bar{\mathcal{E}}$ be the event when this does not happen. 
That is, $\bar{\mathcal{E}}$ is the event that at least 2 vertices from $A$ arrive by time $\Tau_0$. Then we can split
\begin{align*}
    \Pr&[\Tau_0 \in [2^i, 2^{i+1})] \cdot \E[\text{Cost}_1 \mid \Tau_0 \in [2^i, 2^{i+1})] \\
    & = \Pr[\Tau_0 \in [2^i, 2^{i+1}) \mbox{ and } \mathcal{E}] \cdot \E[\text{Cost}_1 \mid \Tau_0 \in [2^i, 2^{i+1}) \mbox{ and } \mathcal{E}] \\ 
    & \hspace{5mm} + \Pr[\Tau_0 \in [2^i, 2^{i+1}) \mbox{ and } \bar{\mathcal{E}}] \cdot \E[\text{Cost}_1 \mid \Tau_0 \in [2^i, 2^{i+1}) \mbox{ and } \bar{\mathcal{E}}] .
\end{align*}

As in the case when $k-r\geq 2$, the same logic applies for event $\bar{\mathcal{E}}$. Thus, we can easily derive 
\begin{align*}
    \Pr&[\Tau_0 \in [2^i, 2^{i+1}) \mbox{ and } \bar{\mathcal{E}}] \cdot \E[\text{Cost}_1 \mid \Tau_0 \in [2^i, 2^{i+1}) \mbox{ and } \bar{\mathcal{E}}] \\
    & = \bigo\left(\frac{4^{i+1} \avg{\instance}^2 (k-r)^2}{n^2}\right) \bigo\left(\sml{\instance} + \frac{\sml{\instance} \log n}{\avg{\instance}(k-r)}\frac{n \log n}{2^i}\right) \bigo\left(\frac{n \log n}{2^i}\right) \\
    & = \bigo(\sml{\instance} \avg{\instance} \log^4 n).
\end{align*} 
In the event $\mathcal{E}$, we can only obtain a weaker bound on the probability 
\[ \Pr[\Tau_0 \in [2^i, 2^{i+1}) \mbox{ and } \mathcal{E}] \leq 2^{i+1}\frac{\avg{\instance}}{n}. \]
However in turn, we can obtain a stronger bound on the expectation of $W$. 
Because $k-r = 1$, we know that $|A| = \avg{\instance}(k-r) = \avg{\instance}$. Additionally, the $k$th largest clique must have size $n_k = \avg{\instance} - \sml{\instance}$, and all other cliques have size no more than $n_k$. If $\avg{\instance} \leq \ln n \cdot \sml{\instance}$, then clearly 
\[ \E[\text{Cost}_1 \mid \Tau_0 \in [2^i, 2^{i+1}) \mbox{ and } \mathcal{E}] \leq \avg{\instance} \leq \ln n\cdot  \sml{\instance} . \]
Otherwise we can derive
\begin{align*}
    \E&[\text{Cost}_1 \mid \Tau_0 \in [2^i, 2^{i+1}) \mbox{ and } \mathcal{E}] \\
    & \leq \avg{\instance} - \E[\mbox{size of clique randomly chosen from remaining, weighted by size}] \\
    & \leq \avg{\instance} - \frac{(\avg{\instance} - \sml{\instance})^2}{\avg{\instance}\sml{\instance}} \\
    & \leq \avg{\instance} - \frac{\avg{\instance}}{\sml{\instance}} + 2 \leq \ln n (\sml{\instance} - 1) + 2 = \bigo(\ln n \cdot \sml{\instance}) .
\end{align*}
Summing costs associated with disjoint events $\mathcal{E}$ and $\bar{\mathcal{E}}$ together, we can show that
\begin{align*}
    \Pr&[\Tau_0 \in [2^i, 2^{i+1})] \cdot \E[\text{Cost}_1 \mid \Tau_0 \in [2^i, 2^{i+1})] \\
    & = \bigo(\sml{\instance} \avg{\instance} \log^4 n) + 2^{i+1}\frac{\avg{\instance}}{n}\cdot \bigo(\ln n \cdot \sml{\instance}) \\
    & = \bigo(\sml{\instance} \avg{\instance} \log^4 n) .
\end{align*}
Summing over all $i<i^*$, we find
\begin{eqnarray*}
& & \sum_{i< i^*} \Pr[\Tau_0 \in [2^i, 2^{i+1})] \cdot \E[\text{Cost}_1 \mid \Tau_0 \in [2^i, 2^{i+1})]\\
& = & \sum_{i < i^*} \bigo(\sml{\instance} \avg{\instance} \log^4 n) \\
& = & \bigo(\sml{\instance} \avg{\instance} \log^5 n) .
\end{eqnarray*}

\paragraph{$\text{Cost}_2$: Disagreement between a clique in $U$ and another clique in $U$ residing in the same cluster.}
When multiple vertices from $U$ are assigned to the same cluster, they create additional disagreements among themselves. Since vertices from different cliques in $U$ that end up in the same cluster contribute to the cost, we need to bound this interaction cost.  In \UP, the cliques in $U$ are distributed uniformly at random among the $|S| \ge k-r$ clusters.
\begin{eqnarray*}
\E[\text{Cost}_2] & \leq & \sum_{K_i, K_j \in U , i \neq j} n_i n_j \Pr[K_i \text{ and } K_j \in U] / |S|\\
& = & \frac{1}{|S|}\sum_{K_i, K_j \in U , i \neq j} n_i n_j \Pr[K_i \in U] \Pr[K_j \in U | K_i \in U]\\
& \leq & \frac{1}{|S|}\sum_{K_i, K_j \in U , i \neq j} n_i n_j \Pr[K_i \in U] \Pr[K_j \in U]\\
& \leq & \frac{1}{k-r}\left(\sum_{K_i \in U } n_i \Pr[K_i \in U]\right)^2 = \frac{\E[W]^2}{k-r}\\
& = & \bigo \left(\frac{\sml{\mathcal{I}}^2 \log^4 n}{k-r}\right).
\end{eqnarray*}
In the above derivation, we use Lemma \ref{lem:neg-corr} to derive $\Pr[K_j \in U | K_i \in U] \le \Pr[K_j \in U]$ and invoke Lemma~\ref{lem:clique-coverage-cliques} (for each of $\log n$ values of $i$) to derive $\E[W] = \bigo(\sml{\instance} \log^2 n)$. Since $\sml{\mathcal{I}} \leq \avg{\mathcal{I}} \cdot (k-r)$ by definition, we have:
$$\E[\text{Cost}_2] = \bigo (\sml{\mathcal{I}} \cdot \avg{\mathcal{I}} \cdot \log^4 n)$$
Combining both components:
$$\E[\mathrm{Cost}] = \E[\text{Cost}_1] + \E[\text{Cost}_2] = \bigo (\sml{\mathcal{I}} \cdot \avg{\mathcal{I}} \cdot \log^5n)$$
\end{proof}

\section{Polylogarithmic Competitiveness for General Graphs}
\label{sec:online}
In this section, we present a simple adaptation of the \textsc{Pivot} algorithm for correlation clustering to obtain a competitive algorithm for correlation $k$-clustering under random arrivals. 
In particular the algorithm, which we call \textsc{BalancedPivot}, achieves cost that is at most a poly-logarithmic factor of optimal cost. 
We present the \textsc{BalancedPivot} algorithm in Section~\ref{sec:online.upper.algorithm}.  In Section~\ref{sec:online.upper.prelim}, we introduce key definitions and notation needed for our analysis.  Finally, Section~\ref{sec:online.upper.analysis} presents our analysis of \textsc{BalancedPivot} and proof of the main theorem.

\begin{algorithm}
\caption{\textsc{\textsc{BalancedPivot}}$(\instance, k)$}
\begin{algorithmic}[1]
\Require Graph $\instance = (V, E)$, arrival order $\sigma$, maximum clusters $k$
\Ensure Assignment of each vertex to a cluster
\State Initialize $k$ empty clusters $C_1, \ldots, C_k$
\State Initialize empty set of pivots $P$
\For{each arriving vertex $v$ in order $\sigma$}
    \State $N_v \gets$ set of neighbors of $v$ that have already arrived
    \State $P_v \gets$ set of pivots in $N_v$ 
    \If{$P_v \neq \emptyset$}
        \State Let $u$ be the earliest arrived pivot in $P_v$
        \State Assign $v$ to the same cluster as $u$
    \Else

        \State Assign $v$ to the cluster with minimal size
        \State Add $v$ to the set of pivots $P$
    
    \EndIf
\EndFor
\end{algorithmic}
\end{algorithm}
\subsection{The \mbox{\BP} Algorithm}
\label{sec:online.upper.algorithm}
In the \textsc{BalancedPivot} algorithm, each new vertex is compared with its previously arrived neighbors to determine if any have already been assigned to a cluster. If so, the vertex is assigned to the same cluster as its first pivot neighbor. If no neighbor is assigned, the vertex is placed in the least loaded available cluster. In this way, it keeps the load of clusters balanced.

\subsection{Preliminaries: Definitions and notation}
\label{sec:online.upper.prelim}
A major challenge in analyzing any algorithm for correlation $k$-clustering is to obtain suitable lower bounds on the optimal cost.
Section~\ref{sec:cliques.optimal} provides a framework for deriving an asymptotically tight bound on optimal cost for the special case where the graph is a collection of disjoint cliques. 
Our analysis of \textsc{BalancedPivot} proceeds by adapting this framework to the optimal unconstrained correlation clustering $Q=Q_1,\hdots,Q_\ell$.

\begin{definition}\label{def:general-sec-notation}
    We order the $Q_1,\hdots,Q_\ell$ by size: $n_1 \geq n_2 \geq \cdots \geq n_\ell$ where $n_i = |Q_i|$.
    Let $\instance_{clique}$ be the graph instance obtained by turning all $Q_j$ into cliques and deleting all crossing edges between $Q_j$'s.
\end{definition}

\begin{definition}\label{def:general-rand-things}
    Given a graph instance $\instance$, the optimal unconstrained correlation clustering $Q$, and a random arrival ordering $\sigma$, we define the following random variables dependent on $\sigma$.  First, as in the analysis of \textsc{UniformPivot}, we define $\Tau$ as the timestep before the arrival of the $(k+1)$th pivot.  (If there are only $k$ pivots in the random arrival sequence, then $\Tau$ is set to $n$.)  
    
    Consider running the \textsc{Pivot} algorithm on $\instance$, and consider the state of the algorithm mid-way through.
    At any time $t$, all vertices that have arrived have been clustered, and are considered covered.
    All vertices that have yet to arrive are either adjacent to, and thus covered by, some vertex already marked as a pivot, or they are not adjacent to any current pivot vertices, in which case we consider them not covered.
    For all $j$, either there is some first pivot from $Q_j$ that arrives at some time $t$, or there is no pivot that arrives from $Q_j$.  If there is no pivot, let $P_j = \emptyset$;
    otherwise, let $P_j$ be the random set indicating what vertices from $Q_j$ are not covered by time $t$.
    Intuitively, this definition will allow us to connect $P_j$ with both the expected pivot cost and the number of non-edges between vertex pairs in $Q_j$.
    
    Note that $P_j$ is a random subset of $Q_j$ dependent on $\sigma$.
    Thus for a specific $\sigma$, we write this as the set $P_j(\sigma)$.
    Let $\instance_\sigma$ be the graph instance obtained by turning all sets $Q_j\setminus P_j(\sigma)$ for $j\leq k$ into cliques with no crossing edges, and all sets $P_j(\sigma)$ for $j\leq k$ and $Q_j$ for $j>k$ into sets of singleton vertices. 
    Intuitively, $\instance_\sigma$ is a random ``broken instance'' that lies somewhere between the clique instance $\instance_{clique}$ and the output of running the Pivot algorithm on $\instance$.
    We will use this definition of $\instance_\sigma$ in both a lower bound on the optimal cost (\Cref{lem:smltilde-avgtilde-bound} in \Cref{sec:online.upper.analysis.OPT}) and an upper bound on the expected cost of \textsc{BalancedPivot} (\Cref{thm:BalancedPivotUpper} in \Cref{sec:online.upper.analysis.theorem}).
\end{definition}

\begin{definition}[Projected Load]\label{def:projected-load}
The \emph{projected load} of a cluster at time $t$ is the total number of vertices that will eventually join that cluster by the end of the algorithm. Formally, for a pivot $p$ that founds a cluster at time $t_p$, the projected load is:
\[
L_{\text{proj}}(p) = |\{v \in V : v \text{ joins the cluster founded by } p\}|
\]
\end{definition}

\subsection{Analysis of \BP}\label{sec:online.upper.analysis}

The main result of this paper is the following theorem.

\mainbalancedpivot*

To prove Theorem~\ref{thm:BalancedPivotUpper}, in Section~\ref{sec:online.upper.analysis.OPT} we first present a lower bound on the cost of an optimal clustering, using the lower bounds we have derived for the special case of disjoint cliques and the concepts we have laid out using almost-clique decompositions in Section~\ref{sec:online.upper.prelim}.

A challenge in analyzing \BP\ is that when a new pivot is placed, it is placed in a cluster with the least load \emph{at that time}, and it is not easy to estimate the size of the pivot clusters associated with the pivots that have been already placed there; note that such an estimate is needed to bound the cost incurred due to the disagreements among the new pivot cluster and the pivot clusters that have already been placed in the cluster.  Section~\ref{sec:online.upper.analysis.projected} presents a collection of lemmas that help bound the projected load of pivot clusters and least loaded clusters.  

A third component of the proof is an upper bound on the number of vertices $W$ that are left uncovered at time $\Tau$.  For disjoint cliques, Lemma~\ref{lem:clique-coverage-cliques} derived a bound for $W$ in terms of $\sml{\instance}$.  For general graphs, however, an analogous definition of $\sml{\instance}$ based on an almost-clique decomposition of the graph does not suffice since almost-cliques can be split by the Pivot algorithm into multiple pivot clusters.  This is compounded by the fact that when analyzing \BP\ (as opposed to \UP) we also have to take into account potential correlation between the event that a vertex is not covered by time $\Tau$ and the event that another vertex is covered and placed in the same cluster.  In particular, we do not know whether an equivalent of Lemma~\ref{lem:neg-corr} works for general graphs. Section~\ref{sec:online.upper.analysis.uncovered} provides the tools to bound $W$, conditioned on how the almost-cliques get broken apart by the first $k$ pivots.

Section~\ref{sec:online.upper.analysis.theorem} combines the above components to prove Theorem~\ref{thm:BalancedPivotUpper}.

\subsubsection{Lower Bound on Optimal cost}
\label{sec:online.upper.analysis.OPT}

\begin{lemma}[Lower Bound on Optimal Cost using $\instance_\sigma$]\label{lem:smltilde-avgtilde-bound}
Let $\instance$ be the original input instance, with optimal unconstrained correlation clustering solution $Q=Q_1,\hdots,Q_\ell$. 
Let $\instance_{\sigma}$ be the random broken instance as defined in \Cref{def:general-rand-things}. 
Then the following is a lower bound on the optimal correlation $k$-clustering solution,
\[ \Omega\left( \E[\sml{\instance_\sigma} \avg{\instance_\sigma}] \right) \leq \cost(\opt_k(\instance)) . \]
\end{lemma}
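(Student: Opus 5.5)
We will show that for every fixed arrival sequence $\sigma$, $\sml{\instance_\sigma}\avg{\instance_\sigma} = \bigo\big(\cost(\opt_k(\instance)) + \Phi(\sigma)\big)$, where $\Phi(\sigma)$ is a nonnegative quantity with $\E[\Phi(\sigma)] = \bigo(\opt_{CC}) = \bigo(\cost(\opt_k(\instance)))$. Taking expectations then proves the lemma. The natural choice is
\[
\Phi(\sigma)=\sum_{j\le k}|Q_j|\,|P_j(\sigma)|,
\]
which the overview already identifies as a lower bound, up to constants, on the expected cost of \Pivot. The plan has three steps.

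\textbf{Step 1: bound $\E[\Phi]$.} Since $Q$ is an optimal unconstrained clustering, each vertex of $Q_j$ is adjacent to at least half of $Q_j$; otherwise moving it to a singleton would lower the cost. Hence the first pivot $p$ from $Q_j$ covers at least half of $Q_j$, so $|P_j|\le |Q_j|/2$. Each vertex $u\in P_j$ lies in $Q_j$ but in a different \Pivot\ cluster from $p$'s cluster.

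Consider the edges between $u$ and the vertices of $Q_j$ in $p$'s cluster. Either these edges are mostly present, and then \Pivot\ pays for them as cut edges; or they are mostly absent, and then those absences are non-edges inside $Q_j$ charged to $\opt_{CC}$. In both cases we get
\[
|Q_j|\,|P_j| = \bigo\big(\text{(cost of Pivot incident to } Q_j) + \text{(cost of } Q \text{ inside } Q_j)\big).
\]
Summing over $j$ and using that \Pivot\ is $3$-approximate in expectation gives $\E[\Phi]=\bigo(\opt_{CC})\le \bigo(\cost(\opt_k))$.

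\textbf{Step 2: per-$\sigma$ comparison.} Let $C^*$ be an optimal $k$-clustering of $\instance$. Transfer it to $\instance_{clique}$ exactly as in the proof of \Cref{thm:cc-to-kcc}: the cost changes by at most $\cost(Q,\instance)=\opt_{CC}$. Then pass from $\instance_{clique}$ to $\instance_\sigma$. Relative to $\instance_{clique}$, this only deletes the edges between $P_j$ and $Q_j\setminus P_j$ (and within $P_j$) for $j\le k$; the cliques $Q_j$ for $j>k$ also become singletons, which only helps since singletons carry no edges.

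The number of deleted pairs is at most $\sum_{j\le k}|Q_j||P_j| = \Phi(\sigma)$. Therefore
\[
\opt_k(\instance_\sigma)\le \cost(\opt_k(\instance)) + \opt_{CC} + \Phi(\sigma).
\]
Finally, $\instance_\sigma$ is a disjoint cliques instance, so \Cref{lem:opt-theta} gives $\sml{\instance_\sigma}\avg{\instance_\sigma}\le 4\,\opt_k(\instance_\sigma)$. Combining the two displays and taking expectations with Step 1 completes the proof.

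\textbf{Main obstacle.} The key difficulty is the charging in Step 1, namely justifying $\E\big[\sum_{j\le k}|Q_j||P_j|\big]=\bigo(\opt_{CC})$. Uncovered vertices $u\in P_j$ may be non-adjacent to $p$ for reasons that \Pivot\ never pays for directly.

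I would first try a pair-based argument. For each $u\in P_j$ and each $w\in Q_j\setminus P_j$, if $(u,w)$ is a non-edge, charge it to $Q$'s cost. If $(u,w)$ is an edge, then $u$ and $w$ end up in different \Pivot\ clusters, because $w$ is covered by an earlier pivot than any pivot that could later capture $u$. That separation is a cut edge paid by \Pivot. Each pair is charged at most a constant number of times, which yields the bound.

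The subtle point is that $w$ might itself be placed with a later pivot, and that pivot could also capture $u$. To rule this out, I would restrict $w$ to the vertices actually covered by the first pivot $p$ of $Q_j$, or by pivots arriving before $p$. This set has size at least $|Q_j|/2$ by the almost-clique property, so the restriction still gives $|Q_j||P_j|$ up to a factor of $2$.
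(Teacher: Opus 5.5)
Your architecture matches the paper's. You apply \Cref{lem:opt-theta} to $\instance_\sigma$ for each fixed $\sigma$, transfer a $k$-clustering while paying for edited pairs, and bound $\E[\sum_{j\le k}|P_j||Q_j\setminus P_j|]$ by \Pivot's expected cost plus $e_m$ via pair charging. The refined charging in your last paragraph is exactly the paper's argument and is correct. The ``subtle point'' you raise cannot occur. Every $w\in Q_j\setminus P_j$ is covered at the arrival time $t_j$ of $Q_j$'s first pivot, so $w$ joins a pivot arriving by $t_j$. Every $u\in P_j$ joins, or is, a pivot arriving after $t_j$. So the restricted set you propose is exactly $Q_j\setminus P_j$. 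Your initial version, which charges only against $Q_j$ intersected with $p$'s cluster, would not suffice on its own, since that intersection can be tiny when external pivots covered most of $Q_j$ earlier.

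The genuine error is in Step 2, where you say that turning the $Q_j$ with $j>k$ into singletons ``only helps.'' Deleting an edge lowers a fixed clustering's cost only if its endpoints are separated; if they share a cluster, the deleted edge becomes a new disagreement. Your displayed inequality $\cost(\opt_k(\instance_\sigma))\le \cost(\opt_k(\instance))+\opt_{CC}+\Phi(\sigma)$ is in fact false. Take $k=1$ and $\instance$ equal to two disjoint $m$-cliques. Then $\opt_{CC}=0$, $P_1=\emptyset$ so $\Phi=0$, and $\cost(\opt_1(\instance))=m^2$, but $\cost(\opt_1(\instance_\sigma))=m^2+\binom{m}{2}$.

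The repair needs one more idea: the split-off mass is small relative to $\avg{\cdot}$.
\begin{itemize}
\item The extra disagreements are at most $\sum_{j>k}\binom{|Q_j|}{2}\le \frac12|Q_{k+1}|\,\sml{\instance_{clique}}$.
\item Whenever $\sml{\instance_{clique}}>0$ we have $r(\instance_{clique})\le k-1$, hence $|Q_{k+1}|<\avg{\instance_{clique}}$.
\item By \Cref{lem:opt-theta}, the extra term is therefore at most $\frac12\sml{\instance_{clique}}\avg{\instance_{clique}}\le 2\cost(\opt_k(\instance_{clique}))\le 2(\cost(\opt_k(\instance))+\opt_{CC})$.
\end{itemize}
This is precisely the role of $\instance_{cs}$ in the paper's proof, where $\cost(\opt_k(\instance_{cs}))\le\sml{\instance_{cs}}\avg{\instance_{cs}}=\sml{\instance_{clique}}\avg{\instance_{clique}}$. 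With this term added, the lemma follows with a worse constant. Otherwise, your direct transfer $\instance\to\instance_{clique}$ at cost at most $e_m+e_c$ is a slightly cleaner substitute for the paper's detour through the $3$-approximate clique-preserving clustering of \Cref{thm:cc-to-kcc}.
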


\begin{proof}

Let $e_m$ be the number of edges missing (or, non-edges) within any single cluster $Q_j$, and let $e_c$ be the number of edges crossing between any two clusters $Q_i,Q_j$. Then by definition,
\[ e_m+e_c = \cost(\opt_{CC}(\instance)) \leq \cost(\opt_k(\instance)) . \]

Then by \Cref{thm:cc-to-kcc}, there exists some $k$-clustering $C = C_1,\hdots,C_k$ which does not break apart the clusters of $Q$ and is a $3$-approximate solution to the optimal correlation $k$-clustering on $\instance$. 
That is, $\cost(C) \leq 3\cdot \cost(\opt_k(\instance))$.

Let $\instance_{clique}$ be the clique instance created by turning all $Q_1,\hdots,Q_\ell$ into disjoint cliques. 
Additionally, let $\instance_{cs}$ be the graph instance obtained by turning the largest $k$ optimal clusters, $Q_{j\leq k}$, into cliques and turning the rest of $Q_{j> k}$ into singletons.
That is, $\instance_{cs}$ is obtained by deleting all crossing edges between $Q_j$'s, adding all missing edges within large $Q_{j\leq k}$, and removing all edges within small $Q_{j>k}$.

$\cost(C)$ is made up of two parts: the number of non-edges between vertices in the same cluster, and the number of crossing edges between vertices assigned to different clusters. 
The total number of non-edges between vertices in the same cluster is exactly $e_m$ plus the number of vertices assigned to different $Q_i,Q_{i'}$ that are clustered into the same $C_j$, and which did not already have an edges between them.
If there were no crossing edges between $Q_1,\hdots,Q_\ell$, then this cost could be lower bounded by the cost of $k$-clustering disjoint cliques of sizes $|Q_1|,\hdots,|Q_\ell|$.
This is equal to $\cost(\opt_k(\instance_{clique}))$. 
However, due to the crossing edges, we can only lower bound $\cost(C)$ by
\[ \cost(C_1,\hdots,C_k) \geq \cost(\opt_k(\instance_{clique})) - e_c . \]

By \Cref{lem:opt-theta}, $\cost(\opt_k(\instance_{clique})) \geq \frac{1}{4}\cdot \sml{\instance_{clique}} \avg{\instance_{clique}}$ and $\sml{\instance_{cs}} \avg{\instance_{cs}}\geq \cost(\opt_k(\instance_{cs}))$.
Additionally, by definition $\sml{\instance_{cs}} \avg{\instance_{cs}} = \sml{\instance_{clique}} \avg{\instance_{clique}}$.

Altogether, this allows us to derive the following relationship.
\begin{align}
    e_m + e_c + \cost(\opt_k(\instance_{clique})) - e_c & \leq 4\cdot \cost(\opt_k(\instance)) \nonumber\\
    e_m + \cost(\opt_k(\instance_{clique})) & \leq 4\cdot \cost(\opt_k(\instance)) \nonumber \\
    e_m + \frac{1}{4}\cdot \cost(\opt_k(\instance_{cs})) & \leq 4\cdot \cost(\opt_k(\instance)) \nonumber \\
    e_m + \cost(\opt_k(\instance_{cs})) & \leq 16 \cdot \cost(\opt_k(\instance)) . \label{eqn:lower bound}
\end{align}
Let $\sigma$ denote an arbitrary random arrival pattern of vertices. 
Recall the definition of ``broken'' instance $\instance_\sigma$ from \Cref{def:general-rand-things}.
Intuitively, this instance is created by ``breaking'' the optimal unconstrained correlation clusters $Q_{j}$ for $j\leq k$ into pieces based on the first pivot to arrive from $Q_j$: $P_j(\sigma)$ which are any vertices in $Q_j$ uncovered by any pivots after $Q_j$'s first pivot arrives, and $Q_j\setminus P_j(\sigma)$. 
All sets $Q_j\setminus P_j(\sigma)$ for $j\leq k$ become disjoint cliques with no crossing edges, and all sets $P_j(\sigma)$ for $j\leq k$ and $Q_j$ for $j>k$ are broken into singleton vertices.

Since $\instance_\sigma$ is a clique instance, we can apply \Cref{lem:opt-theta} to show for any arrival pattern $\sigma$, 
\begin{align*}
    \frac{1}{4} \cdot \sml{\instance_\sigma} \avg{\instance_\sigma}  & \leq \cost(\opt_k(\instance_\sigma)) \\
    & \leq \cost(\opt_k(\instance_{clique}) \mbox{ on instance } \instance_\sigma) \\
    & \leq \sum_{j\leq k} |Q_j\setminus P_j(\sigma)|\cdot |P_j(\sigma))| + \cost(\opt_k(\instance_{cs})) .
\end{align*}
Since this is true for all $\sigma$, it must also be true in expectation. 
\[ \frac{1}{4}\cdot \E[\sml{\instance_\sigma} \avg{\instance_\sigma}]  \leq \sum_{j\leq k} \E[|Q_j\setminus P_j(\sigma)|\cdot |P_j(\sigma))|] + \cost(\opt_k(\instance_{cs})) . \]

Before deriving our final result, we need to upper bound $\sum_{j\leq k} \E[|Q_j\setminus P_j(\sigma)|\cdot |P_j(\sigma))|]$.
To do so, consider what clustering we would get when running the Pivot algorithm on the original instance $\instance$. 
For every random arrival pattern $\sigma$, the cost of the Pivot clustering on random arrival $\sigma$ will always be at least $\sum_j |Q_j\setminus P_j(\sigma)|\cdot |P_j(\sigma))| - e_m$. 
Therefore, the expected cost of Pivot is at least $\sum_{j\leq k} \E[|Q_j\setminus P_j(\sigma)|\cdot |P_j(\sigma))|] - e_m$.

Separately, we can lower bound the cost of Pivot by the optimal unconstrained correlation clustering solution, $Q_1,\hdots,Q_\ell$, which pays cost $e_m+e_c$. And because Pivot is a 3-approximate solution, we can upper bound its expected cost by $3(e_m+e_c)$.
Therefore, we can derive
\begin{align*}
    e_m + e_c + \sum_{j\leq k} \E\big[|Q_j\setminus P_j(\sigma)|\cdot |P_j(\sigma))|\big] - e_m & \leq 2\cdot(\mbox{expected Pivot cost on }\instance) \\
    \sum_{j\leq k} \E\big[|Q_j\setminus P_j(\sigma)|\cdot |P_j(\sigma))|\big] & \leq 2\cdot(\mbox{expected Pivot cost on }\instance) - e_c \leq 5(e_m+e_c) + e_m .
\end{align*}
We now have all pieces to obtain our final bound.
\begin{align*}
    \frac{1}{4}\cdot \E[\sml{\instance_\sigma} \avg{\instance_\sigma}] & \leq \sum_j \E\big[|Q_j\setminus P_j(\sigma)|\cdot |P_j(\sigma))|\big] + \cost(\opt_k(\instance_{cs})) \\
    & \leq 5(e_m+e_c) + e_m + \cost(\opt_k(\instance_{cs})) \\
    & \leq 5(e_m+e_c) + 16\cdot \cost(\opt_k(\instance)) \\
    & \leq 21\cdot \cost(\opt_k(\instance)) .
\end{align*}

\end{proof}

\begin{lemma}[Lower Bound on Optimal Cost using Pivot]
\label{lem:sml-avg-bound-pivot}
    Given a graph instance $\instance$ and a random arrival ordering $\sigma$, let $\mathcal{J}_\sigma = J_1,\hdots,J_{\ell_\sigma}$ be the instance obtained by running the pivot algorithm with arrival pattern $\sigma$, then converting all pivot clusters into disjoint cliques.
    Then 
    \[ \Omega\left( \E[\sml{\mathcal{J}_\sigma} \avg{\mathcal{J}_\sigma}] \right) \leq \cost(\opt_k(\instance)) . \]
\end{lemma}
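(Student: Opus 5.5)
Fix an arrival order $\sigma$ and compare $\mathcal{J}_\sigma$ with the original instance $\instance$ through a clique-instance argument, mirroring the proof of \Cref{lem:smltilde-avgtilde-bound}. Let $C=C_1,\hdots,C_k$ be an optimal $k$-clustering of $\instance$, and let $\mathcal{P}_\sigma = J_1,\hdots,J_{\ell_\sigma}$ be the pivot clusters. The key step is to show, for every fixed $\sigma$,
\[
\cost(\opt_k(\mathcal{J}_\sigma)) \;\le\; \cost(C,\mathcal{J}_\sigma) \;\le\; \cost(C,\instance) + \cost(\mathcal{P}_\sigma,\instance).
\]
Here $\mathcal{J}_\sigma$ is obtained from $\instance$ by deleting every edge that crosses two pivot clusters and adding every missing edge inside a pivot cluster. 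The number of modified pairs is therefore exactly the disagreement count of the clustering $\mathcal{P}_\sigma$ on $\instance$, i.e.\ the cost of \Pivot\ on $\sigma$. Any fixed clustering's cost changes by at most one per modified pair, which gives the second inequality.

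Next, apply \Cref{lem:opt-theta} to the disjoint-cliques instance $\mathcal{J}_\sigma$. This gives $\frac14\sml{\mathcal{J}_\sigma}\avg{\mathcal{J}_\sigma} \le \cost(\opt_k(\mathcal{J}_\sigma))$, and hence
\[
\tfrac14\,\sml{\mathcal{J}_\sigma}\avg{\mathcal{J}_\sigma} \le \cost(\opt_k(\instance)) + \cost(\mathcal{P}_\sigma,\instance)
\]
pointwise in $\sigma$. Taking expectations over $\sigma$ and using that \Pivot\ is $3$-approximate in expectation for unconstrained correlation clustering~\cite{DBLP:conf/stoc/AilonCN05}, we get $\E[\cost(\mathcal{P}_\sigma,\instance)] \le 3\,\opt_{CC} \le 3\,\cost(\opt_k(\instance))$. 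Combining the two yields $\E[\sml{\mathcal{J}_\sigma}\avg{\mathcal{J}_\sigma}] \le 16\,\cost(\opt_k(\instance))$, which is the claim.

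The only point needing care is the pointwise triangle-type inequality. We must check that the disagreement indicator of each pair $(u,v)$ under the fixed clustering $C$ differs between $\instance$ and $\mathcal{J}_\sigma$ only when the edge status of $(u,v)$ differs. We must also check that the set of pairs whose status differs is exactly the set of \Pivot\ disagreements on $\sigma$: non-edges inside a pivot cluster plus edges between distinct pivot clusters. Both checks are immediate from the definitions. Another subtlety is that the parameters $\sml{\cdot},\avg{\cdot}$ of $\mathcal{J}_\sigma$ are defined relative to $k$ even when $\ell_\sigma<k$; in that case $\sml{\mathcal{J}_\sigma}=0$ and the bound is trivial. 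No further obstacle is expected, since the heavy lifting is done by \Cref{lem:opt-theta} and the $3$-approximation of \Pivot.
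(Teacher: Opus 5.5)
Your proposal is correct and follows the paper's proof essentially step for step. You prove the same pointwise chain, $\frac14\sml{\mathcal{J}_\sigma}\avg{\mathcal{J}_\sigma}\le\cost(\opt_k(\mathcal{J}_\sigma))\le\cost(\opt_k(\instance),\instance)$ plus the cost of \Pivot\ on $\sigma$, using \Cref{lem:opt-theta} and the count of modified pairs; you then take expectations with the expected $3$-approximation of \Pivot\ and reach the same constant $16$.
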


\begin{proof}
To start, fix $\sigma$. Since $\mathcal{J}_\sigma$ is a clique instance, 
\Cref{lem:opt-theta} applies. 
\[ \frac{1}{4}\cdot \sml{\mathcal{J}_\sigma} \avg{\mathcal{J}_\sigma} \leq \cost(\opt_k(\mathcal{J}_\sigma)) \leq \cost(\opt_k(\instance), \mathcal{J}_\sigma) , \]
where $\cost(\opt_k(\instance), \mathcal{J}_\sigma)$ denotes the cost of the clustering $\opt_k(\instance)$ on instance $\mathcal{J}_\sigma$.
We can upper bound $\cost(\opt_k(\instance), \mathcal{J}_\sigma)$ by the sum of two parts:
the cost of $\opt_k(\instance)$ on the original instance $\instance$, and the number of edges added or subtracted when converting from instance $\instance$ to instance $\mathcal{J}_\sigma$. 
Note the second value is exactly the cost of the pivot algorithm on $\instance$ when using arrival pattern $\sigma$.
\[ \cost(\opt_k(\instance), \mathcal{J}_\sigma) \leq \cost(\opt_k(\instance),\instance) + \cost(\mbox{Pivot with arrival pattern } \sigma, \instance) . \]
Because this is true for all $\sigma$, it is also true in expectation. And because Pivot is an expected 3-approximate algorithm, we can derive our final bound.
\begin{align*}
    \frac{1}{4}\cdot \E\left[ \sml{\mathcal{J}_\sigma} \avg{\mathcal{J}_\sigma} \right] &\leq \E\left[ \cost(\opt_k(\instance), \mathcal{J}_\sigma) \right] \\
    & \leq \cost(\opt_k(\instance),\instance) + \E\left[ \cost(\mbox{Pivot with arrival pattern } \sigma, \instance) \right] \\
    &\leq \cost(\opt_k(\instance),\instance) + 3\cdot \cost(\opt_k(\instance),\instance) \\
    & = 4\cdot \cost(\opt_k(\instance)).
\end{align*}
\end{proof}

\subsubsection{Bounds on Projected Load}
Before bounding the projected load for pivot clusters, we give some notation. For any vertex $v$ which is chosen as a pivot, let $p_t(v)$ denote the size of $v$'s pivot cluster at time $t$. In particular, note that $p_n(v)$ denotes the (final) size of $v$'s pivot cluster after all $n$ vertices have arrived.

The following lemma gives a bound on $p_n(v)$ in terms of $p_t(v)$ for any time $t$.

\label{sec:online.upper.analysis.projected}
\begin{lemma}[Projected Pivot Cluster Size is Predictable]\label{lem:projected_load}
    Given any timestep $t$ and any constant $c>0$, the probability that there is a vertex $v$ which arrives at some time $t_v<t$ for which both $v$ is chosen as a pivot and $p_n(v)\geq \frac{4np_t(v)}{t}+ \frac{2cn\ln n}{t}+1$ is no more than $n^{2-\frac{c}{8}}$.
\end{lemma}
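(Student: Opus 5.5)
The plan is to treat the vertices that join a pivot's cluster as a fixed set determined by the graph and the order, and to compare how many of them arrive early with how many arrive overall. Fix a vertex $v$ and let $N^+(v)$ denote the closed neighborhood of $v$. Define the \emph{potential cluster} $D(v)$ as the set of vertices $u \in N^+(v)$ such that no neighbor of $u$ arriving before $v$ is a pivot. Whether $u$ ends up in $v$'s pivot cluster depends only on which of $u$'s neighbors are pivots arriving before $v$, and whether $v$ is itself a pivot depends only on the prefix of $\sigma$ up to $t_v$. Hence, conditioned on the prefix up to $t_v$ with $v$ a pivot, the final pivot cluster of $v$ is exactly $D(v)$: the vertices of $N^+(v)$ already clustered with $v$ plus those arriving later that are not adjacent to an earlier pivot. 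So $p_n(v) = |D(v)|$. Moreover, $p_t(v)$ counts exactly the members of $D(v)$ that arrive by time $t$, since any $u \in D(v)$ arriving after $v$ joins $v$'s cluster on arrival.

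Condition on the prefix up to time $t_v < t$, and write $m = |D(v)|$. The remaining $n - t_v$ vertices arrive in uniformly random order. Therefore the number $X$ of vertices of $D(v)$ arriving in positions $t_v+1,\dots,t$ is hypergeometric, with mean
\[
\mu = (m - p_{t_v}(v))\cdot \frac{t - t_v}{n - t_v}.
\]
I would handle two regimes, using Facts~\ref{lem:chernoff} and~\ref{lem:chernoff-hypergeometric} with $\delta = 1/2$.

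\textbf{Large remainder.} Suppose $m - p_{t_v}(v) - 1 \ge \frac{2cn\ln n}{t - t_v}$. Then $\mu \ge c\ln n$, using $n - t_v \le n$. It follows that $X < \mu/2$ with probability at most $n^{-c/8}$. Outside this event we have $p_t(v) \ge X \ge \mu/2$. This gives
\[
m \le p_{t_v}(v) + \frac{2n}{t - t_v}\, p_t(v) + \ldots,
\]
which needs to be converted into a bound of the form $\frac{4n p_t(v)}{t}$.

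The main obstacle is the dependence on $t - t_v$ rather than on $t$: when $t_v$ is close to $t$, only a few slots are sampled. I would first try to fix this by sampling over the whole window $[1,t]$ instead of $(t_v,t]$. Condition only on the set of vertices in the first $t$ positions, not on their order. Each $u \in N^+(v)$ arriving by time $t$ that is not adjacent to any earlier pivot is counted in $p_t(v)$. If the analysis can be restricted to a set determined before time $t_v$, for instance $D(v)$ seen as a subset of the random $t$-subset of arrivals, then $p_t(v) \ge |D(v) \cap \sigma[1..t]|$, which is hypergeometric with mean $mt/n$. I would then split into two cases:
\begin{itemize}
\item if $m t/n \ge c\ln n$, Chernoff gives $p_t(v) \ge mt/(2n)$, i.e. $m \le 2np_t(v)/t$;
\item if $m t/n < c\ln n$, then $m < cn\ln n/t$ directly.
\end{itemize}
Either way, $p_n(v) = m \le \frac{4np_t(v)}{t} + \frac{2cn\ln n}{t} + 1$, and the factor $4$ and the $+1$ absorb the slack.

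The delicacy is that $D(v)$ is defined through the event that $v$ is a pivot, so it is not independent of the sample. To justify it, I would use a union bound over all candidate pivots $v$ and all target sets. Concretely, bound the probability over pairs $(v, t_v)$: at most $n^2$ choices, each failing with probability $n^{-c/8}$. The underlying fact is that, given the identity and position of $v$ and the vertices placed before it, the positions of the rest are uniform. Summing the $n^{-c/8}$ failure probabilities over $n^2$ pairs gives the claimed bound $n^{2-c/8}$.
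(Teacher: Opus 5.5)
Your proposal has a genuine gap at the step where you replace the window $(t_v,t]$ by $[1,t]$ and treat $|D(v)\cap\sigma[1..t]|$ as hypergeometric with mean $mt/n$. The set $D(v)$ is fixed only once you condition on the prefix $\sigma[1..t_v]$. On the event that $v$ is a pivot, no vertex of $D(v)\setminus\{v\}$ can arrive before $t_v$: it would be uncovered on arrival, become a pivot adjacent to $v$, and stop $v$ from being a pivot. So, conditioned on the information that fixes $D(v)$, $p_t(v)=|D(v)\cap\sigma[1..t]|$ is distributed as $1+\mathrm{Hypergeometric}(n-t_v,\,m-1,\,t-t_v)$. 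Its mean is $1+(m-1)(t-t_v)/(n-t_v)$, not $mt/n$.

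Your closing justification, that the remaining positions are uniform given $v$'s position and the earlier vertices, is exactly this statement. So it returns you to the obstacle you identified rather than removing it. The union bound over pairs $(v,t_v)$ only counts events and does not change this conditional distribution. Concretely, if $t_v=t-1$ then $p_t(v)\le 2$ however large $m$ is, so no lower-tail bound on $p_t(v)$ can exclude a large $m$. In that regime the bad event is rare for a different reason: a late-arriving vertex is unlikely to become a pivot with a large cluster at all.

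That is the missing ingredient, and the paper supplies it through a case split on $t$ versus $2t_v$.

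\emph{Case $t>2t_v$.} Here $t-t_v\ge t/2$, and your large-remainder computation goes through. The conditional mean is at least $(m-1)t/(2n)$. Either $(m-1)t/(2n)<c\ln n$, which gives $m<2cn\ln n/t+1$. Or Chernoff with $\delta=1/2$ gives $p_t(v)-1\ge (m-1)t/(4n)$ except with probability $n^{-c/8}$, i.e.\ $m\le 4np_t(v)/t+1$. This loss of $t/2$ is where the factor $4$ comes from.

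\emph{Case $t\le 2t_v$.} The paper invokes Lemma~\ref{lem:projloadattime}: a vertex that is a pivot at time $t_v$ has $p_n(v)\le cn\ln n/t_v\le 2cn\ln n/t$ except with probability at most $2n^{-c/8}$. One clean way to see a bound of this kind:
\begin{itemize}
\item Consider the set of not-yet-arrived neighbors of $v$ that are adjacent to no pivot so far. It only shrinks over time, and if $v$ becomes a pivot it equals $D(v)\setminus\{v\}$ just before $t_v$.
\item While this set has size at least $M$, each of the first $t_v-1$ arrivals lands in it with conditional probability at least $M/n$.
\item Any such arrival becomes a pivot adjacent to $v$, which prevents $v$ from being a pivot.
\end{itemize}
Hence the probability that $v$ is a pivot at $t_v$ with $|D(v)\setminus\{v\}|\ge M$ is at most $(1-M/n)^{t_v-1}$, which is polynomially small for $M=cn\ln n/t_v$.

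With both cases in hand, your union bound over the at most $n^2$ pairs $(v,t_v)$ completes the argument.
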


Before presenting a proof of Lemma \ref{lem:projected_load}, we give the following lemma which bounds the size of a vertex's pivot cluster in terms of its arrival times. 

\begin{lemma}[Projected Pivot Cluster Size as a Function of Arrival Time]\label{lem:projloadattime}
    Let $v$ be a vertex which arrives at time $t_v$. 
    Then for any constant $c>0$, the probability that both $v$ is chosen as a pivot and $p_n(v)>\frac{cn\ln n}{t_v}$ is no more than $\frac{2}{n^{c/8}}$.
\end{lemma}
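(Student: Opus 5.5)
The plan is a ``no large uncovered neighborhood'' argument. Fix $v$, its arrival time $t_v$, and condition on the position of $v$ in $\sigma$. If $v$ is chosen as a pivot, then $v$ arrived before every one of its neighbors. Moreover, its final pivot cluster consists of $v$ together with neighbors of $v$ that arrive later and are not captured by an earlier pivot. So $p_n(v)-1 \le |N(v)|$, and it suffices to show that whenever $|N(v)| \ge \frac{cn\ln n}{t_v}$, it is unlikely that none of $N(v)$ arrives among the first $t_v-1$ positions. If $|N(v)| < \frac{cn\ln n}{t_v}$ (up to the $+1$, which can be absorbed by adjusting constants or by noting $p_n(v)\le |N(v)|+1$), the event $p_n(v) > \frac{cn\ln n}{t_v}$ cannot occur at all. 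In that regime there is nothing to prove.

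In the main case, $m := |N(v)| \ge \frac{cn\ln n}{t_v}$. Conditioned on $v$ being at position $t_v$, the other $t_v-1$ earlier slots form a uniformly random subset of $V\setminus\{v\}$. The number $X$ of neighbors of $v$ among them is $\mathrm{Hypergeometric}(n-1, m, t_v-1)$, with $\E[X] = \frac{m(t_v-1)}{n-1}$, which is roughly $c\ln n$. Now $v$ being a pivot requires $X=0$. I will bound this in one of two ways. One option is directly via \Cref{fact: samplingworep}: $\Pr[X=0]\le (1-\frac{m}{n})^{t_v-1}\le e^{-m(t_v-1)/n}$. The other is via Facts~\ref{lem:chernoff} and~\ref{lem:chernoff-hypergeometric} with $\delta=1/2$, giving $\Pr[X\le \frac12\E X]\le e^{-\E[X]/8}$, which directly produces the stated $n^{-c/8}$ form, mirroring \Cref{lem:few-vertices-by-tau}.

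The only delicate point, and the one I expect to be the main obstacle, is the off-by-one between $t_v$ and $t_v-1$ (and $n$ vs.\ $n-1$). For $t_v\ge 2$ we have $t_v-1\ge t_v/2$, so $\E[X]\ge \frac{c\ln n}{2}$ only loses a constant in the exponent. This is presumably why the bound is stated with a factor $2$ in front: I would split into the event ``$X\le \frac12\E[X]$'' and an extra failure term, each bounded by $n^{-c/8}$. Alternatively, I would use $t_v-1\ge t_v/2$ together with the $(1-m/n)$ bound to get $e^{-c\ln n/2}\le n^{-c/8}$. For $t_v=1$, the threshold $cn\ln n$ exceeds $n$, so the event is vacuous. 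Summing these cases gives the claimed probability at most $\frac{2}{n^{c/8}}$.
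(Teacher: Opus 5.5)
Your reduction rests on the claim that a pivot must arrive before all of its neighbors, so that $\{v \text{ is a pivot}\}\subseteq\{X=0\}$. That is not the pivot rule this lemma is about. In \textsc{BalancedPivot}, which follows the Ailon--Charikar--Newman rule, $v$ becomes a pivot iff none of its previously arrived neighbors is itself a pivot. Neighbors of $v$ that arrived earlier but were captured by pivots not adjacent to $v$ do not block it. (The informal phrase ``first among its neighbors to arrive'' agrees with this rule only for disjoint cliques.)

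Your intermediate claim is in fact false. Let $N(v)=A$ with $|A|=n/2$, and let $A$ be completely joined to a set $B$ of $n/2-1$ vertices not adjacent to $v$. With probability about $1/2$ the first arrival lies in $B$, becomes a pivot, and absorbs every later vertex of $A$. Then $v$ is a pivot for every $t_v$, even though $|N(v)|=n/2\ge cn\ln n/t_v$ and $\Pr[X=0]$ is tiny once $t_v\gg\ln n$. The lemma survives here only because $p_n(v)=1$. So you cannot bound $p_n(v)$ by $|N(v)|+1$ and pivot-ness by $X=0$ separately; the two must be coupled.

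The missing idea is to track the \emph{uncovered} neighbors. Condition on $v$'s position, and let $N_s$ be the neighbors of $v$ that are not among the first $s$ arrivals and are not adjacent to any pivot among them. Three facts then hold:
\begin{itemize}
\item $N_s$ is non-increasing in $s$.
\item If $v$ is a pivot, then $p_n(v)\le 1+|N_{t_v-1}|$.
\item For $s<t_v-1$, no vertex of $N_s$ can be the $(s+1)$-st arrival if $v$ is to be a pivot, since that vertex would become a pivot adjacent to $v$.
\end{itemize}
Hence the event that $v$ is a pivot and $|N_{t_v-1}|\ge\kappa$ requires that, at each of the first $t_v-1$ steps, the next arrival avoids a set of size at least $\kappa$. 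Chaining conditional probabilities exactly as in the proof of \Cref{lem: uncovered}, this has probability at most $(1-\kappa/n)^{t_v-1}$. With $\kappa\approx cn\ln n/t_v$ and $t_v\ge 2$, this is $\bigo(n^{-c/2})$.

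The paper gets the same effect differently. It fixes the set $P$ of pivots arriving before $t_v$ and uses Chernoff to guarantee $X\ge\frac{c}{2}\ln n$ early neighbors, with failure probability $n^{-c/8}$. It then bounds the chance that none of them lies in $N_P(v)$ by $(1-|N_P(v)|/|N(v)|)^X\le n^{-c/2}$. The factor $2$ in the statement comes from adding these two failure probabilities, not from the $t_v$ versus $t_v-1$ off-by-one.
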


\begin{proof} 
Fix constant $c>0$. Let $N(v)$ denote the set of $v$'s neighbors. We begin by noting that if i) $t_v\leq c\ln n$, 
or ii) $|N(v)|\leq \frac{cn\ln n}{t_v}$, the Lemma statement trivially holds.  So let us assume that $t_v>c\ln n$, and $|N(v)|>\frac{cn\ln n}{t_v}$.

Given that vertex $v$ arrives at time $t_v$, we would like to bound the number of neighbors of $v$ that arrive before $t_v$. 
Let this random variable be denoted $X$.
This number is drawn from the Hypergeometric distribution $X \sim \mathrm{Hypergeometric}(n, |N(v)|, t_v)$, with expectation $\frac{|N(v)|t_v}{n} \geq c\ln n$.
Then by Facts \ref{lem:chernoff} and \ref{lem:chernoff-hypergeometric}, we can apply a Chernoff Bound to achieve the following result:
\[ \Pr\big[X \leq \frac{1}{2}\cdot c\ln n \big] \leq e^{-c\ln n/8} = \frac{1}{n^{c/8}}. \]

Now, conditional on at least $\frac{1}{2}\cdot c\ln n$ neighbors of $v$ arriving before $t_v$, fix any possible set of pivots $P$ also arriving before $t_v$. 
Recall, we would like to bound the probability that both $v$ becomes a pivot and $v$ creates a large pivot cluster, $p_n(v)>\frac{cn\ln n}{t_v}$.
Let this event be denoted $E_v$.
We will do this by bounding 
\[ \Pr[ E_v ~ | ~ P \mbox{ fixed and } X\geq \frac{1}{2}\cdot c\ln n] \]
for all possible $P$ for which $X\geq \frac{1}{2}\cdot c\ln n$.

Some pivots sets $P$ easily result in this probability being zero.
For example, if $P$ and $N(v)$ have any overlap, then $v$ will not become a pivot. 
Additionally, for any fixed pivot set $P$ let $N_P(v)$ be the set of neighbors of $v$ which are not neighbors of any vertices in $P$.
Clearly, $N_P(v)\subseteq N(v)$ and $p_n(v) \leq |N_P(v)|$.
So any set $P$ which results in $|N_P(v)|< \frac{cn\ln n}{t_v}$ will also result in the above probability being zero.
Therefore assume that $P\cap N(v)=\emptyset$ and $|N_P(v)|\geq \frac{cn\ln n}{t_v}$.

Event $E_v$ can only happen if $v$ becomes a pivot. So instead of bounding the probability of $E_v$ in this case, we only bound the probability that no vertices from $N_P(v)$ arrive before $t_v$.
Let $Y$ be the random variable indicating the number of vertices from $N_P(v)$ that arrive before $t_v$.
Note that, conditional on a fixed pivot set $P$, vertices from $N_P(v)$ are more likely to arrive before $t_v$ than vertices from $N(v)\setminus N_P(v)$, because vertices from $N(v)\setminus N_P(v)$ are restricted to arrive after some pivot which they are a neighbor of. 
So, treating vertices $N_P(v)$ and $N(v)\setminus N_P(v)$ provides a bound on the probability that vertices from $N_P(v)$ arrive before $t_v$. This allows us to derive
\begin{align*}
    \Pr[\mbox{no vertex from } N_P(v)\mbox{ arrives before }t_v] & \leq \frac{ { |N(v)|-|N_P(v)| \choose X} }{ {|N(v)| \choose X} } \\
    & \leq \left( 1 - \frac{|N_P(v)|}{|N(v)|} \right)^X \\
    &\leq \left( 1 - \frac{cn\ln n}{t_v |N(v)|} \right)^{\frac{t_v|N(v)|}{2n}} \\
    & \leq e^{-c\ln n/2} = \frac{1}{n^{c/2}} .
\end{align*}

In total, we derive $\Pr[ E_v ] \leq \frac{1}{n^{c/8}}+\frac{1}{n^{c/2}} \leq \frac{2}{n^{c/8}}$, completing the proof of the lemma.
\end{proof}

\begin{proof}[Proof of Lemma \ref{lem:projected_load}]
To prove \Cref{lem:projected_load}, we will first prove the following: For any vertex $v$ that arrives at time $t_v$, for any time $t>t_v$, and for any constant $c>0$, the probability that both $v$ is chosen as a pivot and $p_n(v) \geq \frac{4np_t(v)}{t}+ \frac{2cn\ln n}{t}+1$ is no more than $\frac{2}{n^{c/8}}$.  We will then apply a union bound to find the desired result.

We consider two cases, namely when $t \leq 2t_v$ and the case when $t>2t_v$.  First, suppose $t\le 2t_v$. 
Then by Lemma \ref{lem:projloadattime}, with probability at least $1-\frac{2}{n^{c/8}}$ either $v$ is not a pivot or the size of $v$'s pivot cluster is bounded by $\leq \frac{cn\ln n}{t_v} \leq \frac{2cn\ln n}{t}$.

Now suppose $t > 2t_v$. 
Fix any arbitrary set of pivots $P$ which arrive before time $t_v$, and let $N_P(v)$ denote the set of vertices which are neighbors of $v$ but not neighbors of any vertices in $P$.
If $|N_P(v)|\leq \frac{2cn\ln n}{t}$, then the Lemma claim is trivially true because $|N_P(v)| \geq p_n(v)$. 
So let us assume that $|N_P(v)| > \frac{2cn\ln n}{t}$.

Let $X$ be the random variable denoting the number of vertices from $N_P(v)$ which arrive before time $t$, with pivots $P$ fixed.
Then vertices which are neighbors of pivots in $P$ are less likely to arrive before time $t$, since they are constrained to arrive after a pivot they are adjacent to.
This allows us to conclude that the distribution of $X$ is dominated below by $\mathrm{Hypergeometric}(n-|P|, |N(v)|, t-|P|)$).
Then 
\[
    \E[X] \geq (t-|P|)\cdot\frac{|N_P(v)|}{n-|P|} \geq (t-t_v)\cdot \frac{|N_P(v)|}{n} \geq \frac{t}{2}\cdot \frac{|N_P(v)|}{n},
\]
and we can apply a Chernoff bound (Facts \ref{lem:chernoff} and \ref{lem:chernoff-hypergeometric}) to show that 
\begin{align*}
    \Pr\left[ X \leq \frac{1}{2}\cdot\frac{t\cdot |N_P(v)|}{2n} \right] \leq e^{-\frac{1}{8}\cdot\frac{t|N_P(v)|}{2n}} . 
\end{align*}
Because we assumed  $|N_P(v)|\geq \frac{2cn\ln n}{t}$, we can upper bound the above probability by
\[ \Pr\left[ X \leq \frac{1}{2}\cdot\frac{t\cdot |N_P(v)|}{2n} \right] \leq e^{-c\ln n/8} = \frac{1}{n^{c/8}} . \]
Then with probability at least $1-\frac{1}{n^{c/8}}$, either $v$ is not a pivot, or $p_t(v) \geq \frac{t}{4n}\cdot |N_P(v)|$. 
Note that $|N_P(v)|\geq p_n(v)$ by definition.
Therefore, we can derive that $p_n(v)\leq \frac{4n p_t(v)}{t}$, and thus our Lemma claim holds.

Altogether, the probability that both $v$ is chosen as a pivot and its pivot cluster becomes size larger than $\frac{4np_t(v)}{t}+ \frac{2cn\ln n}{t}+1$ is no more than $\frac{2}{n^{c/8}}$ in Case 1 (when $t\leq 2t_v$), and no more than $\frac{1}{n^{c/8}}$ in Case 2 (when $t\geq 2t_v$). 

Then by the union bound, given any timestep $t$ the probability that in some random ordering $\sigma$ there is a vertex $v$ which arrives at some time $t_v<t$ for which both $v$ is chosen as a pivot and $p_n(v)\geq \frac{4np_t(v)}{t}+ \frac{2cn\ln n}{t}+1$ is no more than $\frac{1}{n^{c/8}}\cdot n^2 = n^{2-\frac{c}{8}}$.
Notably for constants $c>16$, this probability is polynomially small in $n$.

\end{proof}

\begin{lemma}[Pivot Cluster Size at an Arbitrary Instant]
\label{lem:pivot cluster size at time t}
    Let $v$ be a pivot that arrives at time $t_v$.  Then, for any time $t \ge t_v$, the probability that $p_t(v)$ exceeds $(1 + c)p_n(v)t/n + c \ln n$ is at most $n^{-c^2/3}$.
\end{lemma}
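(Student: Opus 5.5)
We condition on the final pivot cluster and treat the count $p_t(v)$ as a hypergeometric-type quantity. Let $C_v$ denote the final pivot cluster of $v$, so $|C_v| = p_n(v)$. Then $p_t(v)$ is exactly the number of vertices of $C_v$ that have arrived by time $t$. If $C_v$ were a fixed set independent of the arrival order, then $p_t(v)$ would be distributed as $\mathrm{Hypergeometric}(n, p_n(v), t)$ with mean $\mu_0 = p_n(v)t/n$. Facts~\ref{lem:chernoff} and~\ref{lem:chernoff-hypergeometric} would then finish the argument. The issue is that $C_v$ depends on $\sigma$.

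\textbf{Step 1 (decoupling).} Since $C_v$ depends on $\sigma$, I would condition on the relevant history: the identity of $v$, its arrival time $t_v$, the set of pivots before $t_v$, and hence the candidate set $N_P(v)$ as in the proof of Lemma~\ref{lem:projected_load}. A vertex of $N_P(v)$ joins $C_v$ iff it is not adjacent to an earlier pivot arriving between $t_v$ and its own arrival. Exact exchangeability is lost, so I would aim for a domination statement in the direction we need (an upper tail on $p_t(v)$). I would argue that, conditioned on the final set $C_v$, the members of $C_v \setminus \{v\}$ are no more likely to arrive early than uniformly random positions among the remaining $n-t_v$ slots. The intuition is that joining $v$'s cluster (rather than a later pivot's) is, if anything, correlated with arriving early, which is the wrong direction.

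The cleaner route I would therefore commit to is to avoid conditioning on $C_v$ and instead work with a fixed superset. Fix $S = N_P(v)\cup\{v\}$ and count arrivals of $S$ by time $t$; this is hypergeometric given the history up to $t_v$. This gives a bound in terms of $|S|$ rather than $p_n(v)$, so it is useful only when $|S| \approx p_n(v)$. To handle that gap, I would apply a union bound over thresholds $m$. For each candidate value $m$ of $p_n(v)$, the event ``$p_n(v) \le m$ and $p_t(v) > (1+c)mt/n + c\ln n$'' is bounded via a coupling: reveal arrivals in order and stop once $m$ members have joined. The count of cluster members among the first $t$ slots is then stochastically dominated by a hypergeometric with $m$ marked items. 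The justification is that each later position is a uniformly random remaining vertex, and membership is decided at arrival.

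\textbf{Step 2 (Chernoff).} With mean $\mu \le mt/n$, I would use the additive slack $c\ln n$ to cover small $\mu$: $\Pr[X \ge (1+c)\mu + c\ln n] \le \exp(-c^2\mu'/3)$, where $\mu' = \max(\mu, \ln n)$ after padding the mean up to $\ln n$ (monotonicity of the tail). This gives at most $n^{-c^2/3}$ when $c \le 1$. For larger $c$, use the form $\exp(-\delta\mu/3)$, or simply note that the $c\ln n$ term suffices. The union over $m$ costs a factor $n$, which I would absorb by slightly adjusting constants or by applying the bound only at $m = p_n(v)$ via the stopping-time argument.

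\textbf{Main obstacle.} The main obstacle is Step 1: rigorously justifying that the dependence between $\sigma$ and cluster membership does not inflate the early-arrival count. I would first try a martingale/sequential argument. At each time $s \in (t_v, t]$, the probability that the arriving vertex joins $C_v$ is at most (remaining eventual members)/(remaining vertices). I would then apply a Freedman- or Chernoff-type bound for such sequentially dominated indicators, which yields the stated $n^{-c^2/3}$ tail.
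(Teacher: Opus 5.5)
You have a genuine gap exactly at the step you call the main obstacle, and it comes from misreading the \Pivot\ rule. You assert that a vertex of $N_P(v)$ joins $C_v$ iff it is not adjacent to a pivot arriving between $t_v$ and its own arrival. In other words, you assume later pivots can take $v$'s neighbors. They cannot: an arriving vertex joins its \emph{earliest-arrived} pivot neighbor, and every pivot arriving after $t_v$ is later than $v$. Hence, once the arrival prefix up to $t_v$ is fixed (with $v$ a pivot), the cluster is already determined: $C_v=\{v\}\cup N_P(v)$. Moreover, no vertex of $N_P(v)$ has arrived yet, since such a vertex would have had no earlier pivot neighbor and would itself be a pivot adjacent to $v$.

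Without this observation, Step~1 is never closed, and the devices you offer for the perceived dependence are not sound as justified. In the model you describe, the sequential bound $\Pr[\text{next arrival joins } C_v]\le(\text{remaining eventual members})/(\text{remaining vertices})$ actually goes the other way, because the currently unblocked part of $N_P(v)$ contains all eventual members. Likewise, the ``stop once $m$ members have joined'' domination fails when membership depends on the post-$t_v$ order, because members can then be front-loaded. Under such a rule the lemma could fail outright. So the fixedness of $C_v$ is the real content of the proof, not a technicality.

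With the correct rule, your own ``cleaner route'' is already the paper's proof. The fixed set $S=N_P(v)\cup\{v\}$ \emph{equals} $C_v$, so $|S|=p_n(v)$ and no union over thresholds $m$ is needed. Conditioned on the prefix, the remaining order is uniform, so $p_t(v)-1\sim\mathrm{Hypergeometric}(n-t_v,\,p_n(v)-1,\,t-t_v)$ with mean at most $p_n(v)t/n$. A Chernoff bound then finishes.

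Your Step~2 caution is warranted but unresolved: for $c>1$, neither ``use $\exp(-\delta\mu/3)$'' nor ``the $c\ln n$ term suffices'' gives $n^{-c^2/3}$. The paper reaches that exponent by applying Fact~\ref{lem:chernoff} with $\delta=c\sqrt{n\ln n/(p_n(v)t)}$. This $\delta$ can exceed $1$, where the valid bound is only $\exp(-\delta^2\mu/(2+\delta))$; with $\delta=c+c\ln n/\mu$ this yields a tail of order $n^{-c^2/(c+2)}$. For large constant $c$ the $c^2/3$ exponent is in fact unattainable: when $p_n(v)t/n=\ln n$, the true tail is $n^{-\Theta(c\log c)}$. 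The weaker bound is still polynomially small and suffices for every later use.
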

\begin{proof}
  Fix any vertex $v$ and time $t_v$ and any sequence of random vertex arrivals until time $t_v$ that results in $v$ becoming a pivot.  This fixes the pivot cluster $C_v$ for $v$ of size $p_n(v)$.  Conditioned on all the random choices leading to this event, for any $t \ge t_v$, the distribution of $p_{t}(v)$ is given by
$1+\mathrm{Hypergeometric}(n-t_v, p_n(v)-1, t-t_v)$, since there are $n - t_v$ vertices remaining overall, of which $p_n(v)-1$ are in $C_v \setminus \{v\}$.  Now, $1+\mathrm{Hypergeometric}(n-t_v, p_n(v)-1, t-t_v)$  is stochastically dominated by 
 $1+\mathrm{Hypergeometric}(n, p_n(v), t)$. So, we can apply the Chernoff Bound (Facts \ref{lem:chernoff} and \ref{lem:chernoff-hypergeometric}). Setting $\delta = c\sqrt{\frac{n\ln n}{p_n(v)t}}$, we obtain \begin{align*}
    \Pr\left[p_{t}(v) > (1+\delta)p_n(v) \cdot \frac{t}{n} \right] & \leq \exp\left(-\frac{\delta^2 p_n(v) t}{3n}\right) \\
    & \leq \exp\left( -\frac{n c^2\ln n }{p_n(v)t}\cdot\frac{p_n(v) t}{3n} \right) \\
    & = e^{-\frac{c^2}{3}\ln n} = n^{-c^2/3} .
\end{align*}

Therefore, with probability at least $1 - n^{-c^2/3}$, 
\[ p_{t}(v)\leq \left(1 + c\sqrt{\frac{n\ln n}{p_n(v)t}}\right) \cdot \frac{p_n(v)t}{n} = \frac{p_n(v)t}{n} + c\sqrt{\ln n \cdot \frac{p_n(v)t}{n}} \] 

We consider two cases, namely when $\frac{p_n(v) t}{n} \leq \ln(n)$, and $\frac{p_n(v)t}{n} > \ln(n)$.

If $\frac{p_n(v)t}{n} \leq \ln(n)$, then $c\sqrt{\ln n \cdot \frac{p_n(v)t}{n}} \leq c\ln n$. Thus, with probability at least $1 - n^{-c^2/3}$, $p_{t}(v)\leq \frac{p_n(v)t}{n} + c\ln n$.

On the other hand, if $\frac{p_n(v)t}{n} > \ln(n)$, then
$c\sqrt{\ln n \cdot \frac{p_n(v)t}{n}} \leq c\cdot \frac{p_n(v)t}{n}$. 
Thus, with probability at least $1 - n^{-c^2/3}$, $p_{t}(v_i)\leq(1+c)\cdot \frac{p_n(v)t}{n}$ vertices at time $t$.

It follows that with probability at least $1 - n^{-c^2/3}$, the size of $v$'s pivot cluster at time $t$ satisfies
\begin{eqnarray}
    p_{t_j}(v)\leq (1+c)\cdot \frac{p_n(v)t}{n} + c\ln n \label{eqn:first_cluster}
\end{eqnarray}
  
\end{proof}

\begin{lemma}[Least Loaded Cluster Has Bounded Projected Load]\label{lem:proj_load_of_least_load}
    Let $S$ be an arbitrary set of clusters and let $\mpl{S}$ denote the maximum projected load among all clusters in $S$ at time $\Tau$. Let $U$ denote the set of uncovered vertices that arrive after time $\Tau$. Then, for any time $t\geq \Tau$, the projected load of the least loaded cluster is $\bigo\left(\mu(S)\ln n +\frac{n\ln^2 n}{\Tau} + \frac{|U|\ln n}{|S|}\right)$ with high probability.
    \label{lem:projected_load_least_loaded}
\end{lemma}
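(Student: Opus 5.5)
We plan to bound the projected load of the cluster that \BP\ picks as least loaded at time $t\ge\Tau$. The cluster's projected load is the sum of the final sizes $p_n(v)$ of the pivots it contains. We split these pivots into two groups: the single pivot placed there by time $\Tau$, and the pivots placed after $\Tau$.

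The first step handles the pivots placed by time $\Tau$. Up to time $\Tau$, \BP\ behaves exactly like \Pivot, except that each of the first $k$ pivots opens its own cluster. So at time $\Tau$ every cluster contains exactly one pivot. By Lemma~\ref{lem:projected_load} applied at time $\Tau$, whp every such pivot $v$ satisfies
\[
p_n(v)\le \frac{4n\,p_\Tau(v)}{\Tau}+\frac{2cn\ln n}{\Tau}+1 .
\]
Each pivot arriving after $\Tau$ arrives at a time $t_v>\Tau$. By Lemma~\ref{lem:projloadattime} and a union bound, whp each such pivot has $p_n(v)=\bigo(n\ln n/\Tau)$. Every pivot placed after $\Tau$ is a vertex of $U$, and its pivot cluster is contained in $U$.

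The second step, which I expect to be the main obstacle, converts \emph{current} loads into \emph{projected} loads. The algorithm balances current loads, but we need control of projected loads. The idea is to compare the least loaded cluster $C^*$ with the clusters in $S$. At time $t$, the current load of $C^*$ is at most the average current load over $S$. By Lemma~\ref{lem:pivot cluster size at time t}, the current load of each cluster in $S$ is at most
\[
\sum_{v}\Bigl((1+c)\,p_n(v)\,\frac{t}{n}+c\ln n\Bigr).
\]
Summing over $S$ gives at most $(1+c)\frac{t}{n}\bigl(|S|\mu(S)+|U|\bigr)$ plus $c\ln n$ times the number of pivots. The pivot count is a nuisance; I would bound it by charging at most one post-$\Tau$ pivot per unit of load.

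We then need the reverse direction: current load lower-bounds projected load, at least for pivots that are already large. By Lemma~\ref{lem:projected_load} at time $t$, each pivot $v$ in $C^*$ has
\[
p_n(v)\le \frac{4n\,p_t(v)}{t}+\bigo\!\Bigl(\frac{n\ln n}{t}\Bigr).
\]
Summing over the pivots of $C^*$ gives
\[
\mathrm{proj}(C^*)\le \frac{4n}{t}\,\mathrm{cur}(C^*)+(\#\text{pivots in }C^*)\cdot\bigo\!\Bigl(\frac{n\ln n}{t}\Bigr).
\]
Combining this with the averaging bound, the factors $t/n$ and $n/t$ cancel. This yields $\bigo(\mu(S)+|U|/|S|)$, with a $\ln n$ loss coming from the additive $c\ln n$ terms.

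The third step controls the number of pivots in $C^*$, which is the delicate part. I would argue that a cluster receives a new pivot only when it is least loaded. Each post-$\Tau$ pivot has projected size $\bigo(n\ln n/\Tau)$, so the additive error terms contribute $\bigo(n\ln^2 n/\Tau)$ in total, after a dyadic grouping of arrival times into $\bigo(\ln n)$ scales. If this charging fails, the fallback is to apply the bound to the last time $C^*$ received a pivot and use monotonicity. Finally, we take a union bound over all $t$ and all pivots, and choose $c$ large so that everything holds whp.
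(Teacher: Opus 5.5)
Your Steps 1--2 follow the paper's route. You bound the current load of the least loaded cluster $C^*$ by comparison with the clusters of $S$ (via \Cref{lem:pivot cluster size at time t}), and then convert current load into projected load by applying \Cref{lem:projected_load} at time $t$ to every pivot of $C^*$.

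The gap is Step~3. After that conversion you must control $(\#\text{pivots in } C^*)\cdot\bigo(n\ln n/t)$, and your charging does not do it:
\begin{itemize}
\item Knowing $p_n(v)=\bigo(n\ln n/\Tau)$ for each post-$\Tau$ pivot bounds each pivot's contribution, not the number of pivots.
\item \BP\ balances \emph{current} loads. That by itself does not stop $C^*$ from holding many pivots with small clusters.
\item Neither the dyadic grouping nor the ``last time $C^*$ received a pivot'' fallback gives the count $\bigo(t\ln n/\Tau)$ that your claimed $\bigo(n\ln^2 n/\Tau)$ would need.
\end{itemize}
The missing observation is elementary: every pivot of $C^*$ is itself an arrived vertex of $C^*$, so the number of pivots is at most the current load $x$ of $C^*$. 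This gives $\mathrm{proj}(C^*)\le \frac{4nx}{t}+x\bigl(\frac{2cn\ln n}{t}+1\bigr)=\bigo\bigl(\frac{n\ln n}{t}\,x\bigr)$. The additive errors collapse into a single multiplicative $\ln n$, which is exactly the slack the target allows. This is what the paper does, and no stronger count bound is needed.

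A second, smaller hole is in your averaging step. ``Charging one post-$\Tau$ pivot per unit of load'' to absorb the $c\ln n$-per-pivot terms is circular, since those loads are what you are bounding. Instead, do not apply \Cref{lem:pivot cluster size at time t} to post-$\Tau$ pivots at all:
\begin{itemize}
\item Their pivot clusters lie inside $U$. So the post-$\Tau$ part of the loads in $S$ totals at most $X_t$, the number of $U$-vertices arrived by time $t$.
\item Hence some cluster of $S$, and therefore $C^*$, has $x\le(1+c)\mu(S)t/n+c\ln n+X_t/|S|$.
\item Given the arrivals that determine $U$, $X_t$ is hypergeometric with mean at most about $|U|t/n$. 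A Chernoff bound then gives $X_t\le(1+c)(|U|t/n+\ln n)$ whp.
\end{itemize}
Substituting into $\bigo(\frac{n\ln n}{t}x)$ and using $t\ge\Tau$ yields $\bigo\bigl(\mu(S)\ln n+\frac{n\ln^2 n}{\Tau}+\frac{|U|\ln n}{|S|}\bigr)$.
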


\begin{proof}
The proof of this lemma proceeds in the following steps.  Using Lemma~\ref{lem:pivot cluster size at time t}, we first upper bound, for any $t>\Tau$, the size at time $t$ of any\textit{pivot} cluster which is assigned to one of the clusters in $S$, with high probability. This allows us to upper bound the load of any cluster in $S$ with high probability by considering the sum of the projected sizes of pivot clusters assigned to any cluster. By bounding the number of uncovered vertices that arrive by time $t$, and observing that the least loaded cluster must have no more than a $\frac{1}{|S|}$ fraction of uncovered vertices assigned to it, we obtain our final bound.  

Recall that $\Tau$ is the last time-step before the $(k+1)^{th}$ pivot vertex arrives. Let $U$ denote the set of uncovered vertices at time $\Tau$, and let $t>\Tau$ be an arbitrary time step. For any cluster $S_i\in S$, and time $t$, let $L_i^t$ denote the projected load of cluster $S_i$ at time $t$.  Recall that $p_{t}(v)$ denotes the size of $v$'s pivot cluster at time $t$.  By Lemma~\ref{lem:pivot cluster size at time t}, we obtain that with probability at least $1 - n^{-c^2/3}$, for any pivot $v$, $p_t(v)$ is at most $(1+c)\cdot \frac{p_n(v)t}{n} + c\ln n$.  Since $\mu(S)$ denotes the maximum projected size of any pivot cluster in $S$ at time $\Tau$, we obtain that any pivot cluster that arrived by time $\Tau$ has size at time $t \ge \Tau$ at most $(1+c)\cdot \frac{\mu(S)t}{n} + c\ln n$.

\noindent \textbf{Bounding the projected load of the least loaded cluster at time $t$.}
We first observe that the projected load of any cluster at any time $t$ can be expressed as a sum of the projected sizes of pivot clusters that are assigned to the cluster at time $t$.   We utilize Lemma \ref{lem:projected_load} which upper bounds the projected size of any pivot cluster as a function of the size of the pivot cluster at an arbitrary time step $t$. More precisely, with probability at least $1- n^{2-\frac{c}{8}}$, any pivot cluster with size $x$ at time $t$ has a projected size that is upper bounded by $\frac{4nx}{t} + \frac{2cn\ln n}{t} + 1$.  Therefore, with probability at least $1- n^{2-\frac{c}{8}}$, any cluster with $x$ vertices at time $t$ has a projected load at most $\frac{4nx}{t} + \frac{2cxn\ln n}{t} + 1 \le \frac{4cx n \ln n}{t}$ whp (assuming $c \ge 4$ and $n \ge 2$), since the number of pivots in a cluster at time $t$ is at most the number of vertices in the cluster at time $t$.

We now bound the number of vertices in the least loaded cluster at time $t$ by the number of vertices in the least loaded cluster in $S$ at time $t$. Let $X_t$ denote the random variable denoting the number of uncovered vertices that arrive from $U$ by time $t$. 
Since \BP\ assigns uncovered vertices to the least loaded cluster, and breaks ties arbitrarily, it follows that at least one cluster in $S$ has no more than $\lfloor \frac{X_t}{|S|} \rfloor < \frac{X_t}{|S|}$ vertices from $U$ assigned to it by time $t$.  Adding the bound on the size of the first pivot cluster from Equation~\ref{eqn:first_cluster} in \Cref{lem:pivot cluster size at time t}, we obtain that the number of vertices in the least loaded cluster in $S$ at time $t$ is at most
\[
(1+c)\cdot \frac{t\mpl{S}}{n} + c\ln n + \frac{X_t}{|S|},
\]
with probability at least $1 - n^{-c^2/3}$; note that this is also an upper bound on the number of vertices in the least loaded cluster overall, at time $t$.  Utilizing Lemma \ref{lem:projected_load} for the projected size of pivot clusters, we obtain that the projected load of the least loaded cluster at time $t_j$ is at most
\begin{eqnarray*}
& & \frac{4cn\ln n}{t}\left( (1+c)\cdot \frac{t\mpl{S}}{n} + c\ln n + \frac{X_t}{|S|}\right)\\
&=&4(1+c)\mpl{S}\ln n + \frac{4cn\ln^2 n}{t} + \frac{4nX_t\ln n}{t\cdot |S|}
\end{eqnarray*}
with probability at least $1-n^{2-c/8}-n^{-c^2/8}$. 

Next, we obtain an upper bound on $X_t$. Note that $\E[X_t] = \frac{|U|t}{n}$. Moreover $X_t$ follows a hypergeometric distribution, so invoking the Chernoff bound (Fact \ref{lem:chernoff}) and setting $\delta = c\sqrt{\frac{\ln n}{E[X_t]}}$, we derive
\[ \Pr[X_t>(1+\delta)E[X_t] \leq e^{-\delta^2\E[X_t]/3} = e^{-c^2\ln n/3} = \frac{1}{n^{c^2/3}}.\] 
Thus, with probability at least $1-\frac{1}{n^{c^2/3}}$, $X_t\leq \frac{|U|t}{n} + c \sqrt{\frac{|U|t\ln n}{n}}$. We analyze two cases: i) if $\frac{|U|t}{n}\geq \ln n$, then $X_t\leq \frac{|U|t}{n}(1+c)$, and ii) if $\frac{|U|t}{n}\leq \ln n$, then $X_t\leq \ln n(1+c)$.

Substituting into the above and noting that $|S|\geq 1$, we obtain that with probability at least $1-n^{2-c/8}-n^{-c^2/8}-n^{-c^2/3}$, 
\begin{eqnarray*}
& & 4(1+c)\mpl{S} \ln n + \frac{4cn\ln^2 n}{t} + \frac{4nX_t\ln n}{t|S|}\\
& \leq & 4(1+c)\mpl{S} \ln n + \frac{4cn\ln^2 n}{t} + \frac{4n (1+c)\ln^2 n} {t\cdot |S|} + \frac{4n\ln n}{t|S|}\frac{|U|t}{n}(1+c) \\
& = & 4(1+c)\mpl{S} \ln n+ \frac{4cn\ln^2 n}{t} +\frac{4(1+c)n\ln^2 n}{t} + \frac{4(1+c)|U|\ln n}{|S|}\\
& \leq & 4(1+c)\left(\mu(S)\ln n + \frac{n\ln^2 n}{t} +\frac{|U|\ln n}{|S|}\right) .
\end{eqnarray*}
Noting that $t>\Tau$ and setting $c$ to be a constant greater than $16$, then with high probability at least $1-\frac{n^2 + 2}{n^{c/8}}$, we obtain that the projected load of the least loaded cluster at any time $t\ge \Tau$ is 
\[ O\left(\mu(S)\ln n+\frac{n\ln^2 n}{\Tau} + \frac{|U|\ln n}{|S|}\right). \] 

\end{proof}

\begin{lemma}[Pivot Clusters with Many Early Arrivals Remain Large]
\label{lem:piv-cluster-projected-load}
    Let $c>6$ be any real. Then the probability that any pivot cluster has at least $c \ln n$ vertices at time $2^i$, yet has a final size of less than $\frac{cn\ln n}{2^{i+1}}$ is no more than $\frac{n \log_2 n}{n^{c/6}}$.
\end{lemma}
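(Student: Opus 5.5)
We fix a single candidate pivot cluster, bound its failure probability by a Chernoff argument on the hypergeometric count of its members arriving by time $2^i$, and then union bound over at most $n$ pivots and $\log_2 n$ scales. The difficulty is that the pivot cluster $C_v$ of a pivot $v$ is a random set, determined by the arrivals before $t_v$. As in the proof of \Cref{lem:pivot cluster size at time t}, we therefore condition on the arrival prefix up to and including $t_v$ that makes $v$ a pivot. This fixes $C_v$, so $p_n(v)=|C_v|$ is determined. Given this prefix, the remaining $n-t_v$ vertices arrive in a uniformly random order. Hence
\[
p_{2^i}(v)-1 \sim \mathrm{Hypergeometric}\bigl(n-t_v,\; p_n(v)-1,\; 2^i-t_v\bigr)
\]
when $t_v\le 2^i$. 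If $t_v>2^i$, then $p_{2^i}(v)=0$ and there is nothing to prove.

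\textbf{Key step: Chernoff for a fixed pivot and scale.} Suppose $p_n(v)<\frac{cn\ln n}{2^{i+1}}$, and write $\mu$ for the mean of the hypergeometric above. Then
\[
\mu \le \frac{(p_n(v)-1)(2^i-t_v)}{n-t_v} \le \frac{p_n(v)\,2^i}{n} < \frac{c\ln n}{2}.
\]
The event we must rule out is $p_{2^i}(v)\ge c\ln n$, i.e.\ the hypergeometric count is at least $c\ln n-1$, which is roughly twice an upper bound on its mean. We apply the upper Chernoff tail (Facts~\ref{lem:chernoff} and \ref{lem:chernoff-hypergeometric}). Since the mean may be much smaller than $\frac{c\ln n}{2}$, we use monotonicity and compare against the hypergeometric/binomial variable whose mean is exactly $\frac{c\ln n}{2}$, taking $\delta\approx 1$. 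This gives
\[
\exp\!\bigl(-\delta^2\cdot \tfrac{c\ln n}{2}/3\bigr)=n^{-c/6}.
\]
This is exactly the exponent $c/6$ in the statement, which confirms this is the intended route. The $-1$ from counting $v$ itself is absorbed by a slight adjustment of $\delta$, or by noting that $v$ is also counted in $p_n(v)$. The hypothesis $c>6$ makes the final bound nontrivial.

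\textbf{Union bound.} Because the bound holds conditionally on every prefix making $v$ a pivot, it also holds unconditionally for the event ``$v$ is a pivot and the bad event occurs.'' Summing over the at most $n$ possible pivots $v$ and the at most $\log_2 n$ values of $i$ gives $\frac{n\log_2 n}{n^{c/6}}$.

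\textbf{Main obstacle.} The delicate point is justifying that, after conditioning on $v$ becoming a pivot at time $t_v$, the arrivals of $C_v\setminus\{v\}$ are exactly hypergeometric. One must check that membership in $C_v$ depends only on pivots arriving before $t_v$. This holds because a later vertex joins $v$ precisely when it is adjacent to $v$ and to no earlier pivot, and this condition is determined by the prefix. Membership is therefore unaffected by the order of later arrivals. The second subtlety is the Chernoff step when the mean is much smaller than $\frac{c\ln n}{2}$. If monotone domination feels shaky, I would instead use the tail bound $\Pr[X\ge a]\le 2^{-a}$ for $a\ge 2e\mu$, which applies whenever $\mu\le \frac{c\ln n}{2e}$. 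I would treat the range $\frac{c\ln n}{2e}<\mu<\frac{c\ln n}{2}$ directly with $\delta\ge 1$, and accept slightly worse constants there if needed.
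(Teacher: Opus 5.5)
Your proposal is correct and follows the same route as the paper. Both arguments apply a Chernoff upper-tail bound to the number of members of $v$'s pivot cluster that arrive by time $2^i$, whose mean is below $\tfrac{c}{2}\ln n$ whenever the final size is below $\tfrac{cn\ln n}{2^{i+1}}$, obtain $n^{-c/6}$, and then union bound over the $n$ possible pivots and the $\log_2 n$ scales. Your explicit conditioning on the arrival prefix through $t_v$, which fixes the cluster, and your comparison at $\delta\approx 1$ make rigorous two points the paper glosses over: it treats $N'(v)$ as fixed, and it uses the $\delta^2\mu/3$ form with $\delta>1$. The off-by-one from counting $v$ costs only a constant factor, and the paper's own argument has the same slack.
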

\begin{proof}
Let $v$ be a pivot that arrives at time $t_v \leq 2^i$.
Let $N'(v)$ be the set of neighbors of $v$ that will eventually join $v$'s cluster. The projected load of the cluster is $|N'(v)| + 1$. 
Let $Y$ denote the number of vertices from $N'(v)$ that have arrived by time $2^i$.
Since vertices arrive uniformly at random, we have
\[
\mathbb{E}[Y] = |N'(v)| \cdot \frac{2^i}{n} .
\]
Suppose $|N'(v)| < \frac{1}{2} \cdot \frac{c n \ln n}{2^i}$.  Then, $\E[Y] = |N'(v)| \cdot \frac{2^i}{n} < \frac{1}{2} c \ln n$.
By a Chernoff bound (the first part of Fact \ref{lem:chernoff}), setting $\delta = (c \ln n/\E[Y]) - 1$ we obtain
\[ \Pr[Y \ge c \ln n] \leq \exp \left(-\frac{\delta^2 \E[Y]}{3}\right) = \exp \left(-\frac{(c \ln n - \E[Y])^2}{3\E[Y]}\right) \leq \exp \left(-\frac{c \ln n}{6}\right) = n^{-c/6}.\]

Taking a union bound over all $n$ possible pivot vertices and all $\log_2 n$ possible values of $i$,
\[
\Pr[\text{any violation}] \leq n \cdot \log_2 n \cdot n^{-c/6} = \frac{n \log_2 n}{n^{c/6}} .
\]
\end{proof}

\subsubsection{Upper Bound on Uncovered Vertices at Time $\Tau$}
\label{sec:online.upper.analysis.uncovered}
\begin{lemma}[Small Sets Have Few Early Arrivals]
\label{lem:Y and a(I)}
Let $c>0$ be any constant. For any integers $i\geq 0$ and $m \le \frac{n}{2^i}$, let $Y$ be any subset of vertices of size $c m \log n$.  Then,
\[
\Pr\left[\mbox{at least two vertices from } Y \mbox{ arrive by time } 2^{i+1}\right] \leq \frac{2c^2 m^2 \log^2 n \cdot 4^i}{n^2} = \bigo\left(\frac{m^2 4^{i} \log^2 n}{n^2}\right).
\]
\end{lemma}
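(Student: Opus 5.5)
The plan is a direct union bound over unordered pairs of vertices in $Y$. Set $s = |Y| = c m \log n$ and $T = 2^{i+1}$. If $T \ge n$, the claimed bound is at least $2c^2 m^2\log^2 n \cdot 4^i/n^2 \ge 2c^2\log^2 n\cdot(m 2^i/n)^2$. Here I first dispose of the degenerate regime: if $m 2^i/n$ is small the statement may be vacuous or require care. The main line, however, is the case $T \le n$, where the random permutation structure is used directly.

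Fix an unordered pair $\{u,w\}\subseteq Y$ with $u\neq w$. Under a uniformly random arrival order, the pair of arrival positions of $u$ and $w$ is a uniformly random ordered pair of distinct positions in $[n]$. Hence
\[
\Pr[u \text{ and } w \text{ both arrive by time } T] = \frac{T}{n}\cdot\frac{T-1}{n-1} \le \frac{T^2}{n^2},
\]
where the last inequality uses $T\le n$ (so $(T-1)/(n-1)\le T/n$). The event that at least two vertices of $Y$ arrive by time $T$ is the union over the $\binom{s}{2}\le s^2/2$ pairs of these events. A union bound therefore gives
\[
\Pr[\ge 2 \text{ from } Y \text{ by time } 2^{i+1}] \le \frac{s^2}{2}\cdot\frac{4^{i+1}}{n^2} = \frac{2 c^2 m^2 \log^2 n \cdot 4^{i}}{n^2},
\]
which is exactly the claimed bound. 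The $\bigo(\cdot)$ form follows immediately since $c$ is a constant.

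Equivalently, one can phrase this through the hypergeometric variable $X\sim\mathrm{Hypergeometric}(n,s,T)$ and bound $\Pr[X\ge 2]\le \E[\binom{X}{2}] = \binom{s}{2}\frac{T(T-1)}{n(n-1)}$ (second factorial moment / Markov). I would present it this way to match the paper's use of Fact~\ref{lem:chernoff-hypergeometric}.

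The only real obstacle is the boundary case $T=2^{i+1}>n$, which can occur for the top value of $i$. There the bound $(T-1)/(n-1)\le T/n$ fails. However, the hypothesis $m\le n/2^i$ together with the trivial bound of $1$ on any probability should still give the result: whenever the right-hand side is at least $1$ the claim is trivial, and otherwise I would simply replace $T$ by $\min(T,n)$, which only decreases the probability. The intermediate regime also needs checking: there $T>n$ but the right-hand side is below $1$. In that regime, the probability that both vertices of a pair arrive by time $\min(T,n)$ is at most $1\le T^2/n^2$, so the union bound still gives the stated expression. With that observation the argument goes through uniformly, and no concentration machinery is needed.
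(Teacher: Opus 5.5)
Your proof is correct and is the same as the paper's: a union bound over the $\binom{cm\log n}{2}$ pairs in $Y$, each of which has both endpoints arrive by time $2^{i+1}$ with probability at most $(2^{i+1}/n)^2$. Your extra case analysis for $2^{i+1}>n$ is sound but more than needed, since the per-pair bound then holds trivially because $(2^{i+1}/n)^2>1$; as in the paper, the hypothesis $m\le n/2^i$ is never actually used.
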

\begin{proof}
    Let $Z$ denote the number of vertices arriving from a set of size $c \cdot m \log n$ by time $2^{i+1}$. We bound the probability that at least two vertices arrive by considering all pairs of vertices. For any pair of vertices from this set, the probability that both arrive by time $2^{i+1}$ is at most $\left(\frac{2^{i+1}}{n}\right)^2$. Then
\[
\Pr[Z \ge 2] \le \binom{c \cdot m \log n}{2} \cdot \left(\frac{2^{i+1}}{n}\right)^2
\le \frac{(c \cdot m \log n)^2}{2} \cdot \frac{4 \cdot 2^{2i}}{n^2}
= \frac{2c^2 m^2 \log^2 n \cdot 4^i}{n^2} .
\]
\end{proof}

\begin{lemma}
\label{lem:clique-coverage}
Let $\Tau$ be the time at which exactly $k$ pivots have arrived, and let $U$ be the set of uncovered vertices at that time. Then for any $i \in [0, \lfloor \log n \rfloor]$, 
\begin{align*}
    & \Pr \left[ \Tau \in [2^i, 2^{i+1}) \mbox{ and } \avg{\instance_\sigma} \ge \frac{n}{2^{i+2}} \right] \cdot \E \left[ |U| \mid \Tau \in [2^i, 2^{i+1}) \mbox{ and } \avg{\instance_\sigma} \ge \frac{n}{2^{i+2}} \right] \\
    & \hspace{5mm} \leq (2\ln n + 1)\E[\sml{\instance_{\sigma}}] + 1 = \bigo\left(\log n \cdot \E[\sml{\instance_{\sigma}}]  \right) .
\end{align*}

\end{lemma}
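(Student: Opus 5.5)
We adapt the proof of Lemma~\ref{lem:clique-coverage-cliques} to the broken instance $\instance_\sigma$. Write $\mathcal{E}_i$ for the event $\Tau \in [2^i,2^{i+1})$ and $\avg{\instance_\sigma}\ge \frac{n}{2^{i+2}}$. The quantity to bound is $\E[|U|\,\mathbf{1}_{\mathcal{E}_i}]$. We split $U$ into two parts. The first part is the uncovered vertices lying in $\bigcup_{j>k}Q_j$ or in some $P_j(\sigma)$ with $j\le k$. The second part is the uncovered vertices lying in $Q_j\setminus P_j(\sigma)$ for some $j\le k$ such that no pivot from $Q_j$ has arrived by time $\Tau$. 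The first part is contained in the set of singletons of $\instance_\sigma$ that are not among its $k$ largest cliques, up to the at most $k$ singletons that could rank among the top $k$. So we aim to bound its contribution by $\E[\sml{\instance_\sigma}]$, plus a term we absorb into the $+1$ or into the second part. Note that $P_j(\sigma)$ is defined relative to the first pivot from $Q_j$, so vertices of $P_j$ may still be covered by other pivots. This only helps, since we are upper bounding uncovered vertices.

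For the second part, we mimic the ``swap'' argument. If no pivot from $Q_j$ has arrived by time $\Tau$, then in particular no vertex of $Q_j$ has arrived as a pivot, and every vertex of $Q_j$ that arrived was covered by an earlier pivot outside $Q_j$. First we show that whp every $Q_j$ with $|Q_j|\ge \frac{c_1 n\ln n}{2^i}$ has a pivot by time $2^i\le \Tau$. Lemma~\ref{lem:few-vertices-by-tau} (stated for cliques, but its proof uses only hypergeometric concentration) gives $\Omega(\log n)$ arrivals from $Q_j$ by time $2^i$. We then need one of them to be a pivot; if this is hard, we instead only need $Q_j\setminus P_j$ to be covered. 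Since the clique $Q_j\setminus P_j$ of $\instance_\sigma$ has size at most $|Q_j|$, the uncovered ones have size $O(n\log n/2^i)$ whp. The failure event contributes at most $n\cdot n^{1-c_1/8}\le 1$. Next, only $k$ pivots arrive by time $\Tau$. If $m$ of the top-$k$ almost-cliques have no pivot, then at least $m$ pivots come from outside $\bigcup_{j\le k}Q_j$, or from almost-cliques that received more than one pivot, whose later pivots lie in some $P_{j'}$ and hence are singletons of $\instance_\sigma$. We charge each missing $Q_j$ to such a ``singleton-type'' pivot arriving before $2^{i+1}$.

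The expected number of singleton-type vertices arriving by time $2^{i+1}$ is about $\sml{\instance_\sigma}2^{i+1}/n$, plus lower-order terms. The main obstacle is that $\sml{\instance_\sigma}$ is itself a random function of $\sigma$ and is correlated with arrival times, so the product bound $\E[Y\mid\cdot]\cdot\frac{c_1n\ln n}{2^i}$ no longer follows immediately. I would handle this by a stopping-time/martingale-style argument. For each arrival time $t\le 2^{i+1}$, the probability that the arriving vertex is a singleton-type pivot, given the past, is at most (number of currently eventual-singleton vertices)$/(n-t)$. Summing over $t$ gives $\E[Y\mathbf{1}_{\mathcal{E}_i}]\le \frac{2^{i+2}}{n}\E[\sml{\instance_\sigma}]$, using that $P_j$ is determined once $Q_j$'s first pivot arrives. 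Multiplying by the clique-size bound $\frac{c_1 n\ln n}{2^i}$ gives $O(\ln n\,\E[\sml{\instance_\sigma}])$. The condition $\avg{\instance_\sigma}\ge n/2^{i+2}$ is used to ensure $r(\instance_\sigma)$-large cliques already arrived, so that displaced cliques really have size $O(n\log n/2^i)$; constants are then tuned to reach $(2\ln n+1)\E[\sml{\instance_\sigma}]+1$.
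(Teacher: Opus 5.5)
There is a genuine gap in your bound on the second part, the uncovered vertices of a top-$k$ almost-clique $Q_j$ that has no pivot by time $\Tau$. You try to obtain it by first showing that whp every $Q_j$ with $|Q_j|\ge c_1 n\ln n/2^i$ has a pivot by time $2^i$, via Lemma~\ref{lem:few-vertices-by-tau}. That lemma only gives $\Omega(\log n)$ \emph{arrivals} from $Q_j$. For disjoint cliques the first arrival is automatically a pivot. In a general graph, however, an arriving vertex of $Q_j$ is a pivot only if it is not adjacent to an earlier pivot, and crossing edges let pivots of other almost-cliques cover it. So the intermediate claim need not hold: external pivots can cover all of $Q_j$ before any of its vertices arrives, and then $Q_j$ never produces a pivot no matter how many of its vertices arrive by $2^i$. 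Your fallback sentence (the clique $Q_j\setminus P_j$ of $\instance_\sigma$ has size at most $|Q_j|$) gives nothing, since $|Q_j|$ is exactly what is large here. The hypothesis $\avg{\instance_\sigma}\ge n/2^{i+2}$ cannot rescue it either; the paper's proof never uses that hypothesis, which only shrinks the indicator.

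The missing idea is Lemma~\ref{lem: uncovered}: bound the \emph{uncovered} part of $Q_j$ directly, not $|Q_j|$. The key fact is that any uncovered vertex of $Q_j$ that arrives necessarily becomes a pivot of $Q_j$. Let $B_t$ be the event that $Q_j$ has no pivot by time $t$ and at least $\kappa$ of its vertices are uncovered at time $t$. Uncovered sets only shrink, so $B_{t+1}\subseteq B_t$. Given $B_t$, the next arrival is an uncovered vertex of $Q_j$ (which destroys $B_{t+1}$) with probability at least $\kappa/n$. Chaining over $t\le 2^i$ gives $\Pr[B_{2^i}]\le(1-\kappa/n)^{2^i}\le n^{-c}$ for $\kappa=\lceil cn\ln n/2^i\rceil$. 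Since $\Tau\ge 2^i$, every $Q_j$ without a pivot by $\Tau$ then has at most $cn\ln n/2^i$ uncovered vertices at $\Tau$. This fails with probability at most $n^{1-c}$ after a union bound over almost-cliques, which contributes at most $n\cdot n^{1-c}$ to the expectation. That deterministic bound on a good event is what you multiply by the number of blocking pivots.

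With this lemma in place, the rest of your plan works and follows the paper. Your first part is the paper's types (a) and (b). In fact $\sml{\instance_\sigma}=\sml{\instance_{clique}}+\sum_{j\le k}|P_j|$ exactly, because every $Q_j\setminus P_j$ is nonempty, so no slack of $k$ is needed. Your charging of pivot-less top-$k$ almost-cliques to pivots from $\bigcup_{j>k}Q_j$, or to non-first pivots (which lie in some $P_{j'}$), is the paper's $Y_1+Y_2$ count.

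Your martingale treatment is actually more careful than the paper. It uses the past-measurable set $\bigcup_{j>k}Q_j\cup\bigcup_{j\le k,\,t_j<t}P_j$, with $t_j$ the arrival time of $Q_j$'s first pivot. The paper simply asserts $\Pr[v \text{ arrives by } 2^{i+1}\mid v\in P_j]\le 2^{i+1}/n$. Your route loses a constant factor; handle $2^{i+2}\ge n$ by the pointwise bound $Y\le\sml{\instance_\sigma}$. So you obtain $O(\log n\cdot\E[\sml{\instance_\sigma}])$ rather than literally $(2\ln n+1)\E[\sml{\instance_\sigma}]+1$, which is all that is used later.
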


Fix $i \in [0, \lfloor \log n \rfloor]$ and assume that $\Tau \in [2^{i}, 2^{i+1})$. 
To prove, we will analyze the set of uncovered vertices at time $\Tau$ from $Q_j$, $1 \le j \le k$, such that no pivot from $Q_j$ arrived by time $\Tau$. 
Recall that $Q_1, \ldots, Q_\ell$ denote the set of clusters (in order of non-increasing size) determined by an optimal unconstrained correlation clustering.  We refer to each $Q_j$ as an almost-clique since every vertex in $Q_j$ has edges to at least half the vertices in $Q_j$. 
Note that even if no pivot arrived from some $Q_j$ within time $\Tau$, it is possible that some vertices in $Q_j$ are covered by \emph{external pivots}, i.e. pivots arriving from other almost-cliques since there may be edges crossing almost-cliques.

The proof of \Cref{lem:clique-coverage} relies on the following \Cref{lem: uncovered}.

\begin{lemma}[Almost-Cliques Without Early Pivots Have Few Uncovered Vertices]\label{lem: uncovered}
Fix an almost clique $Q_j$.
Let $Q^{u}_j \subseteq Q_j$ denote the set of uncovered vertices from $Q_j$ at time $2^i$. 
Then for any constant $c>0$, with probability at least $1-\frac{1}{n^c}$, either a pivot arrives from $Q_j$ by time $2^i$ or $|Q^{u}_j| \leq \frac{cn\ln n}{2^i}$. 
\end{lemma}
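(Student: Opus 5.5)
Fix $Q_j$ and the arrival order up to time $2^i$. The plan is to show that a large set of uncovered vertices in $Q_j$ at time $2^i$ forces one of them to arrive before $2^i$. Since an uncovered vertex has no earlier-arrived pivot neighbor, it becomes a pivot of $Q_j$ the moment it arrives. The key observation is \emph{monotonicity}: the set of vertices covered at time $2^i$ only grows over time. So any vertex of $Q_j$ that is uncovered at time $2^i$ was also uncovered at every earlier time. If such a vertex had arrived before time $2^i$, it would have been uncovered on arrival, hence designated a pivot, and hence a pivot from $Q_j$ would have arrived by time $2^i$.

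Consequently, on the bad event ``no pivot from $Q_j$ by time $2^i$ and $|Q_j^u| > \frac{cn\ln n}{2^i}$'', none of the vertices in $Q_j^u$ arrived among the first $2^i$ arrivals. The difficulty is that $Q_j^u$ is itself random and depends on the arrivals. To handle this, I would generate the arrival order sequentially and use a stopping-time style argument. Let $R_t$ denote the set of vertices of $Q_j$ that are uncovered (and not yet arrived) at time $t$; $R_t$ is non-increasing in $t$. On the bad event, $|R_t| \ge |Q_j^u| > \frac{cn\ln n}{2^i}$ for all $t \le 2^i$. Moreover, at each step $t \le 2^i$, the arriving vertex is not in $R_{t-1}$, since otherwise it would become a pivot from $Q_j$.

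Conditioned on the history up to time $t-1$, the $t$-th arrival is uniform over the remaining $n-t+1 \le n$ vertices. So the probability that it avoids $R_{t-1}$, given $|R_{t-1}| > \frac{cn\ln n}{2^i}$, is at most $1 - \frac{c\ln n}{2^i}$. Chaining these conditional probabilities over $t = 1,\dots,2^i$ bounds the probability of the bad event by
\[
\left(1 - \frac{c\ln n}{2^i}\right)^{2^i} \le e^{-c\ln n} = n^{-c}.
\]
To make the chaining rigorous, I would define the event $E_t$ that either $|R_{t-1}| \le \frac{cn\ln n}{2^i}$ or the $t$-th arrival avoids $R_{t-1}$. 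The bad event is contained in the intersection of the events ``$|R_{t-1}|$ is large and the arrival avoids it'' over all $t$, so the product bound applies.

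The main obstacle is exactly this dependence of $Q_j^u$ on the order; the sequential conditioning together with monotonicity of coverage resolves it. An edge case is $2^i > n$ or $\frac{cn\ln n}{2^i} \ge |Q_j|$, where the claim holds trivially.
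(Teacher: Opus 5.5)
Your proof is correct and is essentially the paper's argument. The paper defines nested events $B_t$ (no pivot from $Q_j$ by time $t$ and at least $\kappa=\lceil cn\ln n/2^i\rceil$ uncovered vertices of $Q_j$ at time $t$), notes $B_{t+1}\subseteq B_t$ by monotonicity of coverage, and bounds $\Pr[B_{t+1}\mid B_t]\le 1-\kappa/n$ because an uncovered arrival from $Q_j$ becomes a pivot, then chains to $(1-c\ln n/2^i)^{2^i}\le n^{-c}$; your step-by-step conditioning on the full history, together with your explicit handling of the degenerate case $cn\ln n/2^i\ge |Q_j|$, is the same argument stated slightly more carefully.
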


\begin{proof}
    Fix an integer $\kappa$. 
    Define $B_t$ to be the event that the first pivot from $Q_j$ arrives after $t$ and the number of uncovered vertices from $Q_j$ by time $t$ is at least $\kappa$. Conditioned on $B_t$, the next arrival is an uncovered vertex of $Q_j$ with probability at least $\kappa/(n-t) \ge \kappa/n$. Such an arrival becomes a pivot from $Q_j$, causing $B_{t+1}$ to fail. In addition, event $B_{t+1}$ cannot occur unless $B_t$ also occurs. 
    Therefore, 
    \[ \Pr[B_{t+1}] = \Pr[B_{t+1} \mid B_t]\cdot \Pr[B_t] \le (1 - \kappa/n)\cdot \Pr[B_t] . \]  
    We can chain the above inequality.
    Setting $\kappa = \lceil \frac{cn\ln n}{2^i}\rceil$ and iterating over $t$ from 1 through $2^i$ yields that the probability that the first pivot from $Q_j$ arrives after $2^i$ and $|Q_j^u|$ is at least $\frac{cn\ln n}{2^i}$ is at most
    \[
       \left(1 - \frac{c\ln n}{2^i}\right)^{2^i} \le n^{-c}. 
    \]
\end{proof}

\begin{proof}[Proof of Lemma \ref{lem:clique-coverage}]
Let $U$ be the set of uncovered vertices at time $\Tau$ where $\Tau$ is the first time at which the first $k$ pivots have arrived. To analyze $U$, we consider three types of vertices: (a) uncovered vertices in an almost-clique $Q_j$ where $j > k$; (b) uncovered vertices in an almost-clique $Q_j$ where $j \le k$ {\em and} at least one pivot from $Q_j$ arrived by time $\Tau$; and (c) vertices in an almost-clique $Q_j$ where $j \le k$ and no pivot from $Q_j$ arrived by time $\Tau$.

The number of vertices of type (a) is at most $\sml{\instance_{clique}}$, by definition. (Recall, $\instance_{clique}$ is the disjoint cliques instance obtained by turning all almost-cliques $Q_j$ into disjoint cliques.)
The number of vertices of type (b) is at most $\sum_{j \le k} P_j$.  

The number of almost cliques which satisfy the condition for type (c) is upper bounded by the sum of two random variables: $Y_1$, the number of pivots that arrived from $Q_j$ where $j > k$, and $Y_2$, the number of non-first pivots that arrived from $Q_j$ with $j \le k$ by time $\Tau$. 
Observe that each pivot counted in $Y_1$ and $Y_2$ ``blocks" an arrival of the first vertex (pivot) from $Q_j$, for some $j\leq k$. 
Then using \Cref{lem: uncovered}, the expected number of vertices of type (c) can be bounded above by $\E[Y_1+Y_2]\cdot \frac{cn\ln n}{2^i} + \frac{1}{n^c}\cdot n$.

$\E[Y_1]$ can be easily bounded above by $\sml{\instance_{clique}}\cdot \frac{2^{i+1}}{n}$.
We can then bound the quantity $E[W|\, \Tau \in [2^i, 2^{i+1}) \mbox{ and } \avg{\instance_\sigma} \ge \frac{n}{2^{i+2}}]$ as follows.
\begin{align}
   \E&\left[W|\,\Tau \in [2^i, 2^{i+1}) \mbox{ and } \avg{\instance_\sigma} \ge \frac{n}{2^{i+2}} \right] \nonumber\\
   &\leq  \sml{\instance_{clique}} + \sum_{j \leq k} \E\left[ |P_j| \mid \Tau \in [2^i, 2^{i+1}) \mbox{ and } \avg{\instance_\sigma} \ge \frac{n}{2^{i+2}}\right] \nonumber \\
   & \hspace{5mm}  + \left(E\left[Y_2 | \Tau \in [2^i, 2^{i+1}) \mbox{ and } \avg{\instance_\sigma} \ge \frac{n}{2^{i+2}}\right] + \sml{\instance_{clique}} \frac{2^{i+1}}{n}\right)\cdot \frac{cn\ln n}{2^i} + \frac{1}{n^c}\cdot n \nonumber\\
   &= \sml{\instance_{clique}} (2c \ln n + 1) + \sum_{j \leq k} \E\left[|P_j| \mid \Tau \in [2^i, 2^{i+1}) \mbox{ and } \avg{\instance_\sigma} \ge \frac{n}{2^{i+2}}\right] \nonumber\\
   & \hspace{5mm} + \; \E\left[Y_2 | \Tau \in [2^i, 2^{i+1}) \mbox{ and } \avg{\instance_\sigma} \ge \frac{n}{2^{i+2}}\right] \cdot \frac{c n\ln n}{2^i} + \frac{n}{n^c} ,
   \label{eqn:W condition}
\end{align}
where the third term after the first inequality follows by the fact that $|U| \leq \frac{c n\log n}{2^i}$ at time $\Tau \in [2^i, 2^{i+1})$ by \Cref{lem: uncovered} with probability $1-\frac{1}{n^c}$, and that the arrival of any vertex counted in $Y_1$ and $Y_2$ blocks a potential (first) pivot arriving from cliques $Q_j$ where $j\leq k$. We now derive 
\begin{align}
   \Pr[&\Tau \in [2^i, 2^{i+1}) \mbox{ and } \avg{\instance_\sigma} \ge \frac{n}{2^{i+2}}] \cdot  \E[Y_2 \mid \Tau \in [2^i, 2^{i+1}) \mbox{ and } \avg{\instance_\sigma} \ge \frac{n}{2^{i+2}}] \nonumber\\
   & \leq \sum_{j \leq k}\sum_{v\in Q_j} \Pr[v\in P_j \text{ and } v \text{ arrives by time }2^{i+1} ] \nonumber\\
    & = \sum_{j \leq k}\sum_{v\in Q_j} \Pr[v\in P_j] \cdot \Pr[v \text{ arrives by time }2^{i+1} \mid v \in P_j] \nonumber\\
    & \leq \frac{2^{i+1}}{n} \sum_{1\leq j \leq k}\sum_{v\in Q_j} \Pr[v\in P_j] \nonumber\\
    & =  \frac{2^{i+1}}{n} \sum_{j=1}^k \E[ |P_j| ] . \label{eqn:Pi_i condition} 
\end{align}
From Equations~\ref{eqn:W condition} and~\ref{eqn:Pi_i condition}, we derive
\begin{align*}
    \Pr[&\Tau \in [2^i, 2^{i+1}) \mbox{ and } \avg{\instance_\sigma} \ge \frac{n}{2^{i+2}}] \cdot \E[W \mid \Tau \in [2^i, 2^{i+1}) \mbox{ and } \avg{\instance_\sigma} \ge \frac{n}{2^{i+2}}] \\
    & = \sml{\instance_{clique}} (2c \ln n + 1) + \left(2c\ln n + 1\right)\cdot \sum_{j \leq k} \E[ |P_j| ] + \frac{n}{n^c} . \\
\end{align*}
Note that by definition, $\E[\sml{\instance_\sigma}] = \sml{\instance} + \E[ |P_j| ]$. When $c= 1$, we therefore show this is no more than 
\[ \leq (2 \ln n + 1)\E[\sml{\instance_{\sigma}}] + 1 \]
as desired, completing the proof of the lemma.

\end{proof}

\subsubsection{Proof of \Cref{thm:BalancedPivotUpper}} 
\label{sec:online.upper.analysis.theorem}

\mainbalancedpivot*

The general framework of this proof relies heavily on 1) partitioning vertex arrival patterns $\sigma$ into groups with certain properties, then showing that the conditional expectation $\E[$\textsc{BalancedPivot} cost given $\sigma \mbox{ is of a certain group}]$ of each is within a $\polylog(n)$-factor of our lower bounds from \Cref{sec:online.upper.analysis.OPT}, and 2) showing that with high probability at least $1-\frac{1}{n^3}$, arrival pattern $\sigma$ has nice properties. Since the worst-case cost of any $k$-clustering is $\bigo(n^2)$, then the contribution of arrival patterns $\sigma$ without the nice properties is negligible to the expected cost.

\begin{proof}
We use the term \emph{pivot cluster} to refer to a cluster formed by the \textsc{Pivot} algorithm underlying \textsc{\textsc{BalancedPivot}}. 
Note that by construction, a pivot cluster is never split by \textsc{\textsc{BalancedPivot}}; however, multiple pivot clusters may be placed in a single cluster. Therefore, the cost of the algorithm can be divided into three categories: (i) non-edges within pivot clusters; (ii) edges between pivot clusters assigned to different clusters; (iii) non-edges between pivot clusters assigned to the same cluster.

The first two costs, taken together, are at most the cost of Pivot and hence no more than $3\cdot \cost(\opt(\instance))$ in expectation. We focus on deriving an upper bound on the third type of cost. 
For any given clustering $\mathcal{C} = C_1,\hdots, C_k$ returned by \textsc{\textsc{BalancedPivot}},
we upper bound cost of type (iii) by
\[
\sum_{i=1}^{k} \sum_{\substack{X, X' \subseteq C_i \\ X \neq X'}} |X| \cdot |X'|,
\]
where the inner sum is taken over all pairs of distinct pivot clusters $X, X'$ in cluster $C_i$. 
We will prove that the expected total cost of this type $O(\opt(\instance)\polylog(n))$.

\smallskip    
\noindent {\bf Grouping of random arrival patterns.} We divide all random arrival patterns into $\log_2 n$ groups depending on whether $\Tau$ is in $[2^i, 2^{i+1})$ for $0 \le i < \lfloor \log_2 n \rfloor$. 
Fix $i$ such that $\Tau \in [2^i, 2^{i+1})$. 
We refer to group $i$ as the collection of all random arrival patterns with $\Tau \in [2^i, 2^{i+1})$.

Let $\beta$ and $\gamma$ be constants to be defined later. 
Let $S_1$ denote the set of pivot clusters that have at most $\beta \ln n$ vertices at time $\Tau$ and let $S_2$ denote the set of pivot clusters that arrived by time $\Tau$ but are not in $S_1$; that is, each pivot cluster in $S_2$ has more than $\beta\ln n$ vertices.

We would like for the following to be true: a) pivot clusters which do not have a pivot arrive by time $\Tau\geq 2^{i}$ have projected load no more than $\frac{2\gamma n\ln n}{2^{i}}+1$, b) pivot clusters in $S_1$, that is those with load strictly less than $\beta\ln n$ by time $\Tau\geq 2^i$, have projected load no more than $\frac{4n\beta \ln n}{2^{i}}+\frac{2\gamma  n\ln n}{2^{i}}+1$, and c) pivot clusters in $S_2$, that is those with load at least $\beta\ln n$ by time $\Tau< 2^{i+1}$, have projected load at least $\frac{\beta n \ln n}{2^{i+1}}$. 
By \Cref{lem:projected_load}, $\Pr[\mbox{either a) or b) fail}]\leq n^{2-\gamma /8}$. Additionally by \Cref{lem:piv-cluster-projected-load}, $\Pr[\mbox{c) fails}]\leq \frac{n\log_2 n}{n^{\beta/6}}$. 
Therefore, the probability that all three properties a) b) and c) hold is at least $1-n^{2-\gamma /8}-\frac{n\log_2 n}{n^{\beta/6}}$.

Setting $\gamma\leq \frac{1}{8}\beta$, then $\frac{\beta n \ln n}{2^{i+1}} > \frac{2\gamma n\ln n}{2^i}+1$,
and we get a total separation between the projected load of pivot clusters in $S_2$ (those which are large by time $\Tau$ and the projected load of pivot clusters which have not yet arrived by time $\Tau$.
Let us assume for now that $\sigma$ is in such a ``nice'' case.

We now divide all random arrival patterns in group $i$ into two categories based on the size of $U$, the set of uncovered vertices at time $\Tau$.  
In the first case, our analysis compares the algorithm's cost with the lower bound established in \Cref{lem:sml-avg-bound-pivot}.
In the second case, our analysis compares the algorithm's cost with the lower bound established in \Cref{lem:smltilde-avgtilde-bound}.

\begin{table}[h!]
\centering
\begin{tabular}{| c | p{0.75\linewidth} |} 
 \hline
 Var/Set Name & Description \\ 
 \hline\hline
 $\Tau$ & Timestep before the $(k+1)$-th pivot arrives. Depends on $\sigma$. \\
 \hline
 $U$ & Vertices which don't arrive by $\Tau$ and are not adjacent to any pivots that arrive by $\Tau$. Depends on $\sigma,\Tau$. \\
 \hline
 $S_1$ & Pivot clusters which arrive by time $\Tau$ but have fewer than $\beta\ln n$ vertex arrivals by time $\Tau$. Depends on $\sigma,\Tau$. \\
 \hline
 $S_2$ & Pivot clusters which have at least $\beta\ln n$ vertex arrivals by time $\Tau$. Depends on $\sigma,\Tau$. \\
 \hline
 $\instance_\sigma$ & ``Broken'' clique instance used in lower bound \Cref{lem:smltilde-avgtilde-bound}. Depends on $\sigma$. \\ 
 \hline
 $\mathcal{J}_\sigma$ & ``Pivot''-based clique instance used in lower bound \Cref{lem:sml-avg-bound-pivot}. Depends on $\sigma$. \\
 \hline
 $s(\instance_\sigma),s(\mathcal{J}_\sigma)$ & Number of vertices in $\instance_\sigma,\mathcal{J}_\sigma$ (resp.) which do not lie in the $k$ largest cliques. Depends on $\sigma$ and $\instance_\sigma,\mathcal{J}_\sigma$ (resp.). \\
 \hline
 $r(\instance_\sigma),r(\mathcal{J}_\sigma)$ & Number of cliques in $\instance_\sigma,\mathcal{J}_\sigma$ (resp.) so large they should optimally be placed into their own cluster, see \Cref{def:threshold}. Depends on $\sigma$ and $\instance_\sigma,\mathcal{J}_\sigma$ (resp.). \\
 \hline
 $a(\instance_\sigma),a(\mathcal{J}_\sigma)$ & Assuming the $r(\instance_\sigma),r(\mathcal{J}_\sigma)$ (resp.) largest cliques in $\instance_\sigma,\mathcal{J}_\sigma$ (resp.) are placed into their own clusters, the average load of the remaining clusters, see \Cref{def:threshold}. Depends on $\sigma$, $\instance_\sigma,\mathcal{J}_\sigma$ (resp.), and  $r(\instance_\sigma),r(\mathcal{J}_\sigma)$ (resp.). \\
 \hline
\end{tabular}
\caption{Quick reference guide to variable and set names used in proof of \Cref{thm:BalancedPivotUpper}.}
\label{tab:1}
\end{table}

\noindent {\bf Case 1: Large $U$.}
Consider arrival patterns $\sigma$ for which there are many uncovered vertices at time $\Tau$. 
That is, $|U| > c_2 |S_1| \frac{n \ln n}{2^i}$ (for a constant $c_2$ to be set later). 
Let $\sigma$ denote a random permutation from group $i$ that satisfies this lower bound on $|U|$. 
We then argue that the cost of \textsc{BalancedPivot} under arrival pattern $\sigma$ is $\bigo (\sml{\mathcal{J}_\sigma}\avg{\mathcal{J}_\sigma} \polylog(n))$ with high probability, and use this to bound the total expected cost of \textsc{BalancedPivot} conditional on $|U| > c_2 |S_1| \frac{n \ln n}{2^i}$. 
We will then use the lower bound of \Cref{lem:sml-avg-bound-pivot} to show $\polylog(n)$-competitiveness of \textsc{BalancedPivot} conditional on $|U| > c_2 |S_1| \frac{n \ln n}{2^i}$.

Let $r(\mathcal{J}_\sigma)$ be as defined in Definition~\ref{def:threshold} and let $n'_1 \ge n'_2 \ge n'_3 \ldots \ge n'_q$ denote the sorted sizes of the final ``pivot''-based cliques $\mathcal{J}_\sigma$ under permutation $\sigma$. 

\smallskip \noindent
\emph{We first prove that $|U| = O(s(\mathcal{J}_\sigma))$ and $s(\mathcal{J}_\sigma) = \Omega(\frac{n \log n}{2^i})$ with high probability.}

As previously noted above, with probability at least $1-\frac{n\log_2 n}{n^{\beta/24}}-n^{2-\gamma /8}$, each pivot cluster in $S_1$ has a projected load of at most $\frac{4n\beta \ln n}{2^i}+\frac{2\gamma  n\ln n}{2^i}+1$, each pivot cluster arriving after time $\Tau$ has projected load at most $\frac{2\gamma n\ln n}{2^i}+1$, and each pivot cluster in $S_2$ has projected load at least $\frac{\beta n \ln n}{2^{i+1}}$.

Because $|U|$ is large, but each pivot cluster with vertices in $U$ is small, we must be able to lower bound the number of pivot clusters which haven't arrived by time $\Tau$ by
\[ \geq \frac{c_2 |S_1| \cdot \frac{n\ln n}{2^i}}{\frac{2 \gamma  n\ln n}{2^i} + 1} \geq \frac{c_2 |S_1|}{4 \gamma } . \]

By definition, $s(\mathcal{J}_\sigma)$ counts all vertices not in the $k$ largest pivot clusters. 
Note that in our ``nice $\sigma$'' case, pivot clusters in $S_2$ have projected load strictly greater than pivot clusters that do not arrive by $\Tau$. Additionally, $|S_2| + |S_1| = k$, so pivot clusters in $S_2$ must not contribute to $s(\mathcal{J}_\sigma)$. Pivot clusters from $S_1$ may or may not contribute to $s(\mathcal{J}_\sigma)$, however they have bounded projected load. Thus, we can lower bound $s(\mathcal{J}_\sigma)$ by
\begin{align*}
    \sml{\mathcal{J}_\sigma} & \geq |U| - |S_1| \cdot \left(\frac{(4\beta + 2\gamma ) n \ln n}{2^i} + 1 \right) \\
    & \geq |S_1| \cdot \frac{n\ln n}{2^i}\cdot \left( c_2 - 4\beta - 2\gamma  - 1 \right) .
\end{align*}
Setting $c_2 = 2(4\beta + 2\gamma  + 1)$, then 
\[ \sml{\mathcal{J}_\sigma} \geq (4\beta + 2\gamma  + 1) |S_1|  \cdot \frac{n\ln n}{2^i} \]
with probability at least $1-\frac{n\log_2 n}{n^{\beta/24}}-n^{2-\gamma /8}$.

\smallskip \noindent 
\emph{We next prove that $a(\mathcal{J}_\sigma)$ is $\Omega(\frac{n \log n}{2^i})$ and $k - r(\mathcal{J}_\sigma) \ge |S_1|$.} 

Let $A$ denote the set of clusters whose pivot clusters have projected size less than $\avg{\mathcal{J}_\sigma}$.
By definition of $r(\mathcal{J}_\sigma)$, exactly $r(\mathcal{J}_\sigma)$ pivot clusters have size exceeding $a(\mathcal{J}_\sigma)$, so $|A| = k - r(\mathcal{J}_\sigma)$. 
Additionally, recall $\avg{\mathcal{J}_\sigma}(k-r(\mathcal{J}_\sigma))$ is by definition the number of vertices not in the $(k-r(\mathcal{J}_\sigma))$ largest cliques, while $\sml{\mathcal{J}_\sigma}$ is the number of vertices not in the $k$ largest cliques. 
Thus, $\avg{\mathcal{J}_\sigma}(k-r(\mathcal{J}_\sigma)) \ge \sml{\mathcal{J}_\sigma}$, and we can derive
\[ \avg{\mathcal{J}_\sigma} \ge (4\beta + 2\gamma  + 1) |S_1|  \cdot \frac{n\ln n}{2^i}\cdot \frac{1}{k-r(\mathcal{J}_\sigma)}. \]

Recall, we are assuming we are in the high probability event that every pivot cluster in $U$ has projected load at most $\frac{2\gamma n\ln n}{2^i}+1$, and every pivot cluster in $S_1$ has projected load at most $\frac{4n\beta \ln n}{2^i}+\frac{2\gamma  n\ln n}{2^i}+1$. 
Therefore, the $(k - |S_1| + 1)$-th largest pivot cluster also has projected load at most $\frac{4n\beta \ln n}{2^i}+\frac{2\gamma  n\ln n}{2^i}+1$. 

We now prove $k - r(\mathcal{J}_\sigma) \geq |S_1|$ by contradiction. 
Suppose $k - r(\mathcal{J}_\sigma) < |S_1|$. 
Then
\[
a(\mathcal{J}_\sigma) \geq \frac{\sml{\mathcal{J}_\sigma}}{k - r(\mathcal{J}_\sigma)} 
> \frac{\sml{\mathcal{J}_\sigma}}{|S_1|} 
\geq \frac{(4\beta + 2\gamma  + 1)n\ln n}{2^i} .
\]
However, this is larger than the maximum projected load of pivot clusters from $U\cup S_1$.
This contradicts our assumption that $k - r(\mathcal{J}_\sigma) < |S_1|$. 
Therefore $k - r(\mathcal{J}_\sigma) \geq |S_1|$, which gives us
\[
a(\mathcal{J}_\sigma) \geq \frac{(4\beta + 2\gamma  + 1) |S_1| \frac{n \ln n}{2^i}}{|S_1|} = (4\beta + 2\gamma  + 1) \frac{n \ln n}{2^i} .
\]
Furthermore, since $k - r(\mathcal{J}_\sigma) \geq |S_1|$, we know that $a(\mathcal{J}_\sigma)$ is at least the projected size of the $(k - |S_1|)$-th largest pivot cluster. 
This cluster is not in $S_1$ (by definition of $S_1$), so its projected size is at least $c_2 \frac{n \ln n}{2^i}$ by Lemma~\ref{lem:piv-cluster-projected-load} with high probability.

\smallskip
\noindent \emph{We next prove the projected load of the least loaded cluster at any time after $\Tau$ is $O(a(\mathcal{J}_\sigma) \log^2 n)$ w.h.p.}

Since $a(\mathcal{J}_\sigma) = \Omega(\frac{n \log n}{2^i})$, we can apply Lemma~\ref{lem:projected_load_least_loaded} with $S = A$ to obtain that for any constant $c_3>0$, then with probability at least $1-\frac{n^2 + 2}{n^{c_3/8}}$ the projected load of the least loaded cluster at any time $t > T$ is no more than
\begin{align*}
&\leq  4(1+c_3) \left( \mu(A)\ln n + \frac{n\ln^2 n}{2^i} + \frac{|U|\cdot \ln n}{|A|} \right) \\
& = 4(1+c_3) \left( \avg{\mathcal{J}_\sigma}\ln n + \frac{n\ln^2 n n}{2^i} + \frac{\avg{\mathcal{J}_\sigma} |A| \ln n}{|A|} \right) \\
& = 4(1+c_3)\ln n\left( 2\avg{\mathcal{J}_\sigma} + \frac{n\ln n}{2^i} \right) \\
& = 4(1+c_3)\ln n \cdot 3 \avg{\mathcal{J}_\sigma} = 12 (1+c_3) \avg{\mathcal{J}_\sigma} \ln n ,
\end{align*}
where the last equality uses the lower bound of $a(\mathcal{J}_\sigma)$.

\noindent \emph{We are now ready to establish an upper bound on the total cost for the large $U$ case.}  

Each uncovered vertex $v \in U$ incurs cost at most the projected load of the least loaded cluster at the time when the first vertex of $v$'s pivot cluster arrives.  That is, the expected cost is at most a high probability upper bound on $|U|$ times a high probability upper bound on projected load of the least loaded cluster, plus $n^2$ times $\Pr[\mbox{we get a ``bad event''}]$.
Taking the expectation over all $\sigma$ in this category we obtain the following bound on the expected cost over all instances with large sets $U$.
\begin{align*}
 & \sum_{i=0}^{\log_2 n} \Pr\left[\Tau \in [2^i, 2^{i+1}) \mbox{ and } |U|> c_3|S_1| \frac{n \ln n}{2^i}\right] \cdot \E\left[cost \bigm\vert \Tau \in [2^i, 2^{i+1}) \mbox{ and } |U|> c_3|S_1| \frac{n \ln n}{2^i}\right]\\
 & \hspace{5mm} \leq \log_2 n \cdot \left( 12(1+c_3)\avg{\mathcal{J}_\sigma} \ln n \cdot 2\sml{\mathcal{J}_\sigma} + n^2\left( \frac{n\log_2 n}{n^{\beta/24}} + \frac{n^2}{n^{\gamma /8}} + \frac{n^2+2}{n^{c_3/8}} \right)\right) .
\end{align*}
Assuming $\beta = 96$, $\gamma  = 40$, and $c_3 = 40$, then 
\[  n^2\left( \frac{n\log_2 n}{n^{\beta/24}} + \frac{n^2}{n^{\gamma /8}} + \frac{n^2+2}{n^{c_3/8}} \right) \leq 1 \]
and thus has little effect on the expected cost, allowing us to finally claim that
\[ \E\left[\cost \bigm\vert |U|> c_2|S_1| \frac{n \ln n}{2^i}\right] \leq \bigo(\log^2 n \cdot \avg{\mathcal{J}_\sigma} \sml{\mathcal{J}_\sigma}) . \]

\begin{figure}
    \centering
    \tikzset{every picture/.style={line width=0.75pt}} 

\begin{tikzpicture}[x=0.75pt,y=0.75pt,yscale=-1,xscale=1]

\draw  [fill={rgb, 255:red, 17; green, 62; blue, 123 }  ,fill opacity=1 ] (261,110.8) .. controls (261,107.6) and (263.6,105) .. (266.8,105) -- (284.2,105) .. controls (287.4,105) and (290,107.6) .. (290,110.8) -- (290,228.2) .. controls (290,231.4) and (287.4,234) .. (284.2,234) -- (266.8,234) .. controls (263.6,234) and (261,231.4) .. (261,228.2) -- cycle ;
\draw  [fill={rgb, 255:red, 17; green, 62; blue, 123 }  ,fill opacity=1 ] (221,120.8) .. controls (221,117.6) and (223.6,115) .. (226.8,115) -- (244.2,115) .. controls (247.4,115) and (250,117.6) .. (250,120.8) -- (250,228.2) .. controls (250,231.4) and (247.4,234) .. (244.2,234) -- (226.8,234) .. controls (223.6,234) and (221,231.4) .. (221,228.2) -- cycle ;
\draw  [fill={rgb, 255:red, 17; green, 62; blue, 123 }  ,fill opacity=1 ] (321,149.8) .. controls (321,146.6) and (323.6,144) .. (326.8,144) -- (344.2,144) .. controls (347.4,144) and (350,146.6) .. (350,149.8) -- (350,228.2) .. controls (350,231.4) and (347.4,234) .. (344.2,234) -- (326.8,234) .. controls (323.6,234) and (321,231.4) .. (321,228.2) -- cycle ;
\draw  [fill={rgb, 255:red, 17; green, 62; blue, 123 }  ,fill opacity=1 ] (371,129.8) .. controls (371,126.6) and (373.6,124) .. (376.8,124) -- (394.2,124) .. controls (397.4,124) and (400,126.6) .. (400,129.8) -- (400,228.2) .. controls (400,231.4) and (397.4,234) .. (394.2,234) -- (376.8,234) .. controls (373.6,234) and (371,231.4) .. (371,228.2) -- cycle ;
\draw  [fill={rgb, 255:red, 17; green, 62; blue, 123 }  ,fill opacity=1 ] (431,139.8) .. controls (431,136.6) and (433.6,134) .. (436.8,134) -- (454.2,134) .. controls (457.4,134) and (460,136.6) .. (460,139.8) -- (460,228.2) .. controls (460,231.4) and (457.4,234) .. (454.2,234) -- (436.8,234) .. controls (433.6,234) and (431,231.4) .. (431,228.2) -- cycle ;
\draw  [color={rgb, 255:red, 128; green, 128; blue, 128 }  ,draw opacity=1 ][fill={rgb, 255:red, 155; green, 155; blue, 155 }  ,fill opacity=1 ] (481,180.8) .. controls (481,177.6) and (483.6,175) .. (486.8,175) -- (504.2,175) .. controls (507.4,175) and (510,177.6) .. (510,180.8) -- (510,228.2) .. controls (510,231.4) and (507.4,234) .. (504.2,234) -- (486.8,234) .. controls (483.6,234) and (481,231.4) .. (481,228.2) -- cycle ;
\draw  [color={rgb, 255:red, 128; green, 128; blue, 128 }  ,draw opacity=1 ][fill={rgb, 255:red, 155; green, 155; blue, 155 }  ,fill opacity=1 ] (521,180.8) .. controls (521,177.6) and (523.6,175) .. (526.8,175) -- (544.2,175) .. controls (547.4,175) and (550,177.6) .. (550,180.8) -- (550,228.2) .. controls (550,231.4) and (547.4,234) .. (544.2,234) -- (526.8,234) .. controls (523.6,234) and (521,231.4) .. (521,228.2) -- cycle ;
\draw  [color={rgb, 255:red, 128; green, 128; blue, 128 }  ,draw opacity=1 ][fill={rgb, 255:red, 155; green, 155; blue, 155 }  ,fill opacity=1 ] (580.92,180.85) .. controls (580.91,177.64) and (583.5,175.04) .. (586.7,175.03) -- (604.1,174.97) .. controls (607.31,174.96) and (609.91,177.55) .. (609.92,180.75) -- (610.08,228.15) .. controls (610.09,231.36) and (607.5,233.96) .. (604.3,233.97) -- (586.9,234.03) .. controls (583.69,234.04) and (581.09,231.45) .. (581.08,228.25) -- cycle ;
\draw    (466,94) -- (467,234) ;
\draw   (481,244) .. controls (481,248.67) and (483.33,251) .. (488,251) -- (535.5,251) .. controls (542.17,251) and (545.5,253.33) .. (545.5,258) .. controls (545.5,253.33) and (548.83,251) .. (555.5,251)(552.5,251) -- (603,251) .. controls (607.67,251) and (610,248.67) .. (610,244) ;
\draw   (363,246) .. controls (363,250.67) and (365.33,253) .. (370,253) -- (405,253) .. controls (411.67,253) and (415,255.33) .. (415,260) .. controls (415,255.33) and (418.33,253) .. (425,253)(422,253) -- (460,253) .. controls (464.67,253) and (467,250.67) .. (467,246) ;
\draw   (221,245) .. controls (221,249.67) and (223.33,252) .. (228,252) -- (275,252) .. controls (281.67,252) and (285,254.33) .. (285,259) .. controls (285,254.33) and (288.33,252) .. (295,252)(292,252) -- (342,252) .. controls (346.67,252) and (349,249.67) .. (349,245) ;
\draw    (219,290) -- (613,289.5) ;
\draw [shift={(615,289.5)}, rotate = 179.93] [color={rgb, 255:red, 0; green, 0; blue, 0 }  ][line width=0.75]    (10.93,-3.29) .. controls (6.95,-1.4) and (3.31,-0.3) .. (0,0) .. controls (3.31,0.3) and (6.95,1.4) .. (10.93,3.29)   ;

\draw (298,189) node [anchor=north west][inner sep=0.75pt]   [align=left] {...};
\draw (408,189) node [anchor=north west][inner sep=0.75pt]   [align=left] {...};
\draw (559,190) node [anchor=north west][inner sep=0.75pt]   [align=left] {...};
\draw (400,265) node [anchor=north west][inner sep=0.75pt]   [align=left] {$S_1$};
\draw (272,263) node [anchor=north west][inner sep=0.75pt]   [align=left] {$S_2$};
\draw (540,262) node [anchor=north west][inner sep=0.75pt]   [align=left] {$U$};
\draw (461,68) node [anchor=north west][inner sep=0.75pt]   [align=left] {$\Tau$};
\draw (419,292.75) node [anchor=north west][inner sep=0.75pt]   [align=left] {$\instance_\sigma$};

\end{tikzpicture}
        \caption{Timeline showing the arrival pattern under permutation $\sigma$. 
        By time $T$, exactly $k$ pivots have arrived, partitioning vertices into three categories: (1) vertices in $S_1$ (clusters with $O(\log n)$ vertices at time $T$), (2) vertices in $S_2$ (pivot clusters that arrived by time $T$ and are not in $S_1$), and (3) vertices in $U$ (uncovered vertices from pivot clusters that have not yet had their first vertex arrive by time $T$). Vertices arriving after time $T$ are assigned to clusters based on the projected loads established by time $T$.}

    \label{fig:sigma_figure}
\end{figure}

\medskip

\noindent{\bf Case 2: Small $U$.}
The random arrival pattern $\sigma$ in the second category satisfies $|U|\le c_3 |S_1| \frac{n \ln n}{2^i}$ (for a constant $c_3$ to be set later). 
Invoking Lemma~\ref{lem:projected_load_least_loaded} with $S = S_1$, and using the fact that $|U|\le c_3 |S_1| \frac{n \ln n}{2^i}$ and $\mpl{S_1} = \gamma \frac{n \ln n}{2^i}$,
we obtain that for any constant $c_4>16$, with probability at least $1-\frac{n^2+2}{n^{c_4/8}}$ the projected load of the least loaded bin at any time $t \ge \Tau$ is no more than
\begin{align*}
    4(1+c_4) \left( \mpl{S_1} \ln n + \frac{n \ln^2 n}{\Tau} + \frac{ |U| \ln n}{|S_1|} \right) &\leq 4(1+c_4) \left(\frac{n\ln^2 n}{2^i}\right)(\gamma + 1 + c_3) .
\end{align*}
Since each uncovered vertex $v \in U$ incurs cost at most the projected load of the least loaded cluster when the pivot of $v$'s cluster arrives, it follows that the total cost is no more than 
\begin{equation}\label{eq:thm3-cost-eq}
    |U|\cdot 4(1+c_4) \left(\frac{n\ln^2 n}{2^i}\right)(\gamma + 1 + c_3)
\end{equation}
with high probability at least $1-\frac{n^2+2}{n^{c_4/8}}$.

We divide the small $U$ case into two sub-cases, depending on the associated instance $\instance_\sigma$.  Recall that $\instance_\sigma$ is the ``broken'' clique instance consisting of disjoint cliques of size $|Q_j \setminus P_j(\sigma)|$ for $1 \le j \le k$ and $|\bigcup_{1 \le j \le k} P_j(\sigma) \cup \bigcup_{j > k} Q_j|$ as singletons. 
The first sub-case is where $\avg{\instance_\sigma} \ge \frac{n}{2^{i+3}}$ and the other is where $\avg{\instance_\sigma} < \frac{n}{2^{i+3}}$.

\noindent {\bf Sub-case 1: $\avg{\instance_\sigma} \ge \frac{n}{2^{i+3}}$.}  

Let $\mathcal{E}_i$ be the event that $\Tau \in [2^i, 2^{i+1}) \mbox{ and } \avg{\instance_\sigma} \ge \frac{n}{2^{i+3}}$. Since 
\begin{eqnarray*}
    \E\big[|U| \bigm\vert \mathcal{E}_i, \mbox{ and } |U|\le c_3|S_1| \frac{n \ln n}{2^i}\big] & \le & \E\big[|U| \bigm\vert \mathcal{E}_i \big], \text{ and }\\
    \Pr\big[\mathcal{E}_i, \mbox{ and } |U|\le c_3|S_1| \frac{n \ln n}{2^i}\big] &  \le & \Pr[\mathcal{E}_i],
\end{eqnarray*}
and the total cost, as shown above, is bounded by
\[ |U|\cdot 4(1+c_4) \left(\frac{n\ln^2 n}{2^i}\right)(\gamma + 1 + c_3) \leq |U|\cdot 4(1+c_4) \left(\avg{\instance_\sigma} \ln^2 n\right)(\gamma + 1 + c_3) . \]
Therefore, we obtain
\begin{align*}
    \Pr&\left[\mathcal{E}_i \text{ and } |U| \le c_3|S_1| \frac{n \ln n}{2^i}\right] \cdot \E\left[\text{cost} \bigm\vert \mathcal{E}_i \text{ and } |U| \le c_3|S_1| \frac{n \ln n}{2^i}\right] \\
    & \le \Pr\left[\mathcal{E}_i \text{ and } |U| \le c_3|S_1| \frac{n \ln n}{2^i}\right] \cdot \E\left[ |U| \bigm\vert \mathcal{E}_i \text{ and } |U| \le c_3|S_1| \frac{n \ln n}{2^i} \right] \cdot 4(1+c_4)\left(\frac{n\ln^2 n}{2^i}\right)(\gamma+1+c_3) .
\end{align*}
We can now apply \Cref{lem:clique-coverage} to obtain
\begin{align*}
    & \le \big((2\ln n + 1)\E[\sml{\instance_{\sigma}}]+1\big)\cdot 4(1+c_4)\left(\frac{n\ln^2 n}{2^i}\right)(\gamma+1+c_3) \\
    & \le \left((2\ln n + 2)\E\left[\sml{\instance_{\sigma}} \left(\frac{n}{2^{i+3}}\right) \right]\right)\cdot 16 (1+c_4) (\gamma + 1 + c_3)\ln^2 n \\
    & \le \big((2\ln n + 2)\E[\sml{\instance_{\sigma}}\cdot\avg{\instance_\sigma}]\big)\cdot 16(1+c_4)(\gamma + 1 + c_3)\ln^2 n \\
    & = \bigo\left( \ln^3 n \cdot \E\big[s(\instance_\sigma) \cdot a(\instance_\sigma)\big] \right) .
\end{align*}
Note that the third step follows from the sub-case condition that $\avg{\instance_\sigma}\geq\frac{n}{2^{i+3}}$.  
    
\begin{figure}
    \centering

\tikzset{every picture/.style={line width=0.75pt}} 

\begin{tikzpicture}[x=0.75pt,y=0.75pt,yscale=-1,xscale=1]

\draw  [fill={rgb, 255:red, 155; green, 155; blue, 155 }  ,fill opacity=1 ] (191.03,138.54) .. controls (191.03,134.58) and (194.25,131.37) .. (198.21,131.38) -- (219.73,131.39) .. controls (223.69,131.39) and (226.9,134.61) .. (226.9,138.57) -- (226.83,237.4) .. controls (226.83,237.4) and (226.83,237.4) .. (226.83,237.4) -- (190.96,237.37) .. controls (190.96,237.37) and (190.96,237.37) .. (190.96,237.37) -- cycle ;
\draw  [fill={rgb, 255:red, 0; green, 0; blue, 0 }  ,fill opacity=1 ] (227.12,280.63) .. controls (227.15,284.61) and (223.95,287.86) .. (219.97,287.89) -- (198.36,288.07) .. controls (194.38,288.1) and (191.12,284.9) .. (191.09,280.92) -- (190.8,245.69) .. controls (190.8,245.69) and (190.8,245.69) .. (190.8,245.69) -- (226.83,245.4) .. controls (226.83,245.4) and (226.83,245.4) .. (226.83,245.4) -- cycle ;
\draw  [fill={rgb, 255:red, 155; green, 155; blue, 155 }  ,fill opacity=1 ] (267.03,160.51) .. controls (267.03,156.54) and (270.25,153.32) .. (274.22,153.32) -- (295.77,153.34) .. controls (299.74,153.34) and (302.96,156.56) .. (302.96,160.53) -- (302.9,246.34) .. controls (302.9,246.34) and (302.9,246.34) .. (302.9,246.34) -- (266.97,246.32) .. controls (266.97,246.32) and (266.97,246.32) .. (266.97,246.32) -- cycle ;
\draw  [fill={rgb, 255:red, 0; green, 0; blue, 0 }  ,fill opacity=1 ] (303.12,280.95) .. controls (303.15,284.76) and (300.09,287.88) .. (296.28,287.91) -- (274.05,288.09) .. controls (270.24,288.12) and (267.13,285.06) .. (267.09,281.25) -- (266.87,253.64) .. controls (266.87,253.64) and (266.87,253.64) .. (266.87,253.64) -- (302.9,253.34) .. controls (302.9,253.34) and (302.9,253.34) .. (302.9,253.34) -- cycle ;
\draw  [fill={rgb, 255:red, 0; green, 0; blue, 0 }  ,fill opacity=1 ] (318.02,179.5) .. controls (318.03,175.53) and (321.25,172.31) .. (325.22,172.32) -- (346.78,172.33) .. controls (350.74,172.34) and (353.96,175.56) .. (353.96,179.52) -- (353.9,256.34) .. controls (353.9,256.34) and (353.9,256.34) .. (353.9,256.34) -- (317.97,256.31) .. controls (317.97,256.31) and (317.97,256.31) .. (317.97,256.31) -- cycle ;
\draw  [fill={rgb, 255:red, 0; green, 0; blue, 0 }  ,fill opacity=1 ] (354.06,282.96) .. controls (354.09,285.67) and (351.91,287.89) .. (349.2,287.91) -- (322.98,288.12) .. controls (320.27,288.15) and (318.06,285.97) .. (318.04,283.26) -- (317.88,263.63) .. controls (317.88,263.63) and (317.88,263.63) .. (317.88,263.63) -- (353.9,263.34) .. controls (353.9,263.34) and (353.9,263.34) .. (353.9,263.34) -- cycle ;
\draw  [fill={rgb, 255:red, 0; green, 0; blue, 0 }  ,fill opacity=1 ] (381.96,205.34) .. controls (381.97,201.36) and (385.2,198.13) .. (389.18,198.14) -- (410.81,198.15) .. controls (414.79,198.15) and (418.02,201.38) .. (418.02,205.37) -- (417.98,262.66) .. controls (417.98,262.66) and (417.98,262.66) .. (417.98,262.66) -- (381.92,262.63) .. controls (381.92,262.63) and (381.92,262.63) .. (381.92,262.63) -- cycle ;
\draw  [fill={rgb, 255:red, 0; green, 0; blue, 0 }  ,fill opacity=1 ] (418.07,284.16) .. controls (418.09,286.21) and (416.45,287.88) .. (414.4,287.9) -- (385.78,288.13) .. controls (383.74,288.15) and (382.06,286.5) .. (382.05,284.46) -- (381.92,269.63) .. controls (381.92,269.63) and (381.92,269.63) .. (381.92,269.63) -- (417.95,269.34) .. controls (417.95,269.34) and (417.95,269.34) .. (417.95,269.34) -- cycle ;
\draw  [fill={rgb, 255:red, 155; green, 155; blue, 155 }  ,fill opacity=1 ] (148.03,124.98) .. controls (148.04,121.03) and (151.25,117.82) .. (155.2,117.83) -- (176.69,117.84) .. controls (180.64,117.84) and (183.84,121.05) .. (183.84,125.01) -- (183.77,229.85) .. controls (183.77,229.85) and (183.77,229.85) .. (183.77,229.85) -- (147.96,229.82) .. controls (147.96,229.82) and (147.96,229.82) .. (147.96,229.82) -- cycle ;
\draw  [fill={rgb, 255:red, 0; green, 0; blue, 0 }  ,fill opacity=1 ] (184.12,280.63) .. controls (184.15,284.61) and (180.95,287.86) .. (176.97,287.89) -- (155.36,288.07) .. controls (151.38,288.1) and (148.12,284.9) .. (148.09,280.92) -- (147.74,238.14) .. controls (147.74,238.14) and (147.74,238.14) .. (147.74,238.14) -- (183.77,237.85) .. controls (183.77,237.85) and (183.77,237.85) .. (183.77,237.85) -- cycle ;
\draw  [fill={rgb, 255:red, 0; green, 0; blue, 0 }  ,fill opacity=1 ] (533.84,237.01) .. controls (537.81,237.02) and (541.03,240.25) .. (541.03,244.22) -- (540.98,280.5) .. controls (540.98,284.48) and (537.75,287.7) .. (533.78,287.69) -- (512.19,287.66) .. controls (508.21,287.66) and (504.99,284.43) .. (505,280.46) -- (505.04,244.18) .. controls (505.05,240.2) and (508.27,236.98) .. (512.25,236.99) -- cycle ;
\draw  [fill={rgb, 255:red, 0; green, 0; blue, 0 }  ,fill opacity=1 ] (461.86,219.02) .. controls (465.84,219.02) and (469.05,222.25) .. (469.05,226.22) -- (468.98,281.5) .. controls (468.98,285.48) and (465.75,288.7) .. (461.78,288.69) -- (440.19,288.66) .. controls (436.21,288.66) and (432.99,285.43) .. (433,281.46) -- (433.07,226.18) .. controls (433.07,222.2) and (436.3,218.99) .. (440.27,218.99) -- cycle ;
\draw  [dash pattern={on 4.5pt off 4.5pt}]  (428,119) -- (427,285) ;
\draw  [dash pattern={on 4.5pt off 4.5pt}]  (312,120) -- (311,288) ;

\draw (242,259) node [anchor=north west][inner sep=0.75pt]   [align=left] {...};
\draw (360.9,262.34) node [anchor=north west][inner sep=0.75pt]   [align=left] {...};
\draw (480,263) node [anchor=north west][inner sep=0.75pt]   [align=left] {...};
\draw (308,92) node [anchor=north west][inner sep=0.75pt]   [align=left] {$r(\sigma)$};
\draw (423,95) node [anchor=north west][inner sep=0.75pt]   [align=left] {$k$};

\end{tikzpicture}
    \caption{Illustration of set $Y$ (shown in black) for Sub-case 2 of the Small $U$ analysis. The set $Y = \bigcup_{ r(\sigma) < j} Q_j \cup \bigcup_{j} P_j(\sigma)$ consists of vertices from almost-cliques $Q_{r(\sigma)+1}$ together with the uncovered vertices $P_j(\sigma)$ from almost-cliques.}
    \label{fig:y}
\end{figure}

\noindent {\bf Sub-case 2: $\avg{\instance_\sigma} < \frac{n}{2^{i+3}}$.} 
We first define the event ${\cal E}_i$.

\[
{\cal E}_i = \left\{\Tau \in [2^i, 2^{i+1})\right\} \cap \left\{|U| \le c_3|S_1| \frac{n \ln n}{2^i}\right\} \cap \left\{\avg{\instance_\sigma} < \frac{n}{2^{i+3}}\right\}.
\]
Recall the definitions of $\instance_{clique}$ and $\instance_{cs}$ used in the proof of \Cref{lem:smltilde-avgtilde-bound}. 
These make all the $Q_j$ into disjoint cliques, and then additionally for $\instance_{cs}$, split the cliques $Q_j$ with $j > k$ into singletons. 
We argue that $\avg{\instance_{clique}} \le \avg{\instance_\sigma}$. 
It is easy to see from the definition of $\avg{\cdot}$ that splitting the cliques $Q_j$ with $j > k$ does not change the value of $\avg{\cdot}$, so $\avg{\instance_{clique}} = \avg{\instance_{cs}}$. 
The instance $\instance_{\sigma}$ is obtained by further splitting cliques of $\instance_{cs}$; since this only decreases the sizes of the cliques, we find that $\avg{\instance_{\sigma}} \ge \avg{\instance_{cs}}$ for any $\sigma$.  

Furthermore, if ${\cal E}_i$ has positive probability, then there exists $\sigma$ such that $\avg{\instance_\sigma} \le \frac{n}{2^{i+3}}$, thus implying that $\avg{\instance_{clique}} \le n/{2^{i+3}}$.

We will use notation $\E[|U| \cdot \mathbf{1}_{{\cal E}_i}]$ to describe the contribution that event $\mathcal{E}_i$ makes to $\E[|U|]$. 
That is, $\E[|U|] = \E[|U| \cdot \mathbf{1}_{{\cal E}_i}] + \E[|U| \cdot \mathbf{1}_{{\bar{\cal E}}_i}]$

\noindent \textit{A key part of our analysis is to derive an upper bound on $\E[|U| \cdot \mathbf{1}_{{\cal E}_i}]$.}

To bound $|U|$, we partition $U$ into three subsets as follows.  Let $t_j$ denote the first-pivot time of $Q_j$, or $\infty$ if $Q_j$ never produces a pivot.
\[
U = \bigcup_{j > k} (U \cap Q_j) \cup \bigcup_{j \le k, t_j \le \Tau} (U \cap P_j) \cup \bigcup_{j \le k, t_j \ge \Tau} (U \cap Q_j).
\]
The first piece has size at most $s(\instance_{clique})$.   For the second piece, we introduce some notation.  For any $v \in Q_j$, let $d^{-}(v)$ denote the number of non-neighbors of $v$ in $Q_j \setminus \{v\}$.  Define
\[
M = \sum_{v: v \mbox{ \scriptsize arrives by }2^{i+1}} d^{-}(v).
\] 
We now argue that the second piece has size at most $M$.  Consider any vertex $w$ in $\bigcup_{j \le k, t_j \le \Tau} (U \cap P_j)$; if $w \in Q_j$, $j \le k$, $w$ is non-neighbor of the first pivot that arrives from $Q_j$ by time $\Tau$ (such a pivot must arrive since $P_j$ is nonempty).  Therefore, $|\bigcup_{j \le k, t_j \le \Tau} (U \cap P_j)|$ is at most $M$.

We bound the size of the third piece by $\bigo(Z_{\Tau} \frac{n \log n}{2^i})$, where $Z_{\Tau}$ is the number among the first $k$ pivots that are either in $\cup_{j >k} Q_j$ or are non-first pivots in $\cup_{j \le k} Q_j$.  we note that exactly $Z_{\Tau}$ of the top-$k$ $Q_j$s have no pivot by $\Tau$.  By Lemma~\ref{lem: uncovered}, the number of uncovered vertices in any $Q_j$ that has no pivot arrive by time $2^i$ is $\bigo(\frac{n\log n}{2^i})$ whp.  It thus follows that conditioned on ${\cal E}_i$,
\begin{align*}
    |U| & \le s(\instance_{clique}) + M + \bigo\left(Z_{\Tau} \frac{n \log n}{2^i}\right) . \\
    \implies \E[|U| \cdot \mathbf{1}_{{\cal E}_i}] & \leq \bigo\left(\E[Z_{\Tau} \mathbf{1}_{{\cal E}_i}]\frac{n \log n}{2^i}\right) + \E[\sml{\instance_{clique}}\mathbf{1}_{{\cal E}_i}] + \E[M \mathbf{1}_{{\cal E}_i}] .
\end{align*}

We now derive upper bounds on $\E[Z_{\Tau} \mathbf{1}_{{\cal E}_i}]$, $\E[\sml{\instance_{clique}}\mathbf{1}_{{\cal E}_i}]$, and $\E[M \mathbf{1}_{{\cal E}_i}]$, which is the focus of the remainder of the argument.

Define the following arrival counts.  Let $X$ and $N$ denote the number of vertices that arrive by time $2^{i+1}$ from sets $\cup_{j > k} Q_j$ and $\cup_{j > r(\instance_{cs})} Q_j$, respectively.  Let $J$ denote the number of pivots arriving by time $2^{i+1}$ that are not the first pivot in their own $Q_j$.

Since $\Tau\in[2^i,2^{i+1})$ then at least $k+1$ pivots have arrived by $2^{i+1}$. At most $r(\instance_{cs})$ of these can be first pivots from the $r(\instance_{cs})$ largest $Q_j$'s and all non-first pivots are counted in $J$, so we have 
\[N + J \ge k - r(\instance_{cs})+1 .
\]
Also, since $k+1$ pivots arrive by time $\Tau + 1$, every vertex contributing to $Z_{\Tau}$ is either a vertex from $\cup_{j > k} Q_j$ or a non-first pivot that arrives by time $2^{i+1}$.  Therefore, we obtain
\[
Z_{\Tau} \le X + J
\]
As in the proof of \Cref{lem:smltilde-avgtilde-bound}, let $e_m$ denote the total number of non-edges between vertices within the same $Q_j$. We introduce the following notation for convenience.
\[
p = \frac{2^{i+1}}{n}, \,\,\, K = s(\instance_{cs})\avg{\instance_{cs}} + e_m, \,\,\,d = k - r(\instance_{cs}).
\]
We next derive an upper bound on $\E[Z_{\Tau} \mathbf{1}_{{\cal E}_i}]$. 
We first prove that
\[
Z_{\Tau} \mathbf{1}_{{\cal E}_i} \le 2J + \frac{2}{d}X(N-1).
\]
To see the above, we note that $Z_{\Tau} \le X + J$. 
Then, since ${\cal E}_i$ implies that at least $k+1$ pivots arrive by time $\Tau + 1$, it follows that $N + J \ge k - r(\instance_{cs})+1$. 
Consider two cases. 
If $N \le (k - r(\instance_{cs}) + 1)/2$, then $J \ge N \ge X$; so $Z_{\Tau} \le 2J$.  Otherwise, $2X(N-1)/d \ge X$, yielding $Z_{\Tau} \le J + 2X(N-1)/d$. 
It thus follows that
\[
\E[Z_{\Tau} \mathbf{1}_{{\cal E}_i}] \le \E[2J + \frac{2}{d}X(N-1)]
=  2\E[J] + \frac{2}{d} \E[X(N-1)].
\]
We now bound $\E[J]$ and $\E[X(N-1)]$.  Recall that $J$ is the set of non-first pivots that arrive by $2^{i+1}$.  Let $v$ be a non-first pivot from $Q_j$ and let $w_j$ be the first pivot from $Q_j$.  Then $(v,w_j)$ is a missing edge in $Q_j$.  The inclusion of $v$ in $J$ can be charged to this missing edge; in particular, distinct non-first pivots give distinct missing edges, both of whose end points arrive by $2^{i+1}$.  The probability that a missing edge has both of its end-points arrive by $2^{i+1}$ is $p^2$. 
Therefore, $\E[J]$ is at most $p^2 e_m$.
For $X(N-1)$, recall that $X$ and $N$ are the number of vertices that arrive by time $2^{i+1}$ from sets $\cup_{j > k} Q_j$ and $\cup_{j > r(\instance_{cs})} Q_j$, respectively. 
$X(N-1)$ is bounded by the number of pairs of these, whose expectation is at most $p \sml{\instance_{cs}} \cdot p \avg{\instance_{cs}} d$.  We thus obtain
\[
\E[Z_{\Tau} \mathbf{1}_{{\cal E}_i}] \le  2p^2 e_m + 2p^2 s(\instance_{cs}) \avg{\instance_{cs}}
=  2p^2 K.
\]
We will next bound $\E[\sml{\instance_{clique}} \mathbf{1}_{{\cal E}_i}]$. By definition, $\sml{\instance_{clique}} = \sml{\instance_{cs}}$, so instead we will bound $\E[\sml{\instance_{cs}} \mathbf{1}_{{\cal E}_i}]$. 
To do so, we first derive an upper bound on $\Pr[{\cal E}_i]$ in the following way.
\[
\Pr[{\cal E}_i] \le \Pr[N + J \ge d + 1] 
\le \E\left[\frac{2J}{d} + \frac{4}{d^2}N(N-1)\right] .
\]
We can obtain the above bound if we can show whenever $N + J \ge d + 1$, it must be the case that $\frac{2J}{d} + \frac{4}{d^2}N(N-1)\geq 1$. Indeed, this is always true. If $N + J \ge d + 1$, then $N \le (d + 1)/2$ implies $2J/d \ge 1$, while otherwise $N \ge (d + 1)/2$ implies $4N(N-1)/d^2 \ge 1$. 

We continue. $\E[J]\le p^2 e_m$ as before. Similar to our previous bound of $\E[X(N-1)]$, one can show $\E[N(N-1)]\le (p \avg{\instance_{cs}} d)^2$. 
Thus we derive
\[
\E\left[\frac{2J}{d} + \frac{4}{d^2}N(N-1)\right]
\le \frac{2p^2 e_m}{d} + \frac{4(pd \cdot \avg{\instance_{cs}})^2}{d^2}
= p^2 \left(\frac{2e_m}{d} + 4 \avg{\instance_{cs}}^2\right).
\]

Recall we are in sub-case 2 where $\avg{\instance_\sigma}<\frac{n}{2^{i+3}}$, and that by definition, $\avg{\instance_\sigma} \geq \avg{\instance_{cs}} \geq (k-r(\instance_{cs})) \sml{\instance_{cs}}$.
Putting these together, we find
\[
\E[s(\instance_{cs}) \mathbf{1}_{{\cal E}_i}] \leq s(\instance_{cs}) \Pr[{\cal E}_i] \le \frac{2s(\instance_{cs}) p^2 e_m}{d} + p^2 s(\instance_{cs}) \avg{\instance_{cs}}^2
\le \frac{p e_m}{2} + \frac{p s(\instance_{cs}) \avg{\instance_{cs}}}{4}
\le p K.
\]

We must also bound our third term, $\E[M \mathbf{1}_{{\cal E}_i}]$.
This is no more than the probability that for each missing edge, at least one of its endpoints is chosen to arrive before time $2^{i+1}$. 
This is upper bounded by
\[ \E[M \mathbf{1}_{{\cal E}_i}] \leq 2 e_m \frac{2^{i+1}}{n} = 2pe_m . \]

\noindent \textit{We are now ready to derive an upper bound on the expected cost for sub-case 2.}
By Equation~\ref{eq:thm3-cost-eq}, the cost incurred is at most 
\[ |U|\cdot 4(1 + c_4) \left(\frac{n \ln^2 n}{2^i}\right)(\gamma + 1 + c_3) = \bigo\left(|U| \frac{n \log^2 n}{2^i}\right). \]  
We will show that 
\[
\E[|U|\cdot \mathbf{1}_{{\cal E}_i}] \cdot \frac{n \ln^2 n}{2^i} = \bigo(\opt(\instance) \polylog(n)) ~.
\]
Recall that $|U|$ is at most $s(\instance_{cs}) + M + Z_{\Tau} \frac{n \log n}{2^i}$.  By the upper bounds derived above, we obtain
\begin{align*}
\E\big[|U| \cdot \mathbf{1}_{{\cal E}_i}\big] \cdot \frac{n \ln^2 n}{2^i}
& \leq \E\left[\left(\sml{\instance_{cs}} + M + Z_{\Tau} \frac{n \ln n}{2^i}\right)\cdot \mathbf{1}_{{\cal E}_i} \right] \cdot \frac{n \ln^2 n}{2^i} \\
& \leq \left(\E\left[ \sml{\instance_{cs}} \mathbf{1}_{{\cal E}_i}\right] + \E[M \mathbf{1}_{{\cal E}_i}] + \frac{n\ln n}{2^i}\E[Z_{\Tau}\mathbf{1}_{{\cal E}_i} ]\right) \cdot \frac{n \ln^2 n}{2^i}  \\
& \leq \left(p K + 2p e_m + 2p^2 K \frac{n \ln n}{2^i}\right) \frac{n \ln^2 n}{2^i}\\
& \leq (K + 2 e_m + 8K \ln n) \ln^2 n\\
& = \bigo(\opt(\instance) \log^3 n).
\end{align*}
The last step follows since $\opt(\instance) = \Omega(K + e_m)$ by Equation~\ref{eqn:lower bound}. 

We complete the sub-case by summing over $i$.
For a fixed $\sigma$, the indicator $\mathbf 1[\avg{\instance_\sigma}\le \frac{n}{2^{i+3}}]$ is nonzero only for those $i$ with $2^i\le \frac{n}{8\avg{\instance_\sigma}}$, i.e.\ for $\bigo(\log n)$ values of $i$. Hence, the total contribution of Sub-case~2 to the expected cost is at most
\[
   \sum_{i=0}^{\lfloor\log_2 n\rfloor} \E[|U| \cdot \mathbf{1}_{{\cal E}_i}] \cdot \frac{n \ln^2 n}{2^i} \le \bigo(OPT \log^4 n).
\]   

\junk{

\[
   \sum_{i=0}^{\lfloor\log_2 n\rfloor}\E\!\left[\sml{\instance_\sigma}\avg{\instance_\sigma}\,\mathbf 1\big[\avg{\instance_\sigma}\le\tfrac{n}{2^{i+3}}\big]\right]
   \ =\ \E\!\left[\sml{\instance_\sigma}\avg{\instance_\sigma}\cdot\big|\{i:\avg{\instance_\sigma}\le\tfrac{n}{2^{i+3}}\}\big|\right]
   \ =\ \bigo(\ln n)\,\E\!\left[\sml{\instance_\sigma}\avg{\instance_\sigma}\right].
\]
Therefore the total contribution of Sub-case~2, summed over all $i$, is
\[
   \bigo(\ln^4 n)\cdot\bigo(\ln n)\,\E\!\left[\sml{\instance_\sigma}\avg{\instance_\sigma}\right]+\bigo(1)
   \ =\ \bigo\!\left(\E\!\left[\sml{\instance_\sigma}\avg{\instance_\sigma}\ln^5 n\right]\right).
\]

**********

We bound the contribution of this sub-case to the overall expected cost by deriving upper bounds on the probability that the sub-case condition holds and on the expected value of $|U|$ conditioned on the sub-case condition.  The argument is involved and has several components.

\smallskip
\noindent \textsl{Defining $Y_r$.}  Fix an integer $r \in [1,k]$.  Let $Y_r$ be the set $\bigcup_{r < j } Q_j \cup \bigcup_{j} P_j(\sigma)$ (See \Cref{fig:y}).  By time $\Tau$, exactly $k+1$ pivots have arrived.  At most $r$ of these can come from the sets $Q_1 \setminus P_1(\sigma), \ldots, Q_r \setminus P_r(\sigma)$.  Therefore, at least $k-r+1$ pivots come from $Y_r$.  Let $N_{Y_r}$ denote the number of vertices from $Y_r$ arriving by time $2^{i+1}$.

The parameters $r(\instance_\sigma)$ and $\avg{\instance_\sigma}$ are both defined by sorting the clique sizes of $\instance_\sigma$, which are $|Q_j \setminus P_j(\sigma)|$, not the original sizes $|Q_j|$. 
However, every vertex in any cluster $Q_j$ of an optimal unconstrained \CC\ solution has edges to at least half the vertices in $Q_j$, implying that $|P_j(\sigma)|$ is always at most $|Q_j|/2$ for all $j$.  Then, it is straightforward to derive that $|Y_{r(\instance_\sigma)}|$ is at most $2(k - r(\instance_\sigma))\avg{\instance_\sigma}$ (see \Cref{lem:tail sum}). 
Thus, if $\avg{\instance_\sigma} \le \frac{n}{2^{i+3}}$, then $|Y_{r(\instance_\sigma)}|$ is at most $(k - r(\instance_\sigma)) \frac{n}{2^{i+2}}$. 

\smallskip
\noindent \textsl{We next bound the number of vertices arriving from $Y_r$ via a stochastic dominance argument.}  Let $Z_r$ denote an arbitrary set of vertices of size $(k-r) \frac{n}{2^{i+2}}$. 
The expected number of vertices that arrive from $Z_r$ by time $2^{i+1}$ is $|Z_r| \cdot \frac{2^{i+1}}{n} = \frac{k-r}{2}$. 
The number of vertices that arrive from $Z_r$ by time $2^{i+1}$, which we denote by $N_{Z_r}$, is hypergeometrically distributed. 
If $k-r \geq \frac{\alpha}{2} \ln n$, then by Chernoff-Hoeffding and because $\E[N_{Z_r}] \leq \frac{k-r}{2}$,
\begin{align*}
    \Pr[N_{Z_r} \geq k-r] & \leq \exp\left( -\frac{k-r}{6} \right) = n^{-\alpha /12} .
\end{align*}
Meanwhile, if $k-r < \frac{\alpha}{2} \ln n$, then 
\begin{align*}
    \Pr[N_{Z_r} \geq \alpha \ln n] & \leq \exp\left( -\frac{3\alpha \ln n}{4} \right) = n^{-3\alpha / 4} .
\end{align*}
Therefore, with probability at least $1-n^{-3\alpha / 4}$, the number of vertices from $Z_r$ that arrive before time $2^{i+1}$ is at most $\max\{k-r, \alpha \ln n\}$.

For any $r$, $Y_r$ can be defined by adaptively including vertices that have not yet arrived: $\cup_{j > r} Q_j$ is added to $Y_r$ at the start of the random arrival, while $P_j(\sigma)$, for $j \le r$, is added after the arrival of the first pivot of $Q_j$ (if such a pivot arrives by $\Tau$).  Consequently, $Y_r$ satisfies the criteria of the selection process analyzed in \Cref{app:stochastic dominance}.  By \Cref{thm:stochastic dominance}, the number of vertices that arrive from $Y_{r(\instance_\sigma)}$ is stochastically dominated by the number that arrive from an arbitrary subset $Z_{r(\instance_\sigma)}$ of size $(k-r(\instance_\sigma))\avg{\instance_\sigma}$. 
It follows from our argument above that with probability at least $1-n^{-3\alpha / 4}$, the number of vertices from $Y_{r(\instance_\sigma)}$ that arrive before time $2^{i+1}$ is at most $\max\{k-r(\instance_\sigma), \alpha \ln n\}$.  
Since this number must be at least $k-r(\instance_\sigma)+1$, it follows that whp the event $k-r(\instance_\sigma) \ge \frac{\alpha}{2}\ln n$ can hold only if more than $k-r(\instance_\sigma)$ vertices of $Y_{r(\instance_\sigma)}$ arrive by $2^{i+1}$, which occurs with probability at most $n^{-\alpha/12}$.

\smallskip
\noindent \textsl{We give a high probability reduction to the small $k-r$ regime.}
Recall that the total cost incurred after time $\Tau$ is at most $|U|$ times the maximum projected load of the least loaded cluster at any time $t\ge \Tau$, which in the small-$U$ regime is $|U|$ times $4(1+c_4) \left(\frac{n\ln^2 n}{2^i}\right)(\gamma + 1 + c_3) = \bigo\!\big(\frac{n\ln^2 n}{2^i}\big)$ (see \Cref{eq:thm3-cost-eq}). 
Let $\Lambda_i$ denote this value, that is, $\Lambda_i = \bigo\!\big(\frac{n\ln^2 n}{2^i}\big)$. 
Because the cost of the algorithm is no more than $|U|\cdot\Lambda_i\le n^2$ always, and since $\Tau\in[2^i,2^{i+1})$ forces at least $k-r(\instance_\sigma)+1$ vertices of $Y_{r(\instance_\sigma)}$ to arrive by $2^{i+1}$, the computation above gives
\begin{align*}
\Pr&\big[\Tau\in[2^i,2^{i+1})\big]\cdot\E\big[\mathrm{cost}\mid \Tau\in[2^i,2^{i+1})\big] \\
&\le \E\!\left[\mathrm{cost}\cdot\mathbf 1\!\left[\Tau\in[2^i,2^{i+1}),\ k-r(\instance_\sigma)<\tfrac{\alpha}{2}\ln n\right]\right] + n^2\cdot n^{-\alpha/12}\\
&\le \E\!\left[\mathrm{cost}\cdot\mathbf 1\!\left[\Tau\in[2^i,2^{i+1}),\ k-r(\instance_\sigma)<\tfrac{\alpha}{2}\ln n\right]\right] + \tfrac1n ,
\end{align*}
provided $\alpha\ge 36$. Henceforth we bound the first term, i.e.\ we may assume $k-r(\instance_\sigma)<\frac{\alpha}{2}\ln n$. 
On this event, \Cref{lem:tail sum} gives the \emph{deterministic} size bound
\begin{equation}\label{eq:Ysize}
   |Y_{r(\instance_\sigma)}|\ \le\ 2\,(k-r(\instance_\sigma))\,\avg{\instance_\sigma}\ \le\ \alpha\,\avg{\instance_\sigma}\,\ln n .
\end{equation}

\smallskip
\noindent \textsl{We are ready to upper bound $|U|$.}
Let $Y' = \bigcup_{j > k} Q_j \cup \bigcup_j P_j(\sigma)\subseteq Y$, let $N_{Y'}$ denote the number of vertices arriving from $Y'$ by time $2^{i+1}$, and recall $|Y'|=\sml{\instance_\sigma}$. 
We decompose $U$ into the uncovered vertices lying in $Y'$ and those lying outside $Y'$. 
The first type (uncovered vertices lying in $Y'$) are uncovered vertices not in the top-$k$ cliques of $\instance_\sigma$, and by definition $|U\cap Y'|\le \sml{\instance_\sigma}$. 
The second type (uncovered vertices lying outside $Y'$) are uncovered vertices from the top-$k$ cliques of $\instance_\sigma$. 
Because $k+1$ pivots arrive by $\Tau$, of which at most $r$ come from the largest $r$ cliques (with high probability), at most $N_{Y'}$ pivot clusters of $\bigcup_{j\le k}Q_j\setminus P_j$ fail to have their pivot arrive by $\Tau$, and by \Cref{lem: uncovered} each such missing cluster contributes at most $\frac{2\gamma n\ln n}{2^i}+1$ vertices to $U$ (whp). 
Setting $\kappa_i=\frac{2\gamma n\ln n}{2^i}+1=\bigo\!\big(\frac{n\ln n}{2^i}\big)$, we obtain the pointwise bound (on the high-probability event, otherwise absorbing $n^2\cdot\Pr[\text{bad}]\le \frac1n$ as above)
\begin{equation}\label{eq:Ubound}
   |U|\ \le\ \sml{\instance_\sigma}\ +\ N_{Y'}\cdot\kappa_i .
\end{equation}

\smallskip
\noindent\textsl{We calculate the contribution to expected cost using second moments.}
Since $\Tau\in[2^i,2^{i+1})$ forces $N_Y\ge k-r(\instance_\sigma)+1\ge 2$, we have
$\mathbf 1[\Tau\in[2^i,2^{i+1})]\le\mathbf 1[N_Y\ge 2]$. Combining the fact that $\mathrm{cost}\le |U|\cdot \Lambda_i$ with
\Cref{eq:Ubound},
\[
   \E\!\left[\mathrm{cost}\cdot\mathbf 1[\Tau\in[2^i,2^{i+1})]\right]
   \ \le\ \Lambda_i\underbrace{\E\!\left[\sml{\instance_\sigma}\,\mathbf 1[N_Y\ge 2]\right]}_{\text{Term B}}
        \ +\ \Lambda_i\kappa_i\underbrace{\E\!\left[N_{Y'}\,\mathbf 1[N_Y\ge 2]\right]}_{\text{Term A}} .
\]
For Term A we use the pointwise inequality $N_{Y'}\,\mathbf 1[N_Y\ge2]\le N_{Y'}(N_Y-1)$. 
(Both sides
vanish using $Y'\subseteq Y$ when $N_Y\le 1$, and when $N_Y\ge2$ it reads $N_{Y'}\le N_{Y'}(N_Y-1)$.)
For Term B we use $\mathbf 1[N_Y\ge2]\le\binom{N_Y}{2}$ together with $\sml{\instance_\sigma}=|Y'|$. Thus,
by \Cref{cor:secondmoment}(a) and (b) respectively,
\begin{align*}
   &\text{Term A}\ \le\ \E\!\left[N_{Y'}(N_Y-1)\right]\ \le\ \Big(\tfrac{2^{i+1}}{n}\Big)^{2}\E\!\left[\sml{\instance_\sigma}\,|Y|\right], \hspace{1mm} \mbox{ and }
   \\ 
   &\text{Term B}\ \le\ \E\!\left[|Y'|\tbinom{N_Y}{2}\right]\ \le\ \tfrac12\Big(\tfrac{2^{i+1}}{n}\Big)^{2}\E\!\left[\sml{\instance_\sigma}\,|Y|^{2}\right].
\end{align*}

\smallskip
\noindent\textsl{We next assemble the two terms to derive an upper bound for a fixed $i$.}
All expectations below are conditional over the event $\{k-r(\instance_\sigma)<\frac{\alpha}{2}\ln n\}\cap\{\avg{\instance_\sigma}\le \frac{n}{2^{i+3}}\}$, on which \Cref{eq:Ysize} holds. Using $|Y|\le \alpha\,\avg{\instance_\sigma}\ln n$ throughout:
\begin{align*}
\Lambda_i\kappa_i\cdot\text{Term A}
&\le \bigo\!\Big(\tfrac{n^2\ln^3 n}{4^i}\Big)\cdot\tfrac{4\cdot 4^i}{n^2}\cdot\E\!\left[\sml{\instance_\sigma}\,|Y|\right]
= \bigo(\ln^3 n)\cdot\E\!\left[\sml{\instance_\sigma}\cdot\alpha\avg{\instance_\sigma}\ln n\right] \\
& = \bigo(\ln^4 n)\,\E\!\left[\sml{\instance_\sigma}\avg{\instance_\sigma}\right], \hspace{1mm} \mbox{ and } \\[2mm]
\Lambda_i\cdot\text{Term B}
&\le \bigo\!\Big(\tfrac{n\ln^2 n}{2^i}\Big)\cdot\tfrac{2\cdot 4^i}{n^2}\cdot\E\!\left[\sml{\instance_\sigma}\,|Y|^2\right]
= \bigo\!\Big(\tfrac{4^i\ln^2 n}{n\,2^i}\Big)\cdot\E\!\left[\sml{\instance_\sigma}\,(\alpha\avg{\instance_\sigma}\ln n)^2\right]\\
& = \bigo\!\Big(\tfrac{2^i\ln^4 n}{n}\Big)\cdot\E\!\left[\sml{\instance_\sigma}\,\avg{\instance_\sigma}^{\,2}\right] .
\end{align*}
For Term B we now use the sub-case hypothesis $\avg{\instance_\sigma}\le \frac{n}{2^{i+3}}$, i.e.\ $\avg{\instance_\sigma}^{\,2}\le \frac{n}{2^{i+3}}\,\avg{\instance_\sigma}$, so that we can derive $\frac{2^i}{n}\E[\sml{\instance_\sigma}\avg{\instance_\sigma}^{\,2}]\le\frac18\,\E[\sml{\instance_\sigma}\avg{\instance_\sigma}]$. Hence also
\[
   \Lambda_i\cdot\text{Term B}\ \le\ \bigo(\ln^4 n)\,\E\!\left[\sml{\instance_\sigma}\avg{\instance_\sigma}\right].
\]
The factor $4^i$ has cancelled in both terms: directly in Term A, and in Term B via the extra $\frac{n}{2^{i}}$ supplied by the sub-case hypothesis. Combining,
\begin{align*}
\Pr&\Big[\Tau\in[2^i,2^{i+1}),\ \avg{\instance_\sigma}\le\tfrac{n}{2^{i+3}}\Big]\cdot
\E\Big[\mathrm{cost}\ \Big|\ \Tau\in[2^i,2^{i+1}),\ \avg{\instance_\sigma}\le\tfrac{n}{2^{i+3}}\Big]
\\ 
&\le\ \bigo(\ln^4 n)\,\E\!\left[\sml{\instance_\sigma}\avg{\instance_\sigma}\,\mathbf 1\big[\avg{\instance_\sigma}\le\tfrac{n}{2^{i+3}}\big]\right]+\tfrac1n .
\end{align*}
}

\noindent
Combining this with the other cases completes the proof of the theorem.

\end{proof}

\section*{Acknowledgments}
We disclose that a large language model ChatGPT Pro was used in the preparation of this paper.  We used ChatGPT Pro to assist with the proof of Lemma~\ref{lem: uncovered} and for completing the second sub-case of the case of small $U$ in the proof of Theorem~\ref{thm:BalancedPivotUpper}.  The authors are responsible for the correctness and originality of all content.

\bibliographystyle{alpha}
\bibliography{ref}

\newpage
\appendix

\section{NP-Hardness of Offline \kCC\ on Disjoint Cliques}
\label{app:np-hardness}

Let $\instance$ be an instance consisting of $\ell$ disjoint cliques
$K_1, \ldots, K_\ell$ with sizes $n_1, \ldots, n_\ell$ and let
$n = \sum_{i=1}^{\ell} n_i$.
For a $k$-clustering $\mathcal{C} = \{C_1, \ldots, C_k\}$ that keeps
every clique intact (which is without loss of generality by
Lemma~\ref{lem:cliques-intact}), write $c_j = |C_j|$ for the size of
cluster $j$.  The only disagreements arise from pairs of vertices
belonging to \emph{different} cliques that are placed in the
\emph{same} cluster.  Counting such pairs gives

\begin{equation}
  \mathrm{Cost}(\mathcal{C})
  \;=\;
  \sum_{j=1}^{k}
    \sum_{\substack{i < i' \\ K_i,\,K_{i'} \subseteq C_j}}
      n_i n_{i'}
  \;=\;
  \frac{1}{2}
  \Bigl(
    \sum_{j=1}^{k} c_j^2
    \,-\,
    \sum_{i=1}^{\ell} n_i^2
  \Bigr).
  \label{eq:cost-squares}
\end{equation}

Since $\sum_{i=1}^{\ell} n_i^2$ is a constant determined by the input,
minimizing $\mathrm{Cost}(\mathcal{C})$ is \emph{equivalent} to
minimizing $\sum_{j=1}^{k} c_j^2$, i.e., finding a $k$-partition of
$\{n_1,\ldots,n_\ell\}$ whose group sums are as balanced as possible.

\begin{theorem}
\label{thm:np-hard-cliques-strong}
\kCC\ when the input graph is a collection of disjoint cliques is strongly NP-hard.
\end{theorem}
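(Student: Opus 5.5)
We reduce from \textsc{3-Partition}, which is strongly NP-hard: given $3m$ positive integers $a_1,\ldots,a_{3m}$ with $\sum_i a_i = mB$ and $B/4 < a_i < B/2$, decide whether they can be split into $m$ triples each summing to exactly $B$. Strong NP-hardness means the problem stays hard when the $a_i$ are encoded in unary, i.e.\ bounded by a polynomial in $m$. This lets us build a graph with $\sum_i a_i = mB$ vertices in polynomial time, which is exactly what a strong-hardness reduction for \kCC\ needs, since our instance size is the number of vertices.

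\textbf{Construction.} Given the \textsc{3-Partition} instance, create one clique $K_i$ of size $n_i = a_i$ for each $i$, and set $k = m$. The graph has $n = mB$ vertices and is built in time polynomial in the unary input length.

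\textbf{Key identity.} By Lemma~\ref{lem:cliques-intact}, some optimal $k$-clustering keeps every clique intact, so we may restrict attention to such clusterings. By \eqref{eq:cost-squares}, minimizing cost is then equivalent to minimizing $\sum_{j=1}^{m} c_j^2$ subject to $\sum_j c_j = mB$, where the $c_j$ are nonnegative integers. By convexity (or Cauchy--Schwarz), $\sum_j c_j^2 \ge (mB)^2/m = mB^2$, with equality iff $c_j = B$ for all $j$. Set the threshold
\[
\theta = \tfrac12\Bigl(mB^2 - \sum_i a_i^2\Bigr).
\]
The claim is that the \kCC\ instance has a clustering of cost at most $\theta$ if and only if the \textsc{3-Partition} instance is a yes-instance.

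\textbf{Correctness, in order.}
\begin{enumerate}
\item \emph{(yes $\Rightarrow$ cost $\le \theta$.)} Given a valid 3-partition, put the three cliques of each triple into a common cluster. Every cluster has size exactly $B$, so by \eqref{eq:cost-squares} the cost equals $\theta$.
\item \emph{(cost $\le \theta$ $\Rightarrow$ yes.)} Take any clustering of cost at most $\theta$. By Lemma~\ref{lem:cliques-intact} there is an intact clustering of no larger cost. The equality case of the convexity bound then forces $c_j = B$ for every $j$. Because each $a_i$ lies strictly between $B/4$ and $B/2$, a cluster of size exactly $B$ contains exactly three cliques. Hence the clusters form a valid 3-partition.
\end{enumerate}

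\textbf{Expected obstacle.} The only delicate point is the step from the given clustering to an intact one. Lemma~\ref{lem:cliques-intact} is stated for optimal solutions, but its exchange argument never increases cost. Applied repeatedly to an arbitrary clustering, it therefore yields an intact clustering whose cost is no larger. I would point this out explicitly. Once that is in place, the equality case of the convexity bound finishes the argument. The cost $\theta$ and all sizes are polynomial in $m$, so the reduction is pseudo-polynomial, and strong NP-hardness follows.
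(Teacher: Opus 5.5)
Your proposal is correct and matches the paper's proof essentially step for step. You use the same reduction from \textsc{3-Partition} (cliques of sizes $a_i$, $k=m$, threshold $\tfrac12(mB^2-\sum_i a_i^2)$), the same Cauchy--Schwarz equality argument forcing every cluster to have size $B$, and the same use of $B/4<a_i<B/2$ to get exactly three cliques per cluster. Your explicit remark that the exchange argument of Lemma~\ref{lem:cliques-intact} applies to arbitrary, not just optimal, clusterings is a harmless extra. The paper's ``we may assume intact'' step is equally justified by passing to an intact optimal clustering, whose cost is at most that of the given one.
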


We reduce from \textsc{3-Partition}, which is strongly NP-complete \cite{DBLP:books/fm/GareyJ79}:

\begin{quote}
\textbf{\textsc{3-Partition}.}
  Given $3m$ positive integers $a_1, \ldots, a_{3m}$ satisfying
  $B/4 < a_i < B/2$ for every $i$, where $B = \tfrac{1}{m}\sum_{i=1}^{3m} a_i$
  is an integer, does there exist a partition of $[3m]$ into $m$
  triples $T_1, \ldots, T_m$ such that $\sum_{i \in T_j} a_i = B$
  for every $j \in [m]$?
\end{quote}

Note that the size constraint $B/4 < a_i < B/2$ implies that every
feasible triple contains \emph{exactly} three elements: any two
elements sum to less than $B$, and any four elements sum to more
than $B$.

\begin{proof}
    Let $a_1, \ldots, a_{3m}$ with $B/4 < a_i < B/2$ and
    $\sum_{i=1}^{3m} a_i = mB$ be an arbitrary \textsc{3-Partition}
    instance.
    Form an instance $\instance$ of \kCC as follows:
    \begin{itemize}
        \item Create $\ell = 3m$ disjoint cliques $K_1, \dots , K_{3m}$ with $|K_i| = a_i$ for each $i \in [3m]$.
        \item Set the number of clusters $k = m$
        \item Set cost threshold $T = \frac{1}{2} \big( mB^2 - \sum_{i = 1}^{3m} a_i^2 \big)$
    \end{itemize}

    The total vertex count is $n = \sum_{i=1}^{3m} a_i = mB$, so the
    average cluster size is $B$.  The construction runs in polynomial time
    in the input size; since \textsc{3-Partition} is strongly NP-complete,
    all $a_i$ are polynomially bounded in $3m$, so $n = mB$ is polynomial
    in the input length.

    \noindent\textbf{Correctness ($\Rightarrow$).}

    Suppose the \textsc{3-Partition} instance is a YES-instance; let
    $T_1, \ldots, T_m$ be a valid partition into triples each summing to
    $B$.  Assign the cliques $\{K_i : i \in T_j\}$ to cluster $C_j$ for
    each $j \in [m]$.  Every cluster has size exactly $B$, so by
    \eqref{eq:cost-squares},
    \[
      \mathrm{Cost}(\mathcal{C})
      \;=\;
      \frac{1}{2}\Bigl(mB^2 - \sum_{i=1}^{3m} a_i^2\Bigr)
      \;=\; T.
    \]

    \noindent\textbf{Correctness ($\Leftarrow$).}
    Suppose $\mathrm{Cost}(\mathcal{C}) \le T$ for some $k$-clustering
    $\mathcal{C}$.  By Lemma~\ref{lem:cliques-intact} we may assume
    $\mathcal{C}$ keeps every clique intact, so the cluster sizes
    $c_1, \ldots, c_m$ are non-negative integers summing to $n = mB$.
    By \eqref{eq:cost-squares} and the assumption $\mathrm{Cost}(\mathcal{C}) \le T$,
    \[
      \sum_{j=1}^{m} c_j^2
      \;\le\;
      2T + \sum_{i=1}^{3m} a_i^2
      \;=\; mB^2.
    \]
    By the Cauchy--Schwarz inequality,
    \[
      \sum_{j=1}^{m} c_j^2
      \;\ge\;
      \frac{1}{m}\Bigl(\sum_{j=1}^{m} c_j\Bigr)^{\!2}
      \;=\;
      \frac{(mB)^2}{m}
      \;=\; mB^2,
    \]
    with equality if and only if $c_j = B$ for every $j \in [m]$.
    Therefore, $\mathrm{Cost}(\mathcal{C}) \le T$ forces every cluster to
    have size exactly $B$.
    
    It remains to show that each cluster contains exactly three cliques. Since $a_i > B/4$, any four elements sum to $> B$. Since $a_i < B/2$, any two elements sum to $< B$. Thus, a sum of exactly $B$ must involve exactly three elements.
    The indices of the cliques in each cluster $C_j$ form a triple
    $T_j$ with $\sum_{i \in T_j} a_i = c_j = B$, yielding a valid
    solution to the \textsc{3-Partition} instance.

    Since the reduction is polynomial, \kCC
    on disjoint cliques is NP-hard.
    
\end{proof}

\section{PTAS for Offline \texorpdfstring{$k$}{k}-CC on Disjoint Cliques}
\label{app:PTAS}
Our PTAS closely follows the approach of~\cite{alon1998approximation} for a closely related scheduling and load balancing problem.
Let $\instance$ be an instance of $k$-CC consisting of disjoint cliques
$K_1, \ldots, K_\ell$ with sizes $n_1 \geq n_2 \geq \cdots \geq n_\ell$, and let $n = \sum_{i=1}^\ell n_i$.
By Lemma~\ref{lem:cliques-intact}, we may assume without loss of generality that every
optimal clustering keeps each clique intact. Thus, the problem reduces to partitioning
$\{n_1, \ldots, n_\ell\}$ into $k$ clusters to minimize $\sum_{j=1}^k c_j^2$, where $c_j$ is
the total size of cluster $j$. Let $L = n/k$ denote the average cluster size.

\begin{claim}[Large cliques occupy their own cluster]
\label{claim:large-cliques}
Without loss of generality, every clique has size strictly less than $L = n/k$.
\end{claim}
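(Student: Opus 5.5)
We reduce to the case where every clique is smaller than the average load by repeatedly stripping off large cliques. The reduction has to preserve the objective $\sum_j c_j^2$ and, more importantly, preserve approximation ratios. Otherwise the PTAS on the reduced instance would not transfer back.

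\textbf{Exchange argument.} Suppose $n_1 \ge L = n/k$. We show that some optimal clustering (which keeps cliques intact, by Lemma~\ref{lem:cliques-intact}) places $K_1$ alone in its cluster. Take an optimal solution in which $K_1$'s cluster $C_a$ also contains another clique $K_i$. Since the total load is $n$ over $k$ clusters, some cluster $C_b \neq C_a$ has load $c_b \le (n - c_a)/(k-1)$. We have $c_a \ge n_1 + n_i > n_1 \ge L$, and therefore $c_b < L \le n_1 \le c_a - n_i$.

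Now move $K_i$ from $C_a$ to $C_b$. The objective changes by
\[
(c_a - n_i)^2 + (c_b + n_i)^2 - c_a^2 - c_b^2 = 2n_i\,(c_b - c_a + n_i) < 0,
\]
using $c_b < c_a - n_i$. This is a strict improvement, which contradicts optimality. Repeating the argument shows that $K_1$ is alone in some optimal clustering.

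\textbf{Reduction and recursion.} Remove $K_1$ together with one cluster. This leaves the instance $\{K_2,\dots,K_\ell\}$ with $k-1$ clusters and new average $L' = (n-n_1)/(k-1)$. The optimal cost of the original instance equals the optimal cost of the reduced instance, since $K_1$ alone contributes zero disagreements, and in the squares formulation \eqref{eq:cost-squares} its terms $n_1^2$ cancel.

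Iterate while the largest remaining clique is at least the current average. By Definition~\ref{def:threshold}, this process stops exactly after $r(\instance)$ steps. At that point every remaining clique has size $< \avg{\instance}$, which is the claimed normal form, with $n$ and $k$ replaced by $n-\sum_{i\le r}n_i$ and $k-r$.

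\textbf{Transferring approximation guarantees.} Let $C'$ be any clustering of the reduced instance with cost at most $(1+\eps)$ times its optimum. Adding the $r$ singleton clusters for $K_1,\dots,K_r$ gives a clustering of the original instance with exactly the same cost. Since the two optima are equal, the $(1+\eps)$ ratio carries over unchanged.

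The main obstacle is the inequality $c_b < c_a - n_i$ in the exchange step. It depends on the averaging bound for $c_b$ and on $c_a$ strictly exceeding $L$. The edge case $n_1 = L$ exactly needs care: if there is a tie, either both sides are equal and the move does not increase cost, which still suffices for "without loss of generality", or the inequality is strict. I would handle this by allowing non-increasing moves and noting that each move strictly reduces the number of cliques sharing a cluster with $K_1$, so the process terminates.
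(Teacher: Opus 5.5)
Your proof is correct and takes the same route as the paper. Both show, by an exchange argument, that a clique of size at least $L$ sits alone in every optimal clique-preserving clustering, since moving a co-clustered clique to a less loaded cluster strictly decreases $\sum_j c_j^2$; both then peel such cliques off one at a time, which stops after $r(\instance)$ steps. The paper invokes convexity via Observation~2.1 of Alon et al., whereas you write out the averaging bound explicitly. Your concern about the tie $n_1 = L$ is unnecessary: $c_a \ge n_1 + n_i > L$ already forces $c_b < L \le c_a - n_i$ strictly, so the move is a strict improvement in that case too.
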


\begin{proof}
We first observe that by equation~\eqref{eq:cost-squares}, minimizing the
disagreement cost over $k$-clusterings of a disjoint cliques instance is
equivalent to minimizing $\sum_{j=1}^k c_j^2$, since $\sum_{i=1}^\ell n_i^2$
is a constant determined by the input.  Thus the problem is precisely an instance
of the scheduling problem $P|\cdot|\sum f(C_i)$ of \cite{alon1998approximation} with
$f(x) = x^2$.  We note that while the raw disagreement cost
$\frac{1}{2}(\sum_j c_j^2 - \sum_i n_i^2)$ does not itself satisfy condition
$(F^*)$ of \cite{alon1998approximation} due to the additive constant $-\frac{1}{2}\sum_i
n_i^2$, this constant plays no role in comparing solutions: any two clusterings
differ in cost only through the $\sum_j c_j^2$ term, so it suffices to apply
the observations of \cite{alon1998approximation} to $f(x) = x^2$ directly (which is
non-negative, convex, and satisfies $(F^*)$).

With this reduction in hand, Observation~2.1 of \cite{alon1998approximation} applies
directly.  If clique $K_i$ has size $n_i \geq L$, then in any optimal solution
it must occupy its own cluster: placing any additional clique into the same cluster as
$K_i$ strictly increases $\sum_j c_j^2$ by convexity of $f(x) = x^2$.  We
iteratively assign each such large clique to its own dedicated cluster, remove it
from the instance, and reduce $k$ by one.  Updating $L$ after each removal, we
repeat until all remaining cliques have size strictly less than the current
average load.
\end{proof}

Claim~\ref{claim:large-cliques} may remove some of the largest cliques and
correspondingly decrease $k$. From this point on, $n$, $k$, and
$L = n/k$ refer to the \emph{reduced} instance obtained after this
peeling process --- i.e.\ the sub-instance consisting of the cliques that
remain once every clique of size $\geq L$ (recomputed at each step) has
been removed and assigned its own cluster. This reduced instance satisfies
the hypothesis of Claim~\ref{claim:cluster-sizes}: every clique has size
strictly less than its $L$.

\begin{claim}
\label{claim:cluster-sizes}
There exists an optimal $k$-clustering in which every cluster has total size $c_j$
satisfying $L/2 < c_j < 2L$.
\end{claim}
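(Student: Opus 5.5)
Take an optimal $k$-clustering that keeps cliques intact (Lemma~\ref{lem:cliques-intact}); among all optimal ones, choose one minimizing the potential $\sum_j c_j^2$ (the same objective, so simply optimal), and if needed break ties by the lexicographic order of the sorted cluster sizes. Recall that every clique has size $n_i < L$ in the reduced instance. We prove the two bounds separately, each by an exchange argument showing that a violating cluster allows a move that strictly decreases $\sum_j c_j^2$, contradicting optimality.

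\emph{Upper bound $c_j < 2L$.} Suppose $c_j \ge 2L$. Since every clique has size $< L$, cluster $C_j$ contains at least three cliques; in particular, pick any clique $K_i \subseteq C_j$. By averaging, some cluster $C_{j'}$ has $c_{j'} \le L$. Moving $K_i$ from $C_j$ to $C_{j'}$ changes $\sum c^2$ by $2n_i(c_{j'} - c_j + n_i)$. This is negative because $c_{j'} + n_i < L + L \le c_j$. Hence the move strictly improves, a contradiction, so $c_j < 2L$.

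\emph{Lower bound $c_j > L/2$.} Suppose $c_j \le L/2$. Since the average load is $L$, some cluster $C_{j'}$ has $c_{j'} > L$ (if all loads were $\ge L$ we could not have one at most $L/2$). If $C_{j'}$ contained a single clique, then $c_{j'} = n_i < L$, which is impossible, so $C_{j'}$ contains at least two cliques. Take the smallest clique $K_i$ in $C_{j'}$; then $n_i \le c_{j'}/2$. Move $K_i$ to $C_j$. The change in $\sum c^2$ is $2n_i(c_j - c_{j'} + n_i)$. We need $c_j + n_i < c_{j'}$. We have $c_j \le L/2 < c_{j'}/2$ and $n_i \le c_{j'}/2$, so $c_j + n_i < c_{j'}$, giving a strict decrease and a contradiction.

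The only delicate point is the strictness in the lower-bound case: we need the strict inequality $c_{j'} > L$. This holds unless all clusters have load exactly $L$, in which case no cluster has load $\le L/2$ anyway (assuming $L > 0$). Since both exchanges strictly decrease the objective, a single optimal clustering satisfies both bounds simultaneously, with no iterative potential argument needed. The remaining concern is an empty cluster ($c_j = 0$). This is covered by the same lower-bound argument, since $0 \le L/2$.
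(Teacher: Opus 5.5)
Your proof is correct and follows the paper's argument: the same single-clique transfer, namely any clique moved out of a cluster of load $\ge 2L$ into one of load $\le L$, and the smallest clique of a cluster of load $> L$ moved into one of load $\le L/2$. Your lower-bound case is slightly cleaner, since you get the strict decrease directly from $c_j \le L/2 < c_{j'}/2$ instead of handling the paper's separate equality case.

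One small fix: your parenthetical has the inequality reversed. A cluster with $c_{j'} > L$ exists because if all loads were $\le L$, the cluster of load $\le L/2 < L$ would make the total strictly less than $kL$. The lexicographic tie-breaking is also unnecessary.
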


\begin{proof}
We show that any optimal $k$-clustering $\mathcal{C}^*$, minimizing
$\sum_{j=1}^k (c_j^*)^2$ subject to $\sum_j c_j^* = n = kL$, must satisfy
$L/2 < c_j^* < 2L$ for all $j$.  In both cases we derive a contradiction by
exhibiting a clique transfer that strictly decreases $\sum_j (c_j^*)^2$.
Recall that transferring a clique of size $n_i$ from cluster $C_a$ to cluster $C_b$
changes $\sum_j (c_j^*)^2$ by
\begin{equation}
\label{eq:exchange}
    2n_i\bigl(c_b^* - c_a^* + n_i\bigr),
\end{equation}
which is strictly negative if and only if $c_a^* - c_b^* > n_i$.

Suppose $c_a^* \geq 2L$ for some cluster $C_a$.  Since $\sum_j c_j^* = kL$, there
exists a cluster $C_b$ with $c_b^* \leq L$.  Pick any clique $K_i \subseteq C_a$.
By Claim~\ref{claim:large-cliques}, $n_i < L$, so
\[
    c_a^* - c_b^* \geq 2L - L = L > n_i.
\]
By~\eqref{eq:exchange}, transferring $K_i$ from $C_a$ to $C_b$ strictly
decreases $\sum_j (c_j^*)^2$, contradicting optimality.

Suppose $c_b^* \leq L/2$ for some cluster $C_b$.  Since $\sum_j c_j^* = kL$,
there exists a cluster $C_a$ with $c_a^* > L$.  Since every clique has size
$< L$ (Claim~\ref{claim:large-cliques}) and $c_a^* > L$, cluster $C_a$ contains
at least two cliques; let $K_i$ be a clique in $C_a$ of minimum size, so that
$n_i \leq c_a^*/2$.  Then
\[
    c_a^* - c_b^* \geq c_a^* - L/2 \geq c_a^*/2 \geq n_i,
\]
where the second inequality uses $c_a^* \geq L$ and the third uses
$n_i \leq c_a^*/2$.  Hence~\eqref{eq:exchange} is $\leq 0$.  If the
inequality is strict (i.e.\ $c_a^* - c_b^* > n_i$), we immediately have a
contradiction.  Otherwise $c_a^* - c_b^* = n_i$, meaning $c_a^* = c_b^* + n_i
\leq L/2 + n_i \leq L/2 + c_a^*/2$, giving $c_a^* \leq L$, which contradicts
$c_a^* > L$.  In either case we reach a contradiction, so no cluster can have size
$\leq L/2$.
\end{proof}

\begin{claim}
\label{claim:macro-cliques}
Let $\varepsilon \in (0,1)$.  Given instance $\instance$ in which all clique sizes are
less than $L$, define instance $\mathcal{J}$ as follows.
Cliques of size $\geq \varepsilon L$ are kept unchanged.  Cliques of size
$< \varepsilon L$ are greedily packed into \emph{macro-cliques}: scan the small
cliques in arbitrary order, accumulating them into the current macro-clique
until its total size first reaches or exceeds $\varepsilon L$, then start a new
macro-clique.  Each resulting macro-clique thus has total size in
$[\varepsilon L,\, 2\varepsilon L)$, with at most one remainder macro-clique of
size $< \varepsilon L$.  A $k$-clustering of $\mathcal{J}$ is a $k$-clustering of
$\instance$ subject to the additional constraint that the constituent cliques
of each macro-clique lie in a common cluster; no edges of $\instance$ are
added or removed.  Then:
\begin{enumerate}
\item Every feasible $k$-clustering of $\mathcal{J}$ is a feasible $k$-clustering
of $\instance$ of the \emph{same} cost.
\item $\mathrm{OPT}(\mathcal{J}) \leq (1 + \bigo(\varepsilon))\,\mathrm{OPT}(\instance)$.
\end{enumerate}
\end{claim}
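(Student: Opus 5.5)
Part 1 is immediate from the definition. A $k$-clustering of $\mathcal{J}$ assigns each macro-clique, and hence each of its constituent cliques, to one cluster. So it is literally a partition of $V(\instance)$ into at most $k$ clusters. Since no edges are added or removed, its disagreement count is computed on the same graph, and by \eqref{eq:cost-squares} the cost is $\frac12(\sum_j c_j^2-\sum_i n_i^2)$ with the original $n_i$. In particular, $\mathrm{OPT}(\mathcal{J})\ge\mathrm{OPT}(\instance)$, and it remains to show the reverse inequality up to a factor $1+\bigo(\varepsilon)$.

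For part 2, start from an optimal clustering $\mathcal{C}^*$ of $\instance$ that keeps cliques intact, chosen by \Cref{claim:cluster-sizes} so that every $c^*_j\in(L/2,2L)$. I will build a feasible clustering of $\mathcal{J}$ in the standard rounding fashion of \cite{alon1998approximation}.

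\textbf{Step 1.} Keep every clique of size $\ge\varepsilon L$ where $\mathcal{C}^*$ placed it, and remove all small cliques. Let $s_j$ be the small-clique volume removed from cluster $j$, so that $\sum_j s_j$ equals the total volume of the macro-cliques.

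\textbf{Step 2.} Reinsert the macro-cliques greedily. Process clusters in order and fill cluster $j$ with macro-cliques until its small-volume first reaches $s_j$. Each macro-clique has size $<2\varepsilon L$, so the new small-volume $s'_j$ satisfies $s'_j\le s_j+2\varepsilon L$. Because the total volume is conserved, all macro-cliques get placed (any leftover, including the single remainder macro-clique of size $<\varepsilon L$, goes to the last cluster with a deficit). A cleaner alternative, which I would try first, is to view this as a fractional assignment of macro-cliques to clusters and round it so that each cluster gains at most one extra macro-clique.

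The result is a clustering with loads $c'_j\le c^*_j+2\varepsilon L<(1+4\varepsilon)c^*_j$, using $c^*_j>L/2$, and with $\sum_j c'_j=\sum_j c^*_j$. Therefore $\sum_j c_j'^2\le(1+4\varepsilon)^2\sum_j c_j^{*2}$. Here $(1+4\varepsilon)^2\le 1+\bigo(\varepsilon)$ because $\varepsilon<1$.

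\textbf{Main obstacle.} The approximation is on $\sum_j c_j^2$, while cost is $\sum_j c_j^2$ minus the constant $\sum_i n_i^2$. A multiplicative bound on the former does not transfer directly to the cost, since $\mathrm{OPT}(\instance)$ could be much smaller than $\sum_j c_j^{*2}$. I would handle this by bounding the additive increase directly. The increase is
\[
\sum_j\bigl(c_j'^2-c_j^{*2}\bigr)\le\sum_j 2c^*_j(c'_j-c^*_j)^+ + \sum_j\bigl((c'_j-c^*_j)^+\bigr)^2 .
\]
This should be charged against the cost of $\mathcal{C}^*$. Every cluster $j$ that receives extra volume already contained small cliques of volume $s_j$, or else received volume only through the rounding. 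Each small vertex in a cluster of size $>L/2$, sitting in a clique of size $<\varepsilon L$, pays at least $L/2-\varepsilon L\ge L/4$ disagreements in $\mathcal{C}^*$. So $\mathrm{OPT}(\instance)\ge\frac{L}{8}\sum_j s_j$ up to constants, while the increase is $\bigo(\varepsilon L)\cdot\bigo(L)$ per affected cluster.

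The difficulty is the clusters with $s_j$ tiny that still receive one whole macro-clique of size up to $2\varepsilon L$. To avoid this, I would use the rounding so that clusters only receive macro-cliques where $s_j$ is proportionally present. Concretely, send each macro-clique to clusters in proportion to $s_j$ and round so that the overflow in cluster $j$ is at most $\min\{2\varepsilon L,\,\bigo(s_j)\}$ plus a single remainder term. The remainder term I would charge to $L\cdot\sum_j s_j$ with a factor $\varepsilon$, using that the total number of overflows is at most the number of macro-cliques, which is about $\sum_j s_j/(\varepsilon L)$. This yields an increase of $\bigo(\varepsilon)\,\mathrm{OPT}(\instance)$, proving part 2.
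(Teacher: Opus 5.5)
Your Part 1 is fine, and you are right that a multiplicative bound on $\sum_j c_j^2$ says nothing about the cost. The gap is in your repair. Your inequality $\sum_j(c_j'^2-c_j^{*2})\le\sum_j 2c^*_j(c'_j-c^*_j)^+ +\sum_j((c'_j-c^*_j)^+)^2$ discards the clusters that lose volume. The total gain $\sum_j(c'_j-c^*_j)^+$ is then $\Omega(W)$ on some instances however you round, where $W=\sum_j s_j$. For example, let every cluster have load $L$ and hold small cliques of total size $s_j=\varepsilon L/2$. Every macro-clique other than the remainder has size at least $\varepsilon L$, so roughly half the clusters receive no small volume. The others overflow by at least $\varepsilon L/2$ each, so the total overflow is about $W/2$. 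Since $c^*_j>L/2$, your bound is then $\Omega(LW)$. Here $\mathrm{OPT}(\instance)=\Theta(LW)$, so you get only a constant factor, not $1+\bigo(\varepsilon)$. Your own count says the same: about $W/(\varepsilon L)$ overflows, each charged $\bigo(\varepsilon L)\cdot\bigo(L)$, sum to $\bigo(LW)$, and the $\varepsilon$ cancels. Filling clusters in a fixed order is also unsafe as a construction. Early clusters with $s_j=1$ each absorb a whole macro-clique, and a late cluster with $s_j=L/2$ can be left with none of its small volume. That raises the cost by $\Theta(\mathrm{OPT}(\instance))$.

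The missing idea is cancellation between the clusters that gain and those that lose. Put $\delta_j=c'_j-c^*_j$ and $m=\min_b c^*_b$. Since $\sum_j\delta_j=0$, you get $\sum_j(c_j'^2-c_j^{*2})=\sum_j 2(c^*_j-m)\delta_j+\sum_j\delta_j^2$. Applying \eqref{eq:exchange} to the optimal $\mathcal{C}^*$ shows that every cluster containing a clique of size $<\varepsilon L$ has $c^*_j<m+\varepsilon L$. So reassign only among clusters with $s_j>0$, and keep $|\delta_j|<2\varepsilon L$ in both directions. One way: lay the macro-cliques and intervals of lengths $s_j$ side by side on $[0,W)$, and give each macro-clique to the interval containing its left endpoint. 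Then both sums are at most $4\varepsilon LW$, which is $\bigo(\varepsilon)\,\mathrm{OPT}(\instance)$. The paper avoids load bookkeeping altogether. It keeps $\mathcal{C}^*$ and co-locates the constituents of each macro-clique one at a time, always moving a group into the cluster with the smaller remaining load. By \Cref{lem:one-move}, each move costs at most the product of the two group sizes. A macro-clique $M$ of total size $w_M$ therefore costs at most its number of internal cross pairs, $\Delta_M\le\frac12 w_M^2<\varepsilon L w_M$. The total is below $\varepsilon LW$, which is charged against $\mathrm{OPT}(\instance)\ge\frac{1-2\varepsilon}{4}LW$.
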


\begin{proof}
\textbf{Part (1).}
By construction, a feasible $k$-clustering of $\mathcal{J}$ is precisely a
$k$-clustering of $\instance$ satisfying the extra constraint that each
macro-clique's constituent cliques lie in a common cluster. Since no edges
of $\instance$ are added or removed in forming $\mathcal{J}$, the cost
function is unchanged, so the cost of such a clustering, viewed as a
clustering of $\mathcal{J}$ or of $\instance$, coincides.

\textbf{Part (2).}
For a macro-clique $M$ with constituent cliques of sizes $n_i$, $i \in M$,
let
\[
\Delta_M \;=\; \sum_{\substack{i<j\\ i,j\in M}} n_i n_j, \qquad
\Delta \;=\; \sum_M \Delta_M,
\]
the total number of vertex pairs lying in two different constituent cliques
of a common macro-clique. We show
\begin{equation}
\label{eq:opt-J-upper}
\mathrm{OPT}(\mathcal{J}) \;\le\; \mathrm{OPT}(\instance) + \Delta,
\end{equation}
and
\begin{equation}
\label{eq:delta-bound}
\Delta \;\le\; \frac{4\varepsilon}{1-2\varepsilon}\,\mathrm{OPT}(\instance).
\end{equation}
Together these give, for $\varepsilon \le 1/8$,
\[
\mathrm{OPT}(\mathcal{J}) \;\le\; \left(1+\frac{4\varepsilon}{1-2\varepsilon}\right)
\mathrm{OPT}(\instance) \;\le\; (1+8\varepsilon)\,\mathrm{OPT}(\instance),
\]
proving Part (2).

\smallskip
\noindent\emph{Proof of \eqref{eq:opt-J-upper}.}
We exhibit a $k$-clustering of $\mathcal{J}$, built from an optimal
clique-preserving $k$-clustering of $\instance$ (Lemma~\ref{lem:cliques-intact}),
whose cost exceeds $\mathrm{OPT}(\instance)$ by at most $\Delta$.

\begin{lemma}
\label{lem:one-move}
Let $G_1, G_2$ be two disjoint groups of vertices with no edges of
$\instance$ between them, currently assigned, in some $k$-clustering, to
different clusters whose loads excluding $G_1$ and $G_2$ respectively are
$x$ and $y$. Moving $G_2$ into $G_1$'s cluster (if $x \le y$; otherwise
$G_1$ into $G_2$'s) changes the total cost by at most $|G_1|\,|G_2|$.
\end{lemma}

\begin{proof}
Say $x \le y$ and move $G_2$ into $G_1$'s cluster; write $g_1=|G_1|$,
$g_2=|G_2|$. Removed: the $g_2 y$ disagreements between $G_2$ and its
former surroundings. Added: $g_2 x$ disagreements between $G_2$ and $G_1$'s
former surroundings, plus $g_1 g_2$ disagreements between $G_1$ and $G_2$
themselves (they have no edges between them and are co-located for the
first time). No other pair changes status. The net change is
\[
(g_2 x + g_1 g_2) - g_2 y \;=\; g_1 g_2 + g_2(x-y) \;\le\; g_1 g_2,
\]
since $x \le y$.
\end{proof}

Fix a macro-clique $M = \{K_1,\dots,K_r\}$ of sizes $n_1,\dots,n_r$. Since
distinct original cliques have no edges between them,
Lemma~\ref{lem:one-move} applies to any two of $K_1,\dots,K_r$ not yet
co-located. Apply it $r-1$ times: at step $j=2,\dots,r$, let $G_1$ be the
block $K_1 \cup \cdots \cup K_{j-1}$ already forced together by the
previous steps (size $n_1+\cdots+n_{j-1}$) and $G_2 = K_j$ (size $n_j$);
skip the step at zero cost if they are already co-located. By
Lemma~\ref{lem:one-move}, step $j$ costs at most
$(n_1+\cdots+n_{j-1})\,n_j$. Summing over $j=2,\dots,r$ and using the
identity $\sum_{j=2}^r \big(\sum_{i<j} n_i\big) n_j = \sum_{i<j} n_i n_j$,
the total cost of co-locating all of $M$ is at most $\Delta_M$.

Performing this independently for every macro-clique $M$ (cliques
belonging to different macro-cliques are never moved relative to one
another) yields a clustering satisfying $\mathcal{J}$'s co-location
constraint, whose cost exceeds that of the starting optimal clustering of
$\instance$ by at most $\sum_M \Delta_M = \Delta$. This proves
\eqref{eq:opt-J-upper}.

\smallskip
\noindent\emph{Proof of \eqref{eq:delta-bound}.}
Write $\mathcal{S}$ for the set of small cliques and $W = \sum_{i \in
\mathcal{S}} n_i$. If $W=0$ then $\Delta = 0$ and \eqref{eq:delta-bound}
is immediate, so assume $W>0$. Every macro-clique (including the possible
remainder) has size strictly less than $2\varepsilon L$, so
\[
\Delta \;\le\; \frac12 \sum_M w_M^2
\;\le\; \varepsilon L \sum_M w_M \;=\; \varepsilon L W,
\]
where $w_M$ denotes the total size of macro-clique $M$.

Fix an optimal clique-preserving $k$-clustering of $\instance$
(Lemma~\ref{lem:cliques-intact}), and for each small clique $i \in
\mathcal{S}$ let $c(i)$ denote the load of the cluster containing it. A
disagreement between two small cliques is counted once in each of their
two incidence terms $n_i(c(i)-n_i)$ (hence twice in $\sum_i
n_i(c(i)-n_i)$), while a disagreement between a small and a non-small
clique is counted once, so
\[
\mathrm{OPT}(\instance) \;\ge\; \frac12 \sum_{i \in \mathcal{S}} n_i\big(c(i)-n_i\big).
\]
By Claim~\ref{claim:cluster-sizes}, $c(i) > L/2$ for every cluster, and
$n_i < \varepsilon L$ for $i \in \mathcal{S}$, so
\[
\mathrm{OPT}(\instance) \;\ge\; \frac12 \sum_{i \in \mathcal{S}} n_i
\left(\frac{L}{2} - \varepsilon L\right) \;=\; \frac{1-2\varepsilon}{4}\,LW.
\]
Combining the two displayed bounds,
\[
\Delta \;\le\; \varepsilon L W \;\le\;
\frac{4\varepsilon}{1-2\varepsilon}\,\mathrm{OPT}(\instance),
\]
which is \eqref{eq:delta-bound}. \qedhere
\end{proof}

\begin{claim}
\label{claim:rounding}
Given instance $\mathcal{J}$ in which all clique sizes lie in $[\varepsilon L, L)$,
except for at most one remainder clique of size $< \varepsilon L$
(Claim~\ref{claim:macro-cliques}), define instance $\mathcal{K}$ by rounding
each clique size of the former kind \emph{up} to the nearest value of the
form $(1+\varepsilon)^i \cdot \varepsilon L / 2$ for a non-negative integer
$i$, and by leaving the remainder clique's size, if present, unrounded.
Then, $\mathrm{OPT}(\mathcal{K}) \leq (1 + \bigo(\varepsilon))\,\mathrm{OPT}(\mathcal{J})$.
Furthermore, an optimal solution to $\mathcal{K}$ can be found in polynomial time.
\end{claim}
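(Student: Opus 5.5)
Two things need proving for Claim~\ref{claim:rounding}: the approximation guarantee, by a coupling between solutions of $\mathcal{J}$ and $\mathcal{K}$, and the polynomial-time solvability of $\mathcal{K}$, by a configuration dynamic program in the style of \cite{alon1998approximation}. Throughout I work with $\sum_j c_j^2$, which by \eqref{eq:cost-squares} equals twice the cost plus the fixed constant $\sum_i n_i^2$.

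\textbf{Approximation.} Take an optimal clustering of $\mathcal{J}$ with loads $c_j$ and apply the same assignment of cliques to $\mathcal{K}$. Each rounded size $n_i'$ satisfies $n_i \le n_i' \le (1+\varepsilon)n_i$, so the new loads satisfy $c_j' \le (1+\varepsilon)c_j$. Hence the rounded solution has $\sum_j (c_j')^2 \le (1+\varepsilon)^2 \sum_j c_j^2$. The same multiplicative bound holds for $\sum_i (n_i')^2$, so I must also control the subtracted constant. I would not compare costs directly. Instead, I bound the additive increase in the quantity $\sum_j c_j^2-\sum_i n_i^2$ by
\[
\sum_j \bigl((c_j')^2-c_j^2\bigr) \le \bigl((1+\varepsilon)^2-1\bigr)\sum_j c_j^2 .
\]
Next, Claim~\ref{claim:cluster-sizes} gives $c_j<2L$ and Claim~\ref{claim:macro-cliques} gives $n_i \ge \varepsilon L$ except for one remainder. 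Together these say each cluster holds at most $O(1/\varepsilon)$ items. I then need $\sum_j c_j^2 = O(\cdot)\,\mathrm{OPT}(\mathcal{J})$, i.e.\ that the sum of squares is comparable to the cost. This fails when many clusters contain a single clique. So I plan to split: clusters holding a single clique incur identical cost before and after rounding, and for clusters with at least two items, $c_j^2 \le O(\text{pairs cost})$ because every item has size $\ge \varepsilon L/2$ relative to $c_j<2L$. That gives only a $1+O(1)$ factor, so I would instead use $\varepsilon^2$-fine rounding to recover $1+O(\varepsilon)$. This is the \emph{main obstacle}; an alternative I would try is an interchange argument showing that an optimal solution of $\mathcal{K}$ mapped back to $\mathcal{J}$ loses only $O(\varepsilon)$ of the cross-pair cost.

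\textbf{Algorithm.} After rounding, $\mathcal{K}$ has only $O(\log_{1+\varepsilon}(2/\varepsilon)) = O(\varepsilon^{-1}\log(1/\varepsilon))$ distinct sizes, a constant, plus at most one remainder clique. Every cluster in the optimum has load below $2L$ (Claim~\ref{claim:cluster-sizes}, applied to the rounded instance with $L$ adjusted), so it contains at most $4/\varepsilon$ cliques. A cluster \emph{configuration} is therefore a vector of multiplicities over constantly many types, and there are only constantly many configurations. I would run a DP over clusters $j=1,\dots,k$ whose state is the vector of remaining counts per type, of which there are polynomially many ($n^{O(1)}$ with the constant exponent). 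The transitions choose a configuration and add its squared load. The single remainder clique is handled by guessing which cluster it joins, an extra flag in the state. The DP finds the exact optimum of $\mathcal{K}$ in polynomial time, completing the claim.
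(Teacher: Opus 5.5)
\paragraph{Gap in the approximation half.} Your algorithmic half is fine and essentially matches the paper: a DP over residual type-counts is equivalent to the paper's enumeration of how many clusters use each of the constantly many configurations, and both are polynomial for fixed $\varepsilon$. The gap is in the approximation half. As written, you do not prove $\mathrm{OPT}(\mathcal{K}) \le (1+\bigo(\varepsilon))\,\mathrm{OPT}(\mathcal{J})$ for the rounding the claim actually specifies. Your bound on the increase, $\sum_j ((c_j')^2 - c_j^2)$, discards the simultaneous increase of the subtracted term $\sum_i (n_i')^2$. Charging $((1+\varepsilon)^2-1)\sum_j c_j^2$ against the cost then cannot work for a cluster dominated by one large clique. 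For example, with sizes about $L$ and $\varepsilon L$ the cost is about $\varepsilon L^2$ while the square is about $L^2$. That is why you end at a $1+O(1)$ factor. Switching to $\varepsilon^2$-fine rounding changes the instance $\mathcal{K}$, so it does not establish the stated claim. The interchange argument is left unspecified.

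\paragraph{Missing idea.} Use the first form of \eqref{eq:cost-squares}. For a clique-preserving clustering, the cost is $\sum_j \sum_{i<i'\colon K_i,K_{i'}\subseteq C_j} n_i n_{i'}$, a sum of products of sizes of \emph{distinct} cliques sharing a cluster. Take an optimal assignment $S$ of $\mathcal{J}$ and use it unchanged on $\mathcal{K}$. Since $n_i \le n_i' \le (1+\varepsilon) n_i$, and the remainder clique is unchanged, every term satisfies $n_i' n_{i'}' \le (1+\varepsilon)^2 n_i n_{i'}$. Hence $\mathrm{Cost}_{\mathcal{K}}(S) \le (1+\varepsilon)^2\, \mathrm{Cost}_{\mathcal{J}}(S) \le (1+3\varepsilon)\,\mathrm{OPT}(\mathcal{J})$ for $\varepsilon \le 1$. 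This is exactly the paper's argument. It needs no comparison between $\sum_j c_j^2$ and the cost, no case split on single-clique clusters, and no finer grid. Equivalently, per cluster $(c_j')^2 - \sum_{i\in C_j}(n_i')^2 = 2\sum_{i<i'} n_i' n_{i'}'$, which is precisely the cancellation your additive bound threw away.
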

\begin{proof}
    Consider any optimal solution $S$ to $\mathcal{J}$. Using the same solution
    for $\mathcal{K}$, every clique other than the possible remainder is scaled
    up by a factor of at most $(1+\varepsilon)$, and the remainder is left
    unchanged; this scales the cost by at most a factor of $(1+\varepsilon)^2
    \le (1+3\varepsilon)$ for $\varepsilon \le 1$.

    We now present a polynomial time algorithm for finding an optimal solution
    to $\mathcal{K}$. The number of distinct rounded clique sizes is bounded by
    a constant $\alpha = O\!\left(\frac{1}{\varepsilon}\log\frac{1}{\varepsilon}\right)$;
    including the (at most one) remainder as one further, distinguished type
    of multiplicity exactly $1$ increases this to $\alpha+1$, still
    $O\!\left(\frac{1}{\varepsilon}\log\frac{1}{\varepsilon}\right)$. The number
    of cliques of size $\ge \varepsilon L$ that lie in any cluster is at most
    $\beta = \lfloor 2L/(\varepsilon L)\rfloor = \lfloor 2/\varepsilon \rfloor$
    (Claim~\ref{claim:cluster-sizes}); a cluster may additionally contain the
    single remainder clique, if present. Therefore, the number of different
    arrangements of clique types in a cluster is at most a constant
    $\gamma = 2\binom{\alpha+\beta-1}{\alpha}$ (the factor of $2$ accounting
    for whether the remainder is placed in this cluster); we refer to each
    such arrangement as a cluster type. 

    Note that a candidate solution must
respect the actual supply of each rounded clique size in $\mathcal{K}$:
across all $k$ clusters, the total number of cliques used of each type
must exactly equal the number of cliques of that rounded size present in
$\mathcal{K}$ (and the single remainder clique, if present, must be
placed in exactly one cluster). Any solution can be specified by the
number of clusters of each type; thus, the number of candidate solutions
is bounded by $k^\gamma$. We find an optimal solution to $\mathcal{K}$ by
going over each of the $k^\gamma$ candidate solutions, discarding those
that do not exhaust the supply of every clique type exactly, and among
the remaining feasible candidates selecting the one placing all the
cliques in the clusters (in particular, the remainder in exactly one
cluster if present) and having minimum cost.
    The running time is bounded by $\bigo(n^2 k^\gamma)$, which
    is polynomial in $n$, assuming $\varepsilon$ is a constant.
\end{proof}

Given a disjoint cliques instance $\instance$, we first reduce it to instance
$\mathcal{J}$ by grouping small cliques into macro-cliques
(Claim~\ref{claim:macro-cliques}), then reduce $\mathcal{J}$ to instance
$\mathcal{K}$ by rounding clique sizes to a geometric grid
(Claim~\ref{claim:rounding}).  We solve $\mathcal{K}$ exactly via the
enumeration of Claim~\ref{claim:rounding}, and the solution induces a
$(1 + \bigo(\varepsilon))$-approximate solution to $\instance$ by tracing
back through the two reductions.
\section{Bounding a Fixed Tail Sum by the Smallest Differences}
\label{app:tail sum}
\begin{lemma}
\label{lem:tail sum}
Let $q_1 \ge q_2 \ge \cdots \ge q_k$ be nonnegative integers, and for each
$i$ let $p_i$ be an integer with $0 \le p_i \le q_i/2$; set $c_i := q_i - p_i$.
Fix an integer $r$ with $1 \le r \le k$, and let
$c_{(1)} \le c_{(2)} \le \cdots \le c_{(k)}$ denote the values
$c_1,\dots,c_k$ sorted in nondecreasing order, so that
$\sum_{j=1}^{k-r} c_{(j)}$ is the sum of the $k-r$ smallest values among
$\{c_1,\dots,c_k\}$. Then
\[
  \sum_{i=r+1}^{k} c_i \;\le\; 2\sum_{j=1}^{k-r} c_{(j)} .
\]
\end{lemma}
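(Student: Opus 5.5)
Since each $p_i \le q_i/2$, we have $q_i/2 \le c_i \le q_i$ for every $i$. We would prove the claim by comparing both sides against the same tail sum $\sum_{i=r+1}^k q_i$ of the original sorted values. The left-hand side satisfies $\sum_{i=r+1}^k c_i \le \sum_{i=r+1}^k q_i$ term by term. It therefore suffices to show
\[
\sum_{j=1}^{k-r} c_{(j)} \;\ge\; \tfrac12 \sum_{i=r+1}^{k} q_i .
\]

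For this lower bound we use the standard rearrangement fact for order statistics. Let $I$ be the set of $k-r$ indices realizing the $k-r$ smallest values of $c$, so that $\sum_{j=1}^{k-r} c_{(j)} = \sum_{i\in I} c_i$. Then $\sum_{i\in I} c_i \ge \tfrac12 \sum_{i\in I} q_i$. Any set $I$ of $k-r$ indices satisfies $\sum_{i\in I} q_i \ge \sum_{i=r+1}^k q_i$, because the indices $r+1,\dots,k$ carry the $k-r$ smallest $q$-values given $q_1\ge\cdots\ge q_k$. One way to see this: the $m$-th smallest element of $\{q_i : i\in I\}$ is at least the $m$-th smallest overall, which is $q_{k-m+1}$. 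Combining the three inequalities gives the lemma.

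The only step that needs any care is this order-statistic comparison, which is a matching or exchange argument. Since we only use the pointwise bounds $q_i/2\le c_i\le q_i$, there is no genuine obstacle. Integrality is not needed, and the case $r=k$ is trivial because both sides are zero.
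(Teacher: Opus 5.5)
Your proposal is correct and follows the paper's proof essentially step for step. You bound the left side by the tail $q$-sum using $c_i \le q_i$. You then note that this tail sum is at most the $q$-sum over the index set of the $k-r$ smallest $c$'s, and apply $q_i \le 2c_i$ termwise.
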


\begin{proof}
Two elementary observations about $c_i = q_i - p_i$:
\begin{itemize}
  \item[(a)] Since $p_i \le q_i/2$, we have $c_i = q_i - p_i \ge q_i - q_i/2 = q_i/2$;
             equivalently $q_i \le 2c_i$.
  \item[(b)] Since $p_i \ge 0$, we have $c_i = q_i - p_i \le q_i$.
\end{itemize}
In particular each $c_i \ge q_i/2 \ge 0$.

\medskip
\noindent\textbf{Step 1 (bound the left side by the tail $q$-sum).}
By (b),
\begin{equation}\label{eq:step1}
  \sum_{i=r+1}^{k} c_i \;\le\; \sum_{i=r+1}^{k} q_i .
\end{equation}

\medskip
\noindent\textbf{Step 2 (the tail $q$-sum is the $(k-r)$ smallest $q$-values).}
Because $q_1 \ge q_2 \ge \cdots \ge q_k$, the indices $\{r+1,\dots,k\}$ are exactly
the positions of the $k-r$ \emph{smallest} values among $q_1,\dots,q_k$. Hence,
for \emph{every} index set $J \subseteq \{1,\dots,k\}$ with $|J| = k-r$,
\begin{equation}\label{eq:step2}
  \sum_{i=r+1}^{k} q_i \;=\; \sum_{j=1}^{k-r} q_{(j)}
  \;\le\; \sum_{i\in J} q_i ,
\end{equation}
where $q_{(1)}\le\cdots\le q_{(k)}$ is the nondecreasing ordering of the $q_i$'s;
the inequality holds because the sum of the $k-r$ smallest values is a lower bound
for the sum over any $(k-r)$-subset.

\medskip
\noindent\textbf{Step 3 (choose $J$ to be the indices of the $k-r$ smallest $c$'s).}
Let $J^\star \subseteq \{1,\dots,k\}$ be a set of $k-r$ indices achieving the
$k-r$ smallest values of $c$, i.e.\ $\sum_{i\in J^\star} c_i = \sum_{j=1}^{k-r} c_{(j)}$.
Applying~\eqref{eq:step2} with $J = J^\star$ and then~(a) termwise,
\[
  \sum_{i=r+1}^{k} q_i
  \;\overset{\text{\eqref{eq:step2}}}{\le}\; \sum_{i\in J^\star} q_i
  \;\overset{\text{(a)}}{\le}\; \sum_{i\in J^\star} 2c_i
  \;=\; 2\sum_{j=1}^{k-r} c_{(j)} .
\]

\medskip
\noindent\textbf{Conclusion.}
Combining this with Step 1,
\[
  \sum_{i=r+1}^{k} c_i
  \;\overset{\text{\eqref{eq:step1}}}{\le}\; \sum_{i=r+1}^{k} q_i
  \;\le\; 2\sum_{j=1}^{k-r} c_{(j)} ,
\]
which is the desired inequality.
\end{proof}

\end{document}